\documentclass[a4paper,12pt]{article}
\usepackage{amsmath,amsthm,amsfonts,amssymb,bm,mathrsfs}
\usepackage[protrusion=true,expansion=true]{microtype}
\usepackage{mathtools}
\usepackage{graphicx}
\usepackage{xcolor}
\usepackage{float}
\usepackage[super,square,numbers,sort&compress]{natbib} 
\usepackage{fancyhdr}
\usepackage{chemarrow}
\usepackage{extarrows}
\usepackage{comment}
\usepackage{enumerate}
\usepackage{wrapfig}
\usepackage{bibentry,natbib}
\setcitestyle{authoryear,open={(},close={)}}
\usepackage{geometry}
\usepackage[T1]{fontenc}
\usepackage{footmisc}
\usepackage{sgame}
\usepackage{color}
\usepackage{setspace}

\usepackage{fancyhdr}
\usepackage{hhline}
\usepackage[super]{nth}
\usepackage[nice]{nicefrac}
\usepackage{multicol}
\usepackage{tikz}
\usetikzlibrary{decorations.pathreplacing}
\usetikzlibrary{arrows}
\tikzstyle{block}=[draw opacity=0.7,line width=1.4cm]
\usepackage{pgfplots}
\pgfplotsset{width=10cm,compat=1.9}
\usepgfplotslibrary{external}

\usepackage{amsfonts}
\usepackage{lipsum}

\usepackage{lmodern}
\usepackage{enumitem}
\usepackage{dsfont}
\usepackage{appendix}

\usepackage{multibib}
\newcites{Appendix}{References for Online Appendix}%

\newtheorem{theorem}{Theorem}

\newtheorem{axiom}{Axiom}

\newtheorem{example}{Example}

\newtheorem{lemma}{Lemma}

\newtheorem{proposition}{Proposition}
\newtheorem{corollary}{Corollary}

\newtheorem{defn}{Definition}

\newcommand{\bE}{\mathbb{E}}

\newcommand{\bR}{\mathbb{R}}

\newcommand{\cB}{\mathcal{B}}

\newcommand{\cH}{\mathcal{H}}

\newcommand{\cP}{\mathcal{P}}

\newcommand{\cT}{\mathcal{T}}

\newcommand{\cp}{\mathsf{c}}

\newcommand{\hra}{\rightharpoonup}

\newcommand{\hrra}{\rightharpoonup\mathrel{\mspace{-15mu}}\rightharpoonup}
\newcommand{\ux}{\overline{x}}
\newcommand{\lx}{\underline{x}}
\newenvironment{restatedlemma}[1]
  {\par\medskip\noindent\textbf{Lemma~\ref{#1}.}\itshape\ }
  {\par\medskip}

\DeclareMathOperator{\supp}{supp}

\usepackage{hyperref}
\definecolor{Redish}{HTML}{890F0F}
\hypersetup{colorlinks=true,linkcolor=Redish,filecolor=Redish,urlcolor=Redish,citecolor=Redish,}
\newcommand*{\fullref}[1]{\hyperref[{#1}]{\ref{#1} \nameref{#1}}} 
\newcommand*{\figref}[1]{\hyperref[{#1}]{Figure \ref{#1}}}
\newcommand*{\secref}[1]{\hyperref[{#1}]{Section \ref{#1}}}
\newcommand*{\tabref}[1]{\hyperref[{#1}]{Table \ref{#1}}}
\newcommand*{\Equationref}[1]{\hyperref[{#1}]{Equation \ref{#1}}}
\newcommand*{\equationref}[1]{\hyperref[{#1}]{equation \ref{#1}}}
\newcommand*{\propref}[1]{\hyperref[{#1}]{Proposition \ref{#1}}}
\newcommand*{\defref}[1]{\hyperref[{#1}]{Definition \ref{#1}}}
\newcommand*{\lemmaref}[1]{\hyperref[{#1}]{Lemma \ref{#1}}}
\newcommand*{\thmref}[1]{\hyperref[{#1}]{Theorem \ref{#1}}}
\newcommand*{\cororef}[1]{\hyperref[{#1}]{Corollary \ref{#1}}}
\newcommand*{\appenref}[1]{\hyperref[{#1}]{Appendix \ref{#1}}}
\newcommand*{\factref}[1]{\hyperref[{#1}]{Fact \ref{#1}}}
\newcommand*{\assumref}[1]{\hyperref[{#1}]{Assumption \ref{#1}}}

\def\sym#1{\ifmmode^{#1}\else\(^{#1}\)\fi}

\usepackage{multicol}
\usepackage{multirow}

\newcommand{\tr}{\textcolor{red}}

\newcommand{\RN}[1]{%
  \textup{\uppercase\expandafter{\romannumeral#1}}%
}

\title{Decision Making Under Multidimensional Risk}
\author{Shaowei Ke\thanks{Department of Economics, China Europe International Business School. Email: shaoweike@ceibs.edu.} \and Mu Zhang\thanks{Department of Economics, University of Michigan. Email: muzhang@umich.edu.}
\thanks{This paper subsumes an earlier paper titled ``Multidimensional Choices under Uncertainty'' by the same authors, and partially subsumes \cite{Zhang23}. We are grateful to the co-editors and three anonymous referees, whose comments and suggestions have substantially improved the paper.
We are also grateful to Tilman B\"{o}rgers, David Dillenberger, Amanda Friedenberg, Faruk Gul, David Kreps, Fabio Maccheroni, David Miller, Xiaosheng Mu, Efe Ok, Pietro Ortoleva, Wolfgang Pesendorfer, Debraj Ray, Ran Spiegler, and participants at numerous seminars and conferences for helpful comments.}}
\date{\today}
\begin{document}

\onehalfspacing
\maketitle
\thispagestyle{empty}

\begin{abstract}
    Choice alternatives are often multidimensional and risky. We introduce and axiomatize the \textit{structured multidimensional expected utility} representation, a unified framework that generalizes existing approaches to evaluating such alternatives. The representation uses a \textit{rooted clustered tree} to organize the joint, separate, and conditional evaluation of risk across dimensions within a common structure. We analyze the uniqueness of the representation and characterize useful special cases. We apply the representation to inequality across individuals, groups, and generations and to multisource income, characterizing the implications of bracketing for stochastic dominance and the avoidance of multidimensional risk.
\end{abstract}


\newpage
\pagenumbering{arabic} 

\section{Introduction}
\label{sect_intro}
Decision makers often face alternatives that are complex and uncertain. Evaluating such alternatives requires taking into account both their multiple dimensions and the associated risk. For example, a decision maker may need to assess a product with uncertain attributes, a job with uncertain future payoffs, or a policy that generates an uncertain income distribution across individuals. Although evaluating risky multidimensional alternatives is a fundamental and ubiquitous task in economics, there is no consensus on how such alternatives should be evaluated.

To illustrate, consider the following example. Let $(x_1,x_2)$ denote the income levels of individuals 1 and 2, respectively. A policymaker is evaluating a policy that yields $(0,1)$ and $(1,0)$ with equal probability. She dislikes inequality  across individuals. Let $u(\cdot,\cdot)$ be an increasing, concave, and symmetric function. One way to evaluate the policy is to first use $u$ to assess each possible income profile and then take the expectation: $\tfrac{1}{2}u(0,1)+\tfrac{1}{2}u(1,0)$. It is well known that this approach captures ex post inequality aversion. Alternatively, she may first compute each individual's expected income, which is $1/2$, and then use $u$ to evaluate the profile of expected income levels: $u(1/2,1/2)$. This approach captures ex ante inequality aversion. These two approaches to evaluating multidimensional risk are both useful yet mutually incompatible.\footnote{See, among others, \cite{fleurbaey2010assessing}; \cite{GKPS10,GKPS12}; \cite{FL2012fairness}; and \cite{Saito13}. The first formula captures ex post inequality aversion, or inequality of outcome: $(1/2,1/2)$ is better than having $(0,1)$ or $(1,0)$ with equal probability. The second formula captures ex ante inequality aversion, or inequality of opportunity: having $(0,1)$ or $(1,0)$ with equal probability is better than having either $(0,1)$ or $(1,0)$ with certainty. Each formula misses the other concern. The second treats the mixture as $(1/2,1/2)$, while the first is linear in probabilities and, by symmetry, does not strictly prefer the mixture to either $(0,1)$ or $(1,0)$ with certainty.}

This issue is not unique to inequality aversion. It arises in many other contexts, including evaluating risky consumption bundles, dynamic choice under risk, and ambiguity aversion.\footnote{See \hyperref[OA_opposite]{Online Appendix \RN{1}.1} for a detailed discussion of these examples.} The common theme is that there are two opposite approaches to evaluating a risky multidimensional alternative:
 (i) first-aggregation-then-expectation (FATE)---the decision maker first aggregates across dimensions for each realization and then takes the expectation; and (ii) first-expectation-then-aggregation (FETA)---the decision maker first  takes the expectation within each dimension and then aggregates across dimensions.\footnote{Throughout the paper, ``aggregation across dimensions'' means combining values or utilities associated with different dimensions, whereas ``taking expectation'' means probability-weighted averaging over possible realizations, although expectation is, mathematically, a form of aggregation. This terminology does not require a distinction between risky and risk-free dimensions; any dimension may be deterministic or risky.}
Both approaches may seem reasonable but yield sharply different behavioral implications.




However, the evaluation of a risky multidimensional alternative is not limited to the two approaches discussed above. A third possibility is the recursive approach, in which the decision maker does not evaluate all dimensions of the alternative simultaneously, whether before or after taking expectation as in the FATE and FETA approaches. Instead, she evaluates them recursively, conditioning on realizations of some dimensions to evaluate another. A familiar example comes from dynamic choice. In \cite{KrepsPorteus78} and \cite{EpsteinZin89}, the decision maker evaluates multiperiod risk recursively. Consider a two-period case. For each realization of period-$1$ consumption $x_1$, she evaluates the conditional expected utility of period-$2$ consumption given $x_1$, denoted by $U_{x_1}$. She then aggregates $x_1$ and $U_{x_1}$, possibly in a nonadditive way, and finally takes the expectation of the resulting value with respect to $x_1$.


The examples above are drawn from settings that make the respective approaches particularly transparent: inequality for the FATE and FETA approaches, and dynamic choice for the recursive approach.\footnote{Section \ref{sect_ineq} applies the recursive approach in the inequality setting, in which it captures a preference for intergenerational mobility.} Despite the different economic contexts of the introductory examples, the three approaches point to the same underlying issue---how dimensions enter the evaluation: Some are evaluated jointly, some are evaluated separately, and some are evaluated conditional on realizations of others.

Motivated by this observation, we introduce a representation that organizes these three forms of evaluation through a common graphical structure. Within this framework, the FATE, FETA, and recursive approaches are three extreme cases, while intermediate cases combine these forms of evaluation across dimensions. Our goal is to make this structure explicit and characterize the preferences that can be represented in this form. We then analyze the representation and its key special cases and study its applications.

We begin by introducing the structure that underlies our unified framework, the \textit{rooted clustered tree} (RCT). An RCT is a directed rooted tree whose non-root vertices  form a partition of the dimensions. The role of the RCT can be understood through the three approaches described above. Risk across dimensions in the same vertex is evaluated jointly, as under the FATE approach. Risk in a vertex is evaluated conditional on realizations of the dimensions in its ancestor vertices, as under the recursive approach. Risk across dimensions in different branches is evaluated separately, as under the FETA approach. That is, conditional on the relevant ancestor realizations, the correlation between any two such dimensions is \textit{behaviorally irrelevant}: Changing that correlation while holding fixed the conditional marginal distribution of each dimension does not affect the decision maker's choices. Figure \ref{simple_trees} illustrates three examples of RCTs with two dimensions.


\begin{figure}[ht]
    \centering
    \begin{tikzpicture}[scale=1]
        \node at (0,0) {$o$};
        \node at (0,-1.5) {$1,2$};
        \draw[-stealth] (0,-.3) -- (0,-1.2);

        \node at (3,0) {$o$};
        \node at (2,-1.5) {$1$};
        \node at (4,-1.5) {$2$};
        \draw[-stealth] (3,-.3) -- (2,-1.2);
        \draw[-stealth] (3,-.3) -- (4,-1.2);

        \node at (6,0) {$o$};
        \node at (6,-1.5) {$1$};
        \node at (6,-3) {$2$};
        \draw[-stealth] (6,-.3) -- (6,-1.2);
        \draw[-stealth] (6,-1.8) -- (6,-2.7);
    \end{tikzpicture}
    
    \caption{Suppose there are two dimensions, $1$ and $2$, and $o$ is an auxiliary vertex representing the root. The left-hand RCT evaluates dimensions $1$ and $2$ jointly. The middle RCT evaluates them separately. The right-hand RCT evaluates dimension $2$ conditional on dimension $1$.}
    
    \label{simple_trees}
\end{figure}

The RCT does more than accommodate the different forms of evaluation described above; it shows that they share a common directional structure. Think of an arrow from dimension $i$ to dimension $j$ as indicating that dimension $j$ can be evaluated conditional on the realization of dimension $i$. Both the conditional risk of dimension $j$ and the utility used to evaluate it may depend on the realization of dimension $i$. If arrows are present in both directions, either dimension can serve as the conditioning dimension, so the risk across the two dimensions is effectively evaluated jointly. The RCT therefore places the two dimensions in the same vertex, and the two arrows are absorbed into that vertex. If only the arrow from $i$ to $j$ is present, the vertex containing $i$ is an ancestor of the vertex containing $j$, and dimension $j$ is evaluated conditional on dimension $i$. If neither arrow is present, the two vertices lie in different branches and the dimensions are evaluated separately. Thus, joint, conditional, and separate evaluation reflect whether the same directional relation holds in both directions, in one direction, or in neither direction.

This common structure also gives RCTs a close connection to \textit{directed acyclic graphs} (DAGs), which are widely used in the literature on causal inference, including work on imperfect causal perception.\footnote{\label{causality}For recent work that adopts DAGs to study causal inference, see, among others, \cite{spiegler2016bayesian,spiegler2020behavioral,spiegler20}; \cite{eliaz2020model}; and \cite{EllisThysen24}.} We make this connection precise after \thmref{thm_main}.

Given the RCT, we introduce the main representation of the decision maker's preference: the \textit{structured multidimensional expected utility} (SMEU) representation. The representation describes how the decision maker evaluates a \textit{lottery}, which is a risky multidimensional alternative, using the recursive structure of an RCT. At each vertex, for every realization of the dimensions in that vertex, she combines that realization---possibly in a nonadditive way---with the evaluations associated with its child vertices. She then evaluates the resulting risk conditional on realizations of the dimensions in its ancestor vertices. Both the conditional risk and the utility used in this evaluation may depend on those realizations. This process continues until all vertices have been evaluated, yielding the utility for the lottery. Thus, the SMEU representation extends the preferences of \cite{KrepsPorteus78} and \cite{EpsteinZin89} from temporal risk to multidimensional risk, with the RCT specifying how dimensions enter the recursive evaluation.

The value of the SMEU framework goes beyond nesting the FATE, FETA, and recursive representations (formally defined in Section \ref{sect_repre}) and including intermediate cases that combine their features. It also allows us to connect familiar economic concerns that are usually studied separately. To illustrate this value, we consider a common setting in which a policymaker evaluates inequality across individuals, groups, and generations. As is well known, the FATE and FETA representations capture ex post and ex ante inequality aversion, respectively. We show that, within this broad setting, the recursive representation can capture a preference for intergenerational mobility and that intermediate SMEU representations can combine ex post, ex ante, and intergenerational concerns.\footnote{\label{mobility_group}For intergenerational mobility, see, among others, \cite{becker1979equilibrium}; \cite{solon1992intergenerational}; \cite{zimmerman1992regression}; and \cite{corak2013income}. For group inequality (both within-group and between-group inequality), see, among others, \cite{Formby89}; \cite{Gottschalk97}; \cite{LaFerrara02}; \cite{Lemieux06}; \cite{Elbers08}; \cite{Burstein19}; and \cite{Darity22}.}

We analyze the uniqueness of the RCT. A preference may have SMEU representations with different RCTs. We define a simplicity criterion that first minimizes each dimension's depth and then, holding these depths fixed, minimizes the extent to which dimensions are evaluated separately in different vertices. We show that the \textit{simplest} RCT is unique.

We axiomatize the SMEU representation and its key special cases, including representations that combine features of any two of the FATE, FETA, and recursive representations. The axioms are built around two binary relations over dimensions, both revealed from choices. One identifies when the correlation between two dimensions is behaviorally irrelevant; the other identifies when choices are consistent with evaluating one dimension conditional on realizations of another. A key axiom, RCT perfection, ensures that these two relations fit together consistently. Suppose one dimension is evaluated conditional on the realizations of two others. To compare lotteries, the decision maker must take into account both the conditional evaluation associated with each pair of realizations and the probability of that pair. The two conditioning dimensions must therefore be evaluated jointly, or one must be evaluated conditional on the other. RCT perfection imposes the corresponding consistency requirement: Choices must be consistent with evaluating at least one of the two conditioning dimensions conditional on the other.

Finally, we apply the generalized bracketing representation to multisource income. The decision maker partitions income sources into brackets, evaluates risk jointly within each bracket, and aggregates the resulting evaluations across brackets. We characterize how bracketing affects violations of stochastic dominance and show that coarser bracketing makes dominance easier to satisfy. We also characterize the decision maker's \textit{avoidance of multidimensional risk}---when she prefers a given distribution of aggregate income to be generated by a single source rather than spread across several sources---and show that finer bracketing can strengthen this preference.

\subsection{Related Literature}\label{sect_lit}
Many papers have studied multivariate risk, but most remain within expected utility theory and focus on analyzing measures of risk attitude.\footnote{See, among others, \cite{Keeney73}; \cite{KihlstromMirman74,KihlstromMirman81}; \cite{Richard75}; \cite{Duncan77}; \cite{Karni79}; \cite{Schlee90}; \cite{LevyLevy91}; \cite{Grant95}; and \cite{ERS07}.} Some papers depart from expected utility theory but do so in ways more aligned with classic non-expected-utility analyses (e.g., \cite{Karni89}). Our approach is complementary. The SMEU representation need not satisfy independence. When independence fails, the failure reflects how dimensions are evaluated jointly, separately, or conditional on realizations of others, rather than the kind of one-dimensional departure from expected utility highlighted by the Allais paradox.

\cite{DillenbergerLu26} develop a complementary approach that allows risk domains to be subjectively defined, whereas we take the product structure over dimensions as exogenous. They characterize domain-specific pure risk aversion, and their intra-domain independence condition is related to our unidimensional independence: Both impose independence locally while allowing it to fail globally.

The FETA representation can capture narrow bracketing and choice behavior commonly described in the literature as correlation neglect,\footnote{\label{lit_bracket}See, among others, \cite{TverskyKahneman81}; \cite{Thaler85}; \cite{RLR99}; \cite{BHT2006bracketing}; \cite{RabinWeizsacker09}; \cite{LevyRazin15}; \cite{EnkeZimmermann19}; \cite{V2023bracketing}; \cite{Zhang23}; \cite{ellis2024revealing}; and \cite{Camara21}.} while the SMEU representation allows more general forms of bracketing and more general ways in which correlations across dimensions affect choice. Section \ref{section_app} studies the avoidance of multidimensional risk under generalized bracketing \citep{H2021issues}, and the framework is also related to mental accounting, in which some dimensions may be evaluated separately from others \citep{Thaler85,Thaler99}. Our paper generalizes \cite{Zhang23} from two dimensions to multiple dimensions and accommodates history-dependent risk attitudes in recursive models. By studying computationally tractable decision rules, \cite{Camara21} characterizes a dynamic bracketing model that generalizes narrow bracketing in a way that differs from ours. Relative to our analysis, independence is maintained in \cite{Camara21}.

Our paper offers a new way to resolve two long-standing tensions in the literature on inequality aversion and ambiguity aversion. In both settings, there are two well-known but behaviorally incompatible modeling approaches: the ex ante and the ex post.\footnote{\label{lit_amb}For ambiguity aversion, see \cite{Raiffa61}; \cite{DominiakSchnedler11}; \cite{Saito15}; \cite{ORR19}; \cite{KeZhang20}; and \cite{BHL22}.} For inequality, the ex ante approach corresponds to inequality of opportunity, whereas the ex post approach corresponds to inequality of outcomes. For ambiguity, the ex ante approach allows randomization to hedge ambiguity, whereas the ex post approach does not. Prior work combines the two approaches in other ways. For example, \cite{Saito13,Saito15} study weighted averages of them, and \cite{KeZhang20} further generalize that framework. Relatedly, \cite{feldman2024disentangling} study risky social decisions using a flexible model that combines ex ante and ex post fairness concerns, together with risk attitudes and other-regarding preferences. Our focus is different: We provide a representation of multidimensional risk that nests FATE and FETA and also captures recursive and intermediate RCT structures. Under the generalized bracketing representation, the ex post approach is applied within each bracket and the ex ante approach across brackets, thereby combining the two within a single model. For inequality, Section \ref{sect_ineq} shows how this representation combines ex post inequality within groups and ex ante inequality across groups, and how other SMEU representations combine inequality concerns across individuals, groups, and generations (see Footnote \ref{mobility_group} for related literature).

Our paper is related to the literature on dynamic preferences under risk. \cite{DDGO20} show that exponentially discounted expected utility implies that the decision maker is risk-seeking over time lotteries, contrary to experimental evidence. Their solution (see also \cite{KihlstromMirman81} and \cite{DGO2020SI}) aggregates across time via exponential discounting and then applies a Bernoulli index before taking expectation, which corresponds to a special case of our FATE representation. Conversely, \cite{Selden78} and \cite{SeldenStux78} adopt the opposite order, similar to special cases of the FETA representation. Our framework nests these approaches, as well as the recursive approach (\citealp{KrepsPorteus78}; \citealp{EpsteinZin89}). We briefly discuss time lotteries as a possible future application in Section \ref{sect_end}.

We are not the first to study the tension between the FATE and FETA representations in multiple contexts. In a setting in which each alternative is described by a matrix of attribute values, \cite{LRW2023deceptive} highlight the conflict between row-first (row-monotonic) and column-first (column-monotonic) aggregations and provide guidance on when each is appropriate.\footnote{\cite{fleurbaey2010assessing} and \cite{MP2015multi} also consider this setting. Related to \cite{LRW2023deceptive}, \cite{MP2015multi} analyze row- and column-separability and monotonicity conditions.} When rows are interpreted as states of the world, this setting becomes more comparable to ours. Row-first aggregation corresponds to our FATE representation and column-first aggregation corresponds to our FETA representation. Our focus, however, is not on choosing between the two, but on developing and applying a representation that nests both and also accommodates recursive evaluation and intermediate cases.


If we interpret different dimensions in our setup as different sources of uncertainty, our paper is also related to \cite{ErginGul09} and \cite{CC-VMMM21}. In both papers, the decision maker's risk attitude may be source-dependent and she may evaluate risks source-by-source before aggregating across sources, conceptually similar to our generalized bracketing representation. However, unlike in our framework, how states are grouped into sources is exogenously fixed. \cite{chew2023rich} also develop a source-dependent extension of expected utility, but source dependence in that case is modeled through different mixture operators rather than through dimensions.

The rest of the paper is organized as follows. Section \ref{sect_repre} introduces the RCT and the SMEU representation. Section \ref{sect_ineq} illustrates the FATE, FETA, recursive, and intermediate SMEU representations in a setting with inequality across individuals, groups, and generations. Section \ref{sect_unique} analyzes  uniqueness of the RCT and the Bernoulli indices. Sections \ref{sect_axiom} and \ref{sect_special} axiomatize the SMEU representation and its key special cases. Section \ref{section_app} applies generalized bracketing to multisource income, and Section \ref{sect_end} concludes.

\section{Setup and Representation}
\label{sect_repre}
For an arbitrary set $Z$, let $\Delta(Z)$ denote the set of all simple lotteries (probability measures with a finite support) on $Z$. Let $I=\{1,\dots, N\}$ be a finite set of integers with $N>1$. For every $i\in I$, let $X_i=[\underline{x}_i,\overline{x}_i]\subseteq\bR$ be nondegenerate.
Let $X=\bigtimes_{i\in I} X_i$. Generic elements of $X$ are called consequences. Generic elements of $\Delta(X)$ are called lotteries.

Fix any $A\subseteq I$. Let $X_A=\bigtimes_{i\in A} X_i$. We use $x,y,z$ to denote generic elements of $X_A$ and $p,q,r,s$ to denote generic elements of $\Delta(X_A)$.\footnote{The terms ``consequences'' and ``lotteries'' are reserved for elements in $X$ and $\Delta(X)$.} Let $\supp(p)$ denote the support of $p\in\Delta(X_A)$. We denote $p\in\Delta(X_A)$ that yields $x\in X_A$ with certainty by $\delta_x$. When there is no risk of confusion, we identify $\delta_x$ with $x$, and identify a subscript or a superscript $A\subseteq I$ with $i$ if $A=\{i\}$ and with $-i$ if $A=\{i\}^\mathsf{c}$.\footnote{For any $A$, we define $A^\mathsf{c} := I \setminus A$. That is, complements are always taken relative to $I$, even if $A \not\subseteq I$.} For any $p,q\in\Delta(X_A)$ and $\alpha\in[0,1]$, we write $p\alpha q$ as shorthand for the convex combination $\alpha p + (1-\alpha)q\in\Delta(X_A)$.


\medskip

\textbf{Marginal and Conditional Distributions.} For any $A\subseteq B\subseteq I$ and $p\in\Delta(X_B)$, we use $p_A\in\Delta(X_A)$ to denote $p$'s marginal distribution on $A$, and use $x_A\in X_A$ to denote the restriction of $x\in X_B$ to $A$. For any $A,B\subseteq C\subseteq I$ such that $A\cap B=\emptyset$, $p\in\Delta(X_C)$, and $x\in X_B$, let $p_{A|x}\in\Delta(X_A)$ denote $p$'s conditional marginal distribution on $A$ given $x$, and let $\bE^p_{A|x}$ denote the expectation operator under distribution $p_{A|x}$. We write $\bE^p_A$ if $B=\emptyset$ and $\bE^p$ if $A=C$.
For any disjoint subsets of $I$, $A_1,\dots, A_n$, with $\bigcup_{i=1}^n A_i=A\subseteq I$, and any $p_i\in\Delta(X_{A_i})$ for every $i\in\{1,\dots,n\}$, we use $(p_1,\dots,p_n)$ to denote the unique $q\in\Delta(X_A)$ such that  $q(x)=p_1(x_{A_1})\times\dots\times p_n(x_{A_n})$ for every $x\in X_A$. 

\medskip

\textbf{Preference.} The decision maker has a preference $\succsim$ over $\Delta(X)$. Its asymmetric and symmetric parts are denoted by $\succ$ and $\sim$, respectively. For any nonempty $A\subseteq I$ and $x\in X_{A^\mathsf{c}}$, we define the conditional preference $\succsim_x$ on $\Delta(X_A)$ such that for any $p,q\in\Delta(X_A)$, we have $p\succsim_x q\iff (p,x)\succsim (q,x)$. We define $\succ_x$ and $\sim_x$ similarly.

\subsection{Multidimensional Risk and RCTs}\label{sect_rct}
Let $\cP$ be a partition of $I$. An RCT is an out-tree on $\cP_o:=\cP \cup\{o\}$, in which $o$ is an additional vertex representing the root.\footnote{A directed graph is an out-tree if (i) the underlying undirected graph is connected and acyclic, and (ii) there is a unique directed path from the root to every other vertex of the graph.}  Denote such a tree by $\cT=(\cP,E)$, in which $E$ is the set of directed edges.\footnote{A directed edge is represented by an ordered pair $(A,B)\in \cP_o\times\cP_o$, which denotes an edge pointing from vertex $A$ to vertex $B$.} Let $\mathbb{T}$ denote the set of all RCTs.\footnote{In \hyperref[number]{Online Appendix \RN{3}.3}, we analyze how the number of RCTs depends on $N$.} Each vertex of an RCT (except the root) corresponds to a set of dimensions, which motivates the term ``clustered.''

\medskip

\textbf{Tree Relations.} By definition, $A$ is the (unique) parent of $B$ if $(A,B)\in E$. For any $A\in \cP_o$, denote by $c(A)=\{B\in \cP: (A,B)\in E\}$ the set of $A$'s children. Denote by $a(A)$ the set of ancestors of $A$---that is, its parent, parent's parent, and so on. Let $\bar a(A)=\cup_{B\in a(A)\setminus\{o\}} B$ be the set of dimensions contained in $a(A)$. We define $d(A)$ as the descendants of $A$---its children, children's children, and so on---and $\bar d(A)$ as the set of dimensions contained in $d(A)$. The set $(A\cup \bar d(A))^\mathsf{c}$ will play an important role in what follows. It captures $A$'s ancestor dimensions together with all dimensions from branches of the RCT other than the one containing $A$.



\medskip

\textbf{Interpretation.} An RCT records how dimensions are organized in the evaluation of a lottery. Dimensions in the same vertex are evaluated jointly. A vertex is evaluated conditional on the realizations of dimensions in its ancestor vertices. If two vertices lie in different branches, their evaluations are not conditioned on each other; they are combined only through the utility at their common ancestor. Thus, the RCT specifies which marginal and conditional distributions induced by the lottery enter the evaluation. Note that the RCT does not introduce alternative probability measures: Whenever the SMEU representation below uses a marginal or conditional distribution, it uses the one induced by the lottery being evaluated. See Figure \ref{simple_trees} for three simple examples.

\subsection{The SMEU Representation}
Several standard notational conventions are useful for understanding the definition of our main representation. First, for any set $A$, we interpret $X_A$ as $X_{A\cap I}$. This is convenient because in some expressions $A$ may include the root $o$. Second, if we encounter $x\in X_A$ with $A=\emptyset$, then $x$ will be ignored in the expression. Let $(\cP, E)$ be an RCT. This convention has the following implications:
\begin{enumerate}
    \item For any $p_{A|x}$ with $x\in X_B$ and $B=\emptyset$, we identify $p_{A|x}$ with $p_A$.
    \item For any function $f_{x}$ with $x\in X_A$ and $A=\emptyset$, we identify $f_{x}$ with $f$.
    \item For any function $f:X_{\bar a(A)}\times X_A\times\bR^{c(A)}\to\bR$ with $A\in\cP$ and $\bar a(A)=\emptyset$, we identify the domain of $f$ with $X_A\times \bR^{c(A)}$.
\end{enumerate}
Third, in the last implication above, if $c(A)=\emptyset$ instead, $f$'s domain is identified with $X_{\bar a(A)}\times X_A$. Last, for any $A\in \cP$, $x\in X_{\bar a(A)}$, $p\in \Delta(X)$, and function $f$, let $\bE^p_{A|x}\:f=0$ if $x\not\in \supp(p_{\bar a(A)})$.

\begin{defn}
\label{def_heu}The preference has an SMEU representation if there exist an RCT $\cT=(\cP,E)$ and functions $u^o:\bR^{c(o)}\to\bR$ and $u^A:X_{\bar a(A)}\times X_{A}\times\bR^{c(A)}\to\bR$ for every $A\in \cP$ such that defining recursively $U^A_x:\Delta(X)\to\bR$ for all $A\in\cP$ and $x\in X_{\bar a(A)}$ by
\begin{equation}
    \label{kp}U^A_x(p)=\bE^p_{A|x}\:u^A(\:x,\:y,\:(U^B_{(x,y)}(p))_{B\in c(A)}\:),
\end{equation}
and the function $U^o:\Delta(X)\to\bR$ by 
\begin{equation}
    \label{kp2}U^o(p)=u^o(\:(U^B(p))_{B\in c(o)}\:),
\end{equation}
the following statements hold for all $A\in\cP_o$, $x\in X_{\bar a(A)}$, and $z\in X_{(A\cup \bar d(A))^\mathsf{c}}$ with $z_{\bar a(A)}=x$:
\begin{enumerate}
    \item $p\succsim_z q\iff U^A_x(\delta_z, p)\geqslant  U^A_x(\delta_z, q)$ for all $p,q\in \Delta(X_{A\cup \bar d(A)})$; and
    \item $U^A_x(\delta_{(z,y)})$ is continuous and strictly increasing in $y\in X_{A\cup \bar d(A)}$.
\end{enumerate}
We denote the SMEU representation by $(\cT,(u^A)_{A\in\cP_o})$.
\end{defn}

The first condition in the definition, that $U^A_x$ represents $\succsim_z$, implies the usual representation condition: For any lotteries $p,q\in\Delta(X)$, $p\succsim q$ if and only if $U^o(p)\geqslant  U^o(q)$---that is, the function $U^o$  represents $\succsim$ and is derived recursively as in \cite{KrepsPorteus78}. The second condition requires that the representations of conditional preferences are continuous and monotone in the absence of risk. This condition is useful in the proofs and may be relaxed. It ensures that the set of consequences is sufficiently rich so that  we can construct lotteries with different supports that are indifferent to a given lottery.

Although the SMEU representation appears complex, its main idea is straightforward and parallels the recursive structure of \cite{KrepsPorteus78} and \cite{EpsteinZin89}. In those models, the order of recursion is fixed by the exogenous order of time. In the SMEU representation, by contrast, the recursive structure is encoded in the RCT, which is revealed from choice behavior.

Take any vertex $A\in\cP$ in the RCT and realized values of the ancestor dimensions $x\in  X_{\bar a(A)}$. Equation (\ref{kp}) evaluates a lottery $p$ as follows: 
\begin{equation*}   U^A_x(p)=\underbrace{\bE^p_{A|x}}_{\text{\footnotesize $\begin{array}{c}
        \text{conditional expectation}\\
        \text{with respect to the}\\
        \text{present dimensions}
    \end{array}$}}
    u^A(x, \underbrace{y}_{\text{\footnotesize$\begin{array}{c}
        y\in X_{A},\\
        \text{the present}\\
        \text{dimensions}
    \end{array}$}}
    ,\underbrace{(U^B_{(x,y)}(p))_{B\in c(A)}}_{\text{\footnotesize$\begin{array}{c}      
        \text{future utilities}\\
        \text{conditional on }(x,y)
    \end{array}$}}),
\end{equation*}
in which the terms ``present dimensions'' and ``future utilities'' reflect the analogy to time in recursive utility models: The dimensions in vertex $A$ are evaluated at the current stage of the recursive evaluation, while the utilities associated with the child vertices are evaluated at the next stage, analogous to the future. Note that there may be multiple dimensions evaluated simultaneously at a given stage, because a vertex may contain more than one dimension.\footnote{Under the time analogy, multiple dimensions may be evaluated within a single period---for instance, when the decision maker consumes multiple types of goods in one period.} Likewise, there may be multiple ``futures,'' since a vertex may have several children. At the root, the overall utility is given by (\ref{kp2}), which aggregates the utilities of the top-level clusters.

The FATE, FETA, and recursive representations can be obtained by imposing simple restrictions on the RCT of the SMEU representation. Let $\mathbb{T}(\succsim)$ denote the set of all RCTs $\cT\in\mathbb{T}$ such that $(\cT,(u^A)_{A\in\cP_o})$ is an SMEU representation of $\succsim$ for some $(u^A)_{A\in\cP_o}$. We say that $\succsim$ has an SMEU representation with a \textit{unique} RCT if $\mathbb{T}(\succsim)$ is a singleton.

\begin{defn}\label{def_special1}
We say that the preference $\succsim$ has 
\begin{itemize}
    \item a FATE representation if it has an SMEU representation with an RCT such that $c(o)=\{I\}$;
    \item a FETA representation if it has an SMEU representation with a unique RCT such that $c(o)=\{\{1\},\dots,\{N\}\}$;
    \item a recursive representation if it has an SMEU representation with a unique RCT such that $|c(A)|\leqslant 1$ for all $A\in\cP_o$.
\end{itemize}
\end{defn}

Figure \ref{simple_trees} illustrates examples of the RCTs corresponding to the FATE, FETA, and recursive representations shown on the left, in the middle, and on the right, respectively. It is immediate that $\succsim$ has a FATE representation if and only if there exists a continuous and strictly increasing function $u:X\to\bR$ such that, for all $p,q\in\Delta(X)$,
$$
p\succsim q\iff \bE^p\:u(x)\geqslant \bE^q\:u(x).
$$
A FETA representation takes the form 
\[U(p) = v(\:\bE^p_1u^1(x_1),\dots,\bE^p_Nu^N(x_N)\:)\]
for all lotteries $p$, in which $v:\bR^I\to\bR$ and $u^i:X_i\to\bR$ for each $i$ are continuous and strictly increasing.
These formulations correspond to the traditional definitions of the FATE and FETA representations. A recursive representation with $N=2$ and $c(o)=\{\{1\}\}$ takes the form
$$
U(p)=\bE^p_1\:u(x_1,\:\bE^p_{2|x_1}v_{x_1}(x_2))
$$
for all lotteries $p$. Given a recursive representation with RCT $(\cP,E)$, each element of $\cP$ is a singleton and has at most one child. It is as if the unique dimension $i_1$ in $c(o)$ comes first relative to the remaining dimensions, the unique $i_2$ in $c({i_1})$ comes next relative to the rest, and so on. In this sense, the recursive representation is analogous to those in \cite{KrepsPorteus78} and \cite{EpsteinZin89}. The ordering of dimensions, however, is not imposed by calendar time; it is encoded by the RCT and revealed from choices.

\section{Inequality Across Groups and Generations}\label{sect_ineq}
While the FATE, FETA, and recursive representations have been applied in various contexts---and often at least two of them have been discussed together---our theory unifies them within a single framework and introduces intermediate cases that bridge the gaps between them. This section illustrates the value of that unification in a common environment for social evaluation: policies that affect wealth across individuals, groups, and generations. In this environment, it is well known that the FATE and FETA representations capture ex post and ex ante inequality aversion, respectively. We show below that the recursive representation captures a preference for intergenerational mobility. A long-horizon environmental policy, for example, may naturally involve all three concerns: the distribution of prospects among members of future generations, inequality in their realized wealth, and the dependence of their wealth on that of earlier generations. Other SMEU representations can combine these and related within-group and between-group concerns (see Footnote \ref{mobility_group} for related literature).

For expositional clarity, the examples below isolate these concerns using simplified versions of this environment. The Introduction uses two individuals to distinguish ex post and ex ante inequality; here we use two generations to isolate intergenerational mobility; and later examples add groups and multiple individuals within generations. Each example retains only the dimensions needed for the concern at hand, but all involve policy-induced joint distributions over wealth levels.\medskip

\textbf{Preference for Intergenerational Mobility.} First, it is natural to ask what a recursive representation may capture in this setting. Let $N=2$, $x_1\in X_1$ denote the wealth of the old generation, and $x_2 \in X_2$ denote that of the young generation. Consider  
\[
\begin{aligned}
p &= \tfrac{1}{2}\delta_{(0,0)} + \tfrac{1}{2}\delta_{(1,1)}, \quad 
q = \tfrac{1}{2}\delta_{(1,0)} + \tfrac{1}{2}\delta_{(0,1)}, \\[4pt]
\text{and }r &= p\tfrac{1}{2}q 
   = \tfrac{1}{4}\delta_{(0,0)} + \tfrac{1}{4}\delta_{(0,1)} 
   + \tfrac{1}{4}\delta_{(1,0)} + \tfrac{1}{4}\delta_{(1,1)}.
\end{aligned}
\]
Under policy $p$, both generations are rich or poor together. Under $q$, one generation is rich if and only if the other is poor. Under $r$, the two generations' wealth levels are independent, reflecting intergenerational mobility.

A policymaker may strictly prefer $r$ to both $p$ and $q$. This is impossible under the FATE representation, since it is expected-utility and $r$ is a mixture of $p$ and $q$, and thus $r$'s utility must lie between those of $p$ and $q$. It is also impossible under the FETA representation given by
$v(\bE_1^s\:u_1,\:\bE_2^s\:u_2)$ for all $s$, because $p_i = q_i = r_i=\delta_0\frac{1}{2}\delta_1$ for $i = 1, 2$.

However, under a recursive representation 
$$U(p)=\bE^p_1\:u(x_1,\:\bE^p_{2|x_1}v(x_2)),$$ it is possible that $r \succ p, q$. The reason is that under $r$, conditional on any realization of $x_1$, there remains  nontrivial risk over $x_2$---whereas under $p$ or $q$, no such conditional risk exists. If the policymaker values such conditional risk positively, then she may rank $r$ strictly above $p$ and $q$. \medskip

\textbf{Intergenerational and Group Inequality.}
Other special cases of the SMEU representation can be particularly useful for modeling intergenerational and group inequality. Intergenerational inequality concerns how economic outcomes and opportunities persist or change across generations, while group inequality focuses on inequality within and between social or demographic groups.

\begin{figure}[ht]
    \centering
    \begin{tikzpicture}[scale=1]
        \node at (0,0) {$o$};
        \node at (-1,-1.5) {$1,2$};
        \node at ( 1,-1.5) {$3,4$};
        \draw[-stealth] (0,-.3) -- (-1,-1.2);
        \draw[-stealth] (0,-.3) -- ( 1,-1.2);

        \node at (6,0) {$o$};
        \node at (6,-1.5) {$1,2$};
        \node at (6,-3.0) {$3,4$};
        \draw[-stealth] (6,-.3) -- (6,-1.2);
        \draw[-stealth] (6,-1.8) -- (6,-2.7);

        \node at (0,-4) {$o$};
        \node at (-1,-5.5) {$1$};
        \node at ( 1,-5.5) {$3$};
        \node at (-1,-7) {$2$};
        \node at ( 1,-7) {$4$};
        \draw[-stealth] (0,-4.3) -- (-1,-5.2);
        \draw[-stealth] (0,-4.3) -- ( 1,-5.2);
        \draw[-stealth] (-1,-5.8) -- (-1,-6.7);
        \draw[-stealth] ( 1,-5.8) -- ( 1,-6.7);

        \node at (6,-4) {$o$};
        \node at (5,-5.5) {$1,2$};
        \node at (7,-5.5) {$5,6$};
        \node at (5,-7) {$3,4$};
        \node at (7,-7) {$7,8$};
        \draw[-stealth] (6,-4.3) -- (5,-5.2);
        \draw[-stealth] (6,-4.3) -- (7,-5.2);
        \draw[-stealth] (5,-5.8) -- (5,-6.7);
        \draw[-stealth] (7,-5.8) -- (7,-6.7);
    \end{tikzpicture}
    \caption{The upper left shows an RCT of a generalized bracketing representation that combines the FATE and FETA representations. The upper right shows a generalized recursive representation that combines the FATE and recursive representations. The lower left shows a recursive bracketing representation that combines the FETA and recursive representations. These three representations will be formally defined and characterized in Section \ref{sect_special}. The lower right shows an RCT that combines features of all three extreme cases of the SMEU representation---the FATE, FETA, and recursive representations.}
    \label{complex_trees}
\end{figure}

We begin by showing how a special case of the SMEU representation---one that combines features of the FATE and FETA representations and will be formally called a generalized bracketing representation (see Section \ref{sect_special})---can capture group inequality. Suppose $N=4$, and let $x_i\in X_i$ denote individual $i$'s wealth. Let $A=\{1,2\}$ and $B=\{3,4\}$ represent two social groups. Consider a policymaker whose SMEU representation is  
\[
U(p)=v(\bE^p_A\: u^A,\:\bE^p_B\: u^B)
\]
for all lotteries $p$. Under this utility function, the dimensions are partitioned into two subsets $A$ and $B$. The policymaker jointly evaluates risk within each subset and then aggregates across subsets. Since $U(p)$ depends only on the marginal distributions over $A$ and $B$, correlations between the two subsets do not affect the evaluation. The corresponding RCT is shown in the upper left of Figure \ref{complex_trees}.

For simplicity, let $\delta_{(1,1,0,0)}\sim\delta_{(0,0,1,1)}$ and $v,u^A,u^B$ be strictly concave. Then, the policymaker exhibits aversion to ex post inequality within each social group ($B$ and $A$, respectively):
\[
\delta_{(1/2,1/2)}\succ_{(x_1,x_2)}\tfrac{1}{2}\delta_{(0,1)}+\tfrac{1}{2}\delta_{(1,0)}\quad\text{and}\quad
\delta_{(1/2,1/2)}\succ_{(x_3,x_4)}\tfrac{1}{2}\delta_{(0,1)}+\tfrac{1}{2}\delta_{(1,0)}.
\]
At the same time, she is averse to ex ante inequality between social groups:
\[
\tfrac{1}{2}\delta_{(1,1,0,0)}+\tfrac{1}{2}\delta_{(0,0,1,1)}\succ\delta_{(1,1,0,0)}\sim\delta_{(0,0,1,1)}.
\]
This special case of the SMEU representation therefore constitutes a natural intermediate case between the FATE and FETA representations.\footnote{The same idea can be applied to settings with ambiguity. See Section \ref{sect_lit} and \hyperref[OA_opposite]{Online Appendix \RN{1}.1} for further discussion.}

Given the previous discussion, it is straightforward to see how SMEU representations associated with the RCTs in Figure \ref{complex_trees} capture different aspects of inequality. The upper-right RCT may correspond to an SMEU representation that combines features of the FATE and recursive representations. This captures a preference for intergenerational mobility, as in our earlier example, with multiple individuals in each generation. Note that this RCT differs from that of a recursive representation. The lower-left RCT combines features of the FETA and recursive representations: The population is divided into two social groups, each containing multiple generations, and the policymaker may value intergenerational mobility within groups. In this case, however, each generation within a group contains only one individual. The lower-right RCT extends this by combining all three---FATE, FETA, and recursive representations---and captures multiple social groups, multiple generations, and multiple individuals within each generation.

\section{Uniqueness of the Simplest RCT}\label{sect_unique}
To what extent are the RCT and the Bernoulli indices in an SMEU representation unique?\footnote{With an abuse of terminology, we refer to all $(u^A)_{A\in\cP_o}$ as Bernoulli indices.} Our first observation is that, fixing the RCT $\cT=(\cP,E)$ of an SMEU representation, the uniqueness of the corresponding Bernoulli indices $(u^A)_{A\in\cP}$ is similar to that in the expected utility theory: Roughly, for all $A\in\cP$, $x\in X_{\bar a(A)}$ and $v\in\bR^{c(A)}$, the Bernoulli index $u^A(x,\cdot,v)$ is unique up to a positive affine transformation. When $A=o$, the representation equation for $u^o$ does not involve an expectation operator. Therefore, $u^o$ is only unique up to a monotone transformation. These arguments are standard. We leave the details to \hyperref[OA_results]{Online Appendix \RN{3}.1}.

More importantly, we want to analyze the uniqueness of the RCT. We defined uniqueness of the RCT in Section \ref{sect_repre}. In general, however, the RCT in an SMEU representation need not be unique. For instance, if $\succsim$ has an expected utility representation with an additively separable Bernoulli index, then, because it is both linear in probabilities and separable across dimensions, every $\cT\in\mathbb{T}$ appears in some SMEU representation of $\succsim$.

Although the RCT is not unique in general, the \textit{simplest} RCT is uniquely identifiable. For each RCT $\cT=(\cP,E)$ and $i\in I$, define $\kappa(i)$ as the depth of the vertex containing $i$ in the RCT---that is, the number of edges from the root $o$ to that vertex. Formally, for all $i\in A\in \cP$, let $\kappa(i) = |a(A)|$. When needed, we add a superscript to mappings such as $c^{\cT}$, $a^{\cT}$, $\bar a^{\cT}$, $d^{\cT}$, $\bar d^{\cT}$, and $\kappa^{\cT}$ to emphasize their dependence on the underlying RCT.

\begin{defn}\label{def_simple}
    We say that an RCT $\cT=(\cP,E)$ is simpler than another RCT $\cT'=(\cP',E')$ if the following conditions hold:
    \begin{enumerate}
        \item For all $i\in I$, we have $\kappa^{\cT}(i)\leqslant \kappa^{\cT'}(i)$.
        \item If $\kappa^{\cT}(i)= \kappa^{\cT'}(i)$ for all $i\in I$, then $\cP'$ is finer than $\cP$.\footnote{For two partitions $\cP$ and $\cP'$ of some finite set $A$, we say that $\cP'$ is finer than $\cP$ if for every $A'\in\cP'$ there exists some $A\in \cP$ such that $A'\subseteq A$.}
    \end{enumerate}
\end{defn}

Definition \ref{def_simple} should be read lexicographically. The first condition concerns depth: $\cT$ must place every dimension weakly closer to the root than $\cT'$ does. The second condition is used only when the two RCTs have the same depth for every dimension. In that case, $\cT$ is simpler if its partition $\cP$ of $I$ into vertices is coarser than the corresponding partition $\cP'$ of $\cT'$. Thus, an RCT can be simpler because it is shallower, even if it is not coarser.


All RCTs in $\mathbb{T}(\succsim)$ appear in SMEU representations of the same preference. Simplicity is therefore a selection criterion, not an additional behavioral restriction. It selects one interpretation of the same choice behavior. The first condition says that we do not interpret the evaluation of a dimension as involving additional layers of recursive conditioning when the same preference can be represented without those layers. Once depths are fixed, the second condition says that we do not interpret dimensions as being evaluated separately in different vertices when the same preference can instead be represented by evaluating them jointly in a single vertex. The simplicity order is generally incomplete. Nevertheless, the following theorem shows that, among all RCTs consistent with the preference, there exists a unique simplest one.

\begin{theorem}\label{thm_unique_h2}
If $\succsim$ has an SMEU representation, then there exists a unique $\cT^*\in \mathbb{T}(\succsim)$ such that $\cT^*$ is simpler than $\cT$ for all $\cT\in \mathbb{T}(\succsim)$.
\end{theorem}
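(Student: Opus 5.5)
We would prove the theorem by reading the simplest RCT directly off the preference through two revealed binary relations on dimensions, and then showing that every RCT in $\mathbb{T}(\succsim)$ is dominated by the tree they generate. The plan has four steps.

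\textbf{Step 1: the two relations.} For distinct $i,j\in I$, say that \emph{$j$ is separable from $i$} if, for some fixed conditioning (to be pinned down), changing the correlation between dimensions $i$ and $j$ while holding their conditional marginals fixed never changes the conditional preference. Say that \emph{$j$ is conditionally evaluable given $i$} if the conditional preferences $\succsim_z$ over lotteries on the dimensions below $j$ may depend on $z_i$. These are best expressed through independence-type properties of $\succsim_z$.

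\textbf{Step 2: link the relations to any representing tree.} Fix any $\cT\in\mathbb{T}(\succsim)$ and read Definition \ref{def_heu} directly.
\begin{itemize}
\item If $i$ and $j$ lie in different branches of $\cT$, then the representation depends only on conditional marginals given the common ancestors, so the correlation between $i$ and $j$ is irrelevant.
\item If the vertex of $i$ is not an ancestor of the vertex of $j$, then the evaluation of $j$ cannot be made conditional on realizations of $i$ beyond what is already captured.
\end{itemize}
Thus every $\cT\in\mathbb{T}(\succsim)$ must place $i$ weakly above $j$ whenever the behavior reveals genuine dependence of $j$'s evaluation on $i$. It must also place $i$ and $j$ in the same vertex or on a common path whenever their correlation is revealed to matter.

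\textbf{Step 3: define $\cT^*$ and show $\cT^*\in\mathbb{T}(\succsim)$.} We would define $\cT^*$ as follows.
\begin{itemize}
\item Let $i\to j$ mean that $j$'s conditional preference is revealed to depend on $i$, or that the correlation between $i$ and $j$ matters. Treat a revealed relation running in both directions as placing $i$ and $j$ in one vertex.
\item Depths are the minimal ones forced by these revealed relations. Vertices are the maximal clusters of dimensions that share the same set of forced ancestors and are mutually non-separable.
\end{itemize}
To show that $\cT^*$ represents $\succsim$, we would take any representing $\cT$ and transform it into $\cT^*$ by a sequence of elementary moves, checking at each move that a suitable SMEU representation survives. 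There are two kinds of move.
\begin{itemize}
\item \emph{Lifting:} move a vertex $B$ up to its grandparent when $B$'s conditional evaluation does not actually depend on its parent's dimensions. Here we redefine $u^{B}$ without the dropped argument and absorb its utility into the grandparent's aggregator. This uses the uniqueness of Bernoulli indices up to affine transformations (\hyperref[OA_results]{Online Appendix \RN{3}.1}) to make the argument well defined.
\item \emph{Merging:} combine sibling vertices that share the same ancestors into one vertex. A separable aggregator over siblings can always be rewritten as a joint expectation over the union, because jointly evaluating with a utility depending only on marginals is permitted. This requires composing: the expectation of $u^{A\cup B}$ must reproduce $v(\bE u^A,\bE u^B)$, which is not linear. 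We would instead merge only when correlation is revealed relevant, and show that in that case the representation restricted to those siblings is already linear in the joint distribution.
\end{itemize}

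\textbf{Step 4: $\cT^*$ is simpler than every $\cT\in\mathbb{T}(\succsim)$.} By Step 2, each $\kappa^{\cT}(i)\geqslant\kappa^{\cT^*}(i)$. If all depths coincide, each vertex of $\cT$ is contained in a vertex of $\cT^*$, because the vertices of $\cT^*$ are maximal. Uniqueness then follows because the simplicity order is antisymmetric: if two trees are each simpler than the other, they have equal depths and mutually finer partitions, hence the same partition. Their edges are then determined by the ancestor structure, which Step 2 fixes behaviorally.

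\textbf{Main obstacle.} The hardest part is Step 3: showing that the lifting and merging moves preserve representability, especially when a vertex's conditioning on an ancestor is only partial, that is, when it depends on some of the ancestor's dimensions but not others. We would first attack the case where a vertex depends on no dimension of its parent. We would then handle partial dependence by showing that it forces the parent to be split, which contradicts minimal depth, using continuity and strict monotonicity (condition 2 of Definition \ref{def_heu}) to construct indifferent lotteries with varied supports.
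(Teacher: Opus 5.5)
There is a genuine gap. Neither of your two relations can tell joint evaluation (same vertex) apart from conditional evaluation (ancestor), so your $\cT^*$ need not lie in $\mathbb{T}(\succsim)$.

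Take Example~\ref{eg2} with $X_1=X_2=[0,4]$. Here $1\not\perp 2$: the lottery $\frac12\delta_{(0,0)}+\frac12\delta_{(1,4)}$ has utility $2.5$, while the product of its marginals has utility $1.5$. Yet no conditional preference depends on the other dimension. $\succsim_{x_1}$ on $\Delta(X_2)$ is represented by $\bE\sqrt{x_2}$ for every $x_1$, and $\succsim_{x_2}$ on $\Delta(X_1)$ by $\bE x_1$ for every $x_2$. Because ``correlation matters'' is symmetric, your rule gives $1\to 2$ and $2\to 1$, merging both dimensions into one vertex: the FATE tree. But this preference is not EU. The indifferences $\frac12\delta_{(0,0)}+\frac12\delta_{(0,4)}\sim\delta_{(1,0)}$, $\frac12\delta_{(0,0)}+\frac12\delta_{(4,0)}\sim\delta_{(2,0)}$ and $\delta_{(0,4)}\sim\delta_{(4,0)}$ would force $u(1,0)=u(2,0)$. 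The only element of $\mathbb{T}(\succsim)$ is the chain $o\to\{1\}\to\{2\}$.

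The same example defeats your lifting move. The index of $\{2\}$ does not depend on $x_1$, yet lifting $\{2\}$ to the root would imply $1\perp 2$. Dropping an argument of $u^B$ does not remove $B$'s dependence on the conditional distributions $p_{B|(x,y)}$, which enter nonlinearly through the parent's aggregator.

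What is missing is the asymmetric, choice-revealed recursivity relation $\hra$ of Definition~\ref{order} (here $1\hra 2$ but $2\not\hra 1$) and the machinery built on it:
\begin{itemize}
\item finest bracket partitions (Lemma~\ref{lemma_finest});
\item nonemptiness of $M(A)=\{i\in A: i\hra A\}$ for non-bracket-separable $A$ (Lemma~\ref{lemma_suff_SUB}, which is where RCT perfection enters);
\item the chain-merging Lemma~\ref{lemma_modify}.
\end{itemize}
Detaching a subtree from its parent requires the $\perp$-based argument of Lemma~\ref{lemma_nece_CN}, which yields linearity of the parent's aggregator. Independence of $u^B$ from the parent's realization is not enough.

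Your merging rule and Step~4 also point the wrong way. Dimensions in sibling subtrees of any representing tree are already mutually separable (Lemma~\ref{lemma_nece_perp}), so ``merge only when correlation is revealed relevant'' never applies. Conversely, the simplest RCT must sometimes merge \emph{separable} dimensions. For an additively separable EU preference, every pair satisfies $i\perp j$ and no conditional preference depends on other dimensions, so your $\cT^*$ is the FETA tree. But $(\{I\},\{(o,I)\})\in\mathbb{T}(\succsim)$ has the same depths and a coarser partition. So your $\cT^*$ is not simpler than it, and the maximality claim in Step~4 fails. The paper merges sibling vertices exactly when they are mutually $\hra$-related. Together with their separability, this forces them to be leaves with an additively separable EU evaluation on their union, as in the proof of Lemma~\ref{lemma_nece_RI}.

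The depth inequality does not follow from Step~2 either, because revealed relations can be redundant (Examples~\ref{eg_hra} and~\ref{eg_perp}). The paper first builds a tree $\cT_0$ from the Theorem~\ref{thm_main} construction via Lemma~\ref{lemma_modify}, in which every vertex satisfies $A=M(A\cup\bar d(A))$ and $A\cup\bar d(A)$ is not bracket separable. It then argues by induction along root paths that the $k$-th subtree on $i$'s path in $\cT_0$ is contained in the $k$-th subtree on $i$'s path in any representing $\cT$. This uses that vertices of $\cT$ satisfy $A\subseteq M(A\cup\bar d^{\cT}(A))$.

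Finally, Definition~\ref{def_simple} ignores edges, so simplicity is not antisymmetric on trees. Trees with equal depths and equal partitions but different parents are each simpler than the other. Uniqueness therefore needs the paper's separate argument: such a $\cT\neq\cT^*$ would make some $A^*\cup\bar d^{\cT^*}(A^*)$ bracket separable, contradicting the construction of $\cT_0$.
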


We call $\cT^*$ in the above theorem the \textit{simplest} RCT of $\succsim$. The theorem says that once we use Definition \ref{def_simple} to select among the RCTs in $\mathbb{T}(\succsim)$, the selected RCT is unique. Thus, even when the preference has several SMEU representations with different RCTs, the theorem identifies a unique benchmark RCT.


For example, suppose that $\succsim$ has an expected utility representation with an additively separable Bernoulli index. Then every $\cT\in\mathbb{T}$ appears in some SMEU representation of $\succsim$: Different RCTs provide different ways of interpreting the same preference. The simplest RCT is $\cT^*=(\cP^*,E^*)$, where $\cP^*=\{I\}$ and $E^*=\{(o,I)\}$. In this case, the theorem selects the RCT under which the preference is interpreted as involving neither nontrivial recursive conditioning nor separate evaluations across vertices: All dimensions are evaluated jointly in a single vertex.

\section{Axiomatic Characterization}\label{sect_axiom}
We introduce axioms on the decision maker's preference that characterize the SMEU representation. We begin with three standard ones.

\begin{axiom}\label{axiom_WO}
    (Weak Order) The preference $\succsim$ is complete and transitive.
\end{axiom}

\begin{axiom}\label{axiom_M}
    (Consequence Monotonicity) For all $x,y\in X$, if $x\geqslant  y$ and $x\neq y$, then $\delta_x\succ \delta_y$.
\end{axiom}

For each single dimension, we impose independence.

\begin{axiom}\label{axiom_one}
    (Unidimensional Independence) For all $i\in I$, $p,q,r\in\Delta(X_i)$, $x\in X_{-i}$, and $\alpha\in(0,1)$, we have $p\succ_x q\implies p\alpha r\succ_x q\alpha r$.
\end{axiom}

While independence holds within each dimension, it may fail across dimensions. The main axioms below describe such departures from independence using two binary relations over the set of dimensions $I$, both revealed from choice behavior. The first relation identifies when the correlation between two dimensions is behaviorally irrelevant.

\begin{defn}\label{def_correlation}
We say that $i\perp j$ if $i\neq j$ and for all $p,q\in \Delta(X)$ such that \emph{\text{(i)}} $p_{i|z}=q_{i|z}$ and $p_{j|z}=q_{j|z}$ for all $z\in \supp(p_{\{i,j\}^\mathsf{c}})$ and \emph{\text{(ii)}} $p_{\{i,j\}^\mathsf{c}} = q_{\{i,j\}^\mathsf{c}}$, we have $p\sim q$.
\end{defn}

The idea behind this definition is simple. Fix two dimensions $i$ and $j$. Suppose that two lotteries $p$ and $q$ have the same joint distribution over $\{i,j\}^{\mathsf c}$ and, conditional on each realization of $\{i,j\}^{\mathsf c}$, the same marginal distributions over $i$ and over $j$, respectively. Then the only remaining difference between $p$ and $q$ is the conditional correlation between dimensions $i$ and $j$. If the decision maker is indifferent between every such pair of lotteries, we write $i\perp j$. Thus, $i\perp j$ means that the conditional correlation between $i$ and $j$ is behaviorally irrelevant: Changing this correlation while holding the relevant marginal distributions fixed cannot affect choices.

The binary relation $\perp$ extends naturally to subsets of $I$.  For any disjoint sets $A,B\subseteq I$, we write $A\perp B$ if $i\perp j$ for all $i\in A$ and $j\in B$. In this case, correlations between dimensions in $A$ and dimensions in $B$ are behaviorally irrelevant in the pairwise sense. In addition, for any $B\subsetneq A\subseteq I$, we say that $B$ is \textit{minimally separable} from $A$ if $B\perp (A\setminus B)$ and $B'\not\perp (A\setminus B')$ for all nonempty $B'\subsetneq B$. There may be multiple minimally separable subsets of a given set $A$. If $B$ is minimally separable from $A$, then changes in the correlation between dimensions in $B$ and dimensions in $A\setminus B$ do not affect choices, while no smaller nonempty subset of $B$ has this property.

If $A\perp B$, how should the decision maker evaluate risk over $A\cup B$? A natural assumption is that the risks in $A$ and $B$ are evaluated separately. The next axiom imposes this separability requirement at the minimal level: If $B$ is minimally separable from $A$, then the risk in $B$ can be evaluated separately from the risk in $A\setminus B$. In deterministic environments, this idea parallels the classical notion of weak separability in \cite{goldman1964note} and \cite{Gorman1968}.


\begin{axiom}\label{axiom_S}
    (Uncorrelated Separability) For all $B\subsetneq A\subseteq I$ such that $B$ is minimally separable from $A$, we have $p\succsim_x q \iff (p_B, r)\succsim_x (q_B,r)$ for all $x\in X_{A^{\cp}}$, $r\in \Delta(X_{A\setminus B})$, and $p,q\in \Delta(X_A)$ such that $p_{A\setminus B}=q_{A\setminus B}$.\footnote{Recall that for any $p\in \Delta(X_A)$ and $B\subsetneq A$, $p_{A\setminus B}$ denotes the marginal distribution of $p$ on $A\setminus B$. Note that $p$ need not be written as $(p_{A\setminus B},p_B)$, because $p_{A\setminus B}$ and $p_B$ may be correlated. } 
\end{axiom}

Take any set $B$ that is minimally separable from $A$. Axiom \ref{axiom_S} requires the comparison between $p_B$ and $q_B$ to be invariant to both the risk in the remaining dimensions $A\setminus B$ and the correlation between $B$ and $A\setminus B$. The axiom therefore imposes a natural separability condition: The risk in $B$ can be evaluated separately from the risk in $A\setminus B$.

This separability property should be understood in the weak, rather than additive, sense. In deterministic environments, weak separability means that the ranking of alternatives within one block is independent of the values of variables outside that block. Under standard conditions, this property yields a representation in which each block is first evaluated separately and the resulting block utilities are then  combined through a general aggregator; the aggregator need not be additive \citep{goldman1964note,Gorman1968}. Additive separability, as in \cite{D1960additive}, requires stronger invariance conditions: Separability must hold not only for each individual block, but also for arbitrary unions of blocks. In this sense, Axiom \ref{axiom_S} follows the logic of  weak, rather than additive, separability.

The second binary relation identifies when choices are consistent with evaluating one dimension  conditional on realizations of another. 

\begin{defn} \label{order}
     We say that $i\hra j$ if the following implication holds. Suppose $x^k,y^k\in X$ and $\pi^k\in[0,1)$ for $k=1,\ldots,n$ satisfy  \emph{\text{(i)}} $\sum_{k=1}^n\pi^k=1$, \emph{\text{(ii)}} $(x^k_i)_{k=1}^n$ are pairwise distinct and so are $(y^k_i)_{k=1}^n$, and \emph{\text{(iii)}} for all $k$, $x^k_{\{i,j\}^\mathsf{c}} = y^k_{\{i,j\}^\mathsf{c}} $ and $ x^k\succsim y^k$. Then we have $\sum_{k=1}^n\pi^k\delta_{x^k}\succsim \sum_{k=1}^n\pi^k\delta_{y^k}$.
\end{defn}


Definition \ref{order} uses the behavioral content of the recursivity axiom in \cite{CE1991recursive} to define the revealed relation $\hra$. In \cite{CE1991recursive}, recursivity is a state-by-state dominance requirement on a recursively constructed choice domain. In their setting, a choice alternative is a temporal lottery, and after a state is realized, the continuation object is itself a temporal lottery. To see the behavioral content of the axiom, suppose there are $n$ possible states, occurring with probabilities $\pi^1,\dots,\pi^n$, and consider two temporal lotteries. If state $k\in\{1,\dots,n\}$ is realized, the first temporal lottery delivers $x^k$, while the second delivers $y^k$. Recursivity requires that, if $x^k\succsim y^k$ for all $k$, then the first temporal lottery is weakly preferred to the second. Thus, the axiom can also be viewed as a form of dynamic consistency: A preference that holds after every realized state must be respected before  the state is realized.\footnote{As noted by \cite{CE1991recursive}, recursivity is closely related to the notion of dynamic consistency in \cite{johnsen1985structure}.}

In our choice domain, which is not recursively constructed, Definition \ref{order} applies the same state-by-state dominance test locally.  The realizations of dimension $i$ play the role of states, and the test identifies when choices are consistent with evaluating dimension $j$ conditional on realizations of dimension $i$. The condition 
$x^k_{\{i,j\}^{\mathsf c}}=y^k_{\{i,j\}^{\mathsf c}}$ holds fixed all dimensions other than $i$ and $j$ in each state-by-state comparison. Thus, the test isolates the relationship between these two dimensions.

The pairwise-distinctness requirement on dimension $i$ is particularly important. Its role is clearest in the case of two ``states'' and two dimensions. In a recursively constructed domain, state labels remain part of the object: Even if two states lead to the same realization of dimension $i$, they are still treated as distinct states. In our domain, by contrast, suppose that $x_i^1=x_i^2=0$. Then the mixture $\frac{1}{2}\delta_{x^1}+\frac{1}{2}\delta_{x^2}$ records only that, conditional on dimension $i$ taking value $0$, dimension $j$ is distributed according to $\frac{1}{2}\delta_{x_j^1}+\frac{1}{2}\delta_{x_j^2}$. Thus, a decision maker who evaluates dimension $j$ conditional on dimension $i$ faces a single conditional risk, rather than two separate state-contingent realizations $x_j^1$ and $x_j^2$ of dimension $j$, as in the recursively constructed domain. Consequently, without pairwise distinctness, forming the mixture over $k$ can alter the conditional risk of dimension $j$ given dimension $i$. The state-by-state dominance test would then no longer isolate the intended conditional-evaluation relation. See Example \ref{eg2} for a related illustration of how mixtures may affect conditional risk.

We impose independence whenever it is compatible with $\hra$. Therefore, when dimension $i\in A$ can serve as the conditioning dimension for evaluating risk in $A$, we strengthen the state-by-state dominance test above to an independence requirement.

\begin{axiom}\label{axiom_RI}
    (Conditional Independence) For all $i\in A\subseteq I$, if $i\hra j$ for all $j\in A$, then for all $\alpha\in(0,1)$, $x\in X_{A^\mathsf{c}}$, and $p,q,r,s\in\Delta(X_A)$ such that $\supp(p_i)\cap \supp(r_i)=\emptyset$ and $\supp(q_i)\cap \supp(s_i)=\emptyset$, we have $p\succ_x q$ and $r\sim_x s\implies p\alpha r\succ_x q\alpha s$. 
\end{axiom}

Fix $i\in A$ such that $i\hra j$ for all $j\in A$, and fix $x\in X_{I\setminus A}$. Axiom \ref{axiom_RI} states that, for lotteries over $A$ with the dimensions in $I\setminus A$ fixed at $x$, a strict preference for $p$ over $q$ is preserved when $p$ and $q$ are mixed with indifferent lotteries $r$ and $s$, respectively. Thus, once $i\hra j$ for all $j\in A$, the axiom strengthens the state-by-state dominance test in Definition \ref{order} into an independence requirement for lotteries over $A$. The support-disjointness requirements play the same role as the pairwise-distinctness condition in Definition \ref{order}: They prevent the mixture from pooling different terms under the same realization of dimension $i$.

The next axiom imposes a consistency requirement on the two revealed relations $\hra$ and $\perp$.

\begin{axiom}\label{axiom_H}
    (RCT Perfection) For all $i^1,\dots, i^n\in I$ such that $i^k\not\perp i^{k+1}$ for all $k=1,\dots,n-1$, if $i^1\hra i^\ell$ and $i^n\hra i^\ell$ for all $\ell=2,\dots,n-1$, then either $i^1\hra i^n$ or $i^n\hra i^1$. 
\end{axiom}

To interpret RCT perfection, it is useful to begin with the case of two dimensions. In our framework, dimensions $1$ and $2$ can be evaluated in three ways: using the joint distribution of $(x_1,x_2)$; using the marginal distribution of one dimension together with the conditional distribution of the other; or using the two marginal distributions separately.\footnote{One could study richer models in which the distributions used in evaluation differ from those induced by the lottery. We do not pursue this possibility. In our model, whenever a marginal or conditional distribution is used, it is the one induced by the lottery being evaluated.} Axiom \ref{axiom_H} imposes the corresponding consistency requirement: If the correlation between dimensions $1$ and $2$ is behaviorally relevant, as captured by $1\not\perp 2$, then the evaluation must use information beyond their separate marginal distributions. Hence, the decision maker must either evaluate the two dimensions jointly  or use a marginal-and-conditional decomposition. In terms of the revealed relation $\hra$, this requires that at least one of $1\hra 2$ and $2\hra 1$ hold. The remaining case, in which neither relation is revealed, corresponds to an evaluation  based solely on the two marginal distributions.

Now consider three dimensions. Suppose $1\not\perp 2$, $2\not\perp 3$, $1\hra 2$, and $3\hra 2$. Intuitively, under these conditions, dimension $2$ is evaluated conditional on both dimensions $1$ and $3$. For each realized pair $(x_1,x_3)$, this conditional evaluation is based on the conditional distribution $p_{2|(x_1,x_3)}$. Comparing lotteries therefore requires considering both the conditional evaluation associated with each pair $(x_1,x_3)$ and the probability of that pair. These probabilities form the joint distribution $p_{\{1,3\}}$, which must therefore enter the comparison. This joint distribution can be incorporated either directly, by evaluating dimensions $1$ and $3$ jointly, or through a marginal-and-conditional decomposition, by evaluating one conditional on the other. In terms of the revealed relation $\hra$, this means that at least one of $1\hra 3$ and $3\hra 1$ must hold. This is precisely what Axiom \ref{axiom_H} requires in the three-dimensional case. The same logic applies to longer chains.\footnote{Although $\hra$ and $\perp$ cannot be exactly elicited using finite datasets, RCT perfection can be falsified using finite datasets. See \hyperref[falsify]{Online Appendix \RN{3}.4}.}

The next axiom is continuity. The standard continuity axiom requires that for every lottery $p$, the set of lotteries that are weakly better than $p$ and the set of lotteries that are weakly worse than $p$ are closed. This notion of continuity may be too demanding in our theory. Consider the following example.
 
\begin{example}
    \label{eg2}Let $N=2$. Suppose the decision maker's utility function is
    $$U(p)=\sum_{x_1}p_1(x_1)u(x_1, p_{2|x_1}),$$
    in which $u(x_1,p_2)=x_1+(\sum_{x_2}p_2(x_2)\sqrt{x_2})^2$. That is, the utility of $(x_1,p_2)$ equals the sum of $x_1$ and the certainty equivalent of $p_2$ under the square-root Bernoulli index.
    
    Consider a lottery that yields $(\varepsilon,0)$ and $(0,4)$ with equal probability. As $\varepsilon$ converges to $0$ its utility converges to $2$, but the lottery converges in distribution to $(\delta_0, q_2)$ with utility $u(0,q_2)=1$, in which $q_2=\delta_{0}\frac{1}{2}\delta_4$.
\end{example}

This discontinuity arises because, in the limit, the mixture changes the conditional risk of dimension $2$ given realization $0$ of dimension $1$. For any nonzero $\varepsilon$, conditional on $x_1=0$, the lottery for dimension $2$ is $\delta_4$. By contrast, when $\varepsilon$ is $0$, the lottery for dimension $2$ conditional on $x_1=0$ becomes $\tfrac{1}{2}\delta_0+\tfrac{1}{2}\delta_4$. Thus, the discontinuity comes from the change in the conditional risk used to evaluate dimension $2$.\footnote{It is possible to test whether people's choice behavior exhibits this kind of discontinuity. For example, in the multisource income setting with initial wealth (see Section \ref{section_app}), one could vary the correlation between initial wealth and income and use econometric methods to test whether such changes systematically affect observed choice behavior.}

This contrasts with standard dynamic choice models such as \cite{KrepsPorteus78},  \cite{EpsteinZin89}, and \cite{CE1991recursive}. In those models, choice alternatives are constructed recursively according to an exogenous linear order of dimensions that represent time periods. This domain structure simplifies the characterization and makes it straightforward to impose continuity. In our choice domain, by contrast, the recursive structure is not built into the domain; it is revealed from choice behavior. As Example \ref{eg2} illustrates, this difference can make full continuity too strong for SMEU.

Nonetheless, the following weaker notion of continuity is independent of the observation behind Example \ref{eg2} and should remain valid in our theory.

\begin{axiom}\label{axiom_C}
    (Weak Continuity) For all $p,q,r\in \Delta(X)$, the sets $\{\alpha\in[0,1]:p\alpha q\succsim r\}$ and $\{\alpha\in[0,1]:r\succsim p\alpha q\}$ are closed in $[0,1]$, and the sets $\{x\in X:\delta_x\succsim p\}$ and $\{x\in X:p\succsim \delta_x\}$ are closed in $X$.
\end{axiom}

The main representation theorem is below. 

\begin{theorem} \label{thm_main}
   The preference satisfies weak order, consequence monotonicity, unidimensional independence, uncorrelated separability, conditional independence, RCT perfection, and weak continuity if and only if it has an SMEU representation.
\end{theorem}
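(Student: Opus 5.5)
Necessity is the routine direction and I will dispatch it first. Take an SMEU representation $(\cT,(u^A)_{A\in\cP_o})$.
\begin{itemize}
\item Weak order and consequence monotonicity follow from condition 2 of Definition \ref{def_heu} at the root.
\item Weak continuity follows from two facts: $U^o(p\alpha q)$ is continuous in $\alpha$, since all conditional distributions move continuously in $\alpha$ away from finitely many points and the recursion is built from continuous maps, and $U^o(\delta_x)$ is continuous in $x$.
\item Unidimensional independence holds because, with the other dimensions fixed, the recursion collapses to an expectation of a single Bernoulli index of dimension $i$.
\item Two dimensions lying in different branches satisfy $i\perp j$, since only conditional marginals given common ancestors enter the recursion.
\item Whenever the vertex of $i$ is a weak ancestor of the vertex of $j$, we get $i\hra j$: pairwise distinctness keeps each conditional risk unchanged under the mixture, so the $\hra$ test reduces to monotonicity of an expectation.
\end{itemize}
From these two facts I get uncorrelated separability: a minimally separable $B$ must be a union of subtrees hanging off a common ancestor, and weak separability of $u^A$ in its child arguments then follows. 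Conditional independence follows because, when $i\hra j$ for all $j\in A$, the set $A$ lies within a chain starting at $i$'s vertex. RCT perfection follows because a $\not\perp$-chain whose endpoints both condition the interior dimensions must lie on a common root path, so the endpoints are comparable.

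For sufficiency, the plan is to first build the RCT purely from the revealed relations and then build the utilities by recursion from the leaves up.
\begin{enumerate}
\item \emph{Vertices.} Define $i\approx j$ iff $i\hra j$ and $j\hra i$. I would show, using RCT perfection with $n=2,3$ together with uncorrelated separability, that $\hra$ is transitive on $\not\perp$-connected chains. This would make $\approx$ an equivalence relation whose classes are the vertices $\cP$.
\item \emph{Edges.} Put $A$ above $B$ when $a\hra b$ but not $b\hra a$ for $a\in A$, $b\in B$ that are linked through a $\not\perp$-path. The parent of $B$ is the maximal such $A$; RCT perfection guarantees that the ancestors of a vertex are totally ordered, which yields an out-tree. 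Vertices in different branches should satisfy $\perp$ by construction: define branches as $\not\perp$-connected components after removing the common ancestors, and use the fact that a failure of $\hra$ in both directions forces $\perp$.
\item \emph{Utilities at each vertex.} Proceed by induction from the leaves.
\begin{itemize}
\item For a vertex $A$ with ancestor realizations $x$, consider $\succsim_z$ on $\Delta(X_{A\cup\bar d(A)})$.
\item By uncorrelated separability applied to the minimally separable children subtrees, the ranking depends on each child subtree only through its conditional lottery given $(x,y)$. By induction, that conditional lottery is summarized by $U^B_{(x,y)}$; weak continuity and monotonicity give certainty equivalents, so each child subtree can be replaced by a degenerate consequence.
\item On the reduced domain, every dimension of $A$ conditions the others ($i\hra j$ within $A$). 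Conditional independence then yields a mixture-space (vNM) representation on lotteries whose $A$-supports are disjoint, with unidimensional independence handling the case of overlapping supports. This gives $U^A_x=\bE_{A|x}u^A(x,y,\cdot)$.
\item The root aggregator $u^o$ comes from ordinal Debreu–Gorman weak separability across the children of $o$, again using uncorrelated separability and continuity over consequences.
\end{itemize}
\end{enumerate}

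The main obstacle is twofold. First, showing that the revealed relations genuinely produce a tree (transitivity and acyclicity of the strict part of $\hra$, and the $\perp$/$\not\perp$ branch structure) from only RCT perfection and uncorrelated separability. Second, the vNM step under only weak continuity and the support-disjointness restriction. For the latter I would first establish expected utility on lotteries with distinct $i$-values, and then extend to general lotteries by perturbing $i$-values, using consequence monotonicity and closedness in $x$ together with unidimensional independence to pin down the pooled cases.
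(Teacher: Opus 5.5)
There is a genuine gap in how you construct the RCT, and it is exactly where RCT perfection has to do its work.

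\textbf{The global edge rule fails.} Your rule puts $A$ above $B$ when $a\hra b$, $b\not\hra a$, and $a,b$ are linked by a $\not\perp$-path. But strict $\hra$ can hold between dimensions that lie in different branches in \emph{every} SMEU representation, and a common ancestor can supply the $\not\perp$-path. For instance, place Example~\ref{eg_hra} beneath an extra dimension $c$ that interacts non-separably with both brackets:
\[
U(p)=\bE^p_c\Big[x_c+\bE^p_{1|x_c}\big(g(x_c)\,x_1\,\bE^p_{3|(x_c,x_1)}x_3+u(x_1,p_{5|(x_c,x_1)})\big)+h(x_c)\,\bE^p_{2|x_c}u(x_2,p_{4|(x_c,x_2)})\Big],
\]
with $u$ as in Example~\ref{eg2}, positive domains, and $g,h$ positive, increasing and nonconstant. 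Then:
\begin{itemize}
\item $2\hra 3$ and $1\hra 3$ hold strictly;
\item $1$ and $2$ are $\hra$-incomparable;
\item $2\not\perp c\not\perp 3$.
\end{itemize}
Under your rule both $\{1\}$ and $\{2\}$ lie above $\{3\}$. RCT perfection does not make $1$ and $2$ comparable, because every $\not\perp$-path from $1$ to $2$ passes through $c$ and $1\not\hra c$. Moreover, no SMEU representation puts $2$ above $3$. So the claim that ``ancestors are totally ordered'' is false for your relation.

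\textbf{The missing idea.} The paper builds the tree top-down, \emph{relative to the current set} $A$. If $A$ is bracket separable, it splits $A$ into its finest bracket partition (Lemma~\ref{lemma_finest}); uncorrelated separability turns $\not\perp$-components into bracket blocks. Otherwise it peels off $M(A)=\{i\in A: i\hra A\}$ and recurses on $A\setminus M(A)$. What makes this work is Lemma~\ref{lemma_suff_SUB}: a set that is not bracket separable (hence $\not\perp$-connected) contains some $i$ with $i\hra A$, so the recursion never stalls. Its proof runs as follows:
\begin{itemize}
\item take a $\hra$-maximal $i^*$;
\item if $i^*\not\hra j^*$, then also $j^*\not\hra i^*$, so $i^*\perp j^*$ by RCT perfection with $n=2$;
\item take a shortest $\not\perp$-path from $i^*$ to $j^*$, truncated so that $i^*$ conditions its interior;
\item an induction using RCT perfection on subchains shows that $j^*$ also conditions the interior;
\item RCT perfection on the whole path then gives a contradiction.
\end{itemize}
A total order on ancestors is automatic in a top-down recursion. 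What RCT perfection must deliver is the existence of this top element, and your plan never establishes it.

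\textbf{Transitivity.} RCT perfection cannot give transitivity of $\hra$. Its hypothesis needs both endpoints to condition the interior, which is not the configuration $i\hra j\hra l$, and its conclusion is only a disjunction. The paper proves transitivity directly from the definition of $\hra$ (Lemma~\ref{lemma_transitive}), using only weak order, monotonicity, unidimensional independence and weak continuity. You need transitivity both for the maximal element above and for your $\approx$ to be an equivalence relation.

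\textbf{The vertex step.} Do not try to extend expected utility to pooled lotteries by perturbing $i$-values. The preference is genuinely nonlinear there, and Example~\ref{eg2} shows that such perturbation limits give the wrong value. What you need is linearity only for mixtures with disjoint $i$-supports (Lemma~\ref{lemma_linear}). Because a partial independence axiom does not give a mixture space, this is obtained by calibrating against $\delta_{\ux}\alpha\delta_{\lx}$ (Lemmas~\ref{lemma_independence}--\ref{lemma_mixture}). It already yields $U_x(p)=\sum_{y_i}p_i(y_i)\,U_x(\delta_{y_i},p_{\cdot|y_i})$, and pooled cases are then handled by the recursion itself. This is why the paper chains the elements of $M(A)$ as singleton vertices.

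\textbf{Necessity.} A fixed representation need not display $\perp$ or $\hra$. For example, additively separable EU written on the chain $o\to\{1\}\to\{2\}$ has $\{1\}$ minimally separable from $I$, yet $\{1\}$ is not a union of subtrees; see also Examples~\ref{eg_hra} and~\ref{eg_perp}. So uncorrelated separability and conditional independence require first re-representing the conditional preference, which is what Lemmas~\ref{lemma_nece_CN} and~\ref{lemma_nece_RI} do.
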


\medskip
\textbf{Revealed Relations and the RCT.}
We conclude this section by clarifying the relationship between the RCT structure and the revealed binary relations $\hra$ and $\perp$. A natural first interpretation is that $i\hra j$ corresponds to dimension $i$ being an ancestor of $j$ in the RCT, while $i\perp j$ corresponds to dimensions $i$ and $j$ being placed in different branches. This interpretation is correct in one direction. If an SMEU representation has $i$ as an ancestor of $j$ in the RCT, then $i\hra j$; and if it places $i$ and $j$ in different branches, then $i\perp j$ (see \lemmaref{lemma_nece_hra} and \lemmaref{lemma_nece_perp}).

The converse is more subtle. A revealed relation may be redundant: It may fail to correspond to the associated tree relation in any SMEU representation. Nevertheless, such redundancy is limited. If $i\not\perp j$, then $i$ and $j$ must remain in the same branch in every SMEU representation. Moreover, if in addition $i\hra j$, then there exists an SMEU representation in which $i$ is an ancestor of $j$ in the RCT. Conversely, if neither $i\hra j$ nor $j\hra i$ holds, then $i$ and $j$ must be placed in different branches in every SMEU representation. Thus, redundancy can arise only when both $i\hra j$ and $i\perp j$ hold. In that case, one of the two revealed relations may be redundant for constructing the RCT.

To see how such redundancy is resolved, and to prepare for the examples below, it is useful to understand the construction of the RCT behind the proof of \thmref{thm_main}. At any given step, suppose that the dimensions under consideration form a set $A\subseteq I$. The construction proceeds in one of two ways. First, if there exists a nonempty subset $B\subsetneq A$ such that $B\perp(A\setminus B)$, then the minimally separable subsets of $A$ form a partition of $A$ (see \lemmaref{lemma_finest}). The construction places the elements of this partition in different branches. Thus, whenever two dimensions $i$ and $j$ belong to different elements of the partition, the natural first interpretation of $i\perp j$ is realized in the RCT. If they belong to the same element, this step does not determine their relationship, and the construction continues within that element.

The second way to proceed uses $\hra$. If there exists a dimension $i\in A$ such that $i\hra j$ for all $j\in A$, let $M$ denote the set of all such dimensions. The construction can place $M$ in one vertex, with the remaining dimensions in $A$, if any, placed in descendant subtrees. Now consider $i,j\in A$ with $i\hra j$. If $i\in M$ and $j\notin M$, the natural first interpretation of $i\hra j$ is realized in the RCT. If $j\in M$, then $i\in M$ as well by \lemmaref{lemma_transitive}; in this case, the RCT places $i$ and $j$ in the same vertex, reflecting the fact that both $i\hra j$ and $j\hra i$ are present, rather than selecting one of them as a strict ancestor relation. If neither $i$ nor $j$ belongs to $M$, this step does not determine their relationship, and the construction continues with the remaining dimensions $A\setminus M$.

Axioms \ref{axiom_WO}--\ref{axiom_C} ensure that this recursive construction is well defined. At every non-singleton set $A\subseteq I$ reached in the construction, at least one of the two ways described above is applicable (see \lemmaref{lemma_suff_SUB}). Hence the construction can be iterated. If two dimensions $i$ and $j$ remain together until the construction reaches the set $\{i,j\}$, then one of the two ways must again apply to this binary set. The first way realizes the natural first interpretation of $i\perp j$, while the second realizes that of either $i\hra j$ or $j\hra i$.

We now illustrate the two possible forms of redundancy. The first example shows that $i\hra j$ need not imply that $i$ is an ancestor of $j$ in the RCT.

\begin{example}\label{eg_hra} 
Let $N=5$. Suppose the decision maker's utility function is 
\begin{align*}
        U(p)= \bE^{p}_{\{1,3\}}\:x_1x_3 
        + \bE^p_1\:u(x_1,p_{5|x_1}) 
        + \bE^p_2\:u(x_2,p_{4|x_2}),
\end{align*}
in which $u$ takes the functional form in Example \ref{eg2} (the first argument plus the certainty equivalent of the second under the square-root Bernoulli index). We can verify that $2\hra 3$. However, dimension $2$ is not an ancestor of $3$ in the RCT. To see why, start at the root. Consider  whether the second way of the construction applies: Is there a dimension $i$ such that $i\hra j$ for all $j\in I$? Dimension $1$ is the natural candidate. It is linked to dimension $3$ through the first term and to dimension $5$ through the conditional evaluation of $5$ given $1$. However, dimension $4$ is evaluated conditional on dimension $2$, not on dimension $1$. In fact, there is no dimension $i$ such that $i\hra j$ for all $j\in I$.

Thus, the construction cannot proceed by the second way at the root. It therefore proceeds by the first way. The minimally separable subsets of $I$ are $\{1,3,5\}$ and $\{2,4\}$, which implies that dimensions $2$ and $3$ are placed in different branches. Thus, the revealed relation $2\hra 3$ is redundant for constructing the RCT: In any SMEU representation, the vertex containing dimension $2$ is not an ancestor of the vertex containing dimension $3$.
\end{example}

The second example illustrates the opposite form of redundancy: $i\perp j$ need not imply that $i$ and $j$ are placed in different branches.

\begin{example}\label{eg_perp} 
Let $N=4$. Suppose the decision maker's utility function is 
\begin{align*}
        U(p)= \bE^{p}_{\{1,2\}}\:x_1x_2 
        + \bE^{p}_{\{2,3\}}\:x_2x_3 
        + \bE^p_1\:u(x_1,p_{4|x_1}),
\end{align*}
in which $u$ again takes the functional form in Example \ref{eg2}. We can verify that $1\perp 3$. However, dimensions $1$ and $3$ are not placed in different branches in the RCT. To see why, start again at the root. If the first way of the construction were available, the minimally separable subsets of $I$ would form a nontrivial partition of $I$. Consider the element of this partition that contains dimension $1$. Because the first and third terms of the utility function are chosen so that $1\not\perp 2$ and $1\not\perp 4$, respectively, this element must also contain dimensions $2$ and $4$. Since it contains dimension $2$ and $2\not\perp 3$, it must also contain dimension $3$. Hence this element must contain all four dimensions, so no nontrivial separability partition is possible at the root.

The construction therefore proceeds by the second way. In this example, dimension $1$ serves as the conditioning dimension at the root. Thus, the revealed relation $1\perp 3$ is redundant for constructing the RCT: In any SMEU representation, the vertex containing dimension $1$ and the vertex containing dimension $3$ are not placed in different branches; rather, the former is always an ancestor of the latter.
\end{example}

\medskip
\textbf{RCTs and DAGs.}
We now make precise the connection between RCTs and DAGs noted in the Introduction. At the level of its vertices, an RCT is itself a DAG: It is a directed rooted tree whose non-root vertices contain subsets of dimensions.

There is a second connection at the level of individual dimensions. Given an RCT, we can construct the following directed graph in which each dimension is a vertex. For any distinct dimensions $i,j\in I$, include a directed edge from $i$ to $j$ if and only if either $i$ and $j$ belong to the same RCT vertex, or the RCT vertex containing $i$ is an ancestor of the vertex containing $j$. Both cases have the same interpretation: Dimension $j$ can be evaluated conditional on the realization of dimension $i$, and both the conditional risk and the utility used in this evaluation may depend on that realization.

There are three possible types of relation between any two dimensions. If both directed edges between $i$ and $j$ are present, the two dimensions belong to the same RCT vertex and are evaluated jointly. If exactly one directed edge is present, say from $i$ to $j$, the RCT vertex containing $i$ is an ancestor of the vertex containing $j$, and dimension $j$ is evaluated conditional on dimension $i$. If neither edge is present, the two dimensions lie in different branches and are evaluated separately; their correlation is behaviorally irrelevant. Thus, joint, conditional, and separate evaluation arise from the same directional relation, as explained in the Introduction. Clearly, the RCT can be recovered from the resulting directed graph over dimensions.

This construction illustrates the second connection between RCTs and the DAGs used in the literature cited in Footnote \ref{causality}. The commonality is structural: In both frameworks, the presence or absence of a directed edge records whether the corresponding relation is part of the decision maker's subjective representation. In DAG models of imperfect causal perception, the edges capture the probabilistic or causal relations that the decision maker takes into account. In our model, an edge from dimension $i$ to dimension $j$ records that dimension $j$ can be evaluated conditional on the realization of dimension $i$. The edges therefore describe how multidimensional risk is evaluated; they need not be interpreted as an objective causal structure or as a mistaken perception of one.

At the graph-theoretic level, Axiom \ref{axiom_H} is analogous to the perfection, or no-collider, property in DAG models of perceived causality: In that literature, any two perceived causes of the same variable must themselves be connected; see, among others, \cite{spiegler20} and \cite{EllisThysen24}.

\section{Special Cases}
\label{sect_special}
Our framework allows us to characterize several well-known utility representations in a novel way and to derive useful generalizations of them.

We begin by characterizing the FATE, FETA, and recursive representations in Definition \ref{def_special1}. While expected utility theory already provides a characterization of the FATE representation, we offer an alternative characterization within our framework. This new perspective on the FATE representation facilitates the derivation of its natural generalization. The same applies to the FETA and recursive representations.

\begin{theorem}\label{thm_special}
    Suppose the preference $\succsim$ has an SMEU representation. It has a FATE representation if and only if $i\hra j$ for all $i,j\in I$. It has a FETA representation if and only if $i\hra j\iff i=j$ for all $i,j\in I$. It has a recursive representation if and only if $\hra$ is a linear order.
    
\end{theorem}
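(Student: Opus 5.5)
The plan is to prove each of the three equivalences by combining two facts about SMEU representations that the paper attributes to its lemmas: \lemmaref{lemma_nece_hra} (ancestor, and also same-vertex, implies $\hra$) and \lemmaref{lemma_nece_perp} (different branches imply $\perp$). These are used together with the recursive construction of the RCT described after \thmref{thm_main}. In that construction, at each set $A$ we either split $A$ into its minimally separable subsets (\lemmaref{lemma_finest}) or extract the vertex $M=\{i\in A: i\hra j\ \forall j\in A\}$. By \lemmaref{lemma_suff_SUB}, one of the two moves is always available. The ``only if'' directions come from necessity, and the ``if'' directions from running the construction under the stated restriction on $\hra$.

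\textbf{FATE.} For necessity, take an SMEU representation with $c(o)=\{I\}$. Every pair $i,j$ lies in the same vertex. Within a single vertex the evaluation is expected utility of $u^I$, so the state-by-state dominance test in Definition \ref{order} holds by linearity; hence $i\hra j$ for all $i,j$. For sufficiency, if $i\hra j$ for all $i,j$, then at the root $M=I$. The construction produces the single vertex $I$, and Axiom \ref{axiom_RI} applied with $A=I$ gives full independence. Together with weak order and weak continuity, this yields an expected utility representation with a continuous, strictly increasing index, i.e.\ a FATE representation.

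\textbf{FETA.} For sufficiency, suppose $i\hra j$ iff $i=j$. Then no nonsingleton $A$ has a nonempty $M$, so the construction must split at every step. In particular, at the root the minimally separable subsets partition $I$. Any block with two elements would itself have to be split further, which would contradict minimal separability unless all blocks are singletons. So $c(o)=\{\{1\},\dots,\{N\}\}$. Uniqueness of the RCT follows from necessity: any SMEU RCT placing $i,j$ in the same vertex or in an ancestor relation would give $i\hra j$ (or $j\hra i$) by \lemmaref{lemma_nece_hra}, which is excluded. Hence every RCT in $\mathbb{T}(\succsim)$ has all singletons as children of $o$. For necessity, a FETA representation has a unique RCT with all singletons at the root. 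If $i\hra j$ held for some $i\neq j$, I would use the construction to produce an SMEU representation with $i$ an ancestor of $j$ (the ``moreover'' claim in the discussion after \thmref{thm_main}, which needs $i\not\perp j$), contradicting uniqueness. If instead $i\perp j$ for all pairs, I expect to show that $i\hra j$ fails directly. To do this, I would exhibit pairs $x^k\succsim y^k$ whose mixtures reverse, using strict concavity-type freedom of $v$. This part is delicate, because FETA with an additively separable $v$ could make $\hra$ hold; however, then the RCT would not be unique. So the plan is to argue by contrapositive: whenever $i\hra j$ for some $i\neq j$, build a second RCT in $\mathbb{T}(\succsim)$.

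\textbf{Recursive.} If $\hra$ is a linear order (read as a reflexive, total, antisymmetric, transitive relation), then at each stage $M$ is the singleton consisting of the $\hra$-maximal element of the remaining set. Antisymmetry prevents $|M|\ge 2$. The construction therefore yields a chain of singletons, so $|c(A)|\le1$. For uniqueness, any RCT in $\mathbb{T}(\succsim)$ with two vertices in different branches would give $\perp$. I would then show that this makes the representation compatible with a swapped order, so that both $i\hra j$ and $j\hra i$ hold, contradicting antisymmetry; a nonsingleton vertex contradicts antisymmetry directly via \lemmaref{lemma_nece_hra}. Conversely, in a recursive representation with a unique chain RCT, \lemmaref{lemma_nece_hra} gives totality and \lemmaref{lemma_transitive} gives transitivity. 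Antisymmetry follows from uniqueness, as in the FETA case: if both $i\hra j$ and $j\hra i$ held, a representation merging or reordering them would exist.

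The main obstacle, common to FETA and recursive, is the step ``extra $\hra$ relations imply non-uniqueness of the RCT.'' I would handle it by invoking the construction's freedom, namely that redundancy arises only when both $i\hra j$ and $i\perp j$ hold, and then explicitly building the alternative RCT.
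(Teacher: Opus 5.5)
Your FATE argument and the ``if'' half of the FETA claim are fine and essentially match the paper. One small correction: Axiom \ref{axiom_RI} with $A=I$ does not by itself give full independence, because it only covers mixtures whose components have disjoint supports in dimension $i$. The single vertex $I$ comes from the merging arguments (\lemmaref{lemma_modify} and the proof of \lemmaref{lemma_nece_RI}), which use the fact that every dimension lies in $M(I)$.

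The genuine gap is the step you yourself flag as the main obstacle and never carry out. It is the entire content of the FETA ``only if'' direction and of antisymmetry in the recursive ``only if'' direction. Your first route, the ``moreover'' claim from the informal discussion after \thmref{thm_main}, requires $i\not\perp j$ and is vacuous here: under a FETA representation, \lemmaref{lemma_nece_perp} gives $i\perp j$ for every pair of distinct dimensions. Your second route, producing reversals from the curvature of $v$, cannot work without the missing lemma, as you concede.

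What is missing is \lemmaref{lemma_nece_RI}, with \lemmaref{lemma_nece_EU} for deeper subtrees. Suppose $i$ and $j$ sit in different branches, so the relevant conditional evaluation depends only on separate marginals. Then $i\hra j$ forces full independence on that block. The argument runs as follows.
\begin{enumerate}
\item Replace $p,q,r$ by consequences with the same marginal expected utilities, via \lemmaref{lemma_continuity}.
\item Pick indifferent intermediate consequences whose $i$-coordinates differ from $y_i$, and apply the $\hra$ test to the resulting two-point mixtures.
\item Use continuity when $y_i$ is an endpoint.
\end{enumerate}
The block is then additively separable EU. It can therefore be re-rooted with $i$ as the parent of $j$, which gives the second RCT contradicting uniqueness. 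If instead $j$ is an ancestor of $i$, the same test forces $u^j$ to be linear in its continuation argument, so the two vertices can be merged. This is exactly the input you need for antisymmetry.

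For the recursive ``if'' direction, your claim that the construction yields a chain is also unsupported. Step 2 of the proof of \thmref{thm_main} tests bracket separability of $I$ and of each $A\setminus M(A)$, so a linear $\hra$ does not by itself rule out a split. Uniqueness also requires excluding branching in every $\cT\in\mathbb{T}(\succsim)$. The paper does this directly:
\begin{itemize}
\item By \lemmaref{lemma_nece_EU} and antisymmetry, any root child $\{\pi(k)\}$ with $k>1$ must be a leaf.
\item Two such leaves $\{\pi(i)\},\{\pi(j)\}$ with $1<i<j$ would be additively EU by the argument of \lemmaref{lemma_nece_RI}, forcing $\pi(j)\hra\pi(i)$.
\item The case $|c(o)|=2$ is reduced, by re-rooting, to a two-dimensional conditional preference.
\end{itemize}
Your sentence that $\perp$ ``makes the representation compatible with a swapped order'' asserts this conclusion without supplying any of the argument.
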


Note that the gap between the RCT structure and the revealed binary relations $\hra$ and $\perp$ noted after \thmref{thm_main} is resolved in these three extreme cases.

In the FATE representation, $i\hra j$ for all $i,j\in I$. Thus, at the root, every dimension can serve as a conditioning dimension for every other dimension. The second way in the construction described after \thmref{thm_main} can therefore place all dimensions in a single vertex, yielding an RCT with $c(o)=\{I\}$. This is exactly the FATE representation. The bidirectional nature of $\hra$ is represented by evaluating the relevant dimensions in the same vertex. In addition, if $k\perp l$ for some $k,l\in I$, this relation is not redundant either. When $i\hra j$ for all $i,j\in I$, one can construct an RCT in which $k$ and $l$ remain together until the construction reaches the binary set $\{k,l\}$. At that binary set, the coexistence of $k\perp l$ and $k\hra l$ implies an additively separable expected utility evaluation over these two dimensions. Hence there is an RCT that places $k$ and $l$ in different branches. Thus, in the FATE representation, neither $\hra$ nor $\perp$ is redundant.

In the FETA representation, $i\hra j\iff i=j$. Hence, for any two distinct dimensions $i$ and $j$, neither $i\hra j$ nor $j\hra i$ holds. By the discussion after \thmref{thm_main}, $i$ and $j$ must then be placed in different branches in every SMEU representation. Since this holds for all distinct dimensions, the unique RCT has all dimensions separated at the root: $c(o)=\{\{1\},\dots,\{N\}\}.$
This is exactly the FETA representation. Thus, there is no nontrivial $\hra$ relation to be redundant, and every revealed $\perp$ relation is represented by branch separation.

Finally, suppose that $\hra$ is a linear order. Relabel the dimensions so that $i\hra j$ if and only if $i\leqslant j$.\footnote{This relabeling is only notational. In a recursive representation, the revealed order $\hra$ need not coincide with any preexisting order of the dimensions. We provide additional results in \hyperref[time_order]{Online Appendix \RN{3}.2} for the case in which, under the original labeling, $i\hra j$ if and only if $i\leqslant j$. This case may be relevant, for example,  when the dimensions represent ordered time periods.} The second way in the construction can then be applied repeatedly: Dimension $1$ is placed as an ancestor of the remaining dimensions, dimension $2$ is placed as an ancestor of the remaining dimensions after dimension $1$ is removed, and so on. This yields the recursive RCT: $c(o)=\{\{1\}\}$ and $c(\{i\})=\{\{i+1\}\}$ for $i=1,\dots,N-1$. If $k\perp l$ for some $k,l\in I$ with $k<l$, then one can show that this would imply the reverse revealed relation $l\hra k$, but since $k<l$ implies $k\hra l$, this contradicts $\hra$ being a linear order. Thus, in the recursive case, the revealed order $\hra$ coincides with the ancestor order in the unique RCT, and no $\perp$ relation exists.


We can use the same approach to derive and characterize other useful special cases of the SMEU representation, which naturally generalize the FATE, FETA, and recursive representations. Examples of utility functions corresponding to these generalizations already appeared in Section \ref{sect_ineq}, with their associated RCTs illustrated in Figure \ref{complex_trees}.

\begin{defn}\label{def_special2}
    The preference $\succsim$ has
    \begin{itemize}
        \item a generalized bracketing representation if it has an SMEU representation such that $\bigcup_{A\in c(o)}A=I$;
        \item a generalized recursive representation if it has an SMEU representation such that $|c(A)|\leqslant 1$ for all $A\in \cP_o $;
        \item a recursive bracketing representation if it has an SMEU representation such that $|A|=1$ for all $(\cP,E)\in \mathbb{T}(\succsim)$ and $A\in\cP$.
    \end{itemize}
\end{defn}
    
These three cases combine the three extreme representations in pairwise ways. A generalized bracketing representation combines the FATE and FETA representations: Dimensions are partitioned into brackets, risk is evaluated jointly within each bracket, and the resulting bracket-level evaluations are then aggregated across brackets. This case is useful in applications in which some dimensions are naturally evaluated together while others are evaluated separately. The multisource-income application in Section \ref{section_app} is based on this representation.

A generalized recursive representation combines the FATE and recursive representations: The RCT remains a chain, but each vertex may contain multiple dimensions. This case is useful whenever recursive evaluation is appropriate but a single vertex in the recursive chain may involve several dimensions. In dynamic settings, for example, this flexibility allows a vertex to contain several dates, rather than requiring the vertices to be the standard equally spaced calendar periods imposed by the modeler. When evaluating a risky consumption stream, a decision maker may treat today as one vertex, the rest of the week as another, and the rest of the month as a third. Which dates belong to the same vertex can, in principle, be estimated from choices.

A recursive bracketing representation combines the FETA and recursive representations: Each vertex contains a single dimension, but the vertices need not form a single chain. Relative to the FETA representation, it allows ancestor relations among some dimensions; relative to the recursive representation, it allows branching. This case is useful when risk is evaluated at the level of individual dimensions, but the RCT need not impose a single recursive order over all dimensions: Some dimensions may lie on the same chain, while others lie in different branches. The next result characterizes the above representations.

\begin{theorem}\label{thm_special2}
    Suppose the preference $\succsim$ has an SMEU representation. It has a generalized bracketing representation if and only if $\hra$ is symmetric. It has a generalized recursive representation if and only if $\hra$ is complete. It has a recursive bracketing representation if and only if $\hra$ is antisymmetric.
    \end{theorem}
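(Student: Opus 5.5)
The plan is to prove each of the three equivalences in Theorem \ref{thm_special2} in two directions. For the ``only if'' direction we read the revealed relation $\hra$ off a given representation. For the ``if'' direction we run the recursive construction of the RCT described after Theorem \ref{thm_main} and check that the relevant property of $\hra$ forces the constructed tree into the required shape. Throughout we use four earlier facts. Ancestry in some representation implies $\hra$, and different branches imply $\perp$ (Lemmas \ref{lemma_nece_hra} and \ref{lemma_nece_perp}). If neither $i\hra j$ nor $j\hra i$ holds, then $i,j$ lie in different branches in every representation. And $\hra$ is transitive in the sense of Lemma \ref{lemma_transitive}.

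\textbf{Generalized bracketing.} Suppose first that $\bigcup_{A\in c(o)}A=I$, so every vertex is a child of the root. Dimensions in the same vertex are evaluated by a single expected utility, and within such a vertex $i\hra j$ holds in both directions by the state-by-state argument behind Lemma \ref{lemma_nece_hra}. For $i,j$ in different vertices we have $i\perp j$. In that case I would show that $i\hra j$ forces $j\hra i$: conditional on the other dimensions, the evaluation of the pair is separable, and the dominance test of Definition \ref{order} is symmetric under a separable evaluation. This gives symmetry of $\hra$.

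Conversely, suppose $\hra$ is symmetric. Apply the construction at the root with $A=I$. If the separability step applies, split $I$ into its minimally separable blocks. For each block $C$ we need to show that the second step applies with $M=C$. By the non-$\perp$ connectivity inside $C$ together with RCT perfection (Axiom \ref{axiom_H}), any two elements of $C$ are $\hra$-related in some direction, hence in both directions by symmetry. So every element of $C$ is a conditioning dimension for all of $C$, and $C$ becomes a single vertex with no children. If the separability step does not apply at the root, then the second step applies, and the same symmetry argument puts all of $I$ in $M$. Either way every vertex is a child of $o$.

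\textbf{Generalized recursive.} If the RCT is a chain, then any two dimensions are either in the same vertex or in an ancestor relation, so $i\hra j$ or $j\hra i$ by Lemma \ref{lemma_nece_hra}; thus $\hra$ is complete. Conversely, suppose $\hra$ is complete. At any set $A$ reached by the construction, a nontrivial separability split is impossible. Such a split would put some $i,j$ in different branches, and I would show that $i\hra j$ together with $i\perp j$ at the binary level yields an additive evaluation that can be re-represented without branching. Rather than arguing that way, I would choose to use the second step whenever it is available. Completeness plus transitivity gives a dimension $i$ with $i\hra j$ for all $j\in A$, namely a maximal element of the total preorder. Taking $M$ as the top class yields a chain.

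\textbf{Recursive bracketing.} If every RCT in $\mathbb{T}(\succsim)$ has singleton vertices and $i\hra j\hra i$ held for some $i\neq j$, I would build a representation merging them into a common vertex, giving a contradiction. Conversely, under antisymmetry the set $M$ in the second step is always a singleton by Lemma \ref{lemma_transitive}, and the separability blocks are handled recursively. Because the condition quantifies over all RCTs, we additionally need that no RCT whatsoever has a non-singleton vertex. This follows since two dimensions sharing a vertex satisfy $i\hra j$ and $j\hra i$.

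\textbf{Main obstacle.} I expect the hardest steps to be the ``only if'' part for bracketing, that $\perp$ plus $i\hra j$ implies $j\hra i$, and the ``if'' part for the recursive case, where one must re-represent away a branch created by redundant $\perp$ relations. For both I would first try to reduce to the binary set $\{i,j\}$ with all other dimensions fixed. There, unidimensional independence (Axiom \ref{axiom_one}) and conditional independence (Axiom \ref{axiom_RI}) should yield an additively separable expected utility over $X_i\times X_j$, which is both bracketable and recursive in either order.
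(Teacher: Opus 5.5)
Most of your plan follows the paper. The ``only if'' readings via Lemma \ref{lemma_nece_hra} are the paper's arguments, and so are the ``if'' direction for generalized recursive and the ``if'' direction for recursive bracketing. For generalized recursive, taking the top $\hra$-class is the paper's choice $l_k\in M(I\setminus\{l_1,\dots,l_{k-1}\})$ followed by Step 3; for a chain, the ``depends only on ancestors'' requirement of Step 3 holds automatically.

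Your ``if'' argument for generalized bracketing fills in what the paper compresses into ``by symmetry.'' Each block is $\not\perp$-connected, and applying the $n=2$ case of Axiom \ref{axiom_H}, symmetry, and Lemma \ref{lemma_transitive} along $\not\perp$-paths gives $M(C)=C$.

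Two side remarks. First, your interim claim that under completeness ``a nontrivial separability split is impossible'' is false: an additively separable expected utility preference has complete $\hra$ and a bracket-separable $I$. You rightly abandon that claim, but then it should not reappear as a ``main obstacle.'' Second, for recursive bracketing ``only if,' you give no mechanism for merging $i$ and $j$. The paper does it by taking $A=\{l: i\hra l\}$, noting $i,j\in M(A)$, and combining the proof of Lemma \ref{lemma_nece_RI} with Lemma \ref{lemma_modify}.

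The genuine gap is the ``only if'' direction for generalized bracketing. There you want $i\hra j\Rightarrow j\hra i$ for $i,j$ in different brackets, and you plan to get it by fixing all other coordinates and extracting an additively separable expected utility on $X_i\times X_j$. That cannot work. Definition \ref{order} lets $x^k_{\{i,j\}^\mathsf{c}}$ vary with $k$, so binary behavior with everything else fixed does not control the test. Indeed, ``$i\perp j$ and $i\hra j$ imply $j\hra i$'' is false in general, and Example \ref{eg_perp} is a counterexample:
\begin{itemize}
\item With $x_2,x_4$ fixed, the preference over $(x_1,x_3)$ is $(x_2+1)\bE x_1+x_2\bE x_3+\text{const}$, which is additively separable expected utility.
\item $1\perp 3$ and $1\hra 3$ hold.
\item Yet $3\not\hra 1$. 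Take $x^1,x^2$ with a common $x_1$ but different $x_3,x_4$, and let $y^k\sim x^k$ be obtained by changing only $x_1,x_3$ so that $y^1_1\neq y^2_1$. The $x$-mixture pools conditional risk in dimension $4$, which the concave certainty equivalent strictly penalizes, while the $y$-mixture is evaluated linearly.
\end{itemize}

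So you must use the global bracketing structure, and then the argument is short. The identity $U(\sum_k\pi^k\delta_{x^k})=v\bigl(\sum_k\pi^k u^{A_1}(x^k_{A_1}),\dots,\sum_k\pi^k u^{A_n}(x^k_{A_n})\bigr)$ holds regardless of coincidences among the $x^k$, and it is continuous in them. The tests for $i\hra j$ and $j\hra i$ differ only in the distinctness clause (ii), which can be removed by perturbation:
\begin{itemize}
\item If $x^k\succ y^k$, perturb the $i$-coordinates freely.
\item If $x^k=y^k$, perturb both identically.
\item If $x^k\sim y^k$ and $x^k\neq y^k$, monotonicity makes them unordered on $\{i,j\}$. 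Shift each one's $i$-coordinate slightly toward the other's and compensate in its $j$-coordinate to keep indifference; there is room because neither is at the relevant boundary.
\end{itemize}
Passing to the limit, both tests coincide with the test without clause (ii), so $\hra$ is symmetric.

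This is a different route from the paper. The paper merges $A_k$ and $A_{k'}$ into one bracket via the proof of Lemma \ref{lemma_nece_RI} and then applies Lemma \ref{lemma_nece_hra}. The perturbation route avoids having to globalize the conditional (fixed-$z$) conclusion of that lemma.
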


The characterization is easiest to see from the RCT restrictions in Definition \ref{def_special2}. For any preference with an SMEU representation, it can be shown that $\hra$ is reflexive and transitive. Thus, imposing symmetry, completeness, or antisymmetry yields three familiar structures. If $\hra$ is symmetric, then it is an equivalence relation on dimensions. Its equivalence classes are exactly the brackets in a generalized bracketing representation. If $\hra$ is complete, then it has the same structure as a complete and transitive preference relation: Dimensions satisfying both $i\hra j$ and $j\hra i$ form classes analogous to indifference classes, and these classes are ordered by $\hra$. These ordered classes are exactly the vertices of the chain in a generalized recursive representation. Finally, if $\hra$ is antisymmetric, then two distinct dimensions cannot be placed in the same vertex, because dimensions in the same vertex would satisfy both $i\hra j$ and $j\hra i$. This yields the recursive bracketing representation.

\section{Generalized Bracketing in Multisource Income}\label{section_app}
In this section, we apply the generalized bracketing representation to multisource income, focusing on two implications: how bracketing is related to stochastic dominance and whether the decision maker prefers a given distribution of aggregate income to be concentrated in a single source rather than spread across several sources.


Let $X_i=Z=[-b,b]$ for every $i \in I$, with $b > 0$. Denote $Z_+ = [0, b]$ and $Z_- = [-b, 0]$. Each element of $Z$ represents a monetary outcome, which can indicate either a gain or a loss depending on its sign. Each dimension $i \in I$ represents an income source, such as salary, investment returns, or rental income.
For any $A \subseteq I$ and $p \in \Delta(X_A)$, let $f[p]$ denote the distribution of total income generated by dimensions in $A$. Formally, $f[p](z)=\sum_{x\in X_A} p(x)\mathds{1}\{\sum_{i\in A}x_i=z\}$, in which $\mathds{1}$ is the indicator function. When $A = I$, $f[p]$ represents the distribution of total income. Throughout this subsection, we assume that the relevant utility index used to evaluate such $f[p]$ is defined on $[-Nb, Nb]$.

Assume that $\succsim$ has the following generalized bracketing representation:
\begin{equation}\label{eq_money3}
    U(p)= \sum_{i=1}^n\text{CE}(f[p_{A_i}],\:v^{A_i}),
\end{equation}
in which $\{A_i\}_{i=1}^n$ is a partition of $I$, representing how the decision maker brackets different income sources.\footnote{In \hyperref[OA_money]{Online Appendix \RN{2}}, we axiomatize the representation  (\ref{eq_money3}). Unlike the standard additive expected utility formulation 
$\hat{U}(p)=\sum_{i=1}^n \bE^{f[p_{A_i}]}\:v^{A_i}(x)$ used in \cite{V2023bracketing} and \cite{Camara21}, (\ref{eq_money3}) sums certainty equivalents. The difference matters even without risk: With two singleton brackets and strictly concave $v$, additive expected utility ranks $(1,1)$ above $(2,0)$ because $2v(1)>v(0)+v(2)$, although both vectors deliver total income $2$ with certainty; by contrast, (\ref{eq_money3}) assigns both utility $2$.} The function $v^{A_i}$ is defined on $[-|A_i|b, |A_i|b]$, continuous, and strictly increasing, capturing the decision maker's risk attitude toward income risk within bracket $A_i$. Assume further that each $v^{A_i}$ is twice continuously differentiable, so that the Arrow--Pratt measure of absolute risk aversion is well defined.

\begin{figure}[ht]
    \centering
    \begin{tikzpicture}[scale=1]
        \node at (0,-1.5) {$o$};

        \node at (-1,-3.0) {$A_1$};
        \node at ( 0,-3.0) {$\cdots$};
        \node at ( 1,-3.0) {$A_n$};

        \draw[-stealth] (0,-1.8) -- (-1,-2.7);
        \draw[-stealth] (0,-1.8) -- ( 1,-2.7);
    \end{tikzpicture}
    \caption{The RCT associated with equation (\ref{eq_money3}).}
    \label{multisource_tree}
\end{figure}

Under this representation, the decision maker partitions income sources into brackets, computes the certainty equivalent of total income within each bracket, and then aggregates these certainty equivalents using summation.





\subsection{Dominance and Bracketing: Beyond the Narrow Case} \label{sect_sd}

\cite{RabinWeizsacker09} show that decision makers frequently violate stochastic dominance in risky choice tasks, and that such violations are consistent with a form of narrow bracketing. The generalized bracketing representation generalizes this insight: It accommodates a broader range of bracketing structures and allows comparative statics on how behavior changes as brackets become finer or coarser.

For any $p,q\in \Delta(Z)$, we say that $p$ \textit{(first-order) stochastically dominates} $q$, denoted by $p\succ_{FOSD}q$, if $p\neq q$ and $\sum_{x\leqslant z}q(x)\geqslant  \sum_{x\leqslant z}p(x)$ for all $z\in Z$. We say that $\succsim$ satisfies \textit{dominance} if $f[p]=f[q]$ implies $p\sim q$ and $f[p]\succ_{FOSD} f[q]$ implies  $p\succ q$ for all $p,q\in\Delta(X)$. Although commonly imposed in economics, this property is often violated in experiments.\footnote{Relatedly, in dynamic choice settings, \cite{ecta2017recursive} also show that many widely used recursive models fail to satisfy a version of dominance.} Consider the following example.

\begin{example}\label{eg_dominance}
Consider the following pair of decisions. The risks in the two decisions are resolved independently, and both choices affect your overall payment. Please examine both decisions and indicate your preferred options.\smallskip

	{\it Decision 1: Choose between}
	
	~~~~~{\it A. A sure gain of \$2.40.}
	
	~~~~~{\it B. A 25\% chance to gain \$10.00, and a 75\%  chance to gain \$0.}

	{\it Decision 2: Choose between}
	
	~~~~~{\it C. A sure loss of \$7.50.}
	
	~~~~~{\it D. A 75\%  chance to lose \$10.00, and a 25\% chance to lose \$0.}
\end{example}
\cite{RabinWeizsacker09} and \cite{TverskyKahneman81} show that at least 28\% of participants choose $A$ in decision 1 and $D$ in decision 2. This choice pattern, however, violates dominance: The total-income distribution from choosing $B$ and $C$ stochastically dominates that from choosing $A$ and $D$,
$$\frac{3}{4} \delta_{-7.50} + \frac{1}{4} \delta_{2.50}\succ_{FOSD}\frac{3}{4} \delta_{-7.60} + \frac{1}{4} \delta_{2.40},$$
since the former always yields $\$0.1$ more. Such violations cannot be rationalized by models that evaluate lotteries based on the distribution of total income---including many that may violate dominance.\footnote{See, among others, \cite{OR1985disappointment};  \cite{RES1986disappointment}; \cite{AER2007reference}; \cite{M2022simplicity}; and \cite{ P2022simplicity}.} By contrast, evaluating the two decisions separately makes the pattern intuitive: Choosing $A$ reflects risk aversion over gains, while choosing $D$ reflects risk seekingness over losses; both are well-documented behaviors.

The next result formalizes the tension between bracketing and dominance: Satisfying dominance leaves no room for any nontrivial form of bracketing, not merely narrow bracketing. Indeed, dominance holds only when equation \eqref{eq_money3} involves a single bracket.

\begin{proposition}\label{prop_dominance}
    Suppose $\succsim$ is represented by (\ref{eq_money3}). Then $\succsim$ satisfies dominance if and only if it is represented by $U(p)=\mathbb{E}^{f[p]}\:v$ for some $v$.
\end{proposition}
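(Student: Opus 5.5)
The plan is to prove the two directions separately. The ``if'' direction is immediate. For ``only if'', I would show that whenever there are at least two brackets, dominance forces every $v^{A_i}$ to be affine. In that case (\ref{eq_money3}) collapses to $U(p)=\bE^{f[p]}\,x$, which is of the required form with $v$ the identity.

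\textbf{``If'' direction.} Suppose $\succsim$ is represented by $U(p)=\bE^{f[p]}\,v$ with $v$ continuous and strictly increasing. If $f[p]=f[q]$, the utilities coincide, so $p\sim q$. If $f[p]\succ_{FOSD}f[q]$, the standard characterization of first-order dominance gives $\bE^{f[p]}v>\bE^{f[q]}v$, because $v$ is strictly increasing and $f[p]\neq f[q]$. Hence $p\succ q$.

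\textbf{``Only if'' direction.} If $n=1$, then $U(p)=\text{CE}(f[p],v^{I})$. This is a strictly increasing transformation of $\bE^{f[p]}v^{I}$, so the claim holds with $v=v^I$.

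Now suppose $n\geqslant 2$. Fix distinct brackets $A_1,A_2$, a source $j\in A_2$, and a target bracket $A_1$. The first half of dominance, $f[p]=f[q]\Rightarrow p\sim q$, supplies two families of indifferences.

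\emph{Translation.} Let the sources in $A_1$ generate a lottery $r$ over bracket totals, and let source $j$ pay a sure amount $c$, with all other sources at $0$. Compare this with the lottery in which the sources in $A_1$ generate $r$ shifted by $c$ and source $j$ pays $0$. The two lotteries have the same total-income distribution, so dominance gives
\[
\text{CE}(r,v^{A_1})+c=\text{CE}(r+c,v^{A_1}).
\]
Because $A_1$ may contain several sources, the support of $r$ can be placed anywhere in the interior of $[-|A_1|b,|A_1|b]$, with $c$ small enough to stay feasible. Apply this to small risks $r=t+\varepsilon\eta$, where $\eta$ has mean zero and variance $\sigma^2$. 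The Arrow--Pratt expansion gives risk premium $\tfrac12\varepsilon^2\sigma^2 R_{A_1}(t)+o(\varepsilon^2)$, where $R_{A_1}$ is the absolute risk aversion of $v^{A_1}$. Translation invariance then yields $R_{A_1}(t)=R_{A_1}(t+c)$ for all small $c$. So $R_{A_1}$ is locally constant, and by connectedness it is constant on the whole interval. (Near the endpoints I would work with interior points and then extend by continuity of $R_{A_1}$, which holds since $v^{A_1}$ is $C^2$.)

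\emph{Splitting.} Let a source $i\in A_1$ and the source $j\in A_2$ both pay $\varepsilon\eta$, perfectly correlated. Compare this with source $i$ paying $2\varepsilon\eta$ and source $j$ paying $0$. The totals agree, so
\[
\text{CE}(\varepsilon\eta,v^{A_1})+\text{CE}(\varepsilon\eta,v^{A_2})=\text{CE}(2\varepsilon\eta,v^{A_1})+0.
\]
Expanding to second order at $0$ gives $R_{A_1}(0)+R_{A_2}(0)=4R_{A_1}(0)$. By the symmetric construction with the roles of $A_1$ and $A_2$ swapped, $R_{A_2}(0)+R_{A_1}(0)=4R_{A_2}(0)$. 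Together these force $R_{A_1}(0)=R_{A_2}(0)=0$. Combined with constancy of $R_{A_1}$, this gives $R_{A_1}\equiv 0$. Since every bracket can play the role of $A_1$ against some other bracket, every $v^{A_i}$ is affine.

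It follows that $\text{CE}(f[p_{A_i}],v^{A_i})=\bE^{f[p_{A_i}]}x$ for each $i$. Therefore $U(p)=\sum_i\bE^{f[p_{A_i}]}x=\bE^{f[p]}x$, which is the desired form.

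\textbf{Main obstacle.} The expected difficulty is the feasibility bookkeeping in the translation step. The relevant lotteries, and their shifts, must lie in $Z^N$, yet the argument must cover every point $t$ of the wider domain $[-|A_1|b,|A_1|b]$ of $v^{A_1}$. A second delicate point is that dominance is applied only through exact indifference of lotteries with equal totals. I would handle the first by distributing the base level $t$ across the sources of $A_1$ and using only small perturbations. An alternative route, avoiding Taylor expansions, is to differentiate the indifference identities twice in $\varepsilon$ at $\varepsilon=0$. This works because each $v^{A_i}$ is $C^2$ and strictly increasing, so $\text{CE}$ is twice differentiable in $\varepsilon$.
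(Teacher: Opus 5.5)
Your proof is correct, but it takes a different route from the paper.

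\textbf{The paper's route.} The paper never works with the Arrow--Pratt measure directly. Because dominance implies independent-sources dominance, it invokes Proposition~\ref{prop_dominance2}. The necessity part of that result first compares source-$l$ and source-$l'$ lotteries to normalize every $v^{A_k}$ to a common $v$. It then cites Rabin and Weizs\"acker's Proposition 1 to conclude that $v$ is CARA. A single correlated test, $\delta_{(x,0)}\tfrac12\delta_{(0,x)}\sim\delta_{(x,0)}$, finishes the argument. Its two sources are perfectly negatively correlated with deterministic total $x$, so it yields $v(x/2)=v(x)/2$, which rules out a nonzero CARA coefficient.

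\textbf{Your route.} You work bracket by bracket and stay self-contained. Moving a sure payment $c$ into $A_1$ from a source outside it gives translation invariance, $\text{CE}(r,v^{A_1})+c=\text{CE}(r+c,v^{A_1})$, and hence CARA, without any external result. This is the same device the paper uses in proving Proposition~\ref{prop_avoidance}, and it needs only product lotteries. Your perfectly positively correlated splitting lottery is a local, second-order counterpart of the paper's negatively correlated midpoint test. It is where the correlated part of dominance genuinely enters, and it forces $R_{A_1}(0)=R_{A_2}(0)=0$.

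\textbf{Trade-offs.} Your route frees you from the Rabin--Weizs\"acker citation and from the common-index normalization. The cost is heavier reliance on the $C^2$ assumption, and on $v'>0$, which the paper leaves implicit in ``Arrow--Pratt is well defined.'' You use these through Taylor expansions, whereas the paper's last step is a global functional equation.

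Your translation step is in fact dispensable. Run the splitting comparison at a base level $t$: put bracket $A_1$ at $t+\varepsilon\eta$ and source $j$ at $\varepsilon\eta$, versus $A_1$ at $t+2\varepsilon\eta$. This gives $R_{A_1}(t)+R_{A_2}(0)=4R_{A_1}(t)$. Combined with $R_{A_2}(0)=0$, it yields $R_{A_1}\equiv 0$ on the interior, and continuity extends this to the endpoints.
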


By contrast, if we restrict attention to lotteries in which income sources are independent across all dimensions---as in Example \ref{eg_dominance}---then dominance can still hold under nontrivial bracketing. Again, this is not limited to narrow bracketing: It holds for any bracketing structure in (\ref{eq_money3}), provided that the Bernoulli indices exhibit constant absolute risk aversion (CARA).\footnote{We refer to increasing and decreasing absolute risk aversion as IARA and DARA, respectively.} This result, made possible by our more flexible SMEU framework, generalizes the insights in
\cite{RabinWeizsacker09} and \cite{mpst2020background}.

We say that $\succsim$ satisfies \textit{independent-sources dominance} if  $f[p]=f[q]$ implies $p\sim q$ and $f[p]\succ_{FOSD} f[q]$ implies $p\succ q$ for all $p,q\in \Delta(X)$ such that $p=(p_1,\dots,p_N)$ and $q=(q_1,\dots,q_N)$.

\begin{proposition}\label{prop_dominance2}
    Suppose $\succsim$ is represented by (\ref{eq_money3}). Then $\succsim$ satisfies independent-sources dominance if and only if it is represented by either $U(p)=\mathbb{E}^{f[p]}\:v$ for some $v$, or $U(p)=\sum_{i=1}^n \text{CE}(f[p_{A_i}],\:v)$  for some $v$ that exhibits CARA.
\end{proposition}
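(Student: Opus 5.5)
We prove sufficiency first, then necessity. Throughout, independent-sources lotteries are $p=(p_1,\dots,p_N)$. For each bracket $A_i$ the bracket distribution $f[p_{A_i}]$ is the convolution $*_{j\in A_i} p_j$, and $f[p]$ is the convolution of the $f[p_{A_i}]$.

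\textbf{Sufficiency.} If $U(p)=\bE^{f[p]}v$, the result is immediate: $U$ depends only on $f[p]$ and is strictly increasing under strict FOSD because $v$ is strictly increasing. Now suppose $v(x)=-e^{-\lambda x}$ (or linear, or $e^{\lambda x}$). Then $\text{CE}(\cdot,v)$ is additive under convolution of independent distributions. Hence
\[
U(p)=\sum_i \text{CE}(*_{j\in A_i}p_j,v)=\sum_{j\in I}\text{CE}(p_j,v)=\text{CE}(f[p],v).
\]
So $U$ is a function of $f[p]$ and strictly FOSD-monotone, which gives independent-sources dominance.

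\textbf{Necessity.} Assume independent-sources dominance and that there are at least two brackets. We may assume this, since with a single bracket the first form holds.

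The first step is to show that all $v^{A_i}$ share the same risk attitude up to affine transformation. Take two brackets $A_1,A_2$. Compare $p$ that puts a binary lottery $r$ on a source in $A_1$ and $0$ elsewhere with $q$ that puts $r$ on a source in $A_2$ and $0$ elsewhere. Since $f[p]=f[q]$, we get $p\sim q$. This gives $\text{CE}(r,v^{A_1})=\text{CE}(r,v^{A_2})$ for all lotteries $r$ with support in $Z$, so $v^{A_1}$ and $v^{A_2}$ are affinely equivalent on $Z$. To cover the full domain $[-|A_i|b,|A_i|b]$, spread the outcome over several sources of the bracket deterministically: put $r/|A_i|$ scaled across the sources as perfectly comoving components. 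That would not be independent, so instead use deterministic shifts. Set other sources in $A_1$ to constants $c$ and apply the same shift in $A_2$. Then $\text{CE}(r+c,v^{A_1})=\text{CE}(r+c,v^{A_2})$ for all $c$ in the attainable range, giving equivalence on overlapping windows and hence globally. Call the common index $v$.

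The second step derives CARA. Take sources $j\in A_1$ and $k\in A_2$, a nondegenerate lottery $r$, and a constant $c>0$. Compare $p$ with $p_j=r$, $p_k=\delta_c$ against $q$ with $q_j=r+c$ (the shift of $r$), $q_k=\delta_0$; all other sources are $0$. Then $f[p]=f[q]$, so
\[
\text{CE}(r,v)+c=\text{CE}(r+c,v).
\]
This translation invariance for all small binary $r$ and admissible $c$ is the classical characterization of constant absolute risk aversion: differentiating in the small-risk limit shows the Arrow--Pratt coefficient $-v''/v'$ is constant. This uses twice continuous differentiability. One must also verify that the two-bracket structure forces additivity across independent sources within a bracket, but once $v$ is CARA this is automatic. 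Otherwise, the same argument applied to two sources within one bracket, using the cross-bracket comparison, yields the representation $\sum_i\text{CE}(f[p_{A_i}],v)$.

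The main obstacle is the domain bookkeeping. The constants and lotteries must keep every source in $[-b,b]$, and the comparisons must pin down each $v^{A_i}$ on its entire domain $[-|A_i|b,|A_i|b]$, not just on $Z$. I would handle this by chaining local translation-invariance identities over overlapping intervals. CARA holds on each small interval, and continuity of $-v''/v'$ then gives a single constant. Affine normalization is harmless because CE is invariant to it.
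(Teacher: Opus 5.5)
Your proof is correct and has the same skeleton as the paper's. Sufficiency comes from additivity of the CARA certainty equivalent under convolution of independent distributions. Necessity comes in two steps: single-source lotteries make the $v^{A_k}$ affinely equivalent, and then restricting to two sources in different brackets forces CARA.

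The real difference is in that second step. The paper writes the conditional preference on $\{i,j\}$ as $\text{CE}(p_i,v)+\text{CE}(p_j,v)$ and cites Proposition 1 of Rabin and Weizs\"acker. You instead move a deterministic amount $c$ from a source in $A_2$ to a source in $A_1$. The indifference half of independent-sources dominance then gives $\text{CE}(r+c,v)=\text{CE}(r,v)+c$, and the small-risk Pratt expansion turns this translation invariance into a constant $-v''/v'$. This is the same device the paper itself uses for Proposition \ref{prop_avoidance}.

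Your route has three advantages:
\begin{itemize}
\item It is self-contained.
\item Necessity uses only the indifference part of the axiom.
\item By loading constants on the other sources of the same bracket, it extends to each full domain $[-|A_i|b,|A_i|b]$. The paper's proof leaves this implicit, since both its affine-equivalence step and its two-source restriction only involve $v^{A_k}$ on $Z$.
\end{itemize}
The citation buys brevity.

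Two clean-ups are needed:
\begin{itemize}
\item The cross-bracket shift in your Step 1 only gives equivalence on the common domain of the two brackets, so ``hence globally'' and ``the common index'' are overstated. The correct ordering is: equivalence on $Z$ fixes a common coefficient, and within-bracket chaining shows each $v^{A_i}$ has that constant coefficient on its whole domain.
\item Your remark about verifying additivity within a bracket is unnecessary, because the bracket form (\ref{eq_money3}) is a hypothesis.
\end{itemize}
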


Given the above results, it is natural to ask whether coarser bracketing structures make it easier for preferences to satisfy dominance. Since a preference may have multiple generalized bracketing representations, we focus on the one with the \textit{coarsest} bracketing structure. This coarsest structure corresponds to $c^{\cT^*}(o)$, in which $\cT^*$ denotes the simplest RCT of $\succsim$ characterized in \thmref{thm_unique_h2}.

For any subset of income sources $A\subseteq I$, we say that $\succsim$ satisfies \textit{dominance on $A$} if  $f[p]=f[q]$ implies $p\sim q$ and $f[p]\succ_{FOSD} f[q]$ implies  $p\succ q$ for all $p,q\in \Delta(X)$ such that $p_i=q_i=\delta_0$ for all $i\not\in A$. For two decision makers with preferences $\succsim^1$ and $\succsim^2$, respectively, we say that $\succsim^1$ is \textit{more dominance-consistent} than $\succsim^2$ if, for all $A\subseteq I$, whenever $\succsim^2$ satisfies dominance on $A$, so does $\succsim^1$.

\begin{proposition}\label{prop_dominance3}
    Suppose $\succsim^1$ is represented by (\ref{eq_money3}) with the coarsest bracketing structure of income sources $\{A_i\}_{i=1}^n$.
    \begin{enumerate}
        \item For all $B\subseteq I$, $\succsim^1$ satisfies dominance on $B$ if and only if $B\subseteq A_i$ for some $i$.
        \item Suppose $\succsim^2$ is also represented by (\ref{eq_money3}) with the coarsest bracketing structure of income sources $\{A'_i\}_{i=1}^m$. Then $\succsim^1$ is more dominance-consistent than $\succsim^2$ if and only if $\{A'_i\}_{i=1}^m$ is finer than $\{A_i\}_{i=1}^n$.
    \end{enumerate}
\end{proposition}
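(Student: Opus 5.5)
We prove part 1 first and then obtain part 2 from it.

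\emph{Part 1, sufficiency.} Suppose $B\subseteq A_i$. Take lotteries $p$ and $q$ with $p_j=q_j=\delta_0$ for every $j\notin B$. Then for every bracket $A_k\neq A_i$, both $f[p_{A_k}]$ and $f[q_{A_k}]$ equal $\delta_0$, so these brackets contribute the same constant to $U$. Moreover, $f[p_{A_i}]=f[p]$, and likewise for $q$. Hence $U(p)-U(q)=\text{CE}(f[p],v^{A_i})-\text{CE}(f[q],v^{A_i})$. Since $v^{A_i}$ is strictly increasing, this difference is zero when $f[p]=f[q]$ and strictly positive when $f[p]\succ_{FOSD}f[q]$. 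So dominance holds on $B$.

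\emph{Part 1, necessity.} Suppose $B$ meets two distinct brackets, say $j\in B\cap A_1$ and $k\in B\cap A_2$. Fix all dimensions outside $\{j,k\}$ at $0$. The preference over lotteries on $(x_j,x_k)$ is then represented by $\text{CE}(p_j,v^{A_1})+\text{CE}(p_k,v^{A_2})$, where each CE is computed by viewing $x_j$ as the total income of $A_1$ (possibly shifted). Dominance on $\{j,k\}$ would make this a function of $f[p]$ alone. We would then argue exactly as in the proof of Proposition~\ref{prop_dominance}, restricted to these two sources:
\begin{itemize}
\item Take $p=\tfrac12\delta_{(a,0)}+\tfrac12\delta_{(0,a)}$ and $q=\tfrac12\delta_{(a,a)}+\tfrac12\delta_{(0,0)}$ for small $a>0$. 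Here $f[p]=\delta_a$ while $f[q]=\tfrac12\delta_{2a}+\tfrac12\delta_0$.
\item Now vary the marginal risk of one source. This yields a pair with equal total-income distribution but different utility unless both brackets' certainty equivalents are linear in mean.
\item If they are linear in mean, dominance reduces to expected value. The restriction of $U$ to these two sources is then additive across sources.
\end{itemize}
In that linear case, $j\perp k$ holds and the joint distribution of $(x_j,x_k)$ enters only through $f$. We would use this to show that $A_1\cup A_2$ could be merged into one bracket with Bernoulli index $v^{A_1\cup A_2}$ (affine on the relevant range), which contradicts coarsest-ness. Coarsest-ness is identified with $c^{\cT^*}(o)$ via Theorem~\ref{thm_unique_h2}.

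\emph{The main obstacle.} This merging step is the delicate point. Merging requires the \emph{whole} brackets $A_1$ and $A_2$ to be jointly evaluable via a single index on total income. Dominance on $B$ only gives information about sources in $B$. So we must propagate the restriction to all of $A_1\cup A_2$.

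\begin{itemize}
\item I would first try to show that dominance on $\{j,k\}$ forces $v^{A_1}$ and $v^{A_2}$ to be affine, hence CE$=$mean on all of their domains. The CEs are computed on total bracket income with other sources at $0$. Varying $x_j$ over $[-b,b]$ with $x_j$ standing in for the bracket total traces $v^{A_1}$ on $[-b,b]$.
\item We then need affinity on $[-|A_1|b,|A_1|b]$. For that I would use dominance on $\{j,k\}$ with a nonzero sure offset. Another approach is to extend via the CE structure of that bracket, but only within bracket $A_1$.
\item If that extension fails, the fallback is to argue directly that affine $v$ on $[-b,b]$ plus twice differentiability still lets us construct a coarser representation. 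For this we use the SMEU characterization of $\mathbb{T}(\succsim)$ and the minimal-separability partition of Lemma~\ref{lemma_finest}. In particular, $A_1\cup A_2$ would fail to be a union of minimally separable sets at the root, which is needed for $c^{\cT^*}(o)$.
\end{itemize}

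\emph{Part 2.} This follows from part 1. If $\{A'_i\}$ is finer than $\{A_i\}$, then each $A'_i$ lies in some $A_k$. So whenever $B$ is contained in some $A'_i$, it is also contained in some $A_k$, and $\succsim^1$ is more dominance-consistent. Conversely, apply the hypothesis to $B=A'_i$, on which $\succsim^2$ satisfies dominance by part 1. Then $\succsim^1$ satisfies dominance on $A'_i$, so $A'_i\subseteq A_k$ for some $k$ by part 1 again, which is fineness.
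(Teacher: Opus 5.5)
\textbf{Verdict.} Your sufficiency argument and your derivation of Part 2 from Part 1 match the paper's. Your plan for necessity is also the paper's route: show that the two bracket indices must be affine, then merge $A_1\cup A_2$ to contradict coarsest-ness.

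\textbf{First stage.} This is only sketched. The pair $p,q$ you write down has different total-income distributions, so dominance says nothing about it. The paper instead compares source-$j$ and source-$k$ lotteries, invokes Rabin--Weizs\"acker's Proposition 1 to get CARA, and then uses $\tfrac12\delta_{(x,0)}+\tfrac12\delta_{(0,x)}\sim\delta_{(x,0)}$ to rule out a nonzero coefficient.

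\textbf{The gap.} The step you flag as the main obstacle is a genuine gap, and none of your three repairs closes it. Under the stated definition, dominance on $B$ only involves lotteries with $p_l=\delta_0$ for $l\notin B$. So the total income of bracket $A_k$ ranges only over $[-|B\cap A_k|b,\,|B\cap A_k|b]$, and the first stage gives affinity of $v^{A_k}$ only on that window. Your three repairs fail for the following reasons:
\begin{enumerate}
\item Sure offsets can come only from sources in $B$, so they stay inside the window.
\item The CE structure of the bracket says nothing about $v^{A_k}$ off the window.
\item $C^2$ smoothness does not propagate affinity.
\end{enumerate}
Merging, by contrast, needs affinity on the whole domain. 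A single bracket with index $w$ must be insensitive to how the two sub-totals are coupled. That means $w(s+t)+w(0)=w(s)+w(t)$ for all attainable $s,t$, which forces $w$, and hence each $v^{A_k}$, to be affine everywhere.

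\textbf{Counterexample.} The implication you are trying to prove is false. The paper's proof has the same hole: it asserts $\text{CE}(q,v^{A_k})=\text{CE}(q,v^{A_{k'}})=\bE^q x$, which only holds for $q$ supported in the window, and then merges.

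Let $N=3$, $A_1=\{1,2\}$, $A_2=\{3\}$, and $v^{A_2}(x)=x$. Let $v^{A_1}(x)=x$ for $x\leqslant b$ and $v^{A_1}(x)=x-\varepsilon(x-b)^3$ on $[b,2b]$, with $0<\varepsilon<1/(3b^2)$. This index is $C^2$ and strictly increasing.
\begin{enumerate}
\item Since $\tfrac12 v^{A_1}(2b)+\tfrac12 v^{A_1}(0)<v^{A_1}(b)$, the lottery $\tfrac12\delta_{(b,b,0)}+\tfrac12\delta_{(0,0,0)}$ is strictly worse than $\delta_{(b,0,0)}$, although both have mean total income $b$.
\item The preference depends only on $p_{\{1,2\}}$ and $p_3$. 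By the coupling argument above, a one-bracket representation would have to be affine and would rank these two lotteries equally. So $\{A_1,A_2\}$ is the coarsest structure.
\item Whenever $p_2=\delta_0$, we have $U(p)=\bE^p x_1+\bE^p x_3$, which is the mean of $f[p]$. So dominance holds on $B=\{1,3\}$, which lies in no bracket.
\end{enumerate}

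\textbf{Part 2 fails as well.} Let $\succsim^2$ be the preference above. Let $\succsim^1$ have the same brackets but a globally strictly concave $v^{A_1}$, e.g.\ $v^{A_1}(x)=1-e^{-x}$. Then $\delta_{(0,0,a)}\succ^1\tfrac12\delta_{(a,0,0)}+\tfrac12\delta_{(0,0,a)}$ for $0<a\leqslant b$, although both have total income $a$ with certainty. So $\succsim^1$ violates dominance on $\{1,3\}$ even though its partition coincides with that of $\succsim^2$ (and is therefore coarser).

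\textbf{Repair.} Your offset idea is the right fix, but only after changing the definition. Suppose the sources outside $B$ may be held at arbitrary common sure amounts rather than at $0$. Then the windows sweep all of $[-|A_k|b,|A_k|b]$, the merge goes through, and both parts hold.
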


This result formalizes the idea that coarser bracketing of income sources makes dominance easier to satisfy.

\subsection{Avoidance of Multidimensional Risk}\label{sect_amr}
Decision makers may have clear preferences over whether income comes from a single source or  multiple sources.\footnote{For instance, \cite{H2021issues} studies the case in which integrating risk across sources can be cognitively demanding, potentially leading individuals to avoid choices that involve multidimensional risk.} To formalize this idea, we say that $p\in \Delta(X)$ is a \textit{source-$i$} lottery if there exists $r\in \Delta(Z)$ such that $p_i=r$ and $p_j=\delta_0$ for all $j\neq i$.  We say that $\succsim$ exhibits \textit{avoidance of multidimensional risk} if $q\succsim p$ for all $i\in I$, $p=(p_1,\dots,p_N)\in \Delta(X)$, and source-$i$ lottery $q$ satisfying $f[q]=f[p]$.\footnote{If we strengthen this definition by removing the requirement that $p=(p_1,\dots,p_N)$, we can show that the decision maker must place all income sources into one bracket.} Similarly, we say that $\succsim$ exhibits avoidance of multidimensional risk in gains (resp.\ losses) if the same condition holds for all $p=(p_1,\dots,p_N)\in \Delta(Z_+^{N})$ (resp.\ $p=(p_1,\dots,p_N)\in \Delta(Z_-^{N})$).

\begin{proposition}\label{prop_avoidance}
    Suppose $\succsim$ is represented by (\ref{eq_money3}). Then:
    \begin{enumerate}
        \item It satisfies avoidance of multidimensional risk if and only if  it is represented by either \emph{(i)} $U(p)=\mathbb{E}^{f[p]}\:v$ for some $v$ or \emph{(ii)} $U(p)=\sum_{i=1}^n \text{CE}(f[p_{A_i}],\:v)$ for some $v$ that exhibits CARA. 
        \item It satisfies avoidance of multidimensional risk in gains if and only if  it is represented on $\Delta(Z_+^{N})$ by either \emph{(i)} $U(p)=\mathbb{E}^{f[p]}\:v$ for some $v$ or \emph{(ii)} $U(p)=\sum_{i=1}^n \text{CE}(f[p_{A_i}],\:v)$ for some $v$ that exhibits DARA. 
        \item It satisfies avoidance of multidimensional risk in losses if and only if  it is represented  on $\Delta(Z_-^{N})$ by either \emph{(i)} $U(p)=\mathbb{E}^{f[p]}\:v$ for some $v$ or \emph{(ii)} $U(p)=\sum_{i=1}^n \text{CE}(f[p_{A_i}],\:v)$ for some $v$ that  exhibits IARA.
    \end{enumerate}
\end{proposition}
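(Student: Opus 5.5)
The plan is to write everything in terms of certainty equivalents and to reduce avoidance of multidimensional risk to a translation inequality for a single Bernoulli index. Fix a product lottery $p=(p_1,\dots,p_N)$ and a source-$i$ lottery $q$ with $f[q]=f[p]$, and let $i\in A_k$. Every bracket other than $A_k$ contributes $\text{CE}(\delta_0,\cdot)=0$ to $U(q)$, so by (\ref{eq_money3}) we have $U(q)=\text{CE}(f[p],v^{A_k})$. Since $p$ is a product lottery, $f[p_{A_j}]$ is the law of a sum of independent components, so
\[
U(p)=\sum_{j=1}^n \text{CE}(f[p_{A_j}],v^{A_j}).
\]
Hence avoidance amounts to saying that integrating independent bracket-level risks into one source, and evaluating them with the index of that source's bracket, weakly raises the certainty equivalent.

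\textbf{Sufficiency.} In case (i), $U$ depends on $p$ only through $f[p]$, so $U(q)=U(p)$. In case (ii), all brackets use a common $v$. The key step is an independent-sum lemma: for independent $X,Y$,
\[
v(\text{CE}(X+Y))=\bE_X\, v\bigl(X+c(X)\bigr),
\]
where $c(x)$ is the certainty equivalent of $x+Y$ minus $x$, computed with the shifted index $v(x+\cdot)$. Under CARA, $c(x)\equiv\text{CE}(Y)$, so certainty equivalents are additive over independent sums, and $U(q)=U(p)$.

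On gains under DARA, $v(x+\cdot)$ is less risk averse than $v$ for $x\geqslant0$, so $c(x)\geqslant c(0)=\text{CE}(Y)$. This gives $\text{CE}(X+Y)\geqslant \text{CE}(X)+\text{CE}(Y)$, and by induction over brackets $U(q)\geqslant U(p)$. On losses under IARA the same argument applies with $x\leqslant 0$, since then $v(x+\cdot)$ is less risk averse than $v$.

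\textbf{Necessity.} Suppose there are at least two brackets, since otherwise we are in case (i).

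1. Show that the indices coincide. Take $i\in A_k$, $j\in A_l$ and any $r$. The source-$j$ lottery with $p_j=r$ is itself a product lottery, and the source-$i$ lottery with the same $r$ has the same total income. Applying avoidance in both directions gives $\text{CE}(r,v^{A_k})=\text{CE}(r,v^{A_l})$ for all $r$ on the common domain. So all $v^{A_j}$ agree up to positive affine transformations, and we may call the common index $v$. (In parts 2 and 3 we use only $r$ in gains or in losses, respectively.)

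2. Derive a translation inequality. Put a degenerate outcome $x$ on source $i\in A_1$ and an independent lottery $Y$ on source $j\in A_2$. Avoidance gives
\[
\text{CE}(x+Y,v)\geqslant x+\text{CE}(Y,v).
\]
In part 1 this holds for all $x\in[-b,b]$; using both signs of $x$ and re-centering (replace $Y$ by $x+Y$ and $x$ by $-x$) yields equality. Thus the risk premium is translation invariant, and with twice differentiability this gives constant Arrow--Pratt absolute risk aversion, i.e.\ CARA.

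3. Handle gains and losses. In part 2 we only get $x\geqslant0$ and $Y$ in gains. Take $Y=w+\varepsilon\tilde e$ with $w\geqslant0$ and a small zero-mean $\tilde e$. The inequality then says the risk premium at wealth $w+x$ is at most the premium at $w$. The Arrow--Pratt small-risk expansion $\pi\approx\tfrac12\varepsilon^2\sigma^2 r_v(\cdot)$ turns this into $r_v(w+x)\leqslant r_v(w)$, which is DARA. Part 3 is symmetric with $x\leqslant0$ and gives IARA.

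\textbf{Main obstacle.} The hardest step is this last passage from the global inequality $\text{CE}(x+Y)\geqslant x+\text{CE}(Y)$ to the pointwise monotonicity of $r_v$. It needs small risks centered at every base wealth $w$ that stay inside the relevant domain, namely $Z_+$ or $Z_-$ per source and $[-|A|b,|A|b]$ for the bracket index, with care at the endpoints. I would handle the boundary by continuity of $r_v$.
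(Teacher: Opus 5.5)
Your proposal is correct and follows essentially the paper's route. You normalize all bracket indices to a common $v$ using single-source comparisons, then reduce to two sources in different brackets, one of them degenerate, to obtain $\text{CE}(x+Y,v)\geqslant x+\text{CE}(Y,v)$ (with equality in part 1, using both signs of $x$), read off CARA, DARA, or IARA from this, and for sufficiency use additivity (CARA) or superadditivity (DARA in gains, IARA in losses) of certainty equivalents over independent sums. The only differences are that you prove standard facts the paper simply asserts (the conditional-certainty-equivalent decomposition combined with Pratt's comparison, and the small-risk Arrow--Pratt expansion), and that the paper's proof also leaves implicit the domain and endpoint caveats you flag.
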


This result applies not only to narrow bracketing, but also to general forms of bracketing. An immediate implication of \propref{prop_avoidance} is that if the decision maker \textit{strictly} prefers to consolidate some risky multisource income into a single source, then she must exhibit nontrivial bracketing.

Next, we turn to a comparative notion of avoidance of multidimensional risk. \propref{prop_avoidance} implies that, on the full domain $\Delta(X)$, strict avoidance of multidimensional risk cannot arise. To obtain a meaningful comparative notion, we therefore restrict attention below to gains. The case of losses is symmetric and hence omitted.

Consider two decision makers whose preferences $\succsim^1$ and $\succsim^2$ are represented by (\ref{eq_money3}). Let $\{(A_i,v^{A_i})\}_{i=1}^n$ and $\{(B_i,\hat v^{B_i})\}_{i=1}^m$ denote their coarsest bracketing structures and Bernoulli indices, respectively.
If $\succsim^2$ satisfies avoidance of multidimensional risk in gains, then \propref{prop_avoidance} implies that either $m=1$ or we can normalize $\hat v^{B_i}\equiv \hat v$ on $x\in Z_+$ for some $\hat v$ that exhibits DARA. We say that $\succsim^1$ exhibits \textit{stronger avoidance of multidimensional risk in gains} than $\succsim^2$ if $q\succsim^2 p \Longrightarrow q\succsim^1 p$ for all $i\in I$, $p=(p_1,\dots,p_N)\in \Delta(Z_+^{N})$, and source-$i$ lottery $q\in \Delta(Z_+^{N})$.

\begin{proposition}\label{prop_OA_compare}
    Suppose $\succsim^2$ is represented on $\Delta(Z_+^{N})$ by (\ref{eq_money3}) with parameters $\{(B_i,\hat v)\}_{i=1}^m$ in which $\hat v$ exhibits strictly DARA for $x\in Z_+$. Then for any $\succsim^1$ that is also represented on $\Delta(Z_+^{N})$  by (\ref{eq_money3}),  it exhibits stronger avoidance of multidimensional risk in gains than $\succsim^2$ if and only if $\succsim^1$ can be represented  on $\Delta(Z_+^{N})$ by (\ref{eq_money3}) with parameters $\{(A_i,\hat v)\}_{i=1}^n$ in which $\{A_i\}_{i=1}^{n}$ is finer than $\{B_i\}_{i=1}^{m}$.
\end{proposition}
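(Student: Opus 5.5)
The plan is to convert both preferences into certainty-equivalent (money) units and compare them using two probes: degenerate source-$i$ lotteries and risky source-$i$ lotteries. Under (\ref{eq_money3}), a source-$i$ lottery $q$ with $f[q]=r$ puts all risk into the single bracket containing $i$; every other bracket contributes $\text{CE}(\delta_0,\cdot)=0$. Its utility is therefore $\text{CE}(r,v)$, where $v$ is the index of that bracket, and every utility $U(p)$ in (\ref{eq_money3}) is measured in money.

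The key auxiliary fact comes from \propref{prop_avoidance}(2) and its proof. If $\hat v$ exhibits strictly DARA on $Z_+$, then for independent nonnegative risks $X,Y$ we have $\text{CE}(X+Y,\hat v)\geqslant \text{CE}(X,\hat v)+\text{CE}(Y,\hat v)$, with strict inequality when both are nondegenerate. Iterating this over merges of brackets gives, for $p=(p_1,\dots,p_N)$ and partitions $\{A_i\}$ finer than $\{B_i\}$,
\[
\textstyle\sum_i \text{CE}(f[p_{A_i}],\hat v)\leqslant \sum_i \text{CE}(f[p_{B_i}],\hat v).
\]
I expect the strictness part to be the main obstacle, and I would first try to extract it from the argument already used for \propref{prop_avoidance}. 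If that does not deliver it directly, the fallback is a direct proof: write $\text{CE}(X+Y)=\text{CE}(X+\text{CE}_{\cdot}(Y\mid \cdot))$ and use that, under strict DARA, the risk premium of $Y$ strictly decreases in wealth.

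For the ``if'' direction, let $\succsim^1$ use $\{A_i\}$ finer than $\{B_i\}$ with the same $\hat v$. For any source-$i$ lottery $q$, we have $U^1(q)=U^2(q)=\text{CE}(f[q],\hat v)$. For any product $p$, the display gives $U^1(p)\leqslant U^2(p)$. Hence $q\succsim^2 p$ implies $U^1(q)=U^2(q)\geqslant U^2(p)\geqslant U^1(p)$, so $q\succsim^1 p$.

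For the ``only if'' direction, first take $q=\delta_c$ in source $i$, so that $U^k(q)=c$ for $k=1,2$. Letting $c=U^2(p)$, the hypothesis yields $U^1(p)\leqslant U^2(p)$ for every product $p\in\Delta(Z_+^N)$. Next take $p$ degenerate with sum $c$, so that $U^1(p)=U^2(p)=c$. The hypothesis then says that $\text{CE}(r,\hat v)\geqslant c$ implies $\text{CE}(r,v^{1}_{A(i)})\geqslant c$ for every $r\in\Delta(Z_+)$. Combining this with the reverse inequality from the first step, applied to a source-$i$ product $p$ (where $U^1(p)=\text{CE}(r,v^1_{A(i)})$ and $U^2(p)=\text{CE}(r,\hat v)$), gives $\text{CE}(r,v^1_{A(i)})=\text{CE}(r,\hat v)$ for all $r$. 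By Pratt's theorem each $v^1_{A_i}$ is then a positive affine transformation of $\hat v$ on $Z_+$; on the total-income domains, apply the same argument with $f[q]$ ranging over $\Delta([0,|A_i|b])$, which source-$i$ lotteries cannot reach directly. So I would restrict to the range covered, or use the single-bracket sums. This already covers the case in which $\succsim^1$ is of the single-bracket EU form.

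Finally, suppose the partition were not finer: some $j,k$ lie in the same $A$-bracket but in different $B$-brackets. Take $p$ with nondegenerate independent risks on $j$ and $k$ and $\delta_0$ elsewhere. Then $U^1(p)=\text{CE}(X_j+X_k,\hat v)>\text{CE}(X_j,\hat v)+\text{CE}(X_k,\hat v)=U^2(p)$ by strict DARA, contradicting $U^1\leqslant U^2$. Hence $\{A_i\}$ is finer than $\{B_i\}$ and $\succsim^1$ is represented with parameters $\{(A_i,\hat v)\}$.
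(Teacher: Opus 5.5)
Your proposal is correct and follows essentially the paper's argument. Sufficiency comes from equality of the two utilities on source lotteries plus superadditivity of certainty equivalents under DARA; necessity pins the bracket indices of $\succsim^1$ to $\hat v$ on $Z_+$ and then uses strict DARA to show that an $A$-bracket meeting two $B$-brackets makes $\succsim^1$ strictly less avoidant than $\succsim^2$ on some product lottery. The paper builds one explicit pair (a risky and a sure component with an $\varepsilon$ slack), whereas you first reduce the hypothesis to $U^1\leqslant U^2$ on products and use two independent risks.

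Two small fixes. First, choose those two risks supported on $[0,b/2]$: $\delta_{U^2(p)}$ is a source lottery in $\Delta(Z_+^N)$ only if $U^2(p)\leqslant b$, so your claim that $U^1(p)\leqslant U^2(p)$ for every product $p$ is overstated. Second, drop your tentative extension to $[0,|A_i|b]$. The hypothesis only identifies the indices on $Z_+$, which is exactly the normalization the paper uses, so that extension is unnecessary and not actually available.
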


This result thus formalizes the idea that finer bracketing of income sources leads to stronger avoidance of multidimensional risk in gains.

\medskip
\textbf{Further Extensions.} The analysis above focuses on risk across income sources and abstracts from background risk. The SMEU framework can also incorporate risky background wealth while preserving the generalized bracketing structure across income sources. For example, let dimension $1$ represent background wealth, and suppose that each income bracket is evaluated conditional on its realization:
\begin{equation}\label{eq_money}
    U(p)= \bE^p_1\:u(w\:+\:\sum_{i=1}^n\text{CE}(f[p_{A_i|w}],\:v^{A_i}_w)),
\end{equation}
in which $\{A_i\}_{i=1}^n$ is a partition of $I\setminus\{1\}$, representing how the decision maker brackets different income sources. The functions $u$ and $v^{A_i}_w$ capture the decision maker's risk attitudes toward background risk and the income risk within each bracket $A_i$, respectively. We allow $v^{A_i}_w$ to depend on both background wealth and the specific income bracket. This specification provides a tractable and flexible way to study how background risk, wealth-dependent within-bracket risk attitudes, and the degree of integration between income brackets and background wealth affect choice behavior.

This recursive-conditioning perspective also bears on calibration questions in risk taking. \cite{R2000calibration} argues that, under expected utility theory, rejecting modest small-stakes gambles over a wide range of wealth levels implies implausibly strong risk aversion toward large-stakes gambles. Reference dependence and narrow framing have been proposed as responses to this critique (see \cite{R2000calibration} and \cite{BHT2006bracketing}). In our terminology, narrow framing corresponds to a FETA representation that evaluates the gamble separately from background wealth. Now consider a recursive specification such as
\[
    U(p)=\bE^p_1\:u\big(w+\text{CE}(p_{2|w},v_w)\big),
\]
in which dimension $1$ is background wealth $w$ and $p_{2|w}$ is the conditional distribution of the gamble given background wealth. The function $u$ aggregates background wealth and the certainty equivalent of the gamble conditional on that wealth. Under such a utility function, the small-stakes decision can also be locally insensitive to wealth. This suggests one way to formulate \citeauthor{R2000calibration}'s (\citeyear{R2000calibration}) critique within the SMEU framework. This is close in spirit to \cite{F2015calibration} and \cite{sarver2018dynamic}, who study \citeauthor{R2000calibration}'s (\citeyear{R2000calibration}) critique in recursive non-expected utility models on recursive choice domains.

\section{Conclusion}
\label{sect_end}
This paper develops the SMEU representation for preferences over risky multidimensional alternatives. The representation consists of an RCT and Bernoulli indices. The RCT organizes the dimensions into vertices and specifies the ancestor and branch structure among these vertices; the Bernoulli indices determine the corresponding evaluations. This structure allows the model to accommodate joint evaluation, separate evaluation across branches, and recursive evaluation within a single representation.

The SMEU framework unifies the FATE, FETA, and recursive representations and also includes intermediate cases. We illustrate this unification in two settings. The first is a policy-evaluation example centered on inequality. This example shows how ex post inequality, ex ante inequality, intergenerational mobility, and combinations of these concerns can be represented within the same framework. The second is an application to multisource income, in which generalized bracketing helps analyze dominance and the avoidance of multidimensional risk. Several further directions in this context can be developed. For example, one can add risky background wealth and formulate calibration questions through recursive conditioning.

Time lotteries provide another possible future application. A time lottery has two dimensions: the prize and the date at which the prize is paid. Under the standard exponentially discounted expected-utility model, as \cite{DDGO20} emphasize, it is difficult to accommodate both stochastic impatience and aversion to random payment dates. The SMEU framework provides a potential way to address this issue by allowing the prize and timing dimensions to enter the representation in different ways. One possibility is to bracket prize risk and timing risk separately. Another is to evaluate timing risk conditional on the realized prize. In both cases, stochastic impatience and aversion to random payment dates may coexist.

Future research could more formally examine this application and explore other settings that involve multidimensional risk, such as consumption–investment decisions, mental accounting, and social welfare analysis. Moreover, developing econometric methods to estimate the SMEU representation would be valuable, as it could help reveal the RCTs that decision makers employ in different contexts.

\appendix

\newpage

\section{Proofs}\label{appen2}
{
\setlength{\abovedisplayskip}{5pt}
\setlength{\belowdisplayskip}{5pt}
\setlength{\abovedisplayshortskip}{3pt}
\setlength{\belowdisplayshortskip}{3pt}
\setlength{\jot}{3pt}

\vspace{-8pt}

\subsection{Proof of \texorpdfstring{\thmref{thm_main}}{Theorem 2}}

We begin with some notation. For any $i,j\in I$, denote $i\hrra j$ if $i\hra j$ and $j\not\hra i$.  Fix an RCT $\cT=\{\cP,E\}$. For any 
$i\in I$, let $H(i)=A\in\cP$ if $i\in A$ and let $M(A)=\{i\in A: i\hra A\}$. 
For any $i\in A\subseteq I$, denote $i\hra A$ if $i\hra j$ for all $j\in A$.

Checking the necessity of the axioms is routine (yet nontrivial in our case). We leave the details of the necessity proof to  \hyperref[OA_proof_necessity]{Online Appendix \RN{4}.1}. Below we only describe lemmas that are useful for proving other results of the paper. The first lemma shows the existence of an SMEU representation in which all vertices in the RCT are singletons.

\vspace{-5pt}
\begin{lemma}\label{lemma_nece_singleton}
If $\succsim$ has an SMEU representation, then there exists $(\hat\cP,\hat E)\in \mathbb{T}(\succsim)$ in which $\hat\cP=\{\{1\},\dots,\{N\}\}$.
 \end{lemma}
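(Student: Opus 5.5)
The plan is to refine the given RCT one vertex at a time. Each non-singleton vertex is replaced by a chain of singleton vertices, and the Bernoulli indices on the chain are chosen so that every utility $U^B_x$ of the original representation is reproduced exactly. Start from an SMEU representation $(\cT,(u^A)_{A\in\cP_o})$ of $\succsim$ and induct on the number of non-singleton vertices of $\cT$. It therefore suffices to show the following: if $A=\{a_1,\dots,a_k\}\in\cP$ with $k\geqslant 2$, there is an SMEU representation of $\succsim$ whose RCT $\cT'=(\cP',E')$ is obtained from $\cT$ in three steps.
\begin{enumerate}
\item Replace $A$ by the singletons $\{a_1\},\dots,\{a_k\}$.
\item Let the parent of $A$ be the parent of $\{a_1\}$, and add the edges $(\{a_m\},\{a_{m+1}\})$ for $m<k$.
\item Make every child of $A$ a child of $\{a_k\}$.
\end{enumerate}
All other vertices and edges are unchanged. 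This $\cT'$ is clearly an out-tree on $\cP'_o$. For every vertex $B$ outside the chain, the set of ancestor dimensions is unchanged: either $\bar a'(B)=\bar a(B)$, or $A\subseteq\bar a(B)$ and all of $A$ is still in $\bar a'(B)$. Hence the old $u^B$ is well defined on the new domain, and we keep $u^B$ for all such $B$, including $u^o$.

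On the chain, define pass-through indices. Write $x\in X_{\bar a(A)}$ and $y=(y_{a_1},\dots,y_{a_k})$.
\begin{itemize}
\item For $m<k$, set $u^{\{a_m\}}(x,y_{a_1},\dots,y_{a_{m-1}},y_{a_m},w)=w$, where $w\in\bR$ is the value of the unique child $\{a_{m+1}\}$.
\item For the last vertex, set $u^{\{a_k\}}(x,y,(w_B)_{B\in c(A)})=u^A(x,y,(w_B)_{B\in c(A)})$.
\end{itemize}
The key identity to verify is that for every $p$ and every $x\in X_{\bar a(A)}$ we have $U'^{\{a_1\}}_x(p)=U^A_x(p)$, and that $U'^B_{(x,y)}=U^B_{(x,y)}$ for every vertex $B$ below the chain. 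Unwinding (\ref{kp}) gives
\[
U'^{\{a_1\}}_x(p)=\bE^p_{a_1|x}\,\bE^p_{a_2|(x,y_{a_1})}\cdots\bE^p_{a_k|(x,y_{a_1},\dots,y_{a_{k-1}})}\,u^A\bigl(x,y,(U^B_{(x,y)}(p))_{B\in c(A)}\bigr).
\]
The subtrees below $A$ are unchanged and their ancestor-dimension sets are the same, so a bottom-up induction shows that the inner utilities coincide. The chain of iterated conditional expectations then collapses to $\bE^p_{A|x}$ by the tower property (chain rule) for the finitely supported $p$. The one care point is the convention $\bE^p_{A|x}f=0$ off the support. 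A conditioning value $(x,y_{a_1},\dots,y_{a_{m-1}})$ outside the support of the relevant marginal only appears inside an outer expectation with weight zero, so the convention never changes the value. Consequently $U'^B=U^B$ for all vertices outside the chain as well, and in particular $U'^o=U^o$.

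It remains to check conditions 1 and 2 of Definition \ref{def_heu} for the new vertices $\{a_m\}$; for all other vertices they are inherited from the identity above. Fix $z\in X_{(\{a_m\}\cup\bar d'(\{a_m\}))^\mathsf{c}}$. Such a $z$ fixes $x=z_{\bar a(A)}$, fixes $(y_{a_1},\dots,y_{a_{m-1}})$ deterministically, and fixes the dimensions in other branches. For lotteries of the form $(\delta_z,p)$, the conditionals on $a_1,\dots,a_{m-1}$ are degenerate. Hence $U'^{\{a_m\}}_{(x,y_{a_1},\dots,y_{a_{m-1}})}(\delta_z,p)=U^A_x(\delta_z,p)$. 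The lotteries $(\delta_z,p)$ with $p\in\Delta(X_{\{a_m\}\cup\bar d'(\{a_m\})})$ form a subset of those covered by condition 1 for $A$ at $z_{(A\cup\bar d(A))^\mathsf{c}}$. Moreover, $\succsim_z$ is exactly the restriction of $\succsim_{z_{(A\cup\bar d(A))^\mathsf{c}}}$ to lotteries degenerate at $(y_{a_1},\dots,y_{a_{m-1}})$. Therefore condition 1 holds, and condition 2 (continuity and strict monotonicity on degenerate lotteries) is a restriction of the corresponding property for $A$.

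The main obstacle is this bookkeeping: matching the complements $(\{a_m\}\cup\bar d'(\{a_m\}))^\mathsf{c}$ to those of the original vertex, and making sure the zero-off-support convention does not break the tower identity. Once these are settled, iterating the step over all non-singleton vertices yields $(\hat\cP,\hat E)\in\mathbb{T}(\succsim)$ with $\hat\cP=\{\{1\},\dots,\{N\}\}$.
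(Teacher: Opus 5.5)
Your proposal is correct and follows the paper's proof: it uses the same chain-of-singletons replacement of each non-singleton vertex, with pass-through indices on the first $k-1$ vertices and $u^A$ on the last, and iterates over all non-singleton vertices. Your explicit use of the tower property, your handling of the zero-off-support convention, and your check of conditions 1--2 for the new chain vertices supply the verification that the paper leaves as ``we can verify.''
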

\vspace{-5pt}

The second lemma characterizes the implications of $i\hra A$ for some $i\in A$. 

\vspace{-5pt}

\begin{lemma}\label{lemma_nece_RI}
    Suppose that $\succsim$ has an SMEU representation. If $i\in A$ and $i\hra A$, then  for all $z\in X_{A^\mathsf{c}}$, there is an SMEU representation of  $\succsim_z$ in which $A\subseteq H(i)\cup \bar d(H(i))$.
\end{lemma}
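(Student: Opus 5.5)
The plan is to start from an SMEU representation $(\cT,(u^B)_{B\in\cP_o})$ of $\succsim$, pass to an SMEU representation of the conditional preference $\succsim_z$ on $\Delta(X_A)$, and then modify the tree until every $j\in A$ lies in the subtree rooted at the vertex containing $i$.

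\textbf{Step 1: restriction to $\succsim_z$.} Fix $z\in X_{A^\mathsf{c}}$. By \lemmaref{lemma_nece_singleton} we may assume that every vertex of $\cT$ is a singleton. We then obtain a representation of $\succsim_z$ on the dimensions in $A$ as follows.
\begin{enumerate}
\item Substitute the fixed coordinates $z$ into every $u^B$ with $B\not\subseteq A$. This is possible because lotteries of the form $(p,z)$ are degenerate on $A^\mathsf{c}$, so each conditional expectation over a vertex in $A^\mathsf{c}$ collapses to an evaluation at $z$.
\item Contract each vertex outside $A$, attaching its children to its parent and composing the corresponding Bernoulli index into the parent's index.
\end{enumerate}
Continuity and strict monotonicity in the absence of risk are inherited from conditions 1 and 2 of Definition \ref{def_heu}. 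Call the resulting tree $\cT_z$, and from now on write $H(\cdot)$, $a(\cdot)$ and $\bar d(\cdot)$ relative to $\cT_z$.

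\textbf{Step 2: classify the offending dimensions.} Suppose some $j\in A$ is not in $H(i)\cup\bar d(H(i))$. Then one of two things holds.
\begin{enumerate}
\item[(a)] $H(j)$ is a strict ancestor of $H(i)$.
\item[(b)] $H(j)$ and $H(i)$ lie in different branches below their lowest common ancestor $C$.
\end{enumerate}
We proceed by induction on the number of such offending $j$. At each step we pick an offending $j$ of maximal depth (case (a)) or closest to $C$ (case (b)), and produce a new representation of $\succsim_z$ in which $j$ has been moved into the subtree of $H(i)$. The move must not create new offenders and must not change the positions of the other dimensions relative to $H(i)$.

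\textbf{Step 3: the core argument, using the state-by-state dominance test.} The key tool is Definition \ref{order}. Take lotteries whose $i$-coordinates are pairwise distinct. Under such mixtures, conditioning on $x_i$ does not pool any terms, so the conditional distribution of every descendant of $H(i)$ given $x_i$ is deterministic in each term. The consequence is the same in each case.
\begin{itemize}
\item In case (b), mixing $\delta_{x^k}$ with distinct $x_i^k$ feeds the marginal of $j$, which is a nondegenerate mixture, into the evaluation at $C$. Meanwhile, the comparisons $x^k\succsim y^k$ hold realization by realization.
\item Since $i\hra j$, the ranking must be preserved for all such families. Using the richness given by continuity and monotonicity, we choose $x^k,y^k$ that are indifferent but differ in how utility is split between $j$'s branch and $i$'s branch.
\item A standard Cauchy-equation/affine argument then shows that, on the relevant range, $u^C$ (conditional on ancestors) is additively separable and affine in the certainty-equivalent index of $j$'s branch, jointly with $i$'s branch.
\item Because of this additive affine structure, $j$'s branch can equivalently be evaluated inside the expectation over $x_i$. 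Concretely, we re-hang the subtree of $j$ as a child of $H(i)$ and absorb the additive term into $u^{H(i)}$, using the fact that the correlation between $i$ and that subtree is then irrelevant.
\item In case (a), the same test with distinct $x_i$ yields the analogous linearity of $u^{H(j)}$ in the continuation utility along the path to $H(i)$. This lets us swap $H(j)$ below $H(i)$, by the usual Kreps--Porteus-type reordering when the intermediate index is affine in the continuation value.
\end{itemize}

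\textbf{Main obstacle.} I expect the hardest step to be the functional-equation step in Step 3. It must extract exact affinity or separability of the aggregator from the weak, support-restricted test in Definition \ref{order}, and that test only allows mixtures with distinct $i$-values. To handle this, I would first work with two-point mixtures and use weak continuity to pass to limits in the mixing weight, obtaining a Jensen-type equality for $u^C$ along indifference curves. Monotonicity then upgrades this to affinity on intervals.

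Throughout, care is needed that re-hanging a subtree respects the convention $\bE^p_{B|x}f=0$ off the support, and that conditions 1 and 2 of Definition \ref{def_heu} continue to hold for every vertex of the modified tree.
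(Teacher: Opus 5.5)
\textbf{There is a genuine gap.} You have identified the right two mechanisms, and they are exactly the ones the paper uses. First, if $i$'s vertex is the \emph{only} child of an ancestor $H(j)$, the test $i\hra j$ makes $u^{H(j)}$ affine in that continuation value, so the two vertices can be merged. Second, at a branching vertex one obtains an additive form that lets the other branches be re-hung under $H(i)$. However, the moves in Step~2 are not valid in the order you allow, and they violate your own stated invariant.

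Suppose $H(j)$ is the parent of $H(i)$ and also has a side child $D$. After you merge $H(j)$ with $H(i)$ (or swap them), $D$ is evaluated conditional on $x_i$ as well. Affinity of $u^{H(j)}$ in the $H(i)$-argument does not make this agree with the original evaluation, which conditions $D$ on $x_j$ only. You would first need additive separability of $u^{H(j)}$ in the $H(i)$- and $D$-arguments. Symmetrically, re-hanging a side branch of $C$ under $H(i)$ is correct only when $H(i)$ is already a child of $C$. Otherwise the additive term extracted at $C$ has to pass through the nonlinear indices of the intermediate vertices.

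So the construction must proceed bottom-up along the path from the root to $H(i)$, clearing the side branches at each vertex before absorbing that vertex. The paper builds this in by inducting on $|A|$ and looking only at the root. Before running any functional-equation argument, it applies the inductive hypothesis to a smaller set: to $A\setminus\{j\}$ given $x_j$ when $c(o)=\{\{j\}\}$, or to the branch containing $i$ when $|c(o)|\geqslant 2$. This makes $i$ the unique child of $j$, respectively a child of the root.

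Step~3 also omits an ingredient without which the re-hanging is false. Before a side branch $D$ can be moved under $H(i)$, you must show that $D$'s subtree is evaluated by expected utility, i.e.\ that $U^D$ is linear in the distribution of the whole subtree. After the move, every conditional distribution inside that subtree is conditioned on $x_i$ as well. Averaging over $x_i$ recovers the original $U^D$ for all lotteries only when $U^D$ is linear, and additive separability of $u^C$ in the scalar continuation value does not give this. This is the paper's Lemma~\ref{lemma_nece_EU}. Its proof compares a mixture with distinct $j$-values to one that pools all terms at a single $j$-value, with distinct $i$-values and the remaining branch coordinates varying across indices. It then passes to a limit in which the $i$-components coincide. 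Your ``certainty-equivalent index of $j$'s branch'' presupposes this result.

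Beyond that, your two-point Jensen argument yields at best affinity on local pieces, whereas the lemma needs a global additive form. The paper obtains it as follows:
\begin{itemize}
\item When $i$ is a leaf, it proves independence of the whole conditional preference, using chains of indifferent consequences in which consecutive points differ only in $i$ and one other coordinate.
\item When $i$ has descendants, it conditions on them, shows that the weights $\alpha_l(z')$ and the outer transformations $f_{z'}$ do not depend on $z'$, and then extends to nondegenerate lotteries via certainty equivalents.
\end{itemize}
Since Step~3 is where the whole lemma lives and it is only asserted, the proposal does not yet constitute a proof.
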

\vspace{-5pt}

The last two lemmas provide sufficient conditions for $i\perp j$ and $i\hra j$.

\vspace{-5pt}

\begin{lemma}\label{lemma_nece_perp}
    Suppose that $\succsim$ has an SMEU representation.  If  $j\not\in H(i)\cup\bar d(H(i))$ and $i\not\in H(j)\cup\bar d(H(j))$ in some $\cT\in \mathbb{T}(\succsim)$, then $i\perp j$.
\end{lemma}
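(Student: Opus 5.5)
To prove \lemmaref{lemma_nece_perp}, fix $\cT=(\cP,E)\in\mathbb{T}(\succsim)$ with an SMEU representation $(\cT,(u^A)_{A\in\cP_o})$ in which neither $i$ nor $j$ lies in the subtree rooted at the vertex of the other. Then $H(i)\neq H(j)$. Let $C$ be the deepest common ancestor of $H(i)$ and $H(j)$; it may be $o$. Let $B_i,B_j\in c(C)$ be the distinct children of $C$ on the paths to $H(i)$ and $H(j)$. We have $i\in B_i\cup\bar d(B_i)$ and $j\in B_j\cup\bar d(B_j)$, and these subtrees are disjoint.

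Take $p,q$ as in Definition \ref{def_correlation}. The plan is to show $U^o(p)=U^o(q)$ by checking that every quantity in the recursion (\ref{kp})--(\ref{kp2}) coincides for $p$ and $q$.

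\emph{Step 1: vertices not containing $i$ or $j$, and not descendants of $H(i)$ or $H(j)$.} Take any vertex $A$ of this kind and conditioning value $x\in X_{\bar a(A)}$. The conditional distribution $p_{A|x}$ depends only on $p_{\{i,j\}^\mathsf{c}}$, because $\bar a(A)\cup A$ avoids $i$ and $j$. Hence $p_{A|x}=q_{A|x}$ by (ii), and the supports of $p_{\bar a(A)}$ and $q_{\bar a(A)}$ also agree.

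\emph{Step 2: vertices in the subtree of $B_i$ at or below $H(i)$.} Conditioning sets here may contain $i$ but not $j$. The key observation is that any marginal of $p$ on $\{i,j\}^\mathsf{c}\cup\{i\}$ is determined by $p_{\{i,j\}^\mathsf{c}}$ together with the conditionals $p_{i|z}$. So conditions (i) and (ii) give equality of the joint distributions of $p$ and $q$ on $\{j\}^\mathsf{c}$. Symmetrically, they give equality on $\{i\}^\mathsf{c}$.

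\emph{Step 3: applying the observation to every vertex.} Every vertex $A$ of the RCT other than $C$-and-above has $A\cup\bar a(A)$ avoiding $j$ or avoiding $i$. Take a vertex $A$ with $\bar a(A)\ni i$. Then $A$ lies in the $B_i$ subtree, so $\bar a(A)\cup A\subseteq\{j\}^\mathsf{c}$, and $p_{A|x}=q_{A|x}$ follows from equality on $\{j\}^\mathsf{c}$. The symmetric case is handled the same way.

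Now consider $C$ and its ancestors. These avoid $i$ and $j$, since $i,j$ lie strictly below $C$. Their conditionals $p_{C|x}$ are therefore determined by $p_{\{i,j\}^\mathsf{c}}$. The continuation values $U^{B}_{(x,y)}$ for $B\in c(C)$ are each computed from the subtree of $B$ alone. By the previous paragraph these continuation values agree for $p$ and $q$. An induction from the leaves upward then shows that $U^A_x(p)=U^A_x(q)$ for all $A$ and $x$. In particular $U^o(p)=U^o(q)$, so $p\sim q$ by condition 1 of Definition \ref{def_heu}.

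The main obstacle is bookkeeping with conditioning sets. I must confirm that no vertex's conditional distribution, and no support restriction in the convention $\bE^p_{A|x}f=0$ off the support, involves $i$ and $j$ jointly. This is exactly where the hypothesis that $i$ and $j$ lie in different branches is used: no vertex has both $i$ and $j$ in $A\cup\bar a(A)$. The marginalization identity in Step 2 needs a short explicit calculation via $p(z,x_i)=p_{\{i,j\}^\mathsf{c}}(z)\,p_{i|z}(x_i)$.
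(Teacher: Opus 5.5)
Your proof is correct and follows essentially the paper's argument. The paper also notes that the SMEU utility depends on a lottery only through its marginals on two covering sets, one containing $i$ but not $j$ and one containing $j$ but not $i$, and then uses conditions (i)--(ii) to get $p_{\{i\}^\mathsf{c}}=q_{\{i\}^\mathsf{c}}$ and $p_{\{j\}^\mathsf{c}}=q_{\{j\}^\mathsf{c}}$. The paper asserts that the utility has this dependence without justification, and your vertex-by-vertex recursion (no chain $\{A\}\cup a(A)$ contains both $H(i)$ and $H(j)$, supports agree, induction from the leaves) supplies it. The only cosmetic difference is that the paper first passes to a singleton-vertex RCT via Lemma \ref{lemma_nece_singleton}, which your direct argument does not need.
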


\vspace{-10pt}

\begin{lemma}\label{lemma_nece_hra}
    Suppose that $\succsim$ has an SMEU representation. If $j\in H(i)\cup\bar d(H(i))$ in some $\cT\in \mathbb{T}(\succsim)$, then $i\hra j$. 
\end{lemma}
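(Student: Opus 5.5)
The statement to prove is Lemma~\ref{lemma_nece_hra}: if $j\in H(i)\cup\bar d(H(i))$ in some SMEU representation, then $i\hra j$.

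\textbf{Setup.} Fix a representation $(\cT,(u^A)_{A\in\cP_o})$ in which $j\in H(i)\cup\bar d(H(i))$, and write $A=H(i)$. Take data $x^k,y^k,\pi^k$ satisfying (i)--(iii) of Definition~\ref{order}. We must show $\sum_k\pi^k\delta_{x^k}\succsim\sum_k\pi^k\delta_{y^k}$.

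\textbf{Reduction to vertex $A$.} All $x^k,y^k$ agree with a common vector on $\{i,j\}^\mathsf{c}$ only within each $k$, not across $k$. So the plan is to evaluate the recursion (\ref{kp})--(\ref{kp2}) from the root down to $A$.

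Since $i,j\in A\cup\bar d(A)$, the lotteries $p=\sum_k\pi^k\delta_{x^k}$ and $q=\sum_k\pi^k\delta_{y^k}$ differ only in coordinates inside $A\cup\bar d(A)$ and in coordinates outside it. The outside coordinates are $x^k_{(A\cup\bar d(A))^\mathsf{c}}=y^k_{(A\cup\bar d(A))^\mathsf{c}}$ for each $k$, because $i,j\notin(A\cup\bar d(A))^\mathsf{c}$.

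Hence $p$ and $q$ induce the same joint distribution on the dimensions in every vertex outside the subtree rooted at $A$, with the same pairing across $k$. By monotonicity of the aggregators implied by condition~2 and the recursive form, it suffices to show the following. For each realization $x\in X_{\bar a(A)}$ (with the remaining outside dimensions equal), the value $U^A_x(p)$ weakly exceeds $U^A_x(q)$.

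A cleaner route is to invoke condition~1 of Definition~\ref{def_heu} at vertex $A$ together with the representation at ancestor vertices. Then one proceeds upward, each step using that $U^B_{x}$ is an expectation of $u^B$ that is monotone in its children's utilities.

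\textbf{Core step at $A$.} Fix the outside realization $z$, and let $K_z$ be the set of $k$ with $x^k_{(A\cup\bar d(A))^\mathsf{c}}=z$. Pairwise distinctness of $(x^k_i)$ and of $(y^k_i)$, with $i\in A$, means each realization $y\in X_A$ of the vertex-$A$ dimensions arises from at most one $k\in K_z$.

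Conditional on $(z,x^k_A)$, the descendant dimensions are therefore degenerate at $x^k_{\bar d(A)}$. Thus
\[
U^A_{z_{\bar a(A)}}(p)=\sum_{k\in K_z}\tfrac{\pi^k}{\pi(K_z)}\,U^A_{z_{\bar a(A)}}(\delta_{(z,x^k)}),
\]
and similarly for $q$ with $y^k$.

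By condition~1, $x^k\succsim y^k$ gives $U^A(\delta_{(z,x^k)})\geqslant U^A(\delta_{(z,y^k)})$. The conditional weights coincide across $p$ and $q$, so termwise comparison yields $U^A_{z_{\bar a(A)}}(p)\geqslant U^A_{z_{\bar a(A)}}(q)$.

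\textbf{Propagation to the root.} Feeding this inequality upward through (\ref{kp}) and (\ref{kp2}) gives $U^o(p)\geqslant U^o(q)$. At each ancestor $B$, the realizations of $B$'s dimensions and their conditional probabilities coincide under $p$ and $q$. Only the child utilities on the branch toward $A$ differ, and they differ weakly in the right direction.

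This requires $u^B$ to be weakly increasing in each child-utility argument. That property follows from condition~2 combined with condition~1, by comparing degenerate lotteries that differ only within one child's subtree. The same argument covers sibling branches, whose utilities are identical under $p$ and $q$.

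\textbf{Main obstacle.} The delicate point is justifying monotonicity of $u^B$ in child utilities on the relevant range. Condition~2 gives strict monotonicity only along degenerate lotteries. I would first show that every attainable value of $U^C_{x}$ equals $U^C_x(\delta_w)$ for some $w$, using continuity and the intermediate value theorem. Monotonicity along degenerate lotteries then transfers to all lotteries.
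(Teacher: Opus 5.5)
Your argument is correct and is essentially the paper's. Both proofs rest on two ingredients. The first is linearity of the evaluation at $H(i)$ across the distinct $i$-values. The second is upward propagation, using that each $u^B$ is monotone in its child utilities on the range of values attained by degenerate lotteries (the paper's \lemmaref{lemma_continuity}(ii)--(iii)). The difference is packaging. The paper reduces to singleton vertices and inducts along the ancestor chain, grouping the $k$'s by ancestor realizations. You instead compare $U^B_x(p)$ and $U^B_x(q)$ history by history, which is a bit cleaner and avoids \lemmaref{lemma_nece_singleton}.

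One correction is needed. $U^A_x$ conditions only on $\bar a(A)$, not on the other-branch coordinates in $z$. So your displayed identity should sum over all $k$ with $x^k_{\bar a(A)}=z_{\bar a(A)}$, with weights renormalized accordingly, rather than over $K_z$. The identity still holds because the $x^k_i$, and hence the $x^k_A$, are distinct across all $k$. The same index set serves for $q$, since $x^k$ and $y^k$ agree on $\bar a(A)$.
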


\vspace{-5pt}

For the rest of the proof we focus on  the sufficiency of Axioms \ref{axiom_WO}-\ref{axiom_C}. 

\medskip

\textit{Step 1: Preliminary results.} 

\vspace{5pt}

We present several lemmas that will be useful in later steps. Most proofs of these lemmas are deferred to \hyperref[OA_proof_lemmas]{Online Appendix \RN{4}.2}. 
 The first lemma shows that  $\hra$ is transitive.

\vspace{-5pt}

\begin{lemma}\label{lemma_transitive}
Suppose that $\succsim$ satisfies Axioms \ref{axiom_WO}-\ref{axiom_one} and \ref{axiom_C}. If $i\hra j$ and $j\hra l$, then $i\hra l$.
\end{lemma}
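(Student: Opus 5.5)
The plan is to build a chain of intermediate consequences: first move from $x^k$ by changing only dimensions $j$ and $l$, then move to $y^k$ by changing only dimensions $i$ and $j$. The hypotheses $j\hra l$ and $i\hra j$ will then let us compare the corresponding mixtures and finish by transitivity of $\succsim$.

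The degenerate cases come first. If $i=l$, or if $j\in\{i,l\}$, the claim is immediate: either it is the hypothesis itself, or it reduces to a comparison of mixtures of consequences that differ only in dimension $i$. So assume $i,j,l$ are distinct. Fix $x^k,y^k,\pi^k$ as in Definition \ref{order} for the pair $(i,l)$. That is, the $x^k_i$ are pairwise distinct, the $y^k_i$ are pairwise distinct, $x^k$ and $y^k$ agree outside $\{i,l\}$, and $x^k\succsim y^k$ for every $k$. We must show $\sum_k\pi^k\delta_{x^k}\succsim\sum_k\pi^k\delta_{y^k}$.

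\emph{Construction of the intermediate points.} For each $k$, let $w^k$ agree with $x^k$ everywhere except that $w^k_l:=y^k_l$ and $w^k_j$ is chosen so that $w^k\sim x^k$. Then $w^k$ and $x^k$ differ only in $\{j,l\}$. Likewise $w^k$ and $y^k$ differ only in $\{i,j\}$, because $x^k_j=y^k_j$. If such a $w^k_j$ exists, then $w^k\sim x^k\succsim y^k$. The two mixture comparisons are then as follows.
\begin{itemize}
\item $\sum\pi^k\delta_{w^k}\succsim\sum\pi^k\delta_{y^k}$ follows from $i\hra j$, since $w^k_i=x^k_i$ are pairwise distinct.
\item $\sum\pi^k\delta_{x^k}\sim\sum\pi^k\delta_{w^k}$ follows from $j\hra l$ applied in both directions, provided the $j$-coordinates of the $x^k$ are pairwise distinct and so are those of the $w^k$.
\end{itemize}
Transitivity of $\succsim$ then gives the result. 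Existence of $w^k_j$ comes from consequence monotonicity and the closedness in weak continuity, via an intermediate value argument along dimension $j$. When the required indifference cannot be reached inside $X_j$, I would instead take the endpoint value. This still yields $x^k\succsim w^k\succsim y^k$ (or the symmetric variant) by monotonicity, and that weaker chain is all the two applications of $\hra$ need, since they only use weak preferences pointwise.

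\emph{Distinctness in dimension $j$ (the main obstacle).} The $j$-coordinates of the $x^k$, or of the constructed $w^k$, may coincide, and then $j\hra l$ cannot be invoked. I would fix this with a preliminary replacement that costs nothing thanks to $i\hra j$. Replace each $x^k$ by a point $\hat x^k$ with $\hat x^k\sim x^k$ that differs from $x^k$ only in dimensions $i$ and $j$. To get it, shift $x^k_j$ by a small amount $\varepsilon_k$, with the $\varepsilon_k$ chosen so that the new $j$-coordinates are pairwise distinct. Then compensate in dimension $i$ to restore indifference, again using monotonicity and weak continuity. Choose the direction of each shift so that the compensation stays inside $X_i$, and take the shifts small enough that the $\hat x^k_i$ remain pairwise distinct.

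Applying $i\hra j$ in both directions then gives $\sum\pi^k\delta_{x^k}\sim\sum\pi^k\delta_{\hat x^k}$. The same device, with generic choices of the $\varepsilon_k$, makes the $j$-coordinates of the $w^k$ built from the $\hat x^k$ pairwise distinct.

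The delicate part is bookkeeping rather than conceptual. The perturbations must preserve every distinctness requirement simultaneously (in dimension $i$ for both uses of $i\hra j$, and in dimension $j$ for $j\hra l$). They must also handle consequences on the boundary of $X$, where the compensation has to go in a particular direction. I expect this to be manageable by choosing the $\varepsilon_k$ outside finitely many bad values.
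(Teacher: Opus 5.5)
Your proof has a genuine gap: the single intermediate point $w^k$ need not exist, and the endpoint fallback does not rescue it. Once you move $x^k\to w^k$ in $\{j,l\}$ and then $w^k\to y^k$ in $\{i,j\}$, you are forced to set $w^k_i=x^k_i$ and $w^k_l=y^k_l$, so only $w^k_j$ is free, and you need $x^k\succsim w^k\succsim y^k$. Take $N=3$, $X_m=[0,1]$, and the expected utility preference with index $u(x)=x_i+x_j+x_l$. It satisfies every hypothesis (indeed $m\hra m'$ for all $m,m'$). In coordinates $(i,j,l)$, let $x^k=(1,\tfrac12,0)$ and $y^k=(0,\tfrac12,1)$, so $x^k\sim y^k$. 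Every candidate $w^k=(1,t,1)$ has utility $2+t>\tfrac32$. Hence $w^k\succ x^k$ for all $t$, and $j\hra l$ cannot give $\sum\pi^k\delta_{x^k}\succsim\sum\pi^k\delta_{w^k}$. Now swap the roles: $x^k=(0,\tfrac12,1)$ and $y^k=(1,\tfrac12,0)$. Every $w^k=(0,t,0)$ has utility $t<\tfrac32$. At the endpoint you get $x^k\succ w^k$ but $w^k\prec y^k$, so the $i\hra j$ step fails. Doing the two moves in the opposite order fails in the same way. The problem is that dimension $j$ may lack the range to absorb the entire trade-off between $i$ and $l$ in one step.

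The missing idea is to join $x^k$ to $y^k$ by a finite staircase of alternating small moves: an $(i,j)$-move, then a $(j,l)$-move, and so on. Each move should be small enough that the compensating change in $j$ stays inside $X_j$, and each should be weakly downward in $\succsim$. Boundedness of $X$, together with monotonicity and weak continuity, gives finitely many steps. This is what the paper does with the chain $\hat x^m\succsim\hat z^1\succsim\tilde z^1\succsim\cdots\succsim\tilde z^T=\hat y^m$. It replaces one term $k$ at a time and holds the others fixed, so each step is a legitimate instance of $i\hra j$ or $j\hra l$. Pairwise distinctness of the $i$- and $j$-coordinates then only has to be maintained against finitely many fixed values.

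A smaller problem concerns your preliminary perturbation. It cannot keep exact indifference when $(x^k_i,x^k_j)$ equals $(\ux_i,\ux_j)$ or $(\lx_i,\lx_j)$, because no shift in $j$ can then be compensated in $i$ within $X_i$. The paper treats this case separately. If $x^k\succ y^k$, it uses only a weak replacement. If $x^k\sim y^k$ (which forces $x^k_l<\ux_l$ at the upper corner), it first makes an indifferent $(j,l)$-move.
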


\vspace{-5pt}

\begin{proof}[Proof of \lemmaref{lemma_transitive}]
    For all  $x^k,y^k\in X$ and $\pi^k\in (0,1)$ with $k=1,\dots,n$ such that $\sum_{k=1}^n\pi^k=1$, $x^k_i\neq x^{k'}_i, y^k_i\neq y^{k'}_i$ for all $k\neq k'$, $x^k_{\{i,l\}^\mathsf{c}} = y^k_{\{i,l\}^\mathsf{c}} $,  and $ x^k\succsim y^k$ for all $k$, we need to show that $\sum_{k=1}^n\pi^k\delta_{x^k}\succsim \sum_{k=1}^n\pi^k\delta_{y^k}$. Denote $w^k=x^k_{\{i,j,l\}^\mathsf{c}}\in X_{\{i,j,l\}^\mathsf{c}}$ for all $k$.
    
   By relabeling the superscripts, there exists $m\leqslant  n$ such that $(x^k_i,x^k_l)=(y^k_i,y^k_l)$ (and hence $x^k=y^k$) for all $m<k\leqslant n$ and $(x^k_i,x^k_l)\neq (y^k_i,y^k_l)$ for all $1\leqslant k\leqslant m$. For each $k\leqslant m$, Axiom \ref{axiom_M}  implies that either $x^k\succ y^k$, or $x^k\sim y^k$ and
   $(x^k_i, x^k_l), (y^k_i, y^k_l)\not\in \{(\ux_i,\ux_l), (\lx_i,\lx_l)\}$. Without loss of generality, assume $x^k_1<x^k_2<\cdots<x^k_m$. 

    We construct a new sequence $(\hat x^k)_{k=1}^n$ as follows. First, define $\hat x^k=x^k$ for all $k>m$. Then we construct $\hat x^k$  for any $k\leqslant m$ recursively. Suppose we have defined $\hat x^{k'}$ for all $k'>k$ such that $ x^{k'}\succsim \hat x^{k'}\succsim y^{k'}$ for all $k'>k$ and $\sum_{t=1}^n\pi^t \delta_{x^t}\succsim \sum_{t=1}^k\pi^t \delta_{x^t} + \sum_{t=k+1}^n\pi^t \delta_{\hat x^t}$. We want to define $\hat x^k$. 
    If $(x^k_i, x^k_j)\not\in \{(\ux_i,\ux_j), (\lx_i,\lx_j)\}$ or $x^k\succ y^k$, then by Axiom \ref{axiom_M}  and Axiom \ref{axiom_C}, we can find $\hat x^k=(w^k,\hat x^k_i,\hat x^k_j, x^k_l)$ such that (i) $\hat x^k_i\not\in \{x^t_i: t<k\}\cup \{\hat x^t_i: t>k\}\cup \{y^t_i:t=1,\dots,n\}\cup\{\ux_i,\lx_i\}$, (ii) $\hat x^k_j\not\in \{x^t_j: t<k\}\cup \{\hat x^t_j: t>k\}\cup \{y^t_j:t=1,\dots,n\}\cup\{\ux_j,\lx_j\}$, and (iii) $ x^{k}\succsim \hat x^{k}\succsim y^{k}$ . Since $i\hra j$, \[\sum_{t=1}^n\pi^t \delta_{x^t}\succsim\sum_{t=1}^{k}\pi^t \delta_{x^t} + \sum_{t=k+1}^n\pi^t \delta_{\hat x^t}\succsim \sum_{t=1}^{k-1}\pi^t \delta_{x^t} + \sum_{t=k}^n\pi^t \delta_{\hat x^t}.\]
    
Suppose instead that $(x^k_i, x^k_j)\in \{(\ux_i,\ux_j), (\lx_i,\lx_j)\}$ and $x^k\sim y^k$. By symmetry, we focus on the case in which $(x^k_i, x^k_j)=(\ux_i,\ux_j)$. Then $x^k_l<\ux_l$. By Axiom \ref{axiom_M}  and Axiom \ref{axiom_C}, we can find $\tilde x^k=(w^k, x^k_i,\tilde x^k_j, \tilde x^k_l)$ and $\hat x^k=(w^k,\hat x^k_i,\hat x^k_j, \tilde x^k_l)$
such that (i) $\hat x^k_i\not\in \{x^t_i: t<k\}\cup \{\hat x^t_i: t>k\}\cup \{y^t_i:t=1,\dots,n\}\cup\{\ux_i,\lx_i\}$, (ii) $\hat x^k_j, \tilde x^k_j\not\in \{x^t_j: t<k\}\cup \{\hat x^t_j: t>k\}\cup \{y^t_j:t=1,\dots,n\}\cup\{\ux_j,\lx_j\}$, and (iii) $\hat x^k\sim \tilde x^k\sim x^k$. Since $j\hra l$ and $i\hra j$, 
\begin{align*}
    \sum_{t=1}^{k-1}\pi^t \delta_{x^t} + \sum_{t=k}^n\pi^t \delta_{\hat x^t} & \sim \sum_{t=1}^{k-1}\pi^t \delta_{x^t} + \pi^k\delta_{\tilde x^k} + \sum_{t=k+1}^n\pi^t \delta_{\hat x^t} \sim \sum_{t=1}^k\pi^t \delta_{x^t} + \sum_{t=k+1}^n\pi^t \delta_{\hat x^t}\sim \sum_{t=1}^n\pi^t \delta_{x^t}.
\end{align*}

Similarly, we can construct $(\hat y^k)_{k=1}^n$ such that (i) $x^{k}\succsim \hat x^{k}\succsim \hat y^{k}\succsim y^{k}$ for all $k=1,\dots,n$, (ii) $\sum_{t=1}^n\pi^t \delta_{x^t}\succsim  \sum_{t=1}^n\pi^t \delta_{\hat x^t}$, (iii)  $\sum_{t=1}^n\pi^t \delta_{\hat y^t}\succsim  \sum_{t=1}^n\pi^t \delta_{ y^t}$, (iv) $\hat x^k=\hat y^k=x^k=y^k$ for all $k>m$, and (v) all values of $(\hat x^k_t)_{k=1}^m$ and $(\hat y^k_t)_{k=1}^m$ are distinct for both $t=i$ and $t=j$. It remains to show that $\sum_{t=1}^n\pi^t \delta_{\hat x^t}\succsim \sum_{t=1}^n\pi^t \delta_{\hat y^t}$. 

For $k=m$, without loss of generality, assume $\hat x^m_i>\hat y^m_i$ and $\hat x^m_l<\hat y^m_l$. The other cases are similar. By Axiom \ref{axiom_M}, Axiom \ref{axiom_C}, and the boundedness of $X$, there exist finitely many $\hat z^t, \tilde z^t\in X$ for $t=1,..,T$ such that (i) $\hat z^t_{\{i,j,l\}^\mathsf{c}}=\tilde z^t_{\{i,j,l\}^\mathsf{c}}=w^m$ for all $t=1,\dots,T$,
(ii) $ \tilde z^T=\hat y^m$, (iii) $\hat z_i^t=\tilde z_i^t$ and $\hat z_l^t<\tilde z_l^t$ for all $t=1,\dots,T$, (iv) $\hat z^1_l=\hat x^m_l$, $\tilde z^t_l = \hat z^{t+1}_l$, and $\tilde z^t_i > \hat z^{t+1}_i$ for all $t=1,\dots,T-1$, (v) all values of $(\hat z^t_j)_{t=1}^T$, $(\tilde z^t_j)_{t=1}^T$, and $(\hat x^k_j)_{k\neq m}$ are distinct, and (vi) $\hat x^m\succsim \hat z^1\succsim \tilde z^1\succsim\cdots \succsim \hat z^T  \succsim \tilde z^T= \hat y^m$. Since  $i\hra j$, and $j\hra l$, the above construction guarantees that 
\begin{align*}
    \sum_{t=1}^n\pi^t \delta_{\hat x^t} &  \succsim  \sum_{t=1}^{m-1}\pi^t \delta_{\hat x^t} + \pi^m \delta_{\hat z^1} + \sum_{t=m+1}^n\pi^t \delta_{\hat y^t}  \\
&\succsim \sum_{t=1}^{m-1}\pi^t \delta_{\hat x^t} + \pi^m \delta_{\tilde z^1} + \sum_{t=m+1}^n\pi^t \delta_{\hat y^t} \\
&\succsim \sum_{t=1}^{m-1}\pi^t \delta_{\hat x^t} + \pi^m \delta_{\hat z^2} + \sum_{t=m+1}^n\pi^t \delta_{\hat y^t}  \\
&\succsim \cdots\succsim \sum_{t=1}^{m-1}\pi^t \delta_{\hat x^t} + \pi^m \delta_{\tilde z^T} + \sum_{t=m+1}^n\pi^t \delta_{\hat y^t} =  \sum_{t=1}^{m-1}\pi^t \delta_{\hat x^t} + \sum_{t=m}^n\pi^t \delta_{\hat y^t}.
\end{align*}

By repeating the above construction for $k=1,\dots,m-1$, we can show that $\sum_{t=1}^n\pi^t \delta_{\hat x^t}\succsim \sum_{t=1}^n\pi^t \delta_{\hat y^t}$.  This finishes the proof of $i\hra l$.
\end{proof}

Second, the conditional preference on each $i$ has an expected utility (EU) representation.
 
\vspace{-5pt}

\begin{lemma}\label{lemma_conditional}
Suppose that $\succsim$ satisfies Axioms \ref{axiom_WO}-\ref{axiom_one} and  \ref{axiom_C}. For any $i\in I$ and $x\in X_{-i}$, the conditional preference $\succsim_x$ on $\Delta(X_i)$ has an EU representation with a continuous and strictly increasing Bernoulli index $v_{i|x}$, which is unique up to a positive affine transformation.
\end{lemma}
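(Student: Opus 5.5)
The plan is to apply the mixture-space (von Neumann--Morgenstern) theorem to the conditional preference $\succsim_x$ on $\Delta(X_i)$, which is a convex set of simple lotteries. I will check the three Herstein--Milnor conditions and then add monotonicity and continuity of the index.

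\textbf{Weak order and independence.} Weak order of $\succsim_x$ is inherited from Axiom \ref{axiom_WO}, because $p\succsim_x q$ is defined by $(p,x)\succsim (q,x)$. Unidimensional independence (Axiom \ref{axiom_one}) supplies the strict form of independence: $p\succ_x q$ implies $p\alpha r\succ_x q\alpha r$. From this I will derive the indifference form in the standard way. Suppose $p\sim_x q$ but $p\alpha r\succ_x q\alpha r$. Then I use mixture continuity to perturb $p$ slightly toward a strictly better or worse degenerate lottery, and obtain a contradiction with the strict version.

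\textbf{Mixture continuity.} This is Herstein--Milnor closedness of $\{\alpha: p\alpha q\succsim_x r\}$ and of its counterpart. It follows from the first part of Axiom \ref{axiom_C} applied to the lotteries $(p,x),(q,x),(r,x)\in\Delta(X)$. The key fact is that $(p\alpha q,x)=(p,x)\alpha(q,x)$, since mixing in dimension $i$ while holding $x$ degenerate is an ordinary mixture in $\Delta(X)$.

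\textbf{Obtaining the index.} These conditions give an affine representation $V$ of $\succsim_x$ that is unique up to a positive affine transformation. I set $v_{i|x}(y)=V(\delta_y)$. Strict monotonicity of $v_{i|x}$ comes from Axiom \ref{axiom_M}, since $(y,x)\ge(y',x)$ with $y> y'$ gives $\delta_{(y,x)}\succ\delta_{(y',x)}$.

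\textbf{Continuity of the index (main obstacle).} Fix $y\in X_i$ and a sequence $y^n\to y$, and suppose WLOG $y^n \uparrow y$ monotonically, using monotonicity. Take any $a<v_{i|x}(y)$ with $a$ in the range of $V$. Write $a=V(\delta_{\underline x_i}\beta\delta_{y})$ for some $\beta$; this uses the affine structure and the fact that $V(\delta_{\underline x_i})$ is the minimum by monotonicity. The set $\{z\in X:\delta_z\succsim p\}$ is closed by the second part of Axiom \ref{axiom_C}, with $p=(\delta_{\underline x_i}\beta\delta_y,x)$. Its complement $\{z: p\succ\delta_z\}$ is therefore open. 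Now $(y,x)$ lies in the set $\{z:\delta_z\succ p\}$. I need this strict upper set to be open, which I get from the closedness of $\{z: p\succsim\delta_z\}$, also part of Axiom \ref{axiom_C}. Hence $\delta_{(y^n,x)}\succ p$ for large $n$, so $v_{i|x}(y^n)>a$. Combined with monotone bounds, this gives $v_{i|x}(y^n)\to v_{i|x}(y)$, and the decreasing case is symmetric. The delicate point is making sure every intermediate utility level is attained by a mixture of degenerate lotteries. This is guaranteed because $V$ is affine on mixtures of $\delta_{\underline x_i}$ and $\delta_y$.
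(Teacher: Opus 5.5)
Your proposal is correct and follows essentially the same route as the paper. Both obtain an expected-utility representation of $\succsim_x$ from weak order, unidimensional independence and the mixture part of weak continuity, get strict monotonicity from consequence monotonicity, and prove continuity of $v_{i|x}$ by placing a lottery of intermediate utility between $v_{i|x}(y^n)$ and $v_{i|x}(y)$ and using closedness of $\{z\in X:p\succsim\delta_z\}$. The paper runs this last step as a contradiction argument, and your derivation of the indifference form of independence plus your explicit use of mixture continuity fill in details the paper compresses into ``By Axioms 1--3.''
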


For any $B\subsetneq A$, we denote $B\vartriangleright A$
 if for all $x\in X_{A^{\cp}}$, $r\in \Delta(X_{A\setminus B})$, and $p,q\in \Delta(X_A)$ such that $p_{A\setminus B}=q_{A\setminus B}$, we have $p\succsim_x q \iff (p_B, r)\succsim_x (q_B,r).$  We say that  $A$ is \textit{bracket separable} if  there exists a nontrivial partition $\{B_k\}_{k=1}^n$ of $A$ such that $B_k\vartriangleright A$ for all $k=1,\dots,n$. In this case, we call  $\{B_k\}_{k=1}^n$ a \textit{bracket partition} of $A$. The following lemma provides a sufficient condition for the existence of a bracket partition.

\vspace{-5pt}

\begin{lemma}\label{lemma_suff_SUB}
    Suppose that $\succsim$ satisfies Axioms \ref{axiom_WO}-\ref{axiom_C}. If $A$ is not bracket separable, then there exists $i\in A$ such that $i\hra A$.
\end{lemma}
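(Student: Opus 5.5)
We argue by contrapositive. Assume $A$ (with $|A|\geqslant 2$) is not bracket separable and show that some $i\in A$ satisfies $i\hra A$.

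\textbf{Step 1: from $\perp$ to bracket partitions.} Within $A$, consider the undirected graph with an edge between $i\neq j$ whenever $i\not\perp j$. If this graph were disconnected, each connected component $B$ would satisfy $B\perp(A\setminus B)$. A minimal separable subset of $B$ could then be extracted, and by repeated use of uncorrelated separability (Axiom \ref{axiom_S}) and transitivity of $\succsim$, each component would satisfy $B\vartriangleright A$. That would give a bracket partition, contrary to the hypothesis. Hence the $\not\perp$-graph on $A$ is connected.

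\textbf{Step 2: existence of a source.} On $A$, $\hra$ is reflexive (immediate from the definition) and transitive (Lemma \ref{lemma_transitive}). We want a maximal element $i$ with $i\hra j$ for all $j\in A$.

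First, for any pair $j,k$ with $j\not\perp k$, we need $j\hra k$ or $k\hra j$. We show this directly from Axioms \ref{axiom_one} and \ref{axiom_C}, using the conditional EU indices of Lemma \ref{lemma_conditional}. If neither relation held, the preference restricted to $\{j,k\}$ (other coordinates fixed) would depend only on the two marginals, giving $j\perp k$. This is the two-dimensional base case, and we expect it to need a separate argument.

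Now pick $i\in A$ that is $\hra$-maximal, i.e., whose upper set is maximal. We claim $i$ reaches all of $A$. Suppose not. Take a path in the connected $\not\perp$-graph from the reachable set $R=\{j: i\hra j\}$ to some $k\notin R$, and pick the first edge $j\not\perp k$ with $j\in R$, $k\notin R$. Then either $j\hra k$, which by transitivity gives $k\in R$, a contradiction; or $k\hra j$.

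In the latter case, apply RCT perfection (Axiom \ref{axiom_H}) to a chain running from $i$ through $R$ to $j$ and then $k$. Both endpoints $i$ and $k$ point to the intermediate vertices, so the axiom yields $i\hra k$ or $k\hra i$. The first contradicts $k\notin R$. The second gives $k\hra R$, so $k$'s reach strictly contains $R$, contradicting maximality of $i$.

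The delicate point is arranging the chain so that every intermediate vertex lies in $R\cap\{l: k\hra l\}$. We would choose the chain $i=i^1,\dots,i^{n-1}=j,i^n=k$ as a shortest $\not\perp$-path within $R$. Then we use induction on path length, repeatedly applying Axiom \ref{axiom_H} to shorter chains to propagate $k\hra\cdot$ backward along the path.

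\textbf{Main obstacle.} We expect the main obstacle to be Step 1's conversion of pairwise $\perp$ into the set-level property $B\vartriangleright A$. This requires showing that minimally separable sets inside a component combine, i.e., that union and intersection preserve separability (essentially Lemma \ref{lemma_finest}). The argument would proceed through a Gorman-type overlap argument using the conditional EU structure and weak continuity.
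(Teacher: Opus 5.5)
Your architecture matches the paper's proof. You show the $\not\perp$-graph on $A$ is connected, pick a $\hra$-maximal element, and reach a contradiction via Axiom \ref{axiom_H} along a chain, propagating $k\hra\cdot$ backward by induction on chain length. However, you have placed the difficulty in the wrong step.

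\textbf{Step 1 is immediate.} A connected component $B$ of the $\not\perp$-graph is itself minimally separable from $A$. We have $B\perp(A\setminus B)$ because no edge leaves $B$. Every nonempty $B'\subsetneq B$ has an edge into $B\setminus B'$, so $B'\not\perp(A\setminus B')$. Axiom \ref{axiom_S} then gives $B\vartriangleright A$ directly. No Gorman-type combination is needed. Appealing to Lemma \ref{lemma_finest} or Lemma \ref{lemma_isolation} here would be circular in the paper's ordering, since both rest on Lemma \ref{lemma_monotone}, whose proof invokes the present lemma.

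\textbf{The step that would actually fail} is your plan to derive ``$j\not\perp k\Rightarrow j\hra k$ or $k\hra j$'' from Axioms \ref{axiom_one} and \ref{axiom_C}. This is exactly Axiom \ref{axiom_H} with $n=2$, and the other axioms do not imply it. On $X=[1,2]^2$, take $U(p)=\bE^p[x_1x_2]+\log\bE^p_1x_1+\log\bE^p_2x_2$.
\begin{itemize}
\item Here $U(\sum_k\pi^k\delta_{x^k})$ equals $\sum_k\pi^kU(\delta_{x^k})$ plus the Jensen gaps of the two $\log$ terms.
\item Pair $x^k\sim y^k$ with nearly equal $x^k$ and dispersed $y^k$, with first coordinates distinct within each family. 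Then the $x$-mixture is strictly worse, so $1\not\hra 2$. Symmetrically, $2\not\hra 1$.
\item Also $1\not\perp 2$.
\item Conditional preferences on each dimension are EU with a linear index. Axiom \ref{axiom_S} is vacuous, and Axiom \ref{axiom_RI} only bites on singletons. So Axioms \ref{axiom_WO}--\ref{axiom_RI} and \ref{axiom_C} all hold.
\end{itemize}
So simply cite the axiom.

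\textbf{Reflexivity} is likewise not immediate from Definition \ref{order}, whose mixtures vary the other coordinates. It follows from Axiom \ref{axiom_H} with $i^1=i^2=i$, since $i\not\perp i$ by definition. You need it in three places:
\begin{itemize}
\item to have $i\in R$;
\item for $k$'s reach to strictly contain $R$;
\item for singleton $A$, which the lemma covers but you excluded.
\end{itemize}

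\textbf{The chain} should be a shortest $\not\perp$-path from $i$ to $A\setminus R$. A path inside $R$ from $i$ to an arbitrary boundary vertex $j$ need not exist, since $R$ need not be $\not\perp$-connected.

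With these repairs the induction runs as in the paper. Intermediates satisfy $l_t\not\hra k$, because $i\hra l_t$ and $i\not\hra k$. Suppose $k\hra l_{t+1},\dots,l_{n-1}$ but $k\not\hra l_t$. Then $l_s\not\hra l_t$ for $s>t$ by transitivity. The induction hypothesis on the reversed chain $(k,l_{n-1},\dots,l_t)$ gives $l_t\hra l_s$. Axiom \ref{axiom_H} on $(l_t,\dots,l_{n-1},k)$ then yields the contradiction.
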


\vspace{-5pt}

The next result extends Axiom \ref{axiom_M} to lotteries.  For any $p\in \Delta(X_A)$ and $x\in X_A$, we say that $p$ \textit{dominates} $x$ if $p\neq \delta_x$ and $y_i\geqslant  x_i$ for all $i\in A$ and $y_i\in \supp(p_i)$. Similarly, $x$ \textit{dominates} $p$ if $p\neq \delta_x$ and $x_i\geqslant  y_i$ for all $i\in A$ and $y_i\in \supp(p_i)$. The dominance relation is weak if we allow for the possibility that $\delta_x=p$.

\vspace{-5pt}

\begin{lemma}\label{lemma_monotone}
 Suppose that $\succsim$ satisfies Axioms \ref{axiom_WO}-\ref{axiom_C}.   \emph{\text{(i)}} For any $A\subseteq I$, $p\in \Delta(X_A)$,  $x\in X_A$, and $x'\in X_{A^{\cp}}$, if $p$ dominates $x$, then $p\succ_{x'} x$, and if $x$ dominates $p$, then $x\succ_{x'} p$. \emph{\text{(ii)}}  For any $A\subseteq I$,  $p\in \Delta(X_A)$,  $x'\in X_{A^{\cp}}$, and $x,y\in X_A$ such that $p$ dominates $y$ and is dominated by $x$, there exists some $z\in X_A$ such that $p\sim_{x'} z$ and $x\geqslant  z\geqslant  y$.
\end{lemma}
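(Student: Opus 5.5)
We prove (i) by strong induction on $|A|$ and then derive (ii) from (i) together with Axiom \ref{axiom_C}. Fix $x'\in X_{A^{\cp}}$ throughout. We treat only the case in which $p$ dominates $x$; the case in which $x$ dominates $p$ is symmetric.

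\emph{Base case $|A|=1$.} By \lemmaref{lemma_conditional}, $\succsim_{x'}$ on $\Delta(X_i)$ has an EU representation with a strictly increasing index $v_{i|x'}$. Since $p\neq\delta_x$ and every support point of $p$ is at least $x$, some support point is strictly larger than $x$. Hence $\bE^p v_{i|x'}>v_{i|x'}(x)$, so $p\succ_{x'} x$.

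\emph{Inductive step for $|A|>1$.} By \lemmaref{lemma_suff_SUB}, there are two cases.

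\emph{Case 1: $A$ has a bracket partition $\{B_k\}$.} We move from $p$ to $\delta_x$ one block at a time. Because $B_k\vartriangleright A$, we can replace the $B_k$-marginal of the current lottery while holding the rest fixed; in particular we may work with product lotteries $(p_{B_k},r)$. The induction hypothesis on $B_k$, together with (ii) for $B_k$, gives two things:
\begin{itemize}
\item $p_{B_k}\succsim_{(r,x')} x_{B_k}$, with strict preference whenever $p_{B_k}\neq\delta_{x_{B_k}}$;
\item a degenerate equivalent $z_{B_k}\geqslant x_{B_k}$ of $p_{B_k}$.
\end{itemize}
We first replace each block by its degenerate equivalent, which reduces the problem to comparing $\delta_z$ with $\delta_x$ for some $z\geqslant x$. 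At least one block is nondegenerate relative to $x$, so $z\neq x$, and Axiom \ref{axiom_M} then gives the strict preference.

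\emph{Case 2: some $i\in A$ satisfies $i\hra A$.} Write
\[
p=\sum_k \pi^k\,\bigl(\delta_{y_i^k},p_{A\setminus\{i\}|y_i^k}\bigr),
\]
where the values $y_i^k$ are distinct. Each conditional lottery weakly dominates $x_{A\setminus\{i\}}$. By (ii) for $A\setminus\{i\}$ (applied conditional on $(y_i^k,x')$), it is indifferent to some degenerate $z^k$ with $z^k\geqslant x_{A\setminus\{i\}}$.

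The components $\bigl(\delta_{y^k_i},\cdot\bigr)$ have pairwise disjoint supports in dimension $i$. Axiom \ref{axiom_RI}, applied repeatedly (its indifference part follows from the strict part together with completeness), therefore lets us replace each conditional component by $\delta_{(y_i^k,z^k)}$ without changing the ranking. This yields $p\sim_{x'}\sum_k\pi^k\delta_{(y^k_i,z^k)}$.

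Next, for each $k$ we lower $z^k$ to $x_{A\setminus\{i\}}$ by Axiom \ref{axiom_M}. Distinctness in dimension $i$ is preserved, so the $\hra$-test in Definition \ref{order} (again via $i\hra A$) yields
\[
\sum_k\pi^k\delta_{(y^k_i,z^k)}\succsim_{x'}\sum_k\pi^k\delta_{(y^k_i,x_{A\setminus\{i\}})}.
\]
The right-hand side is a lottery on dimension $i$ alone, so the base case compares it with $\delta_x$.

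Strictness comes from whichever step is nontrivial. Either some $y_i^k>x_i$, in which case the base case is strict. Or some conditional component differs from $\delta_{x_{A\setminus\{i\}}}$, in which case its replacement is strict by the induction hypothesis, and Axiom \ref{axiom_RI} preserves that strict preference under mixing.

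\emph{Part (ii).} Parametrize the segment $z(t)=y+t(x-y)$ for $t\in[0,1]$. Axiom \ref{axiom_C}, applied to $\delta_{(z(t),x')}$ versus $(p,x')$, shows that both $\{t:z(t)\succsim_{x'}p\}$ and $\{t:p\succsim_{x'}z(t)\}$ are closed. By part (i), the first set contains $1$ and the second contains $0$, and by completeness the two sets cover $[0,1]$. Since $[0,1]$ is connected, the two sets intersect, and any $t$ in the intersection gives the desired $z=z(t)$ with $x\geqslant z\geqslant y$.

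The main obstacle is the inductive use of (ii) inside (i), so (i) and (ii) must be proved jointly by induction on $|A|$. A second delicate point is the use of Axiom \ref{axiom_RI} and the $\hra$-test in Case 2. Their support-disjointness and distinctness conditions must survive each replacement, which holds here because the $i$-coordinates $y^k_i$ are never changed.
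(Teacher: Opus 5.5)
Your architecture matches the paper's: induction on $|A|$, the case split from \lemmaref{lemma_suff_SUB}, the relation $\vartriangleright$ in the bracket-separable case, and Axiom \ref{axiom_RI} when some $i\hra A$. Case~1 is fine. Your proof of (ii) via the segment $z(t)=y+t(x-y)$ and connectedness is correct and more direct than the paper's route, which uses a chain of degenerate points from $\ux$ to $\lx$, a mixture of two adjacent points, \lemmaref{lemma_conditional}, and then Axiom \ref{axiom_C} again. Your shortcut works because Axiom \ref{axiom_C} already makes the degenerate contour sets of $p$ closed.

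\textbf{The gap.} The problem is the step $p\sim_{x'}\sum_k\pi^k\delta_{(y_i^k,z^k)}$ in Case~2.
\begin{itemize}
\item Axiom \ref{axiom_RI} only has a strict premise: from $P\succ Q$ and $R\sim S$ it yields $P\alpha R\succ Q\alpha S$. Write $C_k:=(\delta_{y_i^k},p_{A\setminus\{i\}|y_i^k})$, so that $C_k\sim_{x'}\delta_{(y_i^k,z^k)}$. For this swap no instance of the axiom applies, and completeness cannot rule out that swapping $C_k$ for $\delta_{(y_i^k,z^k)}$ inside the mixture changes the ranking. So the indifference part does not follow from the strict part plus completeness.
\item Indifference preservation has to be obtained by squeezing between strictly ranked perturbations with suitable supports and passing to the limit with Axiom \ref{axiom_C}. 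In the paper this is \lemmaref{lemma_independence}(iii). Its proof, through parts (i) and (ii), relies on the present lemma, for instance to get $\ux_B\succ_x p\succ_x\lx_B$. So it is not available to you here.
\item Your strictness paragraph also does not fit your chain. The swap $C_k\to\delta_{(y_i^k,z^k)}$ is an indifference, and the $\hra$-test that lowers $z^k$ only delivers $\succsim$.
\end{itemize}

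\textbf{The fix.} Drop $z^k$ and let $D_k=\delta_{(y_i^k,x_{A\setminus\{i\}})}$.
\begin{itemize}
\item Either $C_k=D_k$, or $C_k\succ_{x'}D_k$ by (i) for $A\setminus\{i\}$ conditional on $(y_i^k,x')$.
\item Swap the components one at a time, applying Axiom \ref{axiom_RI} with $P=C_k$, $Q=D_k$, and $R=S$ equal to the normalized remainder. The $i$-supports stay disjoint because the $y_i^k$ never change.
\item This gives $p\succsim_{x'}\sum_k\pi^kD_k$, strictly if some $C_k\neq D_k$. Then \lemmaref{lemma_conditional} in dimension $i$ finishes the argument, strictly if some $y_i^k>x_i$.
\end{itemize}

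Like the paper, this uses Axiom \ref{axiom_RI} only with a strict premise and $R=S$. The paper instead inducts on $|\supp(p_i)|$. It peels off one atom $a_i>x_i$ and inserts a degenerate point $(\hat x_i,x_{A\setminus\{i\}})$ with $\hat x_i>x_i$, ranked strictly between $(\delta_{a_i},p_{A\setminus\{i\}|a_i})$ and $\delta_x$. Your repaired version handles all atoms at once and avoids that secondary induction.
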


\vspace{-5pt}

Denote $\ux=(\ux_i)_{i\in A}$ and $\lx=(\lx_i)_{i\in A}$. An immediate corollary of \lemmaref{lemma_monotone} is that if $p\neq \ux, \lx$, then $\ux \succ p\succ \lx$ and the set $\{y_i: y\sim p\}$ is uncountable for each $i\in I$.  The following lemma guarantees that a bracket-separable set has a ``finest'' bracket partition.

\vspace{-5pt}

\begin{lemma}\label{lemma_finest}
 Suppose that $\succsim$ satisfies Axioms \ref{axiom_WO}-\ref{axiom_C}. If $A$ is bracket separable, then $A$ must have a bracket partition $\{A_k\}_{k=1}^n$ in which $A_k$ is not bracket separable for all $k=1,\dots,n$. Moreover, such $\{A_k\}_{k=1}^n$ is unique and is finer than any other bracket partition of $A$.
 \end{lemma}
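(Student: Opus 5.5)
Since $A$ has only finitely many partitions, some bracket partition of $A$ can be refined as far as possible; the work is to show that the finest one is unique and that its blocks are not bracket separable. The plan rests on a closure property of the relation $\vartriangleright$: if $B\vartriangleright A$ and $C\vartriangleright A$ with $B\cap C\neq\emptyset$, then $B\cap C\vartriangleright A$, and $B\setminus C\vartriangleright A$ whenever it is nonempty. With this property, given two bracket partitions $\{B_k\}$ and $\{C_l\}$ of $A$, every nonempty intersection $B_k\cap C_l$ satisfies $B_k\cap C_l\vartriangleright A$. The common refinement $\{B_k\cap C_l\}$ is then again a bracket partition, and intersecting over all (finitely many) bracket partitions gives a unique finest one, $\{A_k\}$. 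It is finer than every other bracket partition by construction.

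The closure property would be proved directly from the definition of $\vartriangleright$. Fix $x\in X_{A^{\cp}}$ and lotteries $p,q$ on $X_A$ with $p_{A\setminus(B\cap C)}=q_{A\setminus(B\cap C)}$, and fix $r$ on $A\setminus(B\cap C)$. Since $p$ and $q$ agree off $B\cap C$, they agree on $A\setminus B$, so $B\vartriangleright A$ lets us replace the $A\setminus B$ part by an arbitrary product component. I would choose it so that the resulting lotteries agree on $A\setminus C$, and then apply $C\vartriangleright A$ to replace the $A\setminus C$ part by the desired $r$-related component. Because these replacements only reorganize the marginals, some care with product measures is needed: I would decompose $r$ into its marginals on $B\setminus C$, $C\setminus B$, and $A\setminus(B\cup C)$ and carry out the substitution in two stages. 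Some of these steps need the flexibility of substituting arbitrary lotteries with matching marginals, and for this I would rely on Axiom \ref{axiom_C} and \lemmaref{lemma_monotone}. In particular, \lemmaref{lemma_monotone} supplies certainty equivalents, so comparisons can be transported through degenerate lotteries.

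Next I would show that each $A_k$ is not bracket separable relative to its own conditional preferences. Suppose $A_k$ had a bracket partition $\{D_m\}$, meaning $D_m\vartriangleright A_k$ under $\succsim_{x}$ for all $x\in X_{A_k^{\cp}}$. I would show this implies $D_m\vartriangleright A$. The argument is a transitivity-of-separability step: because $A_k\vartriangleright A$, comparing $p$ and $q$ on $A$ that agree off $D_m$ reduces to comparing their $A_k$-marginals, with the rest fixed arbitrarily, for instance at a degenerate point. At that point $D_m\vartriangleright A_k$ applies. Replacing $A_k$ by the $D_m$ would then give a strictly finer bracket partition of $A$, contradicting the choice of $\{A_k\}$.

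The main obstacle is the closure-under-intersection step. The definition of $\vartriangleright$ tolerates correlation between $B$ and $A\setminus B$ in $p$ and $q$, but requires a product form $(p_B,r)$ afterwards. Passing from the $B$-reduction to the $C$-reduction therefore requires that intermediate lotteries have the right marginals on the overlap. If the direct two-stage substitution fails, I would fall back on a utility argument. \lemmaref{lemma_monotone} and weak continuity give a representation of $\succsim_x$ on $\Delta(X_A)$, separability makes it an increasing function of block evaluations, and intersecting the two block structures yields the Gorman-type overlap result: the intersection, and each difference, of separable blocks is separable. I expect this fallback to need the boundedness and continuity from Axiom \ref{axiom_C}.
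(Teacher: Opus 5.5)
Your proposal is correct. It uses the same two ingredients as the paper: closure of $\vartriangleright$ under intersection (the paper's Lemma~\ref{lemma_isolation}), and the transitivity step that $D\vartriangleright A_k$ together with $A_k\vartriangleright A$ yields $D\vartriangleright A$. Only the order differs. You obtain the finest partition directly as the meet of all bracket partitions, then use transitivity to show its blocks are not bracket separable. The paper instead refines greedily, splitting any separable block by transitivity until none is left, which gives existence. Only then does it use the intersection lemma to show this partition is finer than every other bracket partition. Your order makes ``finest'' immediate, but you still owe one line for the uniqueness clause. Suppose $\{E_j\}$ is another bracket partition whose blocks are not bracket separable. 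Each $E_j$ is a union of some $A_k$. If $E_j$ contains two of them, fixing the coordinates in $A\setminus E_j$ at a degenerate point gives $A_k\vartriangleright E_j$, so $E_j$ would be bracket separable, a contradiction.

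Two simplifications are worth making. First, only the intersection half of your closure property is ever used; the paper's proof likewise uses only $A_k\cap B_l\vartriangleright A$. So drop the claim about $B\setminus C$. As you state it, without requiring $C\setminus B\neq\emptyset$, it is more general than what the paper's lemma establishes, since that lemma assumes $B\cap B'$, $B\setminus B'$, $B'\setminus B$ are all nonempty.

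Second, the intersection step is not the obstacle you expect. It needs neither Axiom~\ref{axiom_C}, \lemmaref{lemma_monotone}, nor a Gorman-type utility fallback; it follows directly from the definition:
\begin{enumerate}
    \item Apply $B\vartriangleright A$ with substitute $r_{A\setminus B}$.
    \item Observe that the two resulting lotteries agree on $A\setminus C$, because $p_{B\setminus C}=q_{B\setminus C}$.
    \item Apply $C\vartriangleright A$ with substitute $r_{A\setminus C}$.
    \item Finish with the indifference $(p_{B\cap C},r)\sim_x (p_{B\cap C},r_{C\setminus B},r_{A\setminus C})$, and its analogue for $q$. This follows from $C\vartriangleright A$, because a lottery is indifferent to the product of its marginals on $C$ and $A\setminus C$.
\end{enumerate}
Within Lemma~\ref{lemma_isolation}, continuity enters only in the difference case, which neither argument needs.
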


\vspace{-5pt}

We call $\{A_k\}_{k=1}^n$  in \lemmaref{lemma_finest} the \textit{finest bracket partition} of $A$. 

The final lemma states that if $i\hra B$, then we can find a representation of the conditional preference on $B$ with certain linearity properties.

\vspace{-5pt}

\begin{lemma}\label{lemma_linear}
 Suppose that $\succsim$ satisfies Axioms \ref{axiom_WO}-\ref{axiom_C} and $i\in B$. If $i\hra B$, then for every $x\in X_{B^\mathsf{c}}$, there exists a function $U_x: \Delta(X_B)\rightarrow \bR$ such that \emph{\text{(i)}}  $p\succsim_x q$ if and only if $U_x(p)\geqslant  U_x(q)$ for all $p,q\in \Delta(X_B)$; \emph{\text{(ii)}}  $U_x(p\alpha q ) = \alpha U_x(p) + (1-\alpha)U_x(q)$ for all $\alpha\in (0,1)$ and $p,q\in \Delta(X_B)$ with $\supp(p_i)\cap \supp(q_i)=\emptyset$; \emph{\text{(iii)}}  the function $w_x: X_B\rightarrow \bR$ defined by $w_x(y)=U_x(\delta_y)$ for every $y\in X_B$ is continuous and strictly increasing; and \emph{\text{(iv)}}  $U_x$ is unique up to a positive affine transformation.
 \end{lemma}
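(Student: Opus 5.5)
We need to prove Lemma \ref{lemma_linear}: if $i\hra B$ with $i\in B$, then for each $x\in X_{B^\mathsf{c}}$ the conditional preference $\succsim_x$ on $\Delta(X_B)$ has a representation that is linear on mixtures with disjoint $i$-supports. The plan is to build $U_x$ directly from degenerate lotteries and then extend to all lotteries by exploiting the support-disjoint mixture structure that Conditional Independence (Axiom \ref{axiom_RI}) provides.

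We start with degenerate lotteries. By Lemma \ref{lemma_monotone}(ii), every $p\in\Delta(X_B)$ is $\sim_x$-indifferent to some consequence $z\in X_B$. Moreover, the set of such $z$ is rich: by the remark after Lemma \ref{lemma_monotone}, the coordinate $z_i$ can be taken from an uncountable set. The restriction of $\succsim_x$ to $X_B$ is a continuous, strictly monotone weak order by Axioms \ref{axiom_M} and \ref{axiom_C}. Fix $\lx,\ux\in X_B$ and normalize $U_x(\lx)=0$ and $U_x(\ux)=1$. For a consequence $y$ with $\lx\prec_x y\prec_x\ux$, we calibrate its value through mixtures.

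The mixtures must keep the $i$-coordinates distinct, so we cannot simply use $\ux\alpha\lx$. Instead, take $a\sim_x\ux$-type and $b\sim_x\lx$-type substitutes. Concretely, choose $y'\sim_x y$ and consequences $\bar z\sim_x \ux$-near and $\underline z$ with distinct $i$-coordinates; if exact indifference to $\ux$ is impossible, approximate and pass to the limit. Weak continuity (Axiom \ref{axiom_C}) then yields a unique $\alpha$ with $y\sim_x \bar z\alpha \underline z$, and we set $w_x(y)=\alpha$.

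The key consistency claim is that this $\alpha$ does not depend on which $i$-distinct substitutes are chosen. Take two substitute pairs with indifferent components. Axiom \ref{axiom_RI}, with $r\sim_x s$ and disjoint supports, lets us replace one component at a time; by the usual contrapositive argument it turns strict preferences into strict preferences, and hence indifferences into indifferences. For swaps where the $i$-supports collide, route through a third pair with fresh $i$-values, which exist by uncountability. Continuity and strict monotonicity of $w_x$ then follow from Axioms \ref{axiom_M} and \ref{axiom_C}.

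Next, extend to general lotteries. For any $p$, write $p=\sum_k \pi^k p^k$, where $p^k$ is the conditional lottery given the $k$-th value of $x_i$; the $p^k$ have pairwise disjoint $i$-supports. By induction on $|\supp(p_i)|$ we would like $U_x(p)=\sum_k\pi^k U_x(p^k)$. In the base case $p_i$ is degenerate, and there $\sim_x$ restricted to lotteries with a fixed $i$-value must itself be shown to be expected utility. For this we use $i\hra B$ together with Axiom \ref{axiom_RI} applied to $p,q,r,s$ that all share the single $i$-value $y_i$.

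This base case is the main obstacle. Axiom \ref{axiom_RI} only applies to mixtures with disjoint $i$-supports, so mixing lotteries that share $i$-value $y_i$ is not directly covered. I would handle it by perturbing $y_i$: replace each lottery by one with a slightly shifted, indifferent $i$-coordinate (Lemma \ref{lemma_monotone}), apply the axiom, and pass to the limit using weak continuity. A cleaner alternative is to define $U_x(p):=$ the value of any $i$-spread substitute, namely the mixture of the certainty equivalents of its conditional pieces placed at distinct $i$-values. We then verify, via the transitivity of $\hra$ (Lemma \ref{lemma_transitive}) and the dominance test in Definition \ref{order}, that $p$ is indifferent to that substitute. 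Once $U_x(p)=\sum_k\pi^k w_x(z^k)$ with $z^k\sim_x p^k$ is well defined, property (i) follows from the consistency of the calibration, and property (ii) holds by construction, since disjoint $i$-supports concatenate the decompositions.

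Finally, uniqueness (iv) follows from the standard argument: any two representations satisfying (ii) agree on the calibration mixtures and are therefore affinely related.
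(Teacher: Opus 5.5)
There are two gaps. The second, at the extreme consequences, is the decisive one.

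\textbf{The base case is false and not needed.} Axiom \ref{axiom_RI} only constrains mixtures with disjoint $i$-supports, so on a slice with $x_i$ fixed, $\succsim_x$ need not be expected utility. Take $I=B=\{i,j,k\}$ with $X_l=[0,1]$, and the SMEU preference $U(p)=\bE^p_i\bigl[x_i+\sqrt{\bE^p_{j|x_i}x_j}+\sqrt{\bE^p_{k|x_i}x_k}\bigr]$, whose RCT makes $\{j\}$ and $\{k\}$ children of $\{i\}$. Here $i\hra B$ by \lemmaref{lemma_nece_hra}. Yet with $x_i$ fixed, $\tfrac12\delta_{(1,0)}+\tfrac12\delta_{(0,1)}$ on $(x_j,x_k)$ has value $\sqrt2$, while each component has value $1$. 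Your proposed repair (shift a shared $i$-value apart, apply the axiom, let the values merge) takes exactly the limit along which Example \ref{eg2} shows discontinuity. Axiom \ref{axiom_C} controls only mixture lines and consequences, not such limits.

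The lemma does not need this base case. On a fixed-$i$ slice, $U_x$ is just a calibrated number, and linearity is required only across disjoint $i$-supports, so your ``cleaner alternative'' has the right shape. Two of its steps use the wrong tools, though. Replacing a nondegenerate conditional piece by its certainty equivalent uses Axiom \ref{axiom_RI}, not Definition \ref{order}, which only compares degenerate lotteries. Indifference preservation does not follow from Axiom \ref{axiom_RI} by contraposition, because the axiom only preserves strict preference. You must perturb (e.g.\ $p\beta\delta_y$ with $y\succ_x p$), apply the axiom, and let $\beta\to1$ via Axiom \ref{axiom_C}, as in \lemmaref{lemma_independence}.

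\textbf{The extremes are not handled.} You only gesture at this (``approximate and pass to the limit''). By \lemmaref{lemma_monotone}, $\delta_{\ux_B}$ and $\delta_{\lx_B}$ are alone in their indifference classes. Rerouting through fresh $i$-values works for interior consequences via \lemmaref{lemma_independence}(ii), but it fails exactly at $\ux_B$ and $\lx_B$. The collision cannot be avoided: (ii) must hold for mixtures involving $\delta_{\ux_B}$. Establishing it, for instance showing $p\alpha r\sim_x\delta_{\ux_B}(\alpha U_x(p)+(1-\alpha)U_x(r))\delta_{\lx_B}$, requires mixing lotteries that both put mass on $\ux_B$. A limit cannot remove such an atom. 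Along a mixture line ending at a lottery with an atom at $\ux_B$, that atom is present throughout. (Note also that $\delta_{\ux_B}\alpha\delta_{\lx_B}$ is itself a legitimate disjoint-support mixture, since $\ux_i\neq\lx_i$; the paper calibrates with it directly.)

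The missing idea is \lemmaref{lemma_independence_extreme}. First replace $q$ by $q\beta\hat q$, where $\hat q$ is supported on $\ux_B-\varepsilon$ and $\lx_B+\varepsilon$ with $\varepsilon$ positive only in coordinate $i$. Then use unidimensional independence (\lemmaref{lemma_conditional}, in coordinate $i$ with $B\setminus\{i\}$ held at $\ux_{B\setminus\{i\}}$) to \emph{split} the atom at $\ux_B$. Here $q$'s share merges with the $\ux_B-\varepsilon$ atom into a single atom $\ux_B-\varepsilon'$, while $s$'s share stays at $\ux_B$; the bottom is handled the same way. This restores disjoint $i$-supports, so Axiom \ref{axiom_RI} applies, and $\beta\to1$ is then a genuine mixture-line limit.

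Your proposal never invokes Axiom \ref{axiom_one}, which is what permits this redistribution of mass. The same lemma also delivers strict monotonicity of $\delta_{\ux_B}\alpha\delta_{\lx_B}$ in $\alpha$ (\lemmaref{lemma_mixture}). You need that monotonicity for your calibrating weight to be unique; weak continuity alone gives only existence.
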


\medskip

\textit{Step 2: Construct an RCT.} 
\vspace{5pt}

Consider the following procedure to construct an RCT $\cT=(\cP,E)$ in which $\cP=\{\{1\},\dots,\{N\}\}$. We  write $(i,j)$ to represent $(\{i\},\{j\})\in E$.
\medskip

\noindent \textbf{Stage $1$}. Let $\cH_0=\{I\}$ and consider the following cases for $I$: 

\vspace{-5pt}

\begin{itemize}
    \item[(1)] If $I$ is bracket separable, then denote by $\{A_k\}_{k=1}^n$ the finest bracket partition of $I$ as defined in \lemmaref{lemma_finest}. Note that $A_k$ is not bracket separable for every $k$. Let $E_1=\emptyset$, $j^*(I)=o$, and $\cH_1=\{A_k\}_{k=1}^n$. Move on to Stage 2.
    
\vspace{-5pt}

    \item[(2)] If  $I$ is not bracket separable, then \lemmaref{lemma_suff_SUB} implies that $M(I)=\{i\in I: i\hra I\}\neq \emptyset$. Let $M(I)=\{l_1^1,\dots,l^1_{n_1}\}$. Denote    $j^*(I)=l^1_{n_1}$ and 
    $E_1=\{(o,l_1^1), (l_1^1,l_2^1),\dots,(l_{n_1-1}^1,l_{n_1}^1)\}$.
    
\vspace{-5pt}

    \begin{itemize}
        \item[(i)] If $I\setminus M(I)=\emptyset$, then the procedure terminates.
        \item[(ii)] If $I\setminus M(I)\neq \emptyset$ and is not bracket separable, then let $\cH_1=\{I\setminus M(I)\}$ and move on to Stage 2. 
        \item[(iii)] If  $I\setminus M(I)\neq \emptyset$ and is bracket separable, then denote by  $\{B_k\}_{k=1}^m$ the finest bracket partition of $I\setminus M(I)$ as defined in \lemmaref{lemma_finest}. Let $\cH_1=\{B_1,\dots,B_m\}$ and move on to Stage 2.
    \end{itemize}
\end{itemize}

\vspace{-5pt}

\noindent \textbf{Stage $t\geqslant  2$}. Consider any $A\in \cH_{t-1}$. By construction, $A$ is not bracket separable. Again by \lemmaref{lemma_suff_SUB}, $M(A)=\{l^t_1,\dots,l^t_{n_t}\}$ for some $n_t\geqslant  1$. Denote    $j^*(A)=l^t_{n_t}$ and let 
    $\{(j^*(B),l^t_1), (l^t_1,l^t_2),\dots,(l^t_{n_t-1},l^t_{n_t})\}\subseteq E_t$ in which $B\in \cH_{t-2}$ satisfies $A\subseteq B$.

\vspace{-5pt}

    \begin{itemize}
        \item[(i)] If $A\setminus M(A)=\emptyset$, then move on  to other $A'\in \cH_{t-1}$. 
        \item[(ii)] If $A\setminus M(A)\neq \emptyset$ and is not bracket separable, then let $A\setminus M(A)\in \cH_t$.
        \item[(iii)] If  $A\setminus M(A)\neq \emptyset$ and is bracket separable, then  denote by  $\{B_k\}_{k=1}^m$ the finest bracket partition of $A\setminus M(A)$ as defined in \lemmaref{lemma_finest}. Let $\{B_1,\dots,B_m\}\subseteq \cH_t$.
    \end{itemize}
    
\vspace{-5pt}

Repeat the process for all $A\in \cH_{t-1}$. Then, we obtain all elements of $\cH_t$ and $E_t$. Move on to Stage $t+1$.
This procedure terminates in finitely many stages $T$ 
where $A= M(A)$ for all $A\in \cH_T$. Define $E = \bigcup_{t=1}^T E_t$. It is clear that $(\cP,E)$ is an RCT.

\medskip

\textit{Step 3: Construct the Bernoulli indices for each $A\in  \cP_o $.} 
\vspace{5pt}

Define $H_1=\{i\in I:\bar d(\{i\})=\emptyset\}$ and $H_k=\{i\in I\setminus (\cup_{k'<k}H_{k'}): \bar d(\{i\})\subseteq \cup_{k'< k}H_{k'}\}$ recursively for all $k\geqslant 2$. The iteration ends at $H_m$ in which $\cup_{k=1}^m H_k=I$.


 We start with  $i\in H_1$.  
By the construction of $\cT$ and the definition of a bracket partition, $\succsim_x=\succsim_{x'}$ on $\Delta(X_i)$ for any $x,x'\in X_{-i}$ such that $x_{\bar a(\{i\})}=x'_{\bar a(\{i\})}$. In other words, the conditional preference $\succsim_x$ only depends on $x_{\bar a(\{i\})}$. 
By \lemmaref{lemma_conditional}, there exists  $u^i: X_{\bar a(\{i\})}\times X_{i}\to [0,1]$ such that for all $x\in X_{\bar a(\{i\})}$, (i) $u^i(x,\:\cdot\:): X_{i}\to [0,1]$ is continuous and strictly increasing, and satisfies $u^i(x,\ux_{i})=1$, $u^i(x,\lx_{i})=0$; and (ii) 
if one defines $U^i_{x}: \Delta(X) \to [0,1]$  by $U^i_x(p)=\bE^p_{i|x}(u^i(x,y))$, then 
for any $p,q\in \Delta(X)$ such that $p_{-i}=q_{-i}=\delta_z$ and $z_{\bar a(\{i\})}=x$,  we have $p\succsim q\iff U^i_x(p)\geqslant U^i_x(q)$.
 The normalization of $u^i(x,\ux_{i})=1$ and $u^i(x,\lx_{i})=0$ works, since $u^i(x,\:\cdot\:)$ is unique up to a positive affine transformation by \lemmaref{lemma_conditional}.
For any $z\in X_{-i}$ with $z_{\bar a(\{i\})}=x$, the function $U^i_x(\delta_{(z,y)})=u^i(x,y)$ is continuous and strictly increasing in $y\in X_{i}$. 

Next, we define the Bernoulli index for every $i\in H_k$ by induction on $k=2,\dots,m$.  
By the construction of $\cT$ and the definition of a bracket partition, $\succsim_x=\succsim_{x'}$ on $\Delta(X_{\{i\}\cup \bar d(\{i\}) })$ for any $x,x'\in X_{(\{i\}\cup \bar d(\{i\}))^\mathsf{c}}$ such that $x_{\bar a(\{i\})}=x'_{\bar a(\{i\})}$. In other words, the conditional preference $\succsim_x$ only depends on $x_{\bar a(\{i\})}$. 
 Since $i\hra \{i\}\cup \bar d(\{i\})$, for any $x\in X_{\bar a(\{i\})}$ and $z\in X_{(\{i\}\cup \bar d(\{i\}))^{\mathsf{c}}}$ with $z_{\bar a(\{i\})}=x$, \lemmaref {lemma_linear} ensures the existence of $\hat U^i_x:\Delta(X_{\{i\}\cup \bar d(\{i\})})\to [0,1]$ such that (i) $p\succsim_{z} q$ if and only if $\hat U^i_x(p)\geqslant  \hat U^i_x(q)$ for all $p,q\in \Delta(X_{\{i\}\cup \bar d(\{i\})})$; (ii) $\hat U^i_x(p\alpha q ) = \alpha \hat U^i_x(p) + (1-\alpha)\hat U^i_x(q)$ for all $\alpha\in (0,1)$ and $p,q\in \Delta(X_{\{i\}\cup \bar d(\{i\})})$ with $\supp(p_{i})\cap \supp(q_{i})=\emptyset$; and (iii) the function $w_x^i: X_{\{i\}\cup \bar d(\{i\})}\to [0,1]$ defined by $w_x^i(y)=\hat U_x^i(\delta_y)$ is continuous and strictly increasing, and satisfies $w^i_x(\ux_A)=1$ and $w^i_x(\lx_A)=0$. We can extend  $\hat U^i_x$ by defining $U^i_x: \Delta(X)\rightarrow[0,1]$ such that $U^i_x(p)=\hat U^i_x(p_{\{i\}\cup \bar d(\{i\})|x})$ if $x\in \supp(p_{\bar a(\{i\})})$ and $U^i_x(p)=0$ otherwise. Then the two statements in \defref{def_heu} hold for $U^i_x$ for any $x\in X_{\bar a(\{i\})}$ and  $z\in X_{(\{i\}\cup \bar d(\{i\}))^{\mathsf{c}}}$ with $z_{\bar a(\{i\})}=x$. 
 Define $u^i: X_{\bar a(\{i\})}\times X_{i}\times [0,1]^{ c(\{i\})}\to[0,1]$ by 
\begin{equation*}
   u^i(\:x,\:y,\:(U^j_{(x,y)}(p))_{\{j\}\in c(\{i\})}\:) = U^i_x(p)
\end{equation*}
for all $x\in X_{\bar a(\{i\})}$, $y\in X_i$, and $p\in \Delta_{X}$ such that $p_{i}=\delta_y$ and $p_{\bar a(\{i\})}=\delta_x$. 
By the construction of $U^j_{(x,y)}$ for all $\{j\}\in c(\{i\})$ in previous steps and the definition of a bracket partition (when $|c(\{i\})|\geq 2$), $u^i$ is well defined. 
Moreover, $u^i(x,\ux_{i},a)=1$ if $a_j=1$ for every $\{j\}\in  c(\{i\})$ and $u^i(x,\lx_{i},a)=0$ if $a_j=0$ for every $\{j\}\in  c(\{i\})$. It is easy to see that $U^i_x$ and $u^i(x,\cdot)$ are surjective.
Also, for any $p\in \Delta(X)$, we have the recursive equation
 \begin{equation}\label{appen_eq_recursive}
U^i_x(p)=\bE^p_{i|x}\:u^i(\:x,\:y,\:(U^j_{(x,y)}(p))_{\{j\}\in c(\{i\})}\:).
    \end{equation}

Since $\cup_{k=1}^m H_k=I$, it remains to define $u^o:[0,1]^{c(o)}\rightarrow[0,1]$ and $U^o:\Delta(X)\rightarrow[0,1]$. If $|c(o)|=1$, then we can simply set $u^o(a)=a$ for $a\in [0,1]$ and $U^o(p)=U^i(p)$ where $c(o)=\{\{i\}\}$. If $c(o)=\{\{l_1\},\dots,\{l_n\}\}$ with some $n\geqslant 2$, then $\{A_k\}_{k=1}^n$ is the finest bracket partition of $I$, where $A_k=\{l_k\}\cup \bar d(\{l_k\})$.  For any $p\in \Delta(X)$, we have $p\sim (p_{A_1},\dots,p_{A_n})\sim (x_{A_1},\dots,x_{A_n}):=x$, in which $U^{l_k}(\delta_{x}) = U^{l_k}(p)$ for all $k=1,\dots,n$. Since $\succsim$ restricted to degenerate lotteries $X$ is continuous and monotone, Debreu’s Theorem implies that there exists a continuous and strictly increasing function $w^o: X\to [0,1]$ such that $w^o(\ux)=1, w^o(\lx)=0$, and $\delta_x\succsim \delta_y$ if and only if $w^o(x)\geqslant  w^o(y)$ for all $x,y\in X$. Define $u^o: [0,1]^{ c(o)}\to[0,1]$ by 
$   u^o(U^{l_1}(\delta_{x}),\dots, U^{l_n}(\delta_{x}))=w^o(x)$
for all $x\in X$. Because $U^{l_k}(\delta_{\ux})=1, U^{l_k}(\delta_{\lx})=0$, and $U^{l_k}(\delta_{x})$ is continuous and strictly increasing in $x_{A_k}\in X_{A_k}$ for all $k=1,\dots,n$, the function $u^o$ is well defined, continuous, strictly increasing, and satisfies $u^o(1,\dots,1)=1, u^o(0,\dots,0)=0$. Define $U^o: \Delta(X) \to [0,1]$ by     $U^o(p)= u^o(U^{l_1}({p}),\dots, U^{l_n}({p}))$ for every $p\in \Delta(X)$. Then $U^o$ represents $\succsim$. 

To conclude,  $(\cT,(u^A)_{A\in\cP_o})$ is an SMEU representation of $\succsim$. 

\vspace{-8pt}

\subsection{Proof of \texorpdfstring{\thmref{thm_unique_h2}}{Theorem 1}}

Suppose that $\succsim$ has an SMEU representation. By \thmref{thm_main}, $\succsim$ satisfies Axioms \ref{axiom_WO}-\ref{axiom_C}. 
The next lemma validates a procedure for generating new RCTs in $\mathbb{T}(\succsim)$. 

\vspace{-5pt}

\begin{lemma}\label{lemma_modify}
    Suppose $(\cP,E)\in\mathbb{T}(\succsim)$ and $A\in \cP$ such that $c(A)=\{\{i\}\}$ for some $i\in M(A\cup\bar d(A))$. Denote $\cP'=\cP\cup\{A\cup\{i\}\}\backslash\{A,\{i\}\}$. For all $B,B'\in \cP'_o$, let $(B,B')\in E'$ if either (i) $(B,B')\in E$,  (ii) $B'=A\cup\{i\}$ and $(B,A)\in E$, or  (iii) $B=A\cup\{i\}$ and $(\{i\},B')\in E$. Then $(\cP',E')\in \mathbb{T}(\succsim)$.
\end{lemma}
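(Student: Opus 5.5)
We need to build an SMEU representation on the merged tree $(\cP',E')$ from the given one on $(\cP,E)$. The representation on $(\cP,E)$ already supplies $U^B_x$ for every vertex $B\neq A,\{i\}$. These functions carry over unchanged to $\cP'$. Two facts make this work. First, for such $B$ the sets $\bar a'(B)$ and $\bar d'(B)$ coincide, as sets of dimensions, with $\bar a(B)$ and $\bar d(B)$, because merging $A$ with its only child $\{i\}$ does not alter the union of dimensions along any ancestor chain or in any subtree. Second, the conditions of Definition \ref{def_heu} for $B$ refer only to these sets. So the task reduces to three things: define a new Bernoulli index $u^{A\cup\{i\}}$, check the recursion (\ref{kp}) at the merged vertex, and check the two conditions of Definition \ref{def_heu} there. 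The parent of $A\cup\{i\}$ uses $U^{A\cup\{i\}}_x$ in place of $U^A_x$, so its index $u$ needs no change, provided $U^{A\cup\{i\}}_x=U^A_x$ as functions on $\Delta(X)$.

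Write $D=A\cup\bar d(A)$ and fix $x\in X_{\bar a(A)}$. The key input is $i\in M(D)$, meaning $i\hra j$ for all $j\in D$. For each $z$ with $z_{\bar a(A)}=x$, Lemma \ref{lemma_linear} applied to $B=D$ gives a representation $\hat U_z$ of $\succsim_z$ on $\Delta(X_D)$. This $\hat U_z$ is affine across mixtures whose $i$-marginals have disjoint supports, and it is unique up to positive affine transformation. The existing representation gives another representation of the same preference, namely $U^A_x(\delta_z,\cdot)$. I would then argue that $U^A_x(\delta_z,\cdot)$ also has this disjoint-support affinity in $i$. It cannot be merely an ordinal transform of $\hat U_z$, because both functions range over the same connected interval (continuity and monotonicity on degenerate lotteries, plus Lemma \ref{lemma_monotone}). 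Moreover, since $U^A_x$ is an expectation over $y\in X_A$, its affinity across mixtures with disjoint $i$-supports must come from an affine dependence of $u^A(x,y,U^{\{i\}}_{(x,y)}(p),\ldots)$ on the $\{i\}$-coordinate. The cleanest route is to normalize $U^A_x(\delta_z,\cdot)$ to be affine-equivalent to $\hat U_z$ and use uniqueness. I expect this to be the main obstacle. The difficulty is that Lemma \ref{lemma_linear} only yields affinity over mixtures disjoint in the $i$ coordinate, and we must transport this into a statement about how $u^A$ depends on $U^{\{i\}}_{(x,y)}$. One may need to first replace $u^A$ by a monotone reparametrization in the $\{i\}$-argument so that affinity holds exactly; this is allowed because $U^{\{i\}}$ is a conditional expected utility whose range is an interval.

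Once affinity is established, define for $(y,w)\in X_A\times X_i$ and child values $a\in\bR^{c(\{i\})}\times\bR^{c(A)\setminus\{\{i\}\}}$
\[
u^{A\cup\{i\}}(x,(y,w),a)=u^A\bigl(x,y,\bigl(u^{i}((x,y),w,a_{c(\{i\})}),\,a_{c(A)\setminus\{\{i\}\}}\bigr)\bigr).
\]
The affinity of $u^A$ in its $\{i\}$-argument lets us pull the inner expectation $\bE^p_{i|(x,y)}$ outside. This yields
\[
U^A_x(p)=\bE^p_{A|x}\,\bE^p_{i|(x,y)}\,u^{A\cup\{i\}}(\cdots)=\bE^p_{A\cup\{i\}|x}\,u^{A\cup\{i\}}\bigl(x,(y,w),(U^{B'}_{(x,y,w)}(p))_{B'\in c'(A\cup\{i\})}\bigr),
\]
which is exactly (\ref{kp}) for the merged vertex. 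Here $c'(A\cup\{i\})=c(\{i\})\cup c(A)\setminus\{\{i\}\}$ by clauses (ii) and (iii) defining $E'$. The children of $A$ other than $\{i\}$ do not depend on $w$, so the substitution is consistent.

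Finally, setting $U^{A\cup\{i\}}_x:=U^A_x$, condition 1 of Definition \ref{def_heu} holds at the merged vertex because $A\cup\{i\}\cup\bar d'(A\cup\{i\})=D$, so it is the same condition already verified for $A$. Condition 2 holds for the same reason. The vertices whose ancestor sets now contain $A\cup\{i\}$ in place of $A$ and $\{i\}$ have unchanged $\bar a$, so their conditions transfer verbatim. Hence $(\cP',E')\in\mathbb{T}(\succsim)$.
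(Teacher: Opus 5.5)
The gap is the affinity of $u^A(x,y,\cdot)$ in its $\{i\}$-argument. That step is the whole content of the lemma, you flag it as the main obstacle, and it is never established.

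Your overall architecture matches the paper's. You keep every other index, define the merged index by composition $u^{A\cup\{i\}}(x,(y,w),a)=u^A(x,y,u^i((x,y),w,a))$, and reduce everything to $U^{A\cup\{i\}}_x=U^A_x$. For fixed $y\in X_A$ this amounts to $u^A\bigl(x,y,\bE^p_{i}\,u^i(\cdot)\bigr)=\bE^p_{i}\,u^A\bigl(x,y,u^i(\cdot)\bigr)$. The bookkeeping about $\bar a$, $\bar d$ and the two conditions of Definition \ref{def_heu} is fine. None of your three suggested justifications for the affinity works:

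\begin{itemize}
\item Both $U^A_x(\delta_z,\cdot)$ and the function $\hat U_z$ of Lemma \ref{lemma_linear} range over an interval. That says nothing about whether the increasing map $\phi$ with $U^A_x(\delta_z,\cdot)=\phi\circ\hat U_z$ is affine.
\item The uniqueness clause of Lemma \ref{lemma_linear} applies only once you know $U^A_x(\delta_z,\cdot)$ is affine over mixtures with disjoint $i$-supports. That is exactly what you are trying to prove, so the appeal is circular.
\item Reparametrizing the $\{i\}$-argument of $u^A$ by a nonlinear $\psi$ forces you to replace $U^{\{i\}}$ by $\psi\circ U^{\{i\}}$. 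That is no longer an expectation, so it breaks (\ref{kp}) at $\{i\}$. It is also unnecessary, since the affinity holds automatically.
\end{itemize}

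The missing idea is to combine the two linearities you have. $\hat U_z$ (called $V_z$ in the paper) is affine over mixtures with disjoint $i$-supports. $U^A_x$ is affine over mixtures whose $A$-supports are disjoint, simply because it is an expectation over $X_A$.

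The paper uses this directly. Fix $y$ and $p\in\Delta(X_{\bar d(A)})$ with $\supp(p_i)=\{w^1,\dots,w^n\}$. Replace each conditional piece $(y,w^k,p_{\bar d(A)\setminus\{i\}|w^k})$ by an indifferent consequence $s^k\in X_{A\cup\bar d(A)}$, with the $s^k$ chosen to have pairwise distinct $A$-coordinates and pairwise distinct $i$-coordinates. Disjoint-$i$ affinity of $V_z$ gives $(\delta_y,p)\sim_{z'}q:=\sum_k p_i(w^k)\,\delta_{s^k}$. Because $q$ has distinct $A$-values, the SMEU formula evaluates $q$ as a plain average of $u^A$-values, which yields $u^A(x,y,\bE^p_i u^i)=\bE^p_i u^A(x,y,u^i)$.

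In your own framework the same idea closes the gap. Evaluate $\phi$ on two-point mixtures of consequences that differ in both their $A$- and $i$-coordinates. This gives $\phi(\alpha a+(1-\alpha)b)=\alpha\phi(a)+(1-\alpha)\phi(b)$ on the common range, so $\phi$ is affine. Then $U^A_x(\delta_z,\cdot)$ inherits disjoint-$i$ affinity from $\hat U_z$, and taking lotteries with $A$-marginal $\delta_y$ gives affinity of $u^A(x,y,\cdot)$ on the range of $U^{\{i\}}_{(x,y)}$.
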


\vspace{-5pt}

Starting with the RCT $\cT$ in the proof of \thmref{thm_main},   we construct $\cT^*$ as follows: 

Step 0: Following the procedure in \lemmaref{lemma_modify}, 
we can combine vertices recursively and generate $\cT_0=(\cP_0,E_0)\in \mathbb{T}(\succsim)$ in which $A=M(A\cup \bar d^{\cT_0}(A))$ and $A\cup \bar d^{\cT_0}(A)$ is not bracket separable for all $A\in \cP_0$.

Step 1: Suppose $c^{\cT_0}(o)=\{A_1,\dots,A_n\}$ and take any $i_k\in A_k$ for $k=1,\dots,n$. Define $J(k)=\{k':i_k\hra i_{k'} \text{~and~} i_{k'}\hra i_{k}\}$. Since $\hra$ is transitive, $\{J(k):k=1,\dots,n\}$ forms a partition of $\{1,\dots,n\}$.
If $|J(k)|\geqslant 2$ for some $k$, then by the proof of \lemmaref{lemma_nece_RI} (and \lemmaref{lemma_nece_EU} in \hyperref[OA_proof_necessity]{Online Appendix \RN{4}.1}), we can show that $c^{\cT_0}(A_{k'})=\emptyset$ for all $k'\in J(k)$ and the conditional preference on $\Delta(\prod_{k'\in J(k)}X_{A_{k'}})$ has an EU representation, with utility index additively separable across different $A_{k'}$. Define the intermediate tree $\mathcal{T}_1$ by specifying its child operator $c^{\mathcal{T}_1}$: (i) For the root $o$, let $c^{\mathcal{T}_1}(o) = \left\{ \bigcup_{k^{\prime} \in J(k)} A_{k^{\prime}} \right\}$ for all equivalence classes $k$; (ii) If $|J(k)| \geqslant 2$, set $c^{\mathcal{T}_1}\left(\bigcup_{k^{\prime} \in J(k)} A_{k^{\prime}}\right) = \emptyset$; (iii) If $|J(k)| = 1$, the subtree is inherited exactly from $\mathcal{T}_0$. It can be shown by the same proof of \lemmaref{lemma_nece_RI} that $\cT_1\in \mathbb{T}(\succsim)$. 

Step $t\geqslant 2$: 
Consider any $A\in \cP_{t-1}$ such that $c^{\cT_{t-1}}(A)\neq\emptyset$ and 
$depth^{\cT_{t-1}}(i)=t-1$ for $i\in A$. By construction, for all $B\subseteq \bar d^{\cT_{t-1}}(A)$, we have $B\in \cP_{t-1}$ if and only if $B\in \cP_0$. 
Then either $\bar d^{\cT_{t-1}}(A)$ is not bracket separable, or $\{B\cup\bar d^{\cT_{t-1}}(B)\}_{B\in c^{\cT_{t-1}}(A)}$ is the finest bracket partition of $\bar d^{\cT_{t-1}}(A)$. Following the procedure in step 1 by replacing vertex $o$ with every such $A$, we can construct a new $\cT_t\in \mathbb{T}(\succsim)$. 

The above procedure terminates in finitely many steps. Denote the resulting RCT by $\cT^*=(\cP^*,E^*)$. The following properties hold: (i) $depth^{\cT^*}(i)=depth^{\cT_0}(i)$ for all $i\in I$; (ii) $\cP_0$ is finer than $\cP^*$; (iii) if $H^{\cT_0}(i)\subsetneq H^{\cT^*}(i)$, then $M(H^{\cT^*}(i))=H^{\cT^*}(i)$ and $c^{\cT_0}(H^{\cT_0}(i)) = c^{\cT^*}(H^{\cT^*}(i))=\emptyset$; and (iv) if $c^{\cT^*}(A)\neq \emptyset$ for some $A\in \cP^*$, then $A\in \cP_0$ and $A\cup \bar d^{\cT^*}(A) = A\cup \bar d^{\cT_0}(A)$.

We claim that $depth^{\cT^*}(i)\leqslant depth^{\cT}(i)$ for all $i\in I$ and $\cT\in \mathbb{T}(\succsim)$. By property (i) above, it suffices to show that  
$depth^{\cT_0}(i)\leqslant depth^{\cT}(i)$ for all $i\in I$ and $\cT\in \mathbb{T}(\succsim)$. Consider any $i\in I$. If $depth^{\cT}(i)=1$, then $H^{\cT}(i)\in c^{\cT}(o)$ and $i\hra H^{\cT}(i)\cup \bar d^{\cT}(H^{\cT}(i))$. 
 By construction, either $I$ is not bracket separable or $\{A\cup\bar d^{\cT_0}(A)\}_{A\in c^{\cT_0}(o)}$ is the finest bracket partition of $I$. In the former case, $H^{\cT}(i)\cup \bar d^{\cT}(H^{\cT}(i))=I$ and hence $i\in M(I)\in \cP_0$, which implies that $depth^{\cT_0}(i)=1$.  In the latter case, let $i\in A\cup \bar d^{\cT_0}(A)$ for some $A\in c^{\cT_0}(o)$.  Since $\{A\cup\bar d^{\cT_0}(A)\}_{A\in c^{\cT_0}(o)}$ is the finest bracket partition of $I$, we have $A\cup \bar d^{\cT_0}(A)\subseteq H^{\cT}(i)\cup \bar d^{\cT}(H^{\cT}(i))$. Then $i\hra A\cup \bar d^{\cT_0}(A)$ and hence $i\in M(A\cup \bar d^{\cT_0}(A))=A$.  This also implies that $depth^{\cT_0}(i)=1$.

 Suppose by induction that 
$depth^{\cT}(i)\geqslant $
$depth^{\cT_0}(i)$ if $depth^{\cT}(i)\leqslant t$ for some $t\geqslant 1$. Now consider $i$ such that $depth^{\cT}(i) = t+1$. Suppose by contradiction that $depth^{\cT_0}(i) > t+1$. Then we  can find $A_1,\dots,A_{t+1}\in \cP$ and $A'_1,\dots,A'_{t+1}\in\cP_0$ such that $A_{k+1}\in c^{\cT}(A_k)$ and $A'_{k+1}\in c^{\cT_0}(A_k')$ for $k=0,\dots,t$, where $A_0=A_0'=o$, $i\in A_{t+1}$, and $i\in \bar d^{\cT_0}(A'_{t+1})$.  We claim that $A'_{k}\cup\bar d^{\cT_0}(A'_{k})\subseteq A_{k}\cup\bar d^{\cT}(A_{k})$ for all $k=1,\dots,t+1$.  For $k=1$, either $I$ is not bracket separable or $\{A\cup\bar d^{\cT_0}(A)\}_{A\in c^{\cT_0}(o)}$ is the finest bracket partition of $I$. 
In the former case, $A'_{1}\cup\bar d^{\cT_0}(A'_{1}) = A_{1}\cup\bar d^{\cT}(A_{1})=I$. In the latter case, the claim follows from the definition of the finest bracket partition.

By induction, suppose that the claim holds for $k\leqslant  m\leqslant t$ and consider $k=m+1$. Since $A_m\subseteq M(A_{m}\cup\bar d^{\cT}(A_{m}))$ and $A'_{m}\cup\bar d^{\cT_0}(A'_{m})\subseteq A_{m}\cup\bar d^{\cT}(A_{m})$, we know $A_m\cap (A'_{m}\cup\bar d^{\cT_0}(A'_{m}))\subseteq M(A'_{m}\cup\bar d^{\cT_0}(A'_{m}))=A'_{m}$ and hence $d^{\cT_0}(A'_{m})\subseteq d^{\cT}(A_{m})$. If $A'_{m+1}\cup\bar d^{\cT_0}(A'_{m+1})\not\subseteq A_{m+1}\cup\bar d^{\cT}(A_{m+1})$, then there exists $\hat A\in c^{\cT}(A_m)\setminus\{A_{m+1}\}$ such that $(\hat A\cup\bar d^{\cT}(\hat A))\cap (A'_{m+1}\cup\bar d^{\cT_0}(A'_{m+1}))\neq \emptyset$.  This implies that both $\bar d^{\cT_0}(A'_{m})$ and $A'_{m+1}\cup\bar d^{\cT_0}(A'_{m+1})$ are bracket separable. 
By \lemmaref{lemma_finest}, this contradicts the construction of $\cT_0$ in which $\{A\cup\bar d^{\cT_0}(A)\}_{A\in c^{\cT_0}(A_m')}$ is the finest bracket partition of $d^{\cT_0}(A'_{m})$, whenever $d^{\cT_0}(A'_{m})$ is bracket separable.  By induction, $A'_{k}\cup\bar d^{\cT_0}(A'_{k})\subseteq A_{k}\cup\bar d^{\cT}(A_{k})$ for all $k=1,\dots,t+1$. Since $i\in A_{t+1}\subseteq M(A_{t+1}\cup\bar d^{\cT}(A_{t+1}))$ and $i\in A'_{t+1}\cup\bar d^{\cT_0}(A'_{t+1})$, we have $i\in M(A'_{t+1}\cup\bar d^{\cT_0}(A'_{t+1}))= A'_{t+1}$, and hence $depth^{\cT_0}(i) = t+1$, contradicting our assumption. Hence, we conclude that $depth^{\cT}(i)\geqslant depth^{\cT_0}(i)=depth^{\cT^*}(i)$ for all $i\in I$.

Consider any $\cT=(\cP,E)\in \mathbb{T}(\succsim)$ such that $depth^{\cT}(i)= depth^{\cT^*}(i)$ for all $i\in I$. We claim that $\cP$ is finer than $\cP^*$. Suppose that this is not true. Then there exist $A\in \cP$, $m\geqslant 1$, and distinct $A^*_1,A^*_2\in \cP^*$ such that  $A\cap A_k^*\neq\emptyset$ for $k=1,2$ and  $depth^{\cT}(l)=depth^{\cT^*}(l)=m$ for all $l\in A\cup A_1^*\cup A_2^*$. 
Let $A_1^*\in c^{\cT^*}(B_1^*)$ for some $B_1^*\in \cP^*_o$. For any $i_1\in A^*_1$ and $i_2\in A^*_2$, we have $i_1\hra i_2$ and $i_2\hra i_1$. If $A^*_2\in c^{\cT^*}(B_1^*)$, then by the construction of $\cT^*$, vertices $A^*_1$ and $A^*_2$ should have been combined in the construction of $\cT^*$, leading to a contradiction. If instead $A^*_2\not\in c^{\cT^*}(B_1^*)$, then there exists $B_2^*\in \cP^*\setminus\{B_1^*\}$ such that $A^*_2\in c^{\cT^*}(B_2^*)$. 
Let $A\in c^{\cT}(B)$ for some $B\in \cP$. We cannot have $B_1^*\cap B\neq \emptyset$ and $B_2^*\cap B\neq \emptyset$ simultaneously,  since otherwise we can find $j_1\in B_1^*$ and $j_2\in B_2^*$ such that $j_1\hra j_2$ and $j_2\hra j_1$, and hence $B_1^*$ and $B_2^*$ should have been combined in the construction of $\cT^*$. 
Assume without loss that $B_1^*\cap B=\emptyset$. 
This implies $B_1^*\cup \bar d^{\cT^*}(B^*_1)$ is bracket separable. However, since $c^{\cT^*}(B_1^*)\neq \emptyset$, it must be the case that $B_1^*\in \cP_0$ and $B_1^*\cup \bar d^{\cT^*}(B^*_1)=B_1^*\cup \bar d^{\cT_0}(B^*_1)$, which is not bracket separable by the definition of $\cT_0$. This leads to a contradiction.  We conclude that $\cP$ is finer than $\cP^*$.

 

We now show $\cT^*$ is  the unique simplest RCT in $\mathbb{T}(\succsim)$. If $\cP\neq \cP^*$, then we can use the same argument above.  If $\cP= \cP^*$ and $\cT\neq \cT^*$,  then there must exist $A^*\in \cP^*$ and distinct $A_1,\dots,A_n\in\cP^*$ such that $n\geqslant 2$, $c^{\cT^*}(A^*)\neq \emptyset$, $depth^{\cT^*}(l)=depth^{\cT^*}(l')$ for all $l,l'\in A^*\cup A_1\cup\cdots\cup A_n$, and $A^*\cup \bar d^{\cT^*}(A^*)\subseteq \bigcup_{k=1}^n (A_k\cup \bar d^{\cT}(A))$. This implies $A^*\cup \bar d^{\cT^*}(A^*)$ is bracket separable. This leads to a contradiction since $c^{\cT^*}(A^*)\neq \emptyset$ implies   $A^*\in \cP_0$ and hence $A^*\cup \bar d^{\cT^*}(A^*)=A^*\cup \bar d^{\cT_0}(A^*)$ is not bracket separable by the definition of $\cT_0$. 


\vspace{-8pt}

\subsection{Proofs in Section \ref{sect_special}} \label{OA_proof_sect_special}

\begin{proof}[Proof of \thmref{thm_special}]  
First, if $\succsim$ has a FATE representation, then  $(\{I\},\{(o,I)\})\in \mathbb{T}(\succsim)$. \lemmaref{lemma_nece_hra} implies that $i\hra j$ for all $i,j\in I$. Conversely, if
$i\hra j$ for all $i,j\in I$, then $M(I)=I$ and hence $ (\{I\},\{(o,I)\})\in \mathbb{T}(\succsim)$, which corresponds to a FATE representation.

Second, if $\succsim$ has a FETA representation, then $\mathbb{T}(\succsim)=\{\cT^*\}$ in which $c^{\cT^*}(o)=\{\{1\},\dots,\{N\}\}$. If $i\hra j$ for some $i\neq j$, then \lemmaref{lemma_nece_RI} implies the existence of $\cT'\in \mathbb{T}(\succsim)$ in which $\{i,j\}\subseteq  H^{\cT'}(i)\cup \bar d^{\cT'}(H^{\cT'}(i))$. It is clear that $\cT'\neq \cT^*$, which contradicts $\mathbb{T}(\succsim)=\{\cT^*\}$.  Conversely, if $i\hra j\iff i=j$, then 
\lemmaref{lemma_nece_hra} implies that $\cP=\{\{1\},\dots,\{N\}\}$ and $c^{\cT}(\{i\})=\emptyset$ for all $i=1,\dots,N$ and $\cT=(\cP,E)\in \mathbb{T}(\succsim)$. The unique RCT that satisfies these conditions is $\cT^*$ in which $c^{\cT^*}(o)=\{\{1\},\dots,\{N\}\}$. Hence, $\mathbb{T}(\succsim)=\{\cT^*\}$ and $\succsim$ has a FETA representation.

Finally, if $\succsim$ has a recursive representation, then $\mathbb{T}(\succsim)=\{\cT^*\}$. Denote $\cT^*=(\cP^*,E^*)$. We must have $\cP^*=\{\{1\},\dots,\{N\}\}$, since otherwise
 $|\mathbb{T}(\succsim)|>1$ by \lemmaref{lemma_nece_singleton} . Because $|c^{\cT^*}(A)|\leqslant 1$    for all $A\in\cP_o^*$,   there must exist an enumeration $\pi:I\rightarrow I$ such that  $c^{\cT^*}(o)=\{\{\pi(1)\}\}$ and $c^{\cT^*}(\{\pi(k)\})=\{\{\pi(k+1)\}\}$ for all $k=1,\dots,N-1$. By \lemmaref{lemma_nece_hra}, $\pi(i)\hra \pi(j)$ for all $j\geqslant i$ and thus $\hra$ is complete. By \lemmaref{lemma_transitive}, $\hra$ is transitive. Suppose by contradiction that $\hra$ is not antisymmetric. That is, there exists $i<j$ such that $\pi(i)\hra \pi(j)$ and $\pi(j)\hra \pi(i)$. By  \lemmaref{lemma_nece_RI},
we can find $\cT'\in \mathbb{T}(\succsim)$ in which $\pi(i)\in  H^{\cT'}(\pi(j))\cup \bar d^{\cT'}(H^{\cT'}(\pi(j)))$. It is clear that $\cT'\neq \cT^*$, a contradiction. 

Conversely, suppose that $\hra$ is a linear order. Then there exists an enumeration $\pi:I\rightarrow I$ such that $\pi(i)\hrra \pi(j)$ for all $j>i$. For any $\cT=(\cP,E)\in \mathbb{T}(\succsim)$, we must have  $\cP=\{\{1\},\dots,\{N\}\}$ and $\pi(i)\not\in \bar d^{\cT}(\{\pi(j)\})$ for all $j>i$. Suppose by contradiction that $|c^{\cT}(A)|>1$ for some $A\in \cP_o$. Without loss of generality, assume that $A=o$. In other cases it suffices to focus on the conditional preference on $\Delta(X_{\bar d^{\cT}(\pi(t))})$ where $t$ is the smallest number such that $|c^{\cT}(\{\pi(t)\})|>1$. It is clear that $\{\pi(1)\}\in c^{\cT}(o)$.
For any $\{\pi(k)\}\in c^{\cT}(o)$ such that $k>1$, since $\pi(1)\hra \pi(k)$, $\pi(k)\not\in \bar d^{\cT}(\{\pi(1)\})$, and $\hra$ is antisymmetric,  \lemmaref{lemma_nece_EU} in \hyperref[OA_proof_necessity]{Online Appendix \RN{4}.1} implies that $c^{\cT}(\{\pi(k)\}=\emptyset$. If $|c^{\cT}(o)|\geqslant 3$, then we can find $\{\pi(i)\},\{\pi(j)\}\in c^{\cT}(o)$ such that $1<i<j$. By $\pi(i)\hra \pi(j)$ and the proof of \lemmaref{lemma_nece_RI}, the utility function is additively EU for dimensions $\pi(i)$ and $\pi(j)$ and hence $\pi(j)\hra \pi(i)$, which is a contradiction with the antisymmetry of $\hra$.

Now suppose that $|c^{\cT}(o)|=2$, i.e., $c^{\cT}(o)=\{\{\pi(1)\},\{\pi(k)\}\}$ for some $k\geqslant 2$. If $k=N$ and $N=2$, then $c^{\mathcal{T}}(o) = \{\{\pi(1)\}, \{\pi(2)\}\}$. If $k=N$ and $N \geqslant 3$, then there exists $\mathcal{T}^{\prime} \in \mathbb{T}(\succsim)$ such that $c^{\mathcal{T}^{\prime}}(o)=\{\{\pi(1)\}\}$, $c^{\mathcal{T}^{\prime}}(\pi(i))=\{\{\pi(i+1)\}\}$ for all $i<N-2$, and $c^{\mathcal{T}^{\prime}}(\pi(N-2))=\{\{\pi(N-1)\},\{\pi(N)\}\}$. In both cases we can focus on the conditional preference on $\{\pi(N-1),\pi(N)\}$, which can be represented by (\ref{eq_nece_RI2}) in \hyperref[OA_proof_necessity]{Online Appendix \RN{4}.1}. The proof of \lemmaref{lemma_nece_RI} implies that the utility function is additively EU for dimensions $\pi(N)$ and $\pi(N-1)$, which contradicts the antisymmetry of $\hra$. If instead $k<N$, then   there exists $\cT'\in \mathbb{T}(\succsim)$ such that $c^{\cT'}(o)=\{\{\pi(1)\}\}$, $c^{\cT'}(\pi(i))=\{\{\pi(i+1)\}\}$ for all $i<k-1$, $c^{\cT'}(\pi(k-1))=\{\{\pi(k)\}, \{\pi(k+1)\}\}$, and $c^{\cT'}(\pi(i))=\{\{\pi(i+1)\}\}$ for all $i\geqslant k+1$. Again, we can focus on the conditional preference on $\bar d^{\cT'}(\{\pi(k-1)\})$. Using the same argument as above, 
it must be the case that $c^{\cT'}(\pi(k+1))=\emptyset$, implying that $k=N-1$. This reduces to the previous case and a contradiction follows.  Hence, for any $\cT=(\cP,E)\in \mathbb{T}(\succsim)$, we must have  $\cP=\{\{1\},\dots,\{N\}\}$, $c(o)=\{\{\pi(1)\}\}$, $c(\{\pi(i)\})=\{\{\pi(i+1)\}\}$ for all $i=1,\dots,N-1$. Since $\cT^*$ is the unique RCT that satisfies these conditions, we conclude that  $\mathbb{T}(\succsim)=\{\cT^*\}$ and  $\succsim$ has a recursive representation. 
\end{proof}


\begin{proof}[Proof of \thmref{thm_special2}]  

If $\succsim$ has a generalized bracketing representation with  $c(o)=\{A_k\}_{k=1}^n$, then $\succsim$ can be represented by $U(p)=v(\bE^p_{A_1}\:u^{A_1}(x_{A_1}),\dots,\bE^p_{A_n}\:u^{A_n}(x_{A_n})).$
This implies $i\hra j$ for all $i,j\in A_k$ and $k=1,\dots,n$. If $i\hra j$ for some $i\in A_k$ and $j\in A_{k'}$ with $k\neq k'$, then by the proof of \lemmaref{lemma_nece_RI}, $\succsim$ has a generalized bracketing representation with partition $\{A_{k}\cup A_{k'}\}\cup \{A_l\}_{l=1,\dots,n, l\neq k,k'}$. This implies $j\hra i$ and hence $\hra$ is symmetric.

Conversely, assume that $\hra$ is symmetric. By \lemmaref{lemma_transitive}, $\hra$ is transitive and hence an equivalence relation. If $M(I)\neq \emptyset$, then $i\hra j$ for all $i,j\in I$.  By \thmref{thm_special}, $\succsim$ has an FATE representation, which is a special case of the generalized bracketing representation. If $M(I)=\emptyset$, then again by symmetry of $\hra$, the simplest RCT $\cT^*$ must feature $c^{\cT^*}(o)=\{A_k\}_{k=1}^n$, where $\{A_k\}_{k=1}^n$ is a partition of $I$ and $M(A_k)=A_k$ for all $k=1,\dots,n$. Hence, $\succsim$ has a generalized bracketing representation.

If   $\succsim$ has a generalized recursive representation with RCT $\cT=(\cP,E)$ where $|c^{\cT}(A)|\leqslant 1$    for all $A\in\cP_o$,  then $\cP=\{A_1,\dots,A_n\}$ in which $c^{\cT}(o)=\{A_1\}$ and $c^{\cT}(A_{k})=\{A_{k+1}\}$ for all $k=1,\dots,n-1$. By \lemmaref{lemma_nece_hra}, for any $i\in A_k$ and $j\in A_{k'}$ for some $k,k'=1,\dots,n$, we have $i\hra j$ if $k\leqslant k'$ and $j\hra i$ if $k'\leqslant k$. This proves the completeness of $\hra$. 

Conversely, assume that $\hra$ is complete. Since $\hra$ is transitive,  $M(A)\neq \emptyset$ for all nonempty  $A\subseteq I$. Define $l_1\in M(I)$ and $l_k\in M(I\setminus\{l_1,\dots,l_{k-1}\})$ for all $k= 2,\dots,N$. Let $\cP=\{\{1\},\dots,\{N\}\}$ and $E=\{(o,\{l_1\})\}\cup \{(\{l_k\},\{l_{k+1}\})\}_{k=1}^{N-1}$. We can follow the construction of Bernoulli indices (Step 3) in the proof of the sufficiency of \thmref{thm_main} to generate an SMEU representation of $\succsim$ with RCT $\cT$. Since $|c^{\cT}(A)|=1$ if $A\in \cP_o\setminus\{\{l_N\}\}$ and $c^{\cT}(\{l_N\})=\emptyset$, the RCT $\cT$ corresponds to a  generalized recursive representation of $\succsim$.

If $\succsim$ has a recursive bracketing representation, then $\cP=\{\{1\},\dots,\{N\}\}$ for all  $\cT=(\cP,E)\in \mathbb{T}(\succsim)$. Suppose that $i\hra j$  and $j\hra i$ for some $i\neq j$. By transitivity, $i\hra l$ if and only if $j\hra l$ for all $l\in I$. Let $A=\{l\in I: i\hra l\}$. Then $i,j\in M(A)$. 
By the proof of \lemmaref{lemma_nece_RI} and \lemmaref{lemma_modify},  there exists another $\cT'=(\cP',E')\in \mathbb{T}(\succsim)$
such that $\{i,j\}\in \cP'$, which leads to a contradiction. Hence, $\hra$ is antisymmetric.

Conversely, suppose that $\hra$ is antisymmetric.
Then \lemmaref{lemma_nece_hra} implies that 
$|A|=1$ for all  $(\cP,E)\in \mathbb{T}(\succsim)$ and $A\in\cP$. That is,  $\succsim$ has a recursive bracketing representation.
\end{proof}

\vspace{-8pt}

\subsection{Proofs in Section \ref{section_app}} \label{OA_proof_sect_app}


\begin{proof}[Proof of \propref{prop_dominance2}]
For sufficiency, note that if $\succsim$ is represented by $U(p)=\mathbb{E}^{f[p]}\:v$, then $\succsim$ satisfies dominance. Suppose instead that  $v$ exhibits CARA---that is, $\text{CE}(p,v) = \frac{1}{a}\log \bE^q\: e^{ax}$ for $a\neq 0$ or $\text{CE}(p,v)=\bE^q x$. In the latter case, $\succsim$ is also represented by $U(p)=\mathbb{E}^{f[p]}v$ and hence dominance holds. In the former case, for any $p=(p_2,\dots,p_N)$, 
\begin{align*}
    U(p) &= \frac{1}{a}\: \sum_{i=1}^n\:\log \bE^{f[p_{A_i}]}\: e^{ax} = \frac{1}{a} \: \log \prod_{i=1}^n\: \bE^{f[p_{A_i}]}\: e^{ax} = \frac{1}{a} \: \log \: \bE^{f[p]}\: e^{ax}.
\end{align*}
The last equality holds because $p_{A_i}$ and $p_{A_j}$ are statistically independent for all $i\neq j$. Hence, $\succsim$ satisfies independent-sources dominance.

For necessity, suppose that $\succsim$ satisfies independent-sources dominance and is not represented by $U(p)=\mathbb{E}^{f[p]}v$. Then $\{A_k\}_{k=1}^n$ must be a nontrivial partition of $I$. By applying  independent-sources dominance to source-$l$ lotteries and source-$l'$ lotteries for any $l\neq l'$, it must be the case that $v^{A_k}$ is a positive affine transformation of $v^{A_1}:=v$ for all $k$. Without loss of generality, we can assume $v^{A_k}=v$ for all $k$.  Fixing any $i\in A_1$, $j\in A_2$, and $z\in X_{\{i,j\}^{\cp}}$ such that $z_l=0$ for all $l\in \{i,j\}^{\cp}$, we consider the conditional preference $\succsim_{z}$ on $\Delta(X_{\{i,j\}})$.  Its utility function can be written as $\hat{U}(p)= \text{CE}(p_i,v) + \text{CE}(p_j,v)$ for all $p\in \Delta(X_{\{i,j\}})$. By Proposition 1 of \citeAppendix{OA-RabinWeizsacker09}, if $v$ does not exhibit CARA, then $\succsim_{z}$ (and hence $\succsim$) must violate independent-sources dominance.
\end{proof}


\begin{proof}[Proof of \propref{prop_dominance}] 
The sufficiency part is trivial. For necessity, suppose that $\succsim$ satisfies dominance and $\succsim$ is not represented by $U(p)=\mathbb{E}^{f[p]}v$. Then \propref{prop_dominance2} implies that, after normalization, $v^{A_k}=v$ for all $k$ and  
$v$ exhibits CARA. We can construct $\succsim_{z}$ on $\Delta(X_{\{i,j\}})$ as in the proof of \propref{prop_dominance2}, which is represented by $\hat{U}(p)= \text{CE}(p_i,v) + \text{CE}(p_j,v)$. Normalize $v(0)=0$. By dominance, for any $x\in Z$, we have $\delta_{(x,0)}\frac{1}{2}\delta_{(0,x)}\sim \delta_{(x,0)}$, which implies $v(x/2)=v(x)/2$. Hence, $\text{CE}(q,v)=\bE^q x$ and $\succsim$ is represented by $U(p)=\mathbb{E}^{f[p]}v$, which leads to a contradiction. 
\end{proof}


\begin{proof}[Proof of \propref{prop_dominance3}] 
To prove the first statement, assume that $\succsim^1$ is represented by (\ref{eq_money3}) with the coarsest brackets of income sources $\{A_i\}_{i=1}^n$. If $B\subseteq A_i$ for some $i$, then for any $p\in \Delta(X)$ such that $p_l=\delta_0$ for all $l\not\in B$, its utility is $\text{CE}(f[p],\:v^{A_i})$ and hence $\succsim^1$ satisfies dominance on $B$. Conversely, suppose that $\succsim^1$ satisfies dominance on $B$ and both $B\cap A_{k}$ and $B\cap A_{k'}$ are nonempty for some $k\neq k'$. Following the proof of \propref{prop_dominance} and \propref{prop_dominance2}, we must have $\text{CE}(q,v^{A_k})=\text{CE}(q,v^{A_{k'}})=\bE^q x$. This implies that $\succsim^1$ is also represented by (\ref{eq_money3}) with coarser brackets $\{A_i\}_{i\neq k,k'}\cup\{A_{k}\cup A_{k'}\}$, a contradiction.

To prove the second statement, suppose $\succsim^2$ is also represented by (\ref{eq_money3}) with the coarsest brackets of income sources $\{A'_i\}_{i=1}^m$. If $\{A'_i\}_{i=1}^m$ is finer than $\{A_i\}_{i=1}^n$ and $\succsim^2$ satisfies dominance on $B$, then by the first statement, $B\subseteq A'_{k'}\subseteq A_{k}$ for some $k$ and $k'$,  $\succsim^1$ also satisfies dominance on $B$, and hence $\succsim^1$ is more dominance-consistent than $\succsim^2$. Conversely, suppose  $\succsim^1$ is more dominance-consistent than $\succsim^2$. Since  $\succsim^2$ satisfies dominance on $A'_i$ for all $i=1,\dots,m$, $\succsim^1$ also satisfies dominance on $A'_i$ for all $i=1,\dots,m$. 
The first statement implies that $\{A'_i\}_{i=1}^m$ must be finer than $\{A_i\}_{i=1}^n$. This completes the proof.\end{proof}


\begin{proof}[Proof of \propref{prop_avoidance}] 
For the first statement, the sufficiency part is trivial. Indeed, by \propref{prop_dominance2}, if $\succsim$ is represented by $U(p)=\mathbb{E}^{f[p]}\:v$ for some $v$ or $U(p)=\sum_{i=1}^n \text{CE}(f[p_{A_i}],\:v)$ for some $v$ that exhibits CARA, then $\succsim$ satisfies independent-sources dominance and the decision maker is indifferent regardless of whether her income comes from one source or multiple sources. For necessity, suppose that $\succsim$ is not represented by $U(p)=\mathbb{E}^{f[p]}\:v$. Since $p\sim q$ if $p$ is a source-$i$ lottery and $q$ is a source-$j$ lottery such that $f[p]=f[q]$  for all $i,j\in I$, we can normalize that $v^{A_k}=v$ for all $k=1,\dots,n$. 
We can construct $\succsim_{z}$ on $\Delta(X_{\{i,j\}})$ as in the proof of \propref{prop_dominance2}, which is represented by $\hat{U}(p)= \text{CE}(p_i,v) + \text{CE}(p_j,v)$. Set $p_j=\delta_x$ for some $x\in Z$. Since both gains and losses are allowed, avoidance of multidimensional risk implies that $(p_i,\delta_x)\sim_{z} (f[(p_i,\delta_x)],\delta_0)$---that is, $\text{CE}(p_i,v)+ x = \text{CE}(f[(p_i,\delta_x)],v)$. Hence, $v$ must exhibit CARA. 

For the second statement, suppose $\succsim$ is not represented by $U(p)=\mathbb{E}^{f[p]}\:v$. We will focus on the representation of $\succsim$ on $\Delta(Z_+^{N})$.  Since $p\sim q$ if $p$ is a source-$i$ lottery on $Z_+$ and $q$ is a source-$j$ lottery on $Z_+$ such that $f[p]=f[q]$  for all $i,j\in I$, we can normalize that $v^{A_k}=v$ on $Z_+$ for all $k=1,\dots,n$. It suffices to show that  $\succsim$ satisfies avoidance of multidimensional risk in gains if and only if $v$  exhibits DARA for $x\in Z_+$. To show the ``if'' part, suppose $v$ exhibits DARA for $x\in Z_+$.  Then $\text{CE}(q,v)+ \text{CE}(q',v) \leqslant \text{CE}(f[(q,q') ],v)$ for all $q,q'\in \Delta(Z_+)$. Hence, for any $p\in \Delta(Z_+^N)$ such that $p=(p_1,\dots,p_N)$, we have 
\begin{align*}
        U(p)=\sum_{i=1}^n \text{CE}(f[p_{A_i}],v) \leqslant \text{CE}\big(f\big[(f[p_{A_1}],\dots,f[p_{A_n}])\big], v \big) = \text{CE}(f[p],v),
\end{align*}
which implies $f[p]\succsim p$. To show the ``only if'' part, we can follow the same proof idea of the first statement by observing that avoidance of multidimensional risk in gains implies that $\text{CE}(q,v)+ x \leqslant \text{CE}(f[(q,\delta_x)],v)$ for $x\in Z_+$ and $q\in \Delta(Z_+)$. Hence, $v$  exhibits DARA for $x\in Z_+$.

The proof of the third statement is symmetric to that of the second one and is omitted.
\end{proof}

\begin{proof}[Proof of \propref{prop_OA_compare}]
    For necessity, suppose $\succsim_1$ exhibits stronger avoidance of multidimensional risk in gains than $\succsim_2$. We will focus on the representation of $\succsim^1$ on $\Delta(Z_+^{N})$. Since  for all $i\in I$,  $q\succsim^2 p \Longrightarrow q\succsim^1 p$ if $p,q$  are source-$i$ lotteries on $Z_+$,  we can normalize that $v^{A_k}=\hat v$ on $Z_+$ for all $k=1,\dots,n$. 
    Suppose there exists $A_i$ such that $A_i\cap B_j\neq\emptyset$ and $A_i\cap B_{j'}\neq\emptyset$ for some $j\neq j'$. Fix $l\in A_i\cap B_j$ and $l' \in A_i\cap B_{j'}$. Since $\hat v$ exhibits strictly DARA for $x\in Z_+$,  we can find $r\in \Delta(Z_+)$, $a\in Z_+$, and $\varepsilon>0$ such that $f[r,\delta_a]\in  \Delta(Z_+)$ and $\text{CE}(r,\hat v) + a <\text{CE}(f[r,\delta_{a-\varepsilon}],\hat v)$. Let $p,q\in \Delta(Z_+^{N})$ such that $p_l=f[r,\delta_{a-\varepsilon}], p_{l'}=\delta_0$, $q_l=r, q_{l'}=\delta_a$, and $p_i=q_i=\delta_0$ for all $i\neq l,l'$. Note that $p$ is a source-$l$ lottery.      For $\succsim^2$, the utility of $p$ is $\text{CE}(f[r,\delta_{a-\varepsilon}],\hat v)$, which is larger than $\text{CE}(r,\hat v) + a$, the utility of $q$. However, for $\succsim^1$, since $l,l'$ are in the same bracket $A_i$, the utility of $p$, $\text{CE}(f[r,\delta_{a-\varepsilon}],\hat v)$, is smaller than the utility of $q$, $\text{CE}(f[r,\delta_{a}],\hat v)$. That is, $p\succ_2 q $ and $ q \succ_1 p$ where $p$ is a source-$l$ lottery, a contradiction. 

    For sufficiency, suppose $\succsim^1$ and $\succsim^2$ are represented by (\ref{eq_money3}) with parameters $\{(A_\ell,\hat v)\}_{\ell=1}^n$ and $\{(B_\ell,\hat v)\}_{\ell=1}^m$, respectively.  Since $\{A_\ell\}_{\ell=1}^{n}$ is finer than $\{B_\ell\}_{\ell=1}^{m}$, for each $j=1,...,m$, the collection $\{A_\ell: A_\ell\cap B_j\neq \emptyset\}$ is a partition of $B_j$. Following the same proof of \propref{prop_avoidance}, for all $p=(p_1,\dots,p_N)\in \Delta(Z_+^{N})$,
    \[U^1(p) = \sum_{i=1}^n \text{CE}(f[p_{A_i}],\hat v) \leqslant  \sum_{j=1}^m \text{CE}(f[p_{B_j}],\hat v) =U^2(p).\] 
    Moreover, $U^1(q)=U^2(q)$ for every source-$l$ lottery $q\in \Delta(Z_+^{N})$.
    Hence, $q\succsim_2 p \Longrightarrow q\succsim_1 p$ and $\succsim_1$ exhibits stronger avoidance of multidimensional risk in gains than $\succsim_2$.
\end{proof}}

\begingroup
\setstretch{1}
\bibliographystyle{ecta}
\bibliography{MCU}
\endgroup

\newpage
\appendix
\pagestyle{fancy}
\setlength{\headheight}{14.49998pt}
\fancyhead{}
\fancyhead[C]{Online Appendix}
\begin{center}
    { \Large \bf Online Appendix}\\ \medskip
    {\large (For Online Publication Only)}
\end{center}

This is the online appendix to ``\textit{Decision Making Under Multidimensional Risk}.'' It is organized as follows. Section \hyperref[OA_examples]{\RN{1}} presents additional examples. Section \hyperref[OA_money]{\RN{2}} provides additional results regarding applications in Section \ref{section_app}. Section \hyperref[OA_results]{\RN{3}} characterizes the uniqueness of Bernoulli indices and the (generalized) recursive preferences with an exogenous order on dimensions. Section \hyperref[OA_proofs]{\RN{4}} contains omitted proofs.

\subsection*{Online Appendix \RN{1}: Additional Examples}\label{OA_examples}

\subsubsection*{\RN{1}.1: More on the FATE and FETA Approaches}\label{OA_opposite}
The two commonly used but opposite approaches to evaluating a risky multidimensional alternative, FATE and FETA, have appeared in many different contexts beyond those discussed in the main paper. We describe a few additional examples below.

\begin{enumerate}
    \item Suppose the decision maker is evaluating a risky consumption bundle that yields $(0,1)$ and $(1,0)$ with equal probability. She wants to use a constant-elasticity-of-substitution function $u$ to aggregate the quantities of different goods. Should she use the FATE approach, $\frac{1}{2}u(0,1)+\frac{1}{2}u(1,0)$, or the FETA approach, $u(1/2,1/2)$? The first approach may seem more natural, but the second may capture narrow bracketing and insensitivity to correlation across goods, patterns often discussed in choice behavior. See Section \ref{sect_lit} for further discussion.
    \item Consider a risky consumption stream. If we simply compute exponentially discounted expected utility, the decision maker will exhibit risk-seeking behavior in the time dimension. More general evaluation approaches have therefore been proposed to avoid such risk-seeking behavior.  Some approaches first evaluate risk within each period and then aggregate across periods, while others first aggregate across periods and then take expectations. They are incompatible with each other, and it is not clear which approach is more appropriate. See Section \ref{sect_lit} for further discussion.
    \item Choice models under subjective uncertainty face the same dilemma. Let $(x_1,x_2)$ denote the decision maker's utility in states $1$ and $2$, respectively, for a given act. Now consider a lottery over acts. There are then at least two natural ways to evaluate such a lottery:
    \begin{enumerate}
        \item FETA: First take expected utility state by state, and then apply an ambiguity aggregator across states, such as the maxmin operator in \citeAppendix{AO-GS89}. It is well known in the ambiguity literature that randomization can hedge ambiguity under this approach. To see why, consider the lottery $p=\tfrac{1}{2}\delta_{(1,0)}+\tfrac{1}{2}\delta_{(0,1)}$. When facing $\delta_{(1,0)}$ or $\delta_{(0,1)}$ alone, ambiguity aversion makes the decision maker pessimistic about receiving the high payoff $1$. Under the FETA approach, however, expected utility is taken first, and $p$  is evaluated as  the degenerate lottery   $(1/2,1/2)$, eliminating ambiguity in the evaluation.
        \item FATE: First aggregate across states using an ambiguity operator (such as the maxmin aggregator of \citeAppendix{AO-GS89}), and only then take expected utility. Under this approach, randomization cannot hedge ambiguity. To see why, consider again lottery $p$. In this case, the decision maker evaluates each realization through the aggregator before taking expectations, so the pessimistic assessment of state probabilities remains even after randomization. Indeed, if, for instance, $\delta_{(1,0)}\sim\delta_{(0,1)}$, then $p\sim\delta_{(1,0)}\sim\delta_{(0,1)}$, implying that hedging by randomization offers no benefit.
    \end{enumerate}

    \citeAppendix{OA-Saito15} characterizes an objective function that takes a convex combination of the two approaches, and \citeAppendix{OA-KeZhang20} further generalize that objective function. However, these formulations are incompatible with ours because they violate unidimensional independence (see Example \ref{eg_mix}). Our generalized bracketing representation provides a different way to combine FATE and FETA. Its RCT partitions the dimensions, applying the FATE approach within each partition element and the FETA approach across partition elements. Hence, the generalized bracketing representation naturally serves as an intermediate case between FATE and FETA.

    Using the interpretation in \citeAppendix{OA-KeZhang20}, under FATE (resp.\ FETA), it is as if the decision maker believes that the probability measure over states is determined before (resp.\ after) her randomization over acts, although it is unknown to her. Under the generalized bracketing representation, it is as if she believes that the probability measure over brackets of states is determined after the randomization, while within each bracket the (conditional) probability measure is determined before the randomization, although none of these probability measures is known to her.

\end{enumerate}

\subsubsection*{\RN{1}.2: Other Ways to Generalize the Three Approaches}\label{OA_eg_general}
We present two examples that also generalize the three approaches---FATE, FETA, and recursive evaluation---but in ways that differ from the SMEU representation. We identify which of our axioms fail in these examples. The first example features a convex combination of the FATE and FETA utility functions. 
\begin{example}\label{eg_mix} 
    Let $N=2$. Suppose the decision maker's utility function is 
    \[ U(p)=\alpha\: \bE^p\sqrt{x_1+x_2} \: +\: (1-\alpha) \: \sqrt{(\bE^p_1\sqrt{x_1})^2 + (\bE^p_2\sqrt{x_2})^2},    \]
    in which $\alpha\in (0,1)$. The idea is simple, but the preference represented by this utility function violates unidimensional independence and RCT perfection. To see why unidimensional independence is violated, fix $x_2=1$ and note that the conditional preference $\succsim_{x_2}$ is represented by
    \[U(p_1,1)=\alpha\: \bE^{p_1}\sqrt{x_1+1} \: +\: (1-\alpha) \: \sqrt{(\bE^{p_1}\sqrt{x_1})^2 + 1},\]
    which is not a monotone transformation of any expected utility function in dimension $1$. Moreover, one can show that $1\not\hra 2$,  $2\not\hra 1$, and $1\not\perp 2$. Hence, RCT perfection is also violated.
\end{example}

Recall that under the FETA approach, evaluation depends only on the marginal distributions of the different brackets. The next example combines expected utility with this marginal-based feature.
\begin{example}\label{eg_CN} 
    Let $N=2$. Suppose the decision maker's utility function is 
    $$U(p)=\sum_{x_1,x_2} \sqrt{x_1+x_2}\: p_1(x_1)\:p_2(x_2).$$
When $p=(p_1,p_2)$, this evaluation agrees with an expected utility function whose Bernoulli index is $u(x_1,x_2)=\sqrt{x_1+x_2}$. For more general lotteries $p$, the decision maker is indifferent between $p$ and $(p_1,p_2)$, so only the marginal distributions enter the evaluation. One can verify that her preference satisfies all axioms in Section \ref{sect_axiom} except uncorrelated separability.  To see this, since $p\sim (p_1,p_2)$ for all $p\in\Delta(X)$, we have $1\perp 2$. However, for each $i=1,2$, the conditional preference $\succsim_{x_i}$ over the other dimension depends on $x_i$, and there exist $p_1,q_1\in\Delta(X_1)$ and $x_2,y_2\in X_2$ such that  $(p_1,\delta_{x_2})\succsim (q_1,\delta_{x_2})$ and $(p_1,\delta_{y_2})\prec (q_1,\delta_{y_2})$. Hence, the preference violates uncorrelated separability.
\end{example}

\subsection*{Online Appendix \RN{2}: More on Applications}\label{OA_money}
In this section, we discuss several remarks related to Section \ref{section_app}.

First, if the decision maker strictly avoids multidimensional risk 
for some $p\in \Delta(Z^N)$ such that $p=(p_1,\dots,p_N)$, then  by \propref{prop_avoidance}, her preference cannot  be represented by $U(p)=\mathbb{E}^{f[p]}\:v$ or $U(p)=\sum_{i=1}^n \text{CE}(f[p_{A_i}],\:v)$ with $v$ exhibiting CARA. Hence, \propref{prop_dominance} and \propref{prop_dominance2} imply that the preference violates both dominance and independent-sources dominance. This observation reveals a connection between these two implications of generalized bracketing.

Second,  we axiomatize the utility representation (\ref{eq_money3}):
\begin{equation*}
    U(p)= \sum_{i=1}^n\text{CE}(f[p_{A_i}],\:v^{A_i}).
\end{equation*}
A decision maker with the above utility function satisfies \textit{dominance over deterministic prospects}: For any $x,y\in Z^N$, we have $\delta_x\succsim \delta_y$ if and only if $\sum_{i=1}^N x_i\geqslant \sum_{i=1}^N y_i$.
As discussed at the end of Section \ref{section_app}, this is the key property that distinguishes (\ref{eq_money3}) from the narrow-bracketing utility functions in \cite{V2023bracketing} and \cite{Camara21}.

\begin{proposition}\label{prop_OA_NB}
    Suppose the preference $\succsim$ has a generalized bracketing representation. It is represented by (\ref{eq_money3}) if and only if   it satisfies dominance over deterministic prospects.
\end{proposition}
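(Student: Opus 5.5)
The ``only if'' direction is immediate. Under (\ref{eq_money3}), for any $x\in Z^N$ we have $\text{CE}(f[\delta_{x_{A_i}}],v^{A_i})=\sum_{j\in A_i}x_j$. Hence $U(\delta_x)=\sum_{j=1}^N x_j$, and dominance over deterministic prospects follows.

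For the ``if'' direction, I start from a generalized bracketing representation, which I will take with the simplest RCT so that $c(o)=\{A_1,\dots,A_n\}$ and $\bigcup_k A_k=I$. Its form is
\[
U(p)=v\big(\bE^p_{A_1}u^{A_1},\dots,\bE^p_{A_n}u^{A_n}\big),
\]
with $u^{A_k}$ continuous and strictly increasing on $X_{A_k}$ and $v$ strictly increasing.

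\textbf{Step 1: the index within each bracket depends only on the bracket total.} Fix $k$ and take $x,y\in X_{A_k}$ with $\sum_{j\in A_k}x_j=\sum_{j\in A_k}y_j$. Complete both with a common $z\in X_{A_k^{\mathsf c}}$. Dominance over deterministic prospects gives $\delta_{(x,z)}\sim\delta_{(y,z)}$. Because $v$ is strictly increasing and the other arguments are unchanged, this forces $u^{A_k}(x)=u^{A_k}(y)$. So $u^{A_k}(x)=v^{A_k}(\sum_{j\in A_k}x_j)$ for a function $v^{A_k}$ on $[-|A_k|b,|A_k|b]$. This function is continuous and strictly increasing: strict monotonicity comes from consequence monotonicity along one coordinate, and continuity from continuity of $u^{A_k}$ along a path such as moving a single coordinate. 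It follows that $\bE^p_{A_k}u^{A_k}=\bE^{f[p_{A_k}]}v^{A_k}$.

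\textbf{Step 2: change variables to certainty equivalents.} Set $c_k=\text{CE}(f[p_{A_k}],v^{A_k})$, so that $U(p)=v\big(v^{A_1}(c_1),\dots,v^{A_n}(c_n)\big)=:W(c_1,\dots,c_n)$, where $W$ is strictly increasing. Every vector $(c_1,\dots,c_n)$ with $c_k$ in its range is attained by a deterministic $x$ whose bracket totals are exactly $c_k$. Dominance over deterministic prospects then gives
\[
W(c)\geqslant W(c')\iff \sum_k c_k\geqslant\sum_k c'_k .
\]
Hence $W$ is a strictly increasing transformation of $\sum_k c_k$, and $\succsim$ is represented by $\sum_k \text{CE}(f[p_{A_k}],v^{A_k})$, which is (\ref{eq_money3}).

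The main obstacle is Step 1, specifically making sure the bracket-level index of a generalized bracketing representation is an expected utility over $X_{A_k}$ alone. This requires that $u^{A_k}$ carry no dependence on ancestors, which holds because every vertex in $c(o)$ has no ancestor dimensions and no children. I also need enough range coverage in Step 2, which I would handle with the intermediate-value property of continuous sums.
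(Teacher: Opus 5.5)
Your proof is correct and follows essentially the same route as the paper. Dominance over deterministic prospects first forces each bracket index $u^{A_k}$ to depend only on the bracket total, which yields $v^{A_k}$; applied again, it forces the root aggregator, viewed as a function of the bracket certainty equivalents, to be a strictly increasing transformation of their sum. Your Step 2 spells out the point that every vector of certainty equivalents is attained by a deterministic consequence, which the paper compresses into a single line.
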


\begin{proof}[Proof of \propref{prop_OA_NB}]
  The necessity of axioms is clear because $U(\delta_x)=u(\sum x_i)$.  

  For sufficiency, suppose $\succsim$ has a generalized bracketing representation:  
    \begin{equation*} 
        U(p) = \hat u(\bE^p_{A_1}\:\hat v^{A_1}(x_{A_1}),\dots,\bE^p_{A_n}\:\hat v^{A_n}(x_{A_n})).
    \end{equation*}

 By dominance over deterministic prospects, for every $i=1,\dots,n$  and $x_{A_i},y_{A_i}\in X_{A_i}$ with $\sum_{l\in A_i} x_l = \sum_{l\in A_i} y_l$, we have $\hat v^{A_i}(x_{A_i}) = \hat v^{A_i}(y_{A_i})$. Then there exists a continuous and strictly increasing function $v^{A_i}$ such that $\hat v^{A_i}(x_{A_i})=v^{A_i}(\sum_{l\in A_i} x_l)$ and hence $\bE^p_{A_i}\:\hat v^{A_i}(x_{A_i}) = \bE^{f[p_{A_i}]}v^{A_i}(x)$. Again by dominance over deterministic prospects, we can find a continuous and strictly increasing function $u$ such that $\hat u(v^{A_1}(x_{A_1}),\:\dots,\:v^{A_n}(x_{A_n})) = u(\sum_{i\geqslant 1}x_i)$. Hence, the representation   can be rewritten as (up to a monotone transformation) 
\begin{equation*}
    \hat U(p)=  \sum_{i=1}^n\text{CE}(f[p_{A_i}],\:v^{A_i}).
\end{equation*}
 This completes the proof.   
\end{proof}

\propref{prop_OA_NB} can be readily extended to provide an axiomatization for  (\ref{eq_money}).

\subsection*{Online Appendix \RN{3}: Additional Results}\label{OA_results}

\subsubsection*{\RN{3}.1: Uniqueness Properties of Bernoulli Indices}

Suppose that $\succsim$ has an SMEU representation $(\cT,(u^A)_{A\in\cP_o})$. We now explore the uniqueness properties of Bernoulli indices $(u^A)_{A\in\cP_o}$ by fixing the RCT $\cT$. For each $A\in\cP$, we say that the function $u^A$ is \textit{normalized} if $u^A:X_{\bar a(A)}\times X_A\times[0,1]^{c(A)}\to[0,1]$ and, for every $x\in X_{\bar a(A)}$, the section $u^A(x,\cdot)$ is surjective. Similarly, we say that the function $u^o$
is \textit{normalized} if $u^o$ is a surjective mapping  from $[0,1]^{c(o)}$ to $[0,1]$. 
An SMEU representation $(\cT,(u^A)_{A\in\cP_o})$ is \textit{normalized} if $u^A$ is normalized for all $A\in \cP_o$. The following result shows that a normalized SMEU representation  exists and $u^A$ is unique for all $A\in\cP$. Note that the root aggregator $u^o$ need not be unique under this normalization.

\begin{proposition}\label{prop_unique_u}
If $\succsim$ has an SMEU representation with RCT $\cT$, then it has a normalized SMEU representation with the same RCT $\cT$. If $(\cT,(u^A)_{A\in\cP_o})$ and $(\cT,(\hat u^A)_{A\in\cP_o})$ are both normalized SMEU representations of $\succsim$, then $u^A=\hat{u}^A$ for all $A\in\cP$.
\end{proposition}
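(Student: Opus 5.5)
Existence and uniqueness are handled separately, working recursively from the leaves of $\cT$ upward in the order $H_1,H_2,\dots$ used in Step 3 of the proof of \thmref{thm_main}, and finally the root.

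\emph{Existence.} Start from an arbitrary SMEU representation $(\cT,(u^A)_{A\in\cP_o})$. For each $A\in\cP$ and $x\in X_{\bar a(A)}$, condition 2 of \defref{def_heu} makes $y\mapsto U^A_x(\delta_{(z,y)})$ continuous and strictly increasing on the compact box $X_{A\cup\bar d(A)}$. Its range is therefore an interval $[m_x,M_x]$ with $m_x<M_x$, attained at $\lx$ and $\ux$. Define $\hat U^A_x=(U^A_x-m_x)/(M_x-m_x)$. By condition 1 and \lemmaref{lemma_monotone}, every lottery lies between $\lx$ and $\ux$, so $\hat U^A_x$ takes values in $[0,1]$. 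Connectedness of $X$ together with continuity gives surjectivity onto $[0,1]$ already on degenerate lotteries. Next define $\hat u^A(x,y,a)$ as the affine rescaling of $u^A(x,y,\cdot)$ obtained after substituting the rescaled child values, i.e.\ $\hat u^A(x,y,(a_B)_B)=\big(u^A(x,y,(m_{(x,y)}^B+a_B(M^B_{(x,y)}-m^B_{(x,y)}))_B)-m_x\big)/(M_x-m_x)$. Because the rescaling is affine and constant in $y$ for fixed $x$, it commutes with $\bE^p_{A|x}$, so \eqref{kp} holds for the hatted objects. The domain question needs one more step. Values $a\in[0,1]^{c(A)}$ not realized by children may lie outside the original domain's relevant range; I would restrict to realized tuples and extend continuously, truncating into $[0,1]$. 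The realized tuples include all of $[0,1]^{c(A)}$, because by the bracket-separability structure each child's value can be chosen independently on degenerate lotteries. For $u^o$, compose with a monotone map onto $[0,1]$.

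\emph{Uniqueness for $A\in\cP$.} I argue by induction from the leaves. Suppose two normalized representations share $\cT$. For a leaf $A$ and fixed $x$, both $U^A_x$ and $\hat U^A_x$ are expected utilities representing the same conditional preference $\succsim_z$ (condition 1), and condition 1 forces this preference to be independent of $z$ beyond $x$. By the uniqueness in \lemmaref{lemma_conditional} or \lemmaref{lemma_linear}, they are positive affine transformations of each other. Normalization requires both to map onto $[0,1]$, with the minimum at $\lx$ and the maximum at $\ux$, which forces the transformation to be the identity; hence $u^A=\hat u^A$. For an interior $A$, the induction hypothesis gives $U^B_{(x,y)}=\hat U^B_{(x,y)}$ for all children. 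By \lemmaref{lemma_linear}, $U^A_x$ and $\hat U^A_x$ are again unique up to a positive affine map, since both are linear across mixtures with disjoint supports in the conditioning dimension, which the recursion \eqref{kp} ensures. Normalization pins the map to the identity, so $U^A_x=\hat U^A_x$. Evaluating at lotteries whose $A$-marginal is $\delta_y$ then gives $u^A(x,y,a)=\hat u^A(x,y,a)$ for every realized $a$, and realized $a$ exhaust $[0,1]^{c(A)}$ by surjectivity.

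\emph{Main obstacle.} The delicate step is showing that every $a\in[0,1]^{c(A)}$ is realized by some $p$ with $p_A=\delta_y$. This is needed both for uniqueness and for $u^A$ to be determined on its whole domain. When $c(A)$ has a single child it follows from surjectivity of the child. When there are several children, I would use the product lottery $(\delta_x,\delta_y,p^{B_1},\dots)$, so that each child's value depends only on its own marginal. Independence across children then gives the full product, and this is where the definition of conditional marginals in \eqref{kp} must be checked carefully.
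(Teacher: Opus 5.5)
Your proposal is correct and follows essentially the same route as the paper. Existence is a leaf-to-root affine rescaling of each $U^A_x$, with the inverse child rescalings substituted into $u^A$ and a monotone transformation at the root. Uniqueness is by induction, using affine uniqueness of the conditional representation (vNM at leaves, Lemma~\ref{lemma_linear} at interior vertices), pinned to the identity by surjectivity onto $[0,1]$.

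You also make explicit a step the paper leaves implicit: degenerate lotteries already realize every continuation vector in $[0,1]^{c(A)}$, because distinct children depend on disjoint sets of coordinates. This makes your proposed continuous extension and truncation unnecessary.
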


\begin{proof}[Proof of \propref{prop_unique_u}] 
We first establish the existence of a normalized representation. Suppose that $(\mathcal{T}, (u^A)_{A \in \mathcal{P}_o})$ is an SMEU representation for $\succsim$. We construct a normalized representation $(\mathcal{T}, (\hat{u}^A)_{A \in \mathcal{P}_o})$ inductively, moving upward through the tree. For each $A \in \mathcal{P}$ with $c(A) = \emptyset$ and $x \in X_{\bar{a}(A)}$, the section $u^A(x, \cdot)$ is an expected-utility index. We apply a positive affine transformation $\alpha_A(x) u^A(x, \cdot) + \beta_A(x)$ with $\alpha_A(x) > 0$ to derive $\hat{u}^A(x, \cdot)$ such that its image is $[0,1]$. For an internal vertex $A \in \mathcal{P}$, assume inductively that for all children $B \in c(A)$, the normalized indices $\hat{u}^B$ and their associated affine coefficients $\alpha_B, \beta_B$ have been defined. To preserve the represented preference, the parent aggregator must absorb the inverse of these transformations. Define an intermediate parent index by evaluating the original $u^A$ at the inversely transformed continuation coordinates: $u^A_{mid}(x, y_A, \hat{v}) = u^A(x, y_A, (\frac{\hat{v}_B - \beta_B(x, y_A)}{\alpha_B(x, y_A)})_{B \in c(A)})$, where $\hat{v}_B \in [0,1]$. We then apply a positive affine transformation $\alpha_A(x) u^A_{mid}(x, \cdot, \cdot) + \beta_A(x)$ to map the image of this intermediate function to $[0,1]$, yielding the normalized index $\hat{u}^A$. Finally, for the root $o$, we apply a strictly monotone transformation to $u^o$ that absorbs the inverse affine transformations of its children $B \in c(o)$ to obtain $\hat{u}^o$. This generates a valid, normalized SMEU representation.

Next, we establish uniqueness. Suppose that $(\mathcal{T}, (u^A)_{A \in \mathcal{P}_o})$ and $(\mathcal{T}, (\hat{u}^A)_{A \in \mathcal{P}_o})$ are two normalized SMEU representations for $\succsim$. We proceed by induction, moving upward through the tree. For each $A \in \mathcal{P}$ with $c(A) = \emptyset$ and $x \in X_{\bar{a}(A)}$, both $u^A(x, \cdot)$ and $\hat{u}^A(x, \cdot)$ represent the same conditional preference over $\Delta(X_A)$. By the uniqueness of expected utility, $\hat{u}^A(x, \cdot) = \alpha u^A(x, \cdot) + \beta$ for some $\alpha > 0$. Because both indices have an exact image of $[0,1]$, we must have $\alpha = 1$ and $\beta = 0$, implying $u^A = \hat{u}^A$. For an internal vertex $A \in \mathcal{P}$, assume inductively that $u^B = \hat{u}^B$ for all $B \in c(A)$. Because the child scales are fixed and identical, both $u^A(x, \cdot, \cdot)$ and $\hat{u}^A(x, \cdot, \cdot)$ take identical continuation vectors $v \in [0,1]^{c(A)}$ as inputs. By \lemmaref{lemma_linear}, $U_x$ is unique up to a positive affine transformation. Therefore, the expected-utility indices for $A$ must be related by a single, global positive affine transformation across the entire joint domain $X_A \times [0,1]^{c(A)}$. Again, because both $u^A(x, \cdot, \cdot)$ and $\hat{u}^A(x, \cdot, \cdot)$ are surjective onto $[0,1]$, this affine transformation must be the identity. Thus $u^A = \hat{u}^A$ for all $A \in \mathcal{P}$, which completes the proof.\end{proof}


\subsubsection*{\RN{3}.2: Exogenous Order on Dimensions}\label{time_order}

In the recursive and generalized recursive representations in Section \ref{sect_special}, the order of dimensions need not be exogenously fixed. For instance, when $N=2$, both RCTs with $c(o)=\{\{1\}\}$ and $c(o)=\{\{2\}\}$ are associated with some recursive representation. However, in some applications, a natural sequentiality is already built into the primitive. For example, in intertemporal settings,   period $t$ precedes period $t+1$ for every $t$. In this case, it may be reasonable to sharpen predictions of the theory by imposing the following additional axioms.\footnote{We can easily accommodate other exogenous orders by relabeling the dimensions.} 

\begin{axiom}\label{axiom_EO}
    (Exogenous Order) If $i\leqslant j$, then $i\hra j$.
\end{axiom}
\begin{axiom}\label{axiom_EO2}
    (Exogenous Linear Order) For all $i,j\in I$, $i\hra j\iff i\leqslant j$.
\end{axiom}

The following corollary characterizes the implications of Axioms \ref{axiom_EO} and \ref{axiom_EO2}. Its proof is immediate and is omitted.

\begin{corollary}\label{prop_exogenous}
    Suppose the preference $\succsim$ has an SMEU representation. It satisfies exogenous order if and only if it has a generalized recursive representation in which $j\in H(i)\cup \bar d(H(i))$ for all $j\geqslant i$. It satisfies  exogenous linear order if and only if it has a recursive representation in which $c(o)=\{\{1\}\}$ and $c(\{i\})=\{\{i+1\}\}$ for all $i=1,\dots,N-1$.   
\end{corollary}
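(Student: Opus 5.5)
The plan is to work with Definitions \ref{def_heu}, \ref{def_special1}, and Axioms \ref{axiom_EO}--\ref{axiom_EO2} directly, using the characterizations already established in \thmref{thm_special} and \thmref{thm_special2} together with Lemmas \ref{lemma_nece_hra}, \ref{lemma_nece_RI}, and \ref{lemma_transitive}.

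For the first statement, suppose first that $\succsim$ satisfies exogenous order. Then for any $i,j$, either $i\leqslant j$ or $j\leqslant i$, so $i\hra j$ or $j\hra i$; hence $\hra$ is complete. The natural next move is to reuse the construction in the sufficiency part of \thmref{thm_special2}: pick $l_1\in M(I)$, $l_k\in M(I\setminus\{l_1,\dots,l_{k-1}\})$, and run Step 3 of the proof of \thmref{thm_main} on the resulting chain. I would make the choice canonical by always taking $l_k$ to be the \emph{smallest} index in the relevant $M(\cdot)$. Since $i\leqslant j$ implies $i\hra j$, the smallest remaining index $m$ satisfies $m\hra$ every remaining dimension, so it lies in $M$. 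Consequently $l_k=k$ and the chain is $c(o)=\{\{1\}\}$, $c(\{k\})=\{\{k+1\}\}$, which is a generalized recursive representation in which $j\in H(i)\cup\bar d(H(i))$ whenever $j\geqslant i$.

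Conversely, given a generalized recursive representation with that property, \lemmaref{lemma_nece_hra} yields $i\hra j$ for all $j\geqslant i$, and reflexivity handles $i=j$. This is exactly exogenous order.

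For the second statement, exogenous linear order makes $\hra$ reflexive, complete, transitive and antisymmetric, so it is a linear order whose enumeration is the identity. \thmref{thm_special} then gives a recursive representation with a unique RCT. Its proof shows that this RCT is the chain ordered by $\hra$, namely $c(o)=\{\{\pi(1)\}\}$ with $\pi(i)\hrra\pi(j)$ for $j>i$. Here $\pi$ is the identity, which gives the stated RCT.

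For the converse, a recursive representation with that chain gives $i\hra j$ for $i\leqslant j$ by \lemmaref{lemma_nece_hra}. It also gives $\hra$ as a linear order by \thmref{thm_special}. If $j\hra i$ held for some $j>i$, antisymmetry would force $i=j$, a contradiction; hence $i\hra j\iff i\leqslant j$.

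The only step needing care is the first direction of the first statement. We must verify that the chain built with singletons is legitimate, which is guaranteed because Step 3 only requires $l_k\hra$ all later dimensions. We must also check that the stated property is read with $H(i)=\{i\}$. If the representation is meant with merged vertices, \lemmaref{lemma_modify} lets us merge mutually related consecutive dimensions without breaking the property, so either reading works.
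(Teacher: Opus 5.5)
Your proposal is correct and is essentially the immediate argument the paper has in mind (the paper omits this proof). The first part runs the chain construction from the proof of Theorem~\ref{thm_special2} with $l_k=k$ and uses Lemma~\ref{lemma_nece_hra} for the converse; the second part uses Theorem~\ref{thm_special}. One simplification: exogenous linear order implies exogenous order, so your chain $1\to 2\to\cdots\to N$ from the first part already lies in $\mathbb{T}(\succsim)$, and the uniqueness of the RCT in Theorem~\ref{thm_special} then forces it to be that RCT without reopening that theorem's proof; the closing remark about merged vertices is unnecessary, since the singleton chain already witnesses the existence claim.
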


\subsubsection*{\RN{3}.3: The Number of RCTs}\label{number}
In this section, we derive the formula for $T(N)$---the number of RCTs with $N$ dimensions. We first consider the number of ways to partition $I$ into exactly $k\in \{1,\dots,N\}$ nonempty subsets, which is given by the Stirling number of the second kind  $S(N,k)$:  
\[S(N,k)=\frac{1}{k!}\sum_{i=0}^k(-1)^{k-i}\binom{k}{i}i^N.\]

When $|\cP|=k$, so that there are $k$ vertices in addition to the root $o$, the number of rooted trees on these $k+1$ labeled vertices with root fixed at $o$ is $(k+1)^{k-1}$ by Cayley's formula. To see this, note that when $k=1$, there is only one ($(1+1)^{1-1}=1$) rooted tree;  for $k=2$, with nonroot vertices $A$ and $B$, there are three rooted trees, since $c(o)$ can be ${A}$, ${B}$, or ${A,B}$. Combining the two formulas yields the following formula for the number of RCTs with $N$ dimensions: 
\[T(N) = \sum_{k=1}^N\: (k+1)^{k-1}\:S(N,k). \]
For example, $T(1)=1$, $T(2)=4$, $T(3)=26$, and $T(4)=243,\ldots$

\subsubsection*{\RN{3}.4: Falsification of RCT Perfection}\label{falsify}
Although the two binary relations $\hra$ and $\perp$ cannot be exactly elicited from finite datasets, one of our key behavioral axioms, RCT perfection, can be falsified with finite data. To see this, note that to falsify RCT perfection it suffices to find $i,j\in I$ such that $i\not\hra j$, $j\not\hra i$, and $i\not\perp j$. For instance, if we can find $z\in X_{\{i,j\}^\mathsf{c}}$ and $x^1,x^2,y^1,y^2\in X_{\{i,j\}}$ such that $x^1_l\neq x^2_l$, $y^1_l\neq y^2_l$ for $l=i,j$, $\delta_{x^1}\sim_z \delta_{y^1}, \delta_{x^2}\sim_z \delta_{y^2}$, and $\frac12\delta_{x^1}+\frac12\delta_{x^2}\not\sim_z \frac12\delta_{y^1}+\frac12\delta_{y^2}$, then we can conclude that $i\not\hra j$ and $j\not\hra i$. If, in addition, we can find $p\in \Delta(X_{\{i,j\}})$ such that $p\not\sim_z(p_i,p_j)$, we can conclude that $i\not\perp j$.

\subsection*{Online Appendix \RN{4}: Omitted Proofs}\label{OA_proofs}

For completeness, we restate the lemmas whose proofs are omitted from the main text. To streamline the exposition, the proofs of lemmas  used for the necessity part of \thmref{thm_main} (i.e., Lemmas \ref{lemma_nece_singleton}-\ref{lemma_nece_hra}) are given in
\hyperref[OA_proof_necessity]{Online Appendix \RN{4}.1}; the remaining proofs are collected in \hyperref[OA_proof_lemmas]{Online Appendix \RN{4}.2}.  

\subsubsection*{\RN{4}.1: Proof of the Necessity of Axioms in \thmref{thm_main}} \label{OA_proof_necessity}

Suppose that $\succsim$ has an SMEU representation 
 $(\cT,(u^A)_{A\in\cP_o})$ with $\cT=(\cP,E)$. The first lemma shows that the conditional preference of $\succsim$ on any nonempty subset of dimensions also has an SMEU representation.

\begin{lemma}\label{lemma_nece_restrict}
If $\succsim$ has an SMEU representation, then for any nonempty $I'\subseteq I$ and $z\in X_{{I'}^\mathsf{c}}$, the conditional preference $\succsim_z$ has an SMEU representation.
 \end{lemma}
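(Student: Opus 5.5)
The plan is to build the representation of $\succsim_z$ directly from the given SMEU representation $(\cT,(u^A)_{A\in\cP_o})$ of $\succsim$, by plugging in the fixed values $z\in X_{{I'}^{\mathsf c}}$ and pruning the tree. First, construct an RCT $\cT'=(\cP',E')$ on $I'$. Its partition is $\cP'=\{A\cap I': A\in\cP,\ A\cap I'\neq\emptyset\}$. For the edges, each nonempty $A\cap I'$ is attached to the nearest ancestor $B$ of $A$ in $\cT$ with $B\cap I'\neq\emptyset$, or to $o$ if no such ancestor exists. The result is an out-tree, because it is obtained from $\cT$ by contracting every vertex that loses all its dimensions into its parent. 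Moreover, ancestor relations among surviving dimensions are preserved: $\bar a^{\cT'}(A\cap I')=\bar a^{\cT}(A)\cap I'$.

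Second, define the Bernoulli indices by substitution. For a vertex $A$ of $\cT$, write $U^A_{(x,z_{\bar a(A)\setminus I'})}(\delta_{z_{\cdot}},p)$ for the original recursive utility evaluated at the lottery $(p,\delta_z)$. Because the dimensions outside $I'$ are degenerate at $z$, each conditional distribution $p_{A|x}$ splits as $\delta_{z_{A\setminus I'}}$ times $p'_{A\cap I'|x_{I'}}$. Hence the expectation in (\ref{kp}) reduces to an expectation over $A\cap I'$ alone.

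An emptied vertex $A$ (with $A\subseteq{I'}^{\mathsf c}$) contributes no expectation. Its utility is therefore a deterministic function of its children's utilities, namely $u^A(x,z_A,(U^B)_{B\in c(A)})$. This function can be absorbed into the index of the nearest surviving ancestor, or into $u^o$. Concretely, define $u'^{A\cap I'}(x',y',(V^{C})_{C\in c'(A\cap I')})$ by composing $u^A$ with the chain of deterministic maps $u^B(\cdot,z_B,\cdot)$ for the contracted descendants, evaluated at the fixed $z$ coordinates. Also absorb the parameters $z_{\bar a(A)\setminus I'}$ into the index's dependence on $x'$. With these definitions, $U'^{A\cap I'}_{x'}(p')$ equals $U^A_{(x',z)}((p',\delta_z))$ by induction from the leaves upward.

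Third, verify the two conditions of Definition~\ref{def_heu} for the new representation. Conditioning $\succsim_z$ further on $w\in X_{I'\setminus(A'\cup\bar d'(A'))}$ gives $\succsim_{(w,z)}$. The set $(A\cup\bar d(A))^{\mathsf c}$ in $\cT$ contains $I'\setminus(A'\cup\bar d'(A'))$ together with ${I'}^{\mathsf c}$, because the descendants of $A'$ in $\cT'$ are exactly the surviving descendants of $A$. Condition 1 is then inherited, possibly after passing to the surviving vertex whose subtree matches. Condition 2 follows because fixing coordinates of a continuous, strictly increasing function preserves both properties.

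The main obstacle is the bookkeeping around contracted vertices that have several children, where the deterministic aggregator must be pushed up into the parent. The delicate point is making sure that Condition 1 still holds at the root $o$ when the root's only surviving structure comes through such a merge. If the root ends up with children obtained from a contracted vertex, I would set $u'^o$ equal to the composition of $u^o$ with those deterministic maps evaluated at $z$. Representation at $o$ is then immediate, since $U'^o(p')=U^o((p',\delta_z))$.
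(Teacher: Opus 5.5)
Your proposal is correct and takes essentially the same approach as the paper. Both use the partition $\{A\cap I'\}$, attach each surviving vertex to its nearest surviving ancestor, and obtain the Bernoulli indices by substituting $z$ and composing the deterministic evaluations of emptied vertices into the nearest surviving ancestor or into $u^o$. Your explicit composition for the non-root indices also spells out what the paper leaves implicit when it defines $\hat u^A$ through the recursion for $\hat U^A_{z'}(p)=U^{B_A}(p,\delta_z)$.
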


\begin{proof}[Proof of \lemmaref{lemma_nece_restrict}] 
Fix any $I'\subseteq I$ and $z\in X_{{I'}^\mathsf{c}}$. We omit the dependence of functions on $z$ when there is no risk of confusion. Define $\cP'=\{A\subseteq I':A\neq \emptyset, A=I'\cap B \text{~for some~} B\in \cP\}$. Clearly, $\cP'$ is a partition of $I'$. For each $A'\in\cP'$, define $d'(A')=\{B'\in \cP': \text{~there exist~}A, B\in\cP \text{~with~} A'=A\cap I', B'=B\cap I', \text{~and~} B\in d(A)\}.$ Similarly, define $d'(o)=\{B'\in \cP': \text{~there exists~}  B\in\cP \text{~with~}   B'=B\cap I', \text{~and~} B\in d(o)\}$.
Define edges by
\[E'=\{(A,B): A,B\in \cP'_o , B\in d'(A),\text{~and there is no~}C\in \cP' \text{~s.t.~} C\in d'(A), B\in d'(C) \}.\]
It is straightforward to check that $\cT'=(\cP',E')$ is an RCT on $I'$. 
We can similarly define mappings $a'$, $\bar a'$, $c'$, and $ \bar d'$. 
Fix any $A\in \cP'$ and $z'\in X_{\bar a'(A)}$. Denote by $B_A$ the element in $\cP$ that includes $A$. Then $\bar a'(A)\subseteq \bar a(B_A)$. Define $\hat{U}^A_{z'}(p)$ for each $p\in \Delta(X_{I'})$ by 
\begin{equation*}
    \hat{U}^A_{z'}(p)= {U}^{B_A}_{(z', z_{\bar a(B_A)\backslash {I'}})}(p, \delta_z).
\end{equation*}
For the root $o$, we define the overall evaluation simply as $\hat{U}^o(p) = U^o(p, \delta_z)$. Then we recursively define the non-root indices $\hat{u}^{A}: X_{\bar{a}^{\prime}(A)} \times X_{A} \times \mathbb{R}^{c^{\prime}(A)} \rightarrow \mathbb{R}$ by letting   \begin{equation*}
   \hat{U}^A_{z'}(p)=\bE^p_{A|z'}\:\hat{u}^A(\:z',\:y,\:(\hat{U}^B_{(z',y)}(p))_{B\in c'(A)}\:).
    \end{equation*}
and define the root aggregator $\hat{u}^o: \mathbb{R}^{c'(o)} \rightarrow \mathbb{R}$ by composing the original root aggregator $u^o$ with the deterministic utility evaluations of any original vertices that were contracted because their intersection with $I'$ was empty.

    We can verify that $(\cT',(\hat{u}^A)_{A\in\cP'\cup\{o\}})$ is an SMEU representation of $\succsim_z$.
\end{proof}

The next lemma shows that if $\succsim$ has an SMEU representation, then it must have one in which all nonroot vertices are singleton sets.

\begin{restatedlemma}{lemma_nece_singleton}
If $\succsim$ has an SMEU representation, then there exists $(\hat\cP,\hat E)\in \mathbb{T}(\succsim)$ in which $\hat\cP=\{\{1\},\dots,\{N\}\}$.

\end{restatedlemma}

\begin{proof}[Proof of \lemmaref{lemma_nece_singleton}] 
Let $(\cT,(u^A)_{A\in\cP_o})$ be an SMEU representation of $\succsim$  with $\cT=(\cP,E)$.
Consider any $A\in\cP$ with $|A|\geqslant  2$. Denote $A=\{i_1,\dots,i_m\}$. Let $\hat\cP:=\{\{i_1\},\dots,\{i_m\}\}\cup \cP \setminus\{A\}$ and define $\hat E$ as follows: (i) $(B,B')\in \hat E$ if $(B,B')\in E$ with $B,B'\neq A$; (ii) $(B,\{i_1\})\in \hat E$ if $(B,A)\in E$; (iii) $(\{i_m\},B)\in \hat E$ if $(A,B)\in E$; and (iv) $(\{i_k,i_{k+1}\})\in \hat E$ for all $k=1,\dots,m-1$. It is clear that $(\hat \cP,\hat E)$ is an RCT. Define $\hat c$ and  $\bar{\hat a}$ accordingly. For any $B\in \cP_o$ with $B\neq A$, set $\hat u^B=u^B$. Since $\hat c(\{i_m\})=c(A)$, 
define $\hat u^{i_m}:X_{\bar{\hat a}(\{i_m\})}\times X_{i_m}\times\bR^{\hat c(\{i_m\})}\to\bR$ by $\hat u^{i_m}(x,y,a)= u^A(x_{\bar a(A)}, (x_{i_1},\dots,x_{i_{m-1}},y),a)$. For each $k=1,\dots,m-1$, note that $\hat c(\{i_k\})=\{i_{k+1}\}$. Define $\hat u^{i_k}:X_{\bar{\hat a}(\{i_k\})}\times X_{i_k}\times\bR\to\bR$ by $\hat u^{i_k}(x,y,a)=a$. We can verify that $U^A_x(p)=\hat U^{i_1}_x(p)$ for all $p\in \Delta(X)$ and $x\in X_{\bar a(A)}$ and that $(\hat\cP, \hat E,(\hat u^A)_{A\in\hat\cP_o})$ is also an SMEU representation of $\succsim$. Repeating the above procedure for all $A'\in \cP$ with $|A'|\geqslant  2$ proves the result.
\end{proof}

 Define $U_x^A$ as in \defref{def_heu} for all $A\in\cP_o$ and $x\in  X_{\bar a(A)}$. 
 Since $U^o(\delta_z)$ is continuous and strictly increasing in $z\in X$ and $p\succsim q$ if and only if $U^o(p)\geqslant  U^o(q)$ for all $p,q\in\Delta(X)$, Axiom \ref{axiom_WO}  and Axiom \ref{axiom_M}  hold.  The following lemma shows that $\succsim$ satisfies Axiom \ref{axiom_C}.

 \begin{lemma}\label{lemma_continuity}
Suppose that $\succsim$ has an SMEU representation with RCT $\cT$. The following  hold: \emph{\text{(i)}} For any $A\in\cP_o$ with $c(A)=\emptyset$, the function $u^A(x,y)$ is continuous and strictly increasing in $y\in X_A$ for all fixed $x\in X_{\bar a(A)}$.\: \emph{\text{(ii)}} For any $A\in\cP_o$ with $c(A)\neq \emptyset$, the function $u^A(x,y,a)$ is continuous and strictly increasing in $a\in \bigtimes_{B\in c(A)}U^{B}_{(x,y)}(X)$ for all  $x\in X_{\bar a(A)}$ and $y\in X_A$.
\emph{\text{(iii)}} For any $A\in\cP_o$, $x\in X_{(A\cup \bar d(A))^\mathsf{c}}$ and $p\in \Delta(X_{A\cup \bar d(A)})$, there exists $z\in X_{A\cup \bar d(A)}$ such that $p\sim_x \delta_z$.\: \emph{\text{(iv)}} The preference $\succsim$ satisfies Axiom \ref{axiom_C}.
 \end{lemma}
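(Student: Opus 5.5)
We need to prove Lemma \ref{lemma_continuity}, items (i) through (iv), working from an SMEU representation $(\cT,(u^A)_{A\in\cP_o})$. The plan is to derive everything from condition 2 of Definition \ref{def_heu}, namely continuity and strict monotonicity of $U^A_x(\delta_{(z,y)})$ in $y$, combined with the recursive equation (\ref{kp}), inducting upward from the leaves.

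\emph{Item (i).} If $c(A)=\emptyset$, then for deterministic $y$ we have $U^A_x(\delta_{(z,y)})=u^A(x,y)$. The claim therefore follows directly from condition 2.

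\emph{Item (ii).} If $c(A)\neq\emptyset$, evaluating at degenerate lotteries gives
\[
U^A_x(\delta_{(z,y,w)})=u^A\big(x,y,(U^B_{(x,y)}(\delta_{(z,y,w)}))_{B\in c(A)}\big),
\]
where $w$ ranges over $X_{\bar d(A)}$. By condition 2 applied to each child $B$, the map $w_B\mapsto U^B_{(x,y)}(\delta_{(\cdot,w_B)})$ is continuous and strictly increasing on the connected box $X_{B\cup\bar d(B)}$. Its image is therefore an interval, and this interval coincides with $U^B_{(x,y)}(X)$. For any target value $a$ in this product of intervals, I will select $w$ so that the child utilities equal $a$.

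\begin{itemize}
\item \emph{Monotonicity.} Raising one coordinate $a_B$ amounts to raising $w_B$ while keeping the other blocks fixed. Since the children's subtrees are disjoint, the other continuation values do not change. Strict monotonicity of $U^A_x$ in $w$ then yields strict monotonicity of $u^A$ in $a_B$.
\item \emph{Continuity.} I will build a continuous selection $a\mapsto w(a)$, for example by moving $w_B$ along the diagonal segment from $\lx$ to $\ux$, on which the utility is a strictly increasing continuous bijection onto its image. Then $u^A(x,y,a)=U^A_x(\delta_{(z,y,w(a))})$ is a composition of continuous maps.
\end{itemize}

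\emph{Item (iii).} I will prove this by induction on the height of $A$.

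\begin{itemize}
\item \emph{Leaf.} The value $U^A_x(p)$ is an expectation of $u^A(x,\cdot)$, which is continuous and increasing on a box. Its value therefore lies between $u^A(x,\lx_A)$ and $u^A(x,\ux_A)$, and the intermediate value theorem along the diagonal gives $z$ with $p\sim_x\delta_z$.
\item \emph{Internal vertex.} By the inductive hypothesis, each conditional continuation lottery can be replaced by a certainty equivalent without changing the continuation values $U^B_{(x,y)}$. This reduces $p$ to a lottery supported on finitely many points $(y,w(y))$. Its value is an average of $u^A$ evaluated at those points, which again lies between the values at $\lx$ and $\ux$, and the intermediate value theorem along the diagonal finishes the step.
\item \emph{Root.} Here no expectation is taken. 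I replace each child by its certainty equivalent, which gives a deterministic point directly.
\end{itemize}

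\emph{Item (iv).} Closedness of $\{x:\delta_x\succsim p\}$ follows from continuity of $U^o$ on $X$, which is condition 2 at the root, since the set is the preimage $\{x: U^o(\delta_x)\ge U^o(p)\}$. The mixture part is the main obstacle. Mixing $p\alpha q$ changes the conditional distributions $p_{A|x}$ nonlinearly in $\alpha$: the conditional becomes
\[
\frac{\alpha p(x,\cdot)+(1-\alpha)q(x,\cdot)}{\alpha p(x)+(1-\alpha)q(x)}.
\]
\begin{itemize}
\item \emph{Interior $\alpha\in(0,1)$.} The supports do not change, the conditional weights are continuous rational functions of $\alpha$, and all inner expectations are finite sums. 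By induction up the tree using (ii), $U^o(p\alpha q)$ is therefore continuous in $\alpha$.
\item \emph{Endpoints, say $\alpha\to1$.} Points $x$ lying in $\supp(q_{\bar a(A)})\setminus\supp(p_{\bar a(A)})$ carry vanishing weight. Their conditionals are eventually constant equal to $q_{A|x}$. Their contribution is multiplied by a weight tending to $0$ and the utilities are bounded, so the limit matches $U^o(p)$, where by convention $\bE^p_{A|x}=0$ off the support. On points of $\supp(p)$, the conditional weights converge to those of $p$. Hence the limit again equals $U^o(p)$, and upper and lower sets in $\alpha$ are closed.

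The delicate check is boundedness. I will secure it by noting that $u^A$ is continuous on the compact product of closed intervals obtained in (ii).
\end{itemize}
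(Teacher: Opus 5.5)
Your proposal is correct and follows essentially the same route as the paper: (i) is read off directly from condition 2 of Definition \ref{def_heu}; (ii) realizes the child continuation values through degenerate consequences; (iii) uses a leaf-to-root induction that replaces conditional continuation lotteries with certainty equivalents; and (iv) proves continuity of $\alpha\mapsto U^o(p\alpha q)$ by induction, with off-support histories eliminated by vanishing weights multiplying bounded utilities. The only minor differences are that your diagonal-segment selection $a\mapsto w(a)$ makes explicit the continuous choice of $z^n\to z$ that the paper merely asserts in (ii), and that you obtain closedness of $\{x:\delta_x\succsim p\}$ directly from continuity of $x\mapsto U^o(\delta_x)$ rather than through (iii).
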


\begin{proof}[Proof of \lemmaref{lemma_continuity}] 
For any $A\in\cP_o$ with $c(A)=\emptyset$ and any $x \in X_{\bar a(A)}$, $z\in X_{A^\mathsf{c}}$ with $z_{\bar a(A)}=x$, we know $u^A(x,y)=U^A_x(\delta_{(z,y)})$, which is continuous and strictly increasing in $y\in X_A$ by \defref{def_heu}. This proves (i). 

For (ii), fix any $A\in \cP_o$ with $c(A)\neq \emptyset$, $x\in X_{\bar a(A)}$, and  $y\in X_A$. For any $a,a'\in \bigtimes_{B\in c(A)}U^{B}_{(x,y)}(X)$ with $a\geqslant  a'$ and $a\neq a'$, the strict monotonicity and continuity of $U^{B}_{(x,y)}(\delta_w)$ in dimensions $B\cup\bar d(B)$ guarantee the existence of $z,z'\in X$ such that $z_i=z'_i$ for all $i\not\in \bar d(A)$,  $z\neq z', z\geqslant  z', z_{A\cup \bar a(A)}= z'_{A\cup \bar a(A)}=(x,y)$,  $a=(U^{B}_{(x,y)}(\delta_{z}))_{B\in c(A)}$ and $a'=(U^{B}_{(x,y)}(\delta_{z'}))_{B\in c(A)}$. By the strict monotonicity of $U^A_x(\delta_w)$ in dimensions $A\cup\bar d(A)$, we have 
\[u^A(x,y,a)=U^A_x(\delta_z)>U^A_x(\delta_{z'})=u^A(x,y,a').\]
This implies that $u^A(x,y,a)$ is strictly increasing in $a$. 
For continuity, consider any sequence $(a^n)_{n\geqslant  1}$ with $a^n\in \bigtimes_{B\in  c(A)}U^{B}_{(x,y)}(X)$ for all $n\geqslant 1$ and $a^n\to a\in \bigtimes_{B\in  c(A)}U^{B}_{(x,y)}(X)$. By strict monotonicity and continuity of $U^{B}_{(x,y)}(\delta_w)$ in dimensions $B\cup\bar d(B)$, there exist $z^n,z\in X$  for $n\geqslant  1$ such that $z_i=z^n_i$ for all $i\not\in \bar d(A)$, $z_{A\cup \bar a(A)}= z^n_{A\cup \bar a(A)}=(x,y)$,  $z^n\to z$, $a=(U^{B}_{(x,y)}(\delta_{z}))_{B\in c(A)}$, and $a^n=(U^{B}_{(x,y)}(\delta_{z^n}))_{B\in c(A)}$. Then by continuity of $U^A_x(\delta_w)$ in dimensions $A\cup\bar d(A)$, we have 
\[u^A(x,y,a^n)=U^A_x(\delta_{z^n})\to U^A_x(\delta_{z})=u^A(x,y,a).\]
Hence,  $u^A(x,y,a)$ is continuous in $a$. This proves (ii).

Define $\cH_0=\{o\}$ and $\cH_k=\{A\in \cP: A\in c(B) \hbox{~for some~} B\in \cH_{k-1}\}$ for  $k\geqslant  1$. The iteration ends at $\cH_m$ where $c(B)=\emptyset$ for all $B\in \cH_m$ for some $m\geqslant  1$. Indeed, $\{\cH^k\}_{k=1}^m$ forms a partition of $\cP$.  We show (iii) by induction on $k$. 

 For any $A\in \cH^m, p\in \Delta(X)$, and $x\in \supp(p_{\bar a(A)})$, we have  $U^A_x(p)= \bE^p_{A|x}\:u^A(x,y)$. By (i), $u^A(x,y)$ is strictly increasing and continuous in $y$. Hence, there exists $z\in X$ such that $x= z_{\bar a(A)}$ and  $U^A_x(p)=U^A_x(\delta_z)$. Now assume that (iii) holds for all $A\in \bigcup_{i=k}^m \cH^i$ for some $k\leqslant m$. For any $A\in \cH^{k-1}, p\in \Delta(X)$, and $x\in \supp(p_{\bar a(A)})$,  we have 
 \[U^A_x(p)=\bE^p_{A|x}\:u^A(\:x,\:y,\:(U^B_{(x,y)}(p))_{B\in c(A)}\:).\]
Since $ c(A)\subseteq \bigcup_{i=k}^m  \cH^i$ and the utility function indexed by $B$ only depends on dimensions in $B\cup \bar a(B)\cup \bar d(B)$, 
the inductive hypothesis implies the existence of $z^y\in X$ for all $y\in \supp (p_{X_{A}})$ such that $z^y_{\bar a(A)}=x$, $z^y_A=y$, and $U^B_{(x,y)}(p)=U^B_{(x,y)}(\delta_{z^y})$ for all $B\in c(A)$. 
 We can rewrite $U^A_x(p)=\bE^p_{A|x}\:U^A_x(\delta_{z^y})$,
which is the expected utility of some lottery in $\Delta(X_{A\cup\bar d(A)})$ under a continuous and strictly increasing Bernoulli index. Hence, we can find some $z\in X$ such that $z_{\bar a(A)}=x$ and  $U^A_x(p)=U^A_x(\delta_z)$. This proves the inductive hypothesis for $A\in \cH^{k-1}$ and hence proves (iii) by induction.

 For (iv), Axiom \ref{axiom_C}  contains two parts. The second part follows from (iii) and the continuity of $U^o(\delta_y)$ on $y\in X$.   To prove the first part,  it suffices to show that for any $p,q\in \Delta(X)$, the function $f:[0,1]\rightarrow \bR$ defined by $f(\alpha)= U^o(p\alpha q)$ is continuous in $\alpha$. Again, we will prove it by induction on $\{\cH^k\}_{k=0}^m$. First, for any $A\in \cH^m$ and  $x\in \supp(p_{\bar a(A)})\cup \supp(q_{\bar a(A)})$, 
\begin{equation*}
    U^A_x(p\alpha q) = \bE^{p\alpha q}_{A|x} u^A(x,y) = \lambda_x(\alpha)   U^A_x(p) + (1-\lambda_x(\alpha) ) U^A_x(q),
\end{equation*}
in which $\lambda_x(\alpha) = \frac{\alpha p_{\bar a(A)}(x) }{\alpha p_{\bar a(A)}(x) + (1-\alpha)q_{\bar a(A)}(x)}$. Because of our convention that $U^A_x(r)=0$ for $x\not\in \supp(r_{\bar a(A)})$,  $U^A_x(p\alpha q)$ is continuous on the exact interval where $x$ has positive probability---which will be $[0,1)$, $(0,1]$, or $[0,1]$ depending on which of $p$ and $q$ support $x$. Because the consequence space is a product of compact intervals and the utility indices are continuous, all recursively defined continuation values are confined to compact ranges. 

Next we show that this boundedness resolves the endpoint issue at the parent vertex.  In the parent's expectation aggregator, the term associated with history $x$ is multiplied by its conditional marginal probability. If $x$ is supported by $p$ but not $q$, its probability vanishes as $\alpha$ goes to $0$. Because the conditional utility is bounded, the product of the vanishing probability and the conditional utility goes to zero in the limit. Since lotteries are simple, the parent's expectation is just a finite sum of these probability-weighted terms. Every term either remains perfectly continuous (if the history is supported by both lotteries) or converges smoothly to zero (if the history disappears). Since the off-support convention also yields zero at the endpoint, the aggregate finite sum at the parent level remains continuous on the entire closed interval $[0,1]$. By tracking these probability-weighted terms, the induction step holds together all the way to the root. For simplicity, we omit the endpoint issue in the following induction step.

Now assume  that $U^A_x(p\alpha q)$ is continuous in $\alpha$ for all $A\in \bigcup_{i=k}^m \cH^i $. Consider any $A\in  \cH^{k-1}$ and $x\in \supp(p_{\bar a(A)})\cup \supp(q_{\bar a(A)})$, 
\begin{align*}
    U^A_x(p\alpha q) =& \bE^{p\alpha q}_{A|x} u^A(x,y, \:(U^B_{(x,y)}(p\alpha q))_{B\in c(A)}\:)\\
    =& \sum_{y}\: (p\alpha q)_{A|x}(y)\cdot u^A(x,y, \:(U^B_{(x,y)}(p\alpha q))_{B\in c(A)}).
\end{align*}
Since $u^A(x,y, a)$ is continuous in $a$ by (ii) and $ c(A)\subseteq \bigcup_{i=k}^m \cH^i$, the inductive hypothesis implies that $u^A(x,y, \:(U^B_{(x,y)}(p\alpha q))_{B\in c(A)})$ is continuous in $\alpha$ for all $y\in \supp(p_A)\cup \supp(q_{A})$. In addition, $(p\alpha q)_{A|x}(y)=\lambda_x(\alpha)p_{A|x}(y) + (1-\lambda_x(\alpha))q_{A|x}(y)$ is continuous in $\alpha$. Hence, $U^A_x(p\alpha q)$ is continuous in $\alpha$ for $A\in  \cH^{k-1}$. By induction, $f(\alpha)= U^o(p\alpha q)$ is continuous.
\end{proof}

To verify Axiom \ref{axiom_one}, fix any $i\in A\in \cP$.
For any $p,q\in\Delta(X_i)$ and $x\in X_{-i}$, we know $p\succ_x q$ if and only if $U^A_{x_{\bar a(A)}}(p,\delta_{x})>U^A_{x_{\bar a(A)}}(q,\delta_{x})$. Since $U^A_{x_{\bar a(A)}}(p\alpha r,\delta_{x})=\alpha U^A_{x_{\bar a(A)}}(p,\delta_{x}) + (1-\alpha)U^A_{x_{\bar a(A)}}(r,\delta_{x})$ for all $r\in \Delta(X_i)$ and $\alpha\in (0,1)$, we conclude that $p\alpha r\succ_x q\alpha r$.

The following lemma is useful for proving the necessity of Axiom \ref{axiom_S}.

\begin{lemma}\label{lemma_nece_CN}
 Suppose that $\succsim$ has an SMEU representation.    If $A\perp B$, then $p\sim_z (p_A,p_B)$ for all $p\in \Delta(X_{A\cup B})$ and $z\in X_{(A\cup B)^\mathsf{c}}.$ Moreover, $\succsim_z$ admits an SMEU representation in which   for all $C\in c(o)$, 
 either $C\cup\bar d(C)\subseteq A$ or $C\cup\bar d(C)\subseteq B$.
\end{lemma}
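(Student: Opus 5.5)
The plan has two parts: first establish the indifference $p\sim_z(p_A,p_B)$, then build the representation that separates $A$ from $B$ at the root. Throughout, fix $z\in X_{(A\cup B)^\mathsf{c}}$. By \lemmaref{lemma_nece_restrict}, $\succsim_z$ on $\Delta(X_{A\cup B})$ has an SMEU representation, so I may work entirely within it and take $I'=A\cup B$. For the first claim I will pass from $p$ to $(p_A,p_B)$ by a finite chain of indifferences. Each link changes only the conditional correlation between a single pair $i\in A$, $j\in B$, holding fixed the joint distribution of everything else and the conditional marginals of $i$ and of $j$. Definition~\ref{def_correlation} then gives indifference at every link.

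Concretely, enumerate $B=\{j_1,\dots,j_m\}$ and $A=\{i_1,\dots,i_\ell\}$. First I make $j_1$ conditionally independent of $i_1$ given all other coordinates: $r^1$ agrees with $p$ on $\{i_1,j_1\}^\mathsf{c}$ and is the product of the conditional marginals of $i_1$ and $j_1$. Next I decouple $j_1$ from $i_2$ given the rest, and so on. The difficulty is that each step preserves only the conditional marginals given the \emph{other} coordinates. So after decoupling $i_1$, decoupling $i_2$ may reintroduce dependence with $i_1$, and pairwise conditional-independence steps do not obviously converge to full independence in finitely many moves.

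This bookkeeping is the main obstacle. My first attempt will use a "Gibbs-type" finite scheme. Induct on $|A|+|B|$. For $|A|=|B|=1$ the claim is immediate from the definition applied to $\succsim_z$. In the inductive step, with $A=A'\cup\{i\}$, I first use the pairs $(i,j)$, $j\in B$, one at a time, each time conditioning on everything else. The target is the lottery whose conditional law of $(x_i,x_B)$ given $x_{A'}$ is $p_{i|x_{A'}}\otimes p_{B|x_{A'}}$. Each pairwise swap only needs the law of the pair given the remaining coordinates, and because the target has a product structure given $x_{A'}$ this can be arranged coordinate by coordinate. I will then apply the inductive hypothesis to $A'$ and $B$ for each conditional slice $x_i$. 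This works because $A'\perp B$ and because the definition of $\perp$ quantifies over all lotteries, so it is stable under conditioning on $x_i$. If this bookkeeping fails, my fallback is to use the existing representation: \lemmaref{lemma_nece_hra} and \lemmaref{lemma_nece_perp} locate $i,j$ in the RCT, and then I argue directly from the recursive formula.

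For the second claim I use the first claim together with the tree. Take an SMEU representation of $\succsim_z$ and a child $C$ of the root whose subtree $C\cup\bar d(C)$ meets both $A$ and $B$. Since $p\sim_z(p_A,p_B)$, the utility depends only on $(p_A,p_B)$. I will then replace the subtree by two subtrees built by restricting it to $A$ and to $B$, again via the construction of \lemmaref{lemma_nece_restrict} applied to $p_A$ with $x_B$ fixed, and symmetrically for $B$. The root aggregator is redefined using the certainty equivalents from \lemmaref{lemma_continuity}(iii) to absorb the evaluation of the $A$-part and the $B$-part. Monotonicity and continuity conditions carry over from \lemmaref{lemma_continuity}.
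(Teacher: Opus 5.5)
Your proposal has a genuine gap, located in the bookkeeping you flagged: $p\sim_z(p_A,p_B)$ does not follow from finitely many moves licensed by pairwise $\perp$. Take $A=\{1,2\}$, $B=\{3\}$, and $p=\frac12\delta_{(a_1,a_2,a_3)}+\frac12\delta_{(b_1,b_2,b_3)}$ with $a_k\neq b_k$ for each $k$. Conditional on $x_2$, both $x_1$ and $x_3$ are degenerate. So the only lottery that Definition~\ref{def_correlation} relates to $p$ through $1\perp 3$ is $p$ itself, and the same holds for $2\perp 3$ conditional on $x_1$. Every chain starting at $p$ therefore stays at $p$, yet $(p_A,p_B)\neq p$. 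This is the familiar failure of the intersection property of conditional independence without positivity.

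Relabelled as $A=\{3\}$, $B=\{1,2\}$, the same lottery also defeats your first stage, which decouples $i$ from $B$ one $j$ at a time. You also cannot pass to limits of such chains, since SMEU preferences need not be continuous in distribution (Example~\ref{eg2}).

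The slice-by-slice use of the inductive hypothesis fails for a separate reason. It gives indifferences under $\succsim_{(z,x_i)}$ with $x_i$ fixed. Turning these into an indifference between mixtures over $x_i$ requires $i$ to act as a conditioning dimension for the remaining coordinates (as in Lemma~\ref{lemma_linear}), and nothing in the hypotheses provides that.

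Your fallback does not close the gap either. Lemmas~\ref{lemma_nece_perp} and~\ref{lemma_nece_hra} go from tree positions to revealed relations, whereas you need the converse. Example~\ref{eg_perp} shows that $i\perp j$ is compatible with $i$ lying above $j$ in every SMEU representation of $\succsim$. So the lemma must construct a new representation of $\succsim_z$, not locate $i$ and $j$ in a given one.

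The missing idea is to work from the representation and prove the separated RCT first, by induction on $|A\cup B|$, reading off $p\sim_z(p_A,p_B)$ afterwards. In the base case with $c(o)=\{\{i\}\}$:
\begin{itemize}
\item Applying $i\perp j$ to a $\frac12$--$\frac12$ swap makes $u^i(x_i,u^j(x_i,x_j))$ additively separable.
\item Applying it to lotteries with weights $\frac{1}{2-\alpha}$ and $\frac{1-\alpha}{2-\alpha}$, then letting $x_i'\to x_i$, makes $u^i(x_i,\cdot)$ affine in the continuation value.
\end{itemize}
Together these collapse the recursion to $\bE^p_i\hat u^i+\bE^p_j\hat u^j$. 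The inductive step applies the hypothesis subtree by subtree when the root has several children. When the root has one child, it reuses the linearity argument and shows that the $B$-subtrees depend on $x_1$ only through affine rescalings that $u^1$ can absorb.

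Your second part has the analogous gap. Knowing that $U$ depends only on $(p_A,p_B)$ does not yield $U=u^o(V^A(p_A),V^B(p_B))$. You still need the ranking of $p_A$ to be independent of $p_B$ and of $x_B$, and restricting via Lemma~\ref{lemma_nece_restrict} at a fixed $x_B$ does not deliver this. Example~\ref{eg_CN} has $p\sim(p_1,p_2)$ for all $p$ but no such separability. Appealing to Axiom~\ref{axiom_S} would be circular, since its necessity is derived from this very lemma.
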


\begin{proof}[Proof of  \lemmaref{lemma_nece_CN}]
We prove the result by induction on $|A\cup B|$. If $|A\cup B|=2$, then $A=\{i\}$ and $B=\{j\}$ for some $i\neq j$. Fix $z\in X_{\{i,j\}^\mathsf{c}}$  and we will omit the dependence of functions on $z$ when there is no risk of confusion. By \lemmaref{lemma_nece_restrict},
$\succsim_z$ over $\Delta (X_{\{i,j\}})$ has an SMEU representation with RCT $\cT$. If $c(o)=\{\{i\},\{j\}\}$, $\succsim_z$ is represented by $U(p)=u^o(\:\bE^p_i\:u^i(x_i)\:,\:\bE^p_j\:u^j(x_j)\:)$ and we are done. Otherwise, \lemmaref{lemma_nece_singleton} suggests that it is without loss to assume that $c(o)=\{\{i\}\}$ or $\{\{j\}\}$. By symmetry, assume the former holds. Then 
$\succsim_z$ is represented by $U(p)=\bE^p_i\:u^i(x_i,\:\bE^p_{j|x_i}\:u^j(x_i,x_j)\:)$.  Consider any $x_i>\lx_i$ and $x_j\in X_j$. Since $i\perp j$, we have $\frac{1}{2}\delta_{(x_i,x_j)} + \frac{1}{2}\delta_{(\lx_i,\lx_j)}\sim_z \frac{1}{2}\delta_{(x_i,\lx_j)} + \frac{1}{2}\delta_{(\lx_i,x_j)}$, which implies 
\[\frac{1}{2} u^i(x_i,u^j(x_i,x_j)) + \frac{1}{2} u^i(\lx_i,u^j(\lx_i,\lx_j)) = \frac{1}{2} u^i(\lx_i,u^j(\lx_i,x_j)) + \frac{1}{2} u^i(x_i,u^j(x_i,\lx_j)).\]
Define $\hat u^i(x_i)=u^i(x_i,u^j(x_i,\lx_j))-\frac{1}{2} u^i(\lx_i,u^j(\lx_i,\lx_j))$ and $\hat u^j(x_j)=u^i(\lx_i,u^j(\lx_i,x_j))-\frac{1}{2} u^i(\lx_i,u^j(\lx_i,\lx_j))$. Then $u^i(x_i,u^j(x_i,x_j)) = \hat u^i(x_i)+ \hat u^j(x_j)$ for all $x_i>\lx_i$ and $x_j\in X_j$. By definition, the equality also holds with $x_i=\lx_i$. Moreover, for any $x_i\in X_i$, $a= u^j(x_i,x_j)$, $a'=u^j(x_i,x'_j)$ for some $x_j,x'_j\in X_j$,  and $\alpha\in (0,1)$, we can find $x_i'$ sufficiently close to but distinct from $x_i$ and, by $i\perp j$, 
\[\frac{1}{2-\alpha}\delta_{(x_i,x_j)} + \frac{1-\alpha}{2-\alpha}\delta_{(x_i',x_j')}\sim_z \frac{1}{2-\alpha}(\delta_{x_i},\alpha\delta_{x_j} +(1-\alpha)\delta_{x_j'}) + \frac{1-\alpha}{2-\alpha}\delta_{(x_i',x_j)}.\]
This leads to 
\begin{align*}
    & u^i(x_i,a) + (1-\alpha)u^i(x_i',u^j(x_i',x_j'))\\
    = ~& u^i(x_i,\alpha a +(1-\alpha)a') + (1-\alpha)u^i(x_i',u^j(x_i',x_j)).
\end{align*}
 By continuity of $u^i(y_i,u^j(y_i,y_j))$ in $(y_i,y_j)$, let $x_i'$ converge to $x_i$ and we conclude that $\alpha u^i(x_i,a) + (1-\alpha)u^i(x_i, a') = u^i(x_i,\alpha a+(1-\alpha)a')$---i.e., $u^i$ is linear in the second argument. Hence, $\succsim_z$ is represented by 
\begin{align*}
U(p)=\bE^p_i\:u^i(x_i,\:\bE^p_{j|x_i}\:u^j(x_i,x_j)\:) = \bE^p\: u^i(x_i,u^j(x_i,x_j)) =\bE^p_i\: \hat u^i + \bE^p_j \:\hat u^j.
\end{align*}
That is,  $\succsim_z$  has another SMEU representation with RCT $\cT'$ in which $c^{\cT'}(o)=\{\{i\},\{j\}\}$. This completes the proof for $|A\cup B|=2$.  

Suppose by induction that the result holds for $|A\cup B|\leqslant m$ for some $m\geqslant  2$. Now consider $|A\cup B|=m+1$. Fix $z\in X_{(A\cup B)^\mathsf{c}}$  and we will omit the dependence of functions on $z$ when there is no risk of confusion. By \lemmaref{lemma_nece_restrict},
$\succsim_z$ over $\Delta (X_{\{A\cup B\}})$ has an SMEU representation with RCT $\cT$. First, suppose $c(o)=\{C_1,\dots,C_k\}$ with $k\geqslant  2$, where $\{C_i\cup \bar d(C_i)\}_{i=1}^k$ is a partition of $A\cup B$. Then $\succsim_z$ is represented by $U(p)=u^o(\:U^{C_1}(p),\dots,U^{C_k}(p)\:)$ for all $p\in \Delta (X_{\{A\cup B\}})$.
If $C_i\cup \bar d(C_i)\subseteq A$ or $C_i\cup \bar d(C_i)\subseteq B$ for all $i$, we are done. Otherwise, suppose $(C_i\cup \bar d(C_i))\cap A:=C_i^A\neq \emptyset$ and $(C_i\cup \bar d(C_i))\cap B:=C_i^B\neq \emptyset$. Since $A\perp B$ and $k\geqslant  2$, we have $C_i^A\perp C_i^B$ and $|C_i^A\cup C_i^B|= |C_i\cup \bar d(C_i)|\leqslant m$. By the inductive hypothesis, $U^{C_i}(p)$ can be rewritten as $\hat u^{C_i}(\:U^{C_{i1}}(p),\dots,U^{C_{il}}(p)\:)$, in which  $\{C_{ij}\cup \bar d'(C_{ij})\}_{j=1}^l$ is a partition of $C_i\cup \bar d(C_i)$,  each element of  the partition is a subset of  either $A$ or $B$, and $d'$ corresponds to the RCT over $C_i\cup \bar d(C_i)$. In this way, we derive a new SMEU representation of $\succsim_z$ over $\Delta (X_{A\cup B})$ in which the children of $o$ are $\{C_1,\dots,C_{i-1}, C_{i1},\dots, C_{il}, C_{i+1},\dots,C_k\}$. Repeat the procedure for all $i=1,\dots,k$ and we end up with an SMEU representation of $\succsim_z$ with RCT $T^*$  in which $D\cup\bar d^{\cT^*}(D)\subseteq A$ or $D\cup\bar d^{\cT^*}(D)\subseteq B$ for all $D\in c^{\cT^*}(o)$. The RCT also induces a partition of $A$ and $B$, respectively, which implies that  $p\sim_z (p_A,p_B)$ for all $p\in \Delta(X_{A\cup B})$. This proves the inductive hypothesis for the case in which  $|c(o)|\geqslant  2$.

Now suppose $|c(o)|=1$. By \lemmaref{lemma_nece_singleton}, it is without loss of generality to assume that 
$c(o)=\{\{1\}\}$ and $1\in A$. Fix $z_1\in X_1$ and consider the conditional preference $\succsim_{(z,z_1)}$ over $\Delta(X_{A\cup B\backslash\{1\}})$. 
We will consider two cases: $|c(\{1\})|=1$ and $|c(\{1\})|\geqslant  2$. Note that if $|A|\geqslant  2$, then the inductive hypothesis implies that  $\succsim_{(z,z_1)}$ has an SMEU representation that satisfies the condition in  \lemmaref{lemma_nece_CN} and we are in the latter case. Hence, if $|c(\{1\})|=1$, then it is without loss of generality to assume that $A=\{1\}$ and $c(\{1\})=\{\{j\}\}$ for some $j\in B$. Then $\succsim_z$ is represented by 
\[U(p)=\bE^p_1\:u^1(x_1,\:\bE^p_{j|x_1}\:u^j(x_1,x_j, \:(U^B_{(x_1,x_j)}(p))_{B\in c(\{j\})}\:)\:).\]
Following the same proof for the previous case with $|A\cup B|=2$, we can show that $u^1$ is linear in its second argument and 
\[U(p)=\bE^p_{1,j}\:u^1(x_1,\:u^j(x_1,x_j, \:(U^B_{(x_1,x_j)}(p))_{B\in c(\{j\})}\:)\:).\]
This implies that $\succsim_z$ has an alternative SMEU representation with RCT in which $c'(o)=\{\{j\}\}$. Since $j\in B$, $|A|=1$, and $|A\cup B|\geqslant 3$, we know $|B|\geqslant 1$. Again by the inductive hypothesis, it is without loss to assume that $|c'(\{j\})|\geqslant  2$  in this new RCT. Hence, it suffices to focus on the case in which $|c(\{1\})|\geqslant  2$. 

Denote $c(\{1\})=\{\{i_1\},\dots,\{i_n\}, \{j_1\},\dots,\{j_k\}\}$ for some $n\geqslant  1, k\geqslant  0$ with $k+n\geqslant  2$, $i_l\in B$, and $j_l\in A$.  When $k=0$, $A=\{1\}$ and $c(\{1\})=\{\{i_1\},\dots,\{i_n\}\}$.
Let $A_l= \{j_l\}\cup \bar d(\{j_l\})$ for all $l\leqslant k$ and $B_l= \{i_l\}\cup \bar d(\{i_l\})$ for all $l\leqslant n$. Then $\{A_l\}_{l=1}^k$ is a partition of $A\backslash\{1\}$ and $\{B_l\}_{l=1}^n$ is a partition of $B$. The conditional preference $\succsim_z$ can be represented by 
\[U(p)=\bE^p_1\:u^1(x_1,\: U^{i_1}_{x_1}(p), \:(U^{i_l}_{x_1}(p))_{l=2}^n\:, \:(U^{j_l}_{x_1}(p))_{l=1}^k\:).\]

Since $1 \perp i_1$, following the same proof for the previous case with $|A \cup B|=2$, we can show that $u^1$ is linear in its second argument. Since $k+n \geqslant 2$, $|B_l| < m$ for all $l=1, \ldots, n$, and we can apply the inductive hypothesis for $\{1\} \perp B_l$ to show that the conditional preference on $B_l$ does not depend on the outcome in dimension 1. Consequently, the history-specific continuation representations over $B_l$ differ across $x_1$ only by positive affine transformations. Because the parent aggregator $u^1$ is permitted to depend on $x_1$, we can choose a common normalized continuation representation for the subtree $B_l$ and adjust $u^1(x_1, \cdot)$ to absorb these $x_1$-dependent affine coefficients without altering the overall preference. Under this equivalent representation, $u^{i_l}$ and $u^C$ for all $C \in d(\{i_l\})$ strictly do not depend on the outcome in dimension 1. Hence, $U_{x_1}^{i_l}(p)=U_{x_1'}^{i_l}(p')$ if $p_{B_l \mid x_1}=p_{B_l \mid x_1'}^{\prime}$. Construct $\hat p, \tilde p\in \Delta(X)$ such that 
\begin{align*}
    \hat p = & \:\alpha\:(z, \:{x_1}, \:{x_{i_1}},\:q_{B_1\backslash\{i_1\}},\:(q_{B_l})_{l=2}^n, \:(q_{A_l})_{l=1}^k)  \: + \alpha\:(z, \:{\lx_1}, \:{\lx_{i_1}},\:q_{B_1\backslash\{i_1\}},\:(q_{B_l})_{l=2}^n, \:(q_{A_l})_{l=1}^k) \\
    & \: + (1-2\alpha)\:(z, \:{\lx_1}, \:{\lx_{i_1}},\:q_{B_1\backslash\{i_1\}},\:(\lx_{B_l})_{l=2}^n, \:(\lx_{A_l})_{l=1}^k),\\
    \tilde p = & \:\alpha\:(z, \:{\lx_1}, \:{x_{i_1}},\:q_{B_1\backslash\{i_1\}},\:(q_{B_l})_{l=2}^n, \:(q_{A_l})_{l=1}^k)  \: + \alpha\:(z, \:{x_1}, \:{\lx_{i_1}},\:q_{B_1\backslash\{i_1\}},\:(q_{B_l})_{l=2}^n, \:(q_{A_l})_{l=1}^k) \\
    & \: + (1-2\alpha)\:(z, \:{\lx_1}, \:{\lx_{i_1}},\:q_{B_1\backslash\{i_1\}},\:(\lx_{B_l})_{l=2}^n, \:(\lx_{A_l})_{l=1}^k)
\end{align*}
where $\alpha\in (0,\frac{1}{2})$, $x_1>\lx_1$, $x_{i_1}\in X_{i_1}$, and $q_C$ is a lottery in $\Delta(X_{C})$ for each $C$. We can verify that  $\hat p_{\{1,i_1\}^\mathsf{c}} = \tilde p_{\{1,i_1\}^\mathsf{c}}$ and $\hat p_{1|z'}=\tilde p_{1|z'}$, $\hat p_{i_1|z'}=\tilde p_{i_1|z'}$ for all $z'\in \supp(\hat p_{\{1,i_1\}^\mathsf{c}})$. By the definition of $1\perp i_1$, we know $\hat p\sim \tilde p$. By linearity of $u^1$ in its second argument, we can derive the following equation:
\begin{align*}
    & u^1(x_1,\: u^{i_1}(x_{i_1},(U^B_{(x_1,x_{i1})}(\hat p))_{B\in c(\{i_1\})}), \:(U^{i_l}_{x_1}(\hat p))_{l=2}^n\:, \:(U^{j_l}_{x_1}(\hat p))_{l=1}^k\:)\\
   &~ - u^1(\lx_1,\: u^{i_1}(x_{i_1},(U^B_{(\lx_1,x_{i1})}(\hat p))_{B\in c(\{i_1\})}), \:(U^{i_l}_{\lx_1}(\hat p))_{l=2}^n\:, \:(U^{j_l}_{\lx_1}(\hat p))_{l=1}^k\:)\\
   = \:& u^1(x_1,\: u^{i_1}(\lx_{i_1},(U^B_{(x_1,\lx_{i1})}(\hat p))_{B\in c(\{i_1\})}), \:(U^{i_l}_{x_1}(\hat p))_{l=2}^n\:, \:(U^{j_l}_{x_1}(\hat p))_{l=1}^k\:)\\
   &~ - u^1(\lx_1,\: u^{i_1}(\lx_{i_1},(U^B_{(\lx_1,\lx_{i1})}(\hat p))_{B\in c(\{i_1\})}), \:(U^{i_l}_{\lx_1}(\hat p))_{l=2}^n\:, \:(U^{j_l}_{\lx_1}(\hat p))_{l=1}^k\:).
\end{align*}
We only used $\hat p$ in the above equation because $\hat p$ and $\tilde p$ give the same value for all relevant functions. By Axiom \ref{axiom_C} (\lemmaref{lemma_continuity}), letting $\alpha$ converge to $0$, the above equation becomes 
\begin{equation}\label{eq_nece_CN1}
    U(p^1) - U(p^2) = U(p^3) - U(p^4),
\end{equation}
where
\begin{align*}
    p^1 & = (z, \:{x_1}, \:{x_{i_1}},\:q_{B_1\backslash\{i_1\}},\:(q_{B_l})_{l=2}^n, \:(q_{A_l})_{l=1}^k),\\
    p^2 &= (z, \:{\lx_1}, \:{x_{i_1}},\:q_{B_1\backslash\{i_1\}},\:(\lx_{B_l})_{l=2}^n, \:(\lx_{A_l})_{l=1}^k),\\
    p^3 & = (z, \:{x_1}, \:{\lx_{i_1}},\:q_{B_1\backslash\{i_1\}},\:(q_{B_l})_{l=2}^n, \:(q_{A_l})_{l=1}^k),\\
    p^4 &= (z, \:{\lx_1}, \:{\lx_{i_1}},\:q_{B_1\backslash\{i_1\}},\:(\lx_{B_l})_{l=2}^n, \:(\lx_{A_l})_{l=1}^k).
\end{align*}

Next, fix $\lx_{i1}$ and consider the conditional preference $\succsim_{(z,\lx_{i1})}$ over $A\cup B\backslash\{i_1\}$. Since $|A\cup B\backslash\{i_1\}|=m$ and $A\perp B\backslash\{i_1\}$, the inductive hypothesis implies the existence of an SMEU representation of $\succsim_{(z,\lx_{i1})}$ with an RCT $\cT^1$ in which $\{C\cup\bar d(C)\}_{C\in c^{\cT^1}(o)}$ includes a partition of $B_1\backslash\{i_1\}$, $A_l$ for $l\leqslant k$, and $B_l$ for $2\leqslant l\leqslant n$. By the corresponding SMEU representation, we have $\frac{1}{2}p^3 + \frac{1}{2}p^5\sim \frac{1}{2}p^4 + \frac{1}{2}p^6$ and hence, 
\begin{equation}\label{eq_nece_CN2}
    U(p^3) - U(p^4) = U(p^6) - U(p^5),
\end{equation}
where
\begin{align*}
    p^5 &= (z, \:{\lx_1}, \:{\lx_{i_1}},\:\lx_{B_1\backslash\{i_1\}},\:(\lx_{B_l})_{l=2}^n, \:(\lx_{A_l})_{l=1}^k)\\
    p^6 & = (z, \:{x_1}, \:{\lx_{i_1}},\:\lx_{B_1\backslash\{i_1\}},\:(q_{B_l})_{l=2}^n, \:(q_{A_l})_{l=1}^k).
\end{align*}
We claim that the above two equations (\ref{eq_nece_CN1}) and (\ref{eq_nece_CN2}) also hold for $x_1=\lx_1$. To see this, we can repeat the above arguments and derive counterparts of equations (\ref{eq_nece_CN1}) and (\ref{eq_nece_CN2}) by switching the roles of $x_1$ and $\lx_1$ and replacing $(\lx_{A_l})_{l=1}^k)$ with $(q_{A_l})_{l=1}^k)$ in the construction of $\hat p$ and $\tilde p$. These four equations deliver what we want.

Combining equations (\ref{eq_nece_CN1}) and (\ref{eq_nece_CN2}) and substituting the result into the expression of $U(p)$ for any $p$ such that $p_{(A\cup B)^\mathsf{c}}=\delta_z$, we get
\begin{align*}
    U(p)=&  ~\bE^p_{\{1,i_1\}}\: U(z, \:{x_1}, \:{x_{i_1}},\:p_{B_1\backslash\{i_1\}|x_1,x_{i_1}},\:(p_{B_l|x_1})_{l=2}^n, \:(p_{A_l|x_1})_{l=1}^k) \\
    =&~  \bE^p_{\{1,i_1\}}\: U(z, \:{\lx_1}, \:{x_{i_1}},\:p_{B_1\backslash\{i_1\}|x_1,x_{i_1}},\:(\lx_{B_l})_{l=2}^n, \:(\lx_{A_l})_{l=1}^k)\\
    & + \bE^p_{1}\: U(z, \:{x_1}, \:{\lx_{i_1}},\:\lx_{B_1\backslash\{i_1\}},\:(p_{B_l|x_1})_{l=2}^n, \:(p_{A_l|x_1})_{l=1}^k) \\
    & - U(z, \:{\lx_1}, \:{\lx_{i_1}},\:\lx_{B_1\backslash\{i_1\}},\:(\lx_{B_l})_{l=2}^n, \:(\lx_{A_l})_{l=1}^k).
\end{align*}
The third term is a constant. For the first term, note that $\{1\}\perp B_1$. Consider a sequence of lotteries $q^t\in \Delta(X_{B_1\cup\{1\}})$ such that $q^t_1 \rightarrow \delta_{\lx_1}$ as $t$ goes to $\infty$  and the distribution over conditional distributions on $B_1$ induced by $q^t$ agrees with that induced by $p$ for all $t\geqslant  1$. 
This implies that $q^t_{B_1}=p_{B_1}$ for all $t\geqslant  1$. 
By the inductive hypothesis, we have 
$q^t\sim (q^t_1,p_{B_1})$. That is, 
\begin{align*}
    &\bE^{q^t}_{\{1,i_1\}}\: U(z, \:{x_1}, \:{x_{i_1}},\:q^t_{B_1\backslash\{i_1\}|x_1,x_{i_1}},\:(\lx_{B_l})_{l=2}^n, \:(\lx_{A_l})_{l=1}^k)
    \\
   =  & ~\bE^{q^t}_{1} \bE^{p}_{i_1}\: U(z, \:{x_1}, \:{x_{i_1}},\:q^t_{B_1\backslash\{i_1\}|x_{i_1}},\:(\lx_{B_l})_{l=2}^n, \:(\lx_{A_l})_{l=1}^k). 
\end{align*}
We can view $x_1$ as a random variable with distribution $q^t_1$. Since the induced distribution of $q^t_{B_1\backslash\{i_1\}|x_1,x_{i_1}}$ is the same as that of $p_{B_1\backslash\{i_1\}|x_1,x_{i_1}}$, we have 
\begin{align*}
    &\bE^{q^t}_{\{1,i_1\}}\: U(z, \:{x_1}, \:{x_{i_1}},\:p_{B_1\backslash\{i_1\}|x_1,x_{i_1}},\:(\lx_{B_l})_{l=2}^n, \:(\lx_{A_l})_{l=1}^k)
    \\
   =  & ~\bE^{q^t}_{1} \bE^{p}_{i_1}\: U(z, \:{x_1}, \:{x_{i_1}},\:p_{B_1\backslash\{i_1\}|x_{i_1}},\:(\lx_{B_l})_{l=2}^n, \:(\lx_{A_l})_{l=1}^k). 
\end{align*}
Moreover, both expressions above can be viewed as an expectation of a function that is continuous in $x_1$. As $t$ goes to $\infty$, $q^t_1 \rightarrow \delta_{\lx_1}$ implies that 
\begin{align*}
    &\bE^p_{\{1,i_1\}}\: U(z, \:{\lx_1}, \:{x_{i_1}},\:p_{B_1\backslash\{i_1\}|x_1,x_{i_1}},\:(\lx_{B_l})_{l=2}^n, \:(\lx_{A_l})_{l=1}^k)
    \\
   =  & ~\bE^p_{i_1}\: U(z, \:{\lx_1}, \:{x_{i_1}},\:p_{B_1\backslash\{i_1\}|x_{i_1}},\:(\lx_{B_l})_{l=2}^n, \:(\lx_{A_l})_{l=1}^k). 
\end{align*}

For the second term, by applying the inductive hypothesis to $A\perp (B\backslash{B_1})$, we have
\begin{align*}
   & \sum p_1(x_1)\:(z, \:{x_1}, \:{\lx_{i_1}},\:\lx_{B_1\backslash\{i_1\}},\:(p_{B_l|x_1})_{l=2}^n, \:(p_{A_l|x_1})_{l=1}^k)\\
    \sim &~ \sum p_1(x_1)\:(z, \:{x_1}, \:{\lx_{i_1}},\:\lx_{B_1\backslash\{i_1\}},\:(p_{B_l})_{l=2}^n, \:(p_{A_l|x_1})_{l=1}^k).
\end{align*}
This leads to 
\begin{align*}
    U(p)=&  \bE^p_{i_1}\: U(z, \:{\lx_1}, \:{x_{i_1}},\:p_{B_1\backslash\{i_1\}|x_{i_1}},\:(\lx_{B_l})_{l=2}^n, \:(\lx_{A_l})_{l=1}^k)\\
    & + \bE^p_{1}\: U(z, \:{x_1}, \:{\lx_{i_1}},\:\lx_{B_1\backslash\{i_1\}},\:(p_{B_l})_{l=2}^n, \:(p_{A_l|x_1})_{l=1}^k) \\
    & - U(z, \:{\lx_1}, \:{\lx_{i_1}},\:\lx_{B_1\backslash\{i_1\}},\:(\lx_{B_l})_{l=2}^n, \:(\lx_{A_l})_{l=1}^k).
\end{align*}

Repeating the above arguments for $i_2,\dots, i_k$ yields (ignoring the constant  terms)
 \begin{align*}
    U(p) = &~\bE^p_{1}\: U(z, x_1,\:(\lx_{B_l})_{l=1}^n, \:(p_{A_l|x_1})_{l=1}^k) + \sum_{l=1}^n \bE^p_{i_l}\: U(z, \:{\lx_A}, \:{x_{i_l}},\:p_{B_l\backslash\{i_l\}|x_{i_l}},\:(\lx_{B_t})_{t\neq l}).
\end{align*}
This is an SMEU representation with RCT $\cT^*$ where $c^{\cT^*}(o)=\{\{1\},\{i_1\},\dots,\{i_n\}\}$, $\{1\}\cup\bar d^{\cT^*}(\{1\})=A$, and  $\{i_l\}\cup \bar d^{\cT^*}(\{i_l\})=B_l$ for all $l=1,\dots,n$. This completes the proof for the inductive step for $|A\cup B|=m+1$.
\end{proof}

Now we are ready to show that Axiom \ref{axiom_S} holds. 
Suppose that $B$ is minimally separable from $A$. By \lemmaref{lemma_nece_CN}, for any $z\in X_{A^\mathsf{c}}$, the conditional preference has an SMEU representation $(\cT,(u^C)_{C\in\cP_o})$ in which   either $C\cup\bar d(C)\subseteq B$ or $C\cup\bar d(C)\subseteq A\backslash B$ for all $C\in c(o)$. If $C\cup\bar d(C)\subsetneq B$ for some $C\in c(o)$, then $C\cup\bar d(C)\perp (A\backslash(C\cup\bar d(C)))$, which leads to a contradiction. Hence, $C^*\cup\bar d(C^*)= B$ for some $C^*\in c(o)$. For all $r\in \Delta(X_{A\setminus B})$, and $p,q\in \Delta(X_A)$ such that $p_{A\setminus B}=q_{A\setminus B}$, we have $p\succsim_z q \iff U^{C^*}(p)\geqslant  U^{C^*}(q)\iff U^{C^*}(p_B,r)\geqslant  U^{C^*}(q_B,r) \iff (p_B, r)\succsim_z (q_B,r).$

To prove Axiom \ref{axiom_RI}, we need the  following lemma. Recall that for an RCT $(\cP,E)$, $H(i)$  is the element in $\cP$ that includes dimension $i$.

\begin{restatedlemma}{lemma_nece_RI}
        Suppose that $\succsim$ has an SMEU representation. If $i\in A$ and $i\hra A$, then  for all $z\in X_{A^\mathsf{c}}$, there is an SMEU representation of  $\succsim_z$ in which $A\subseteq H(i)\cup \bar d(H(i))$.
\end{restatedlemma}
 
\begin{proof}[Proof of \lemmaref{lemma_nece_RI}]
    We will prove the result by induction on the cardinality of $A$. Suppose $i\in A$ and $i\hra   A$. Fix any $z\in X_{A^\mathsf{c}}$ and we will omit the dependence of functions on $z$ when there is no risk of confusion. By \lemmaref{lemma_nece_restrict}, $\succsim_z$ has an SMEU representation.

First, let $A=\{i,j\}$. For any  $\cT\in \mathbb{T}(\succsim_z)$, the result holds if $\{i\}\in c(o)$ or $\{i,j\}\in c(o)$. Hence we can focus on two cases: $c(o)=\{\{j\}\}$ or $c(o)=\{\{i\},\{j\}\}$. 

In the former case, $\succsim_z$ is represented by $U(p)=\bE^p_j\:u^j(x_j,\:\bE^p_{i|x_j}\:u^i(x_j,x_i)\:)$. It suffices to show that $u^j$ is linear in the second argument. or any $x_j \in X_j$ and $a = u^i(x_j, x_i) > a' = u^i(x_j, x_i')$ for some $x_i, x_i' \in X_i$, continuity and strict monotonicity of the utility functions ensure that we can find a nearby point $(\hat{x}_i, \hat{x}_j)$ on the same indifference contour as $(x_i',x_j)$ such that $\hat{x}_i \neq x_i, x_i'$ and $\hat{x}_j \neq x_j$. (At boundaries, we can construct this indifference around $(x_i, x_j)$ instead).

Let $\hat{a} = u^i(\hat{x}_j, \hat{x}_i)$. Because $(\hat{x}_i, \hat{x}_j)$ is overall indifferent to $(x_i', x_j)$, we have $u^j(\hat{x}_j, \hat{a}) = u^j(x_j, a')$. Since the two consequences are indifferent, applying the definition of $i \rightharpoonup j$ in both directions yields the mixture indifference:
$$ \alpha \delta_{(x_i,x_j)} + (1-\alpha) \delta_{(x_i',x_j)} \sim_z \alpha \delta_{(x_i,x_j)} + (1-\alpha) \delta_{(\hat{x}_i,\hat{x}_j)} $$
for all $\alpha \in(0,1)$. Evaluating both sides through the recursive representation, we obtain:
$$ u^j(x_j, \alpha a + (1-\alpha) a') = \alpha u^j(x_j, a) + (1-\alpha) u^j(\hat{x}_j, \hat{a}). $$
Substituting $u^j(\hat{x}_j, \hat{a}) = u^j(x_j, a')$ directly into the right-hand side yields $u^j(x_j, \alpha a + (1-\alpha) a') = \alpha u^j(x_j, a) + (1-\alpha) u^j(x_j, a')$, establishing the required linearity.  Hence, $U(p)=\bE^p\:u^j(x_j,\:u^i(x_j,x_i))$ and the conclusion holds. 

In the latter case, $\succsim_z$ is represented by $U(p)=u^o(\:\bE^p_i\:u^i(x_i)\:,\:\bE^p_j\:u^j(x_j)\:)$, which is continuous on $p\in \Delta(X_{\{i,j\}})$. Below, we  show that $\succsim_z$ satisfies independence. For any $p,q,r\in \Delta(X_{\{i,j\}})$ such that $p\succ q$, \lemmaref{lemma_continuity} implies the existence of $x,x',y\in X_{\{i,j\}}$ such that $u^l(x_l)= \bE^p_l\:u^l$, $u^l(x'_l)= \bE^q_l\:u^l$, and $u^l(y_l)= \bE^r_l\:u^l$ for $l=i,j$. Then $p\sim_z x\succ_z q\sim_z x'$ and $r\sim_z y$. For now assume that $y_i\in (\lx_i,\ux_i)$. By continuity and monotonicity of $u^o,u^i$, and $u^j$, there exist $\hat x,\hat x'\in X_{\{i,j\}}$ such that (i) $x\geqslant  \hat x$, (ii) $\hat x'\geqslant  x'$, (iii) $x\succ_z \hat x \sim_z \hat x'\succ_z x'$, and (iv) $y_i\neq \hat x_i, \hat x_i'$. Since $i\hra j$, for any $\alpha\in (0,1)$, we have 
\begin{equation}\label{eq_nece_RI1}
    \alpha \delta_{\hat x} + (1-\alpha)\delta_y \sim_z \alpha\delta_{\hat x'} + (1-\alpha)\delta_y.
\end{equation}
We claim that  (\ref{eq_nece_RI1}) also holds when $y_i=\lx_i$ or $\ux_i$. In either case, we can find a sequence $(y^n)_{n\geqslant  1}$ such that  $y^n\rightarrow y$ and $y^n_i\in (\lx_i,\ux_i)$. Then $\alpha \delta_{\hat x} + (1-\alpha)\delta_{y^n} \sim_z \alpha\delta_{\hat x'} + (1-\alpha)\delta_{y^n}$ for all $n\geqslant  1$. Since $\succsim_z$ is represented by a continuous function on $\Delta(X_{\{i,j\}})$, the indifference holds as $n$ goes to infinity, implying that $ \alpha \delta_{\hat x} + (1-\alpha)\delta_y \sim_z \alpha\delta_{\hat x'} + (1-\alpha)\delta_y$. 

For any $\alpha\in (0,1)$, it holds that
\begin{align*}
    U(\alpha p + (1-\alpha )r ) & =U(\alpha p_i + (1-\alpha )r_i, \alpha p_j + (1-\alpha )r_j)\\
    & > U(\alpha \delta_{\hat x_i} + (1-\alpha )\delta_{y_i}, \alpha \delta_{\hat x_j} + (1-\alpha )\delta_{y_j})\\
    & = U(\alpha \delta_{\hat x_i'} + (1-\alpha )\delta_{y_i}, \alpha \delta_{\hat x_j'} + (1-\alpha )\delta_{y_j})\\
    & > U(\alpha \delta_{ x_i'} + (1-\alpha )\delta_{y_i}, \alpha \delta_{ x_j'} + (1-\alpha )\delta_{y_j})\\
    & = U(\alpha q_i + (1-\alpha )r_i, \alpha q_j + (1-\alpha )r_j)\\
    & = U(\alpha q + (1-\alpha )r ).
\end{align*}
That is, $\alpha p + (1-\alpha )r\succ_z \alpha q + (1-\alpha )r$. Hence, $\succsim_z$ satisfies independence on $\Delta(X_{\{i,j\}})$ and has an expected utility representation. This proves the inductive hypothesis for $|A|=2$. 

Now suppose that the inductive hypothesis holds for all $|A|\leqslant n$ for some $n\geqslant  2$. Consider any $A$ with $|A|=n+1$ and $i\hra   A$. By \lemmaref{lemma_nece_singleton}, let the partition in the RCT for $\succsim_z$ consist of singleton sets. If $c(o)=\{\{i\}\}$, we are done. If  $c(o)=\{\{j\}\}$ with $j\neq i$, then by the inductive hypothesis, $\succsim_z$ has the following SMEU representation: $$U(p)=\bE^p_j\:u^j(x_j,\:\bE^p_{i|x_j}\:u^i(x_j,x_i, \:(U^B_{(x_j,x_i)}(p))_{B\in c(\{i\})}\:)\:).$$
Since $i\hra j$, using the same argument for the case with $|A|=2$ above, we can show that $u^j$ is linear in its second argument and hence $U(p)=\bE^p_{\{i,j\}}\:u^j(x_j,\:u^i(x_j,x_i, \:(U^B_{(x_j,x_i)}(p))_{B\in c(\{i\})}\:)\:).$ This proves the inductive hypothesis.

It remains to show the inductive hypothesis for the case in which $|c(o)|\geqslant  2$. Without loss of generality, let $c(o)=\{\{j_1\},\dots,\{j_m\}\}$ for some $m\geqslant  2$ such that $i\in \{j_1\}\cup \bar d(\{j_1\})$. Since $i\hra \{j_1\}\cup \bar d(\{j_1\})$ and $|\{j_1\}\cup \bar d(\{j_1\})|\leqslant n$, 
by the inductive hypothesis, it is without loss to assume that $j_1=i$. Denote $B_l=\{j_l\}\cup\bar d(\{j_l\})$ for all $l=1,\dots,m$. Then $\succsim_z$ is represented by
\begin{equation}\label{eq_nece_RI2}
    U(p)= u^o(U^{j_1}(p),\dots,U^{j_m}(p)),
\end{equation}
where $U^{j_l}(p)= \bE^p_{j_l}\:u^{j_l}(x_{j_l}, \:(U^B_{x_{j_l}}(p))_{B\in c(\{j_l\})}\:)$ for all $l=1,\dots,m$. The following lemma shows that the conditional preference on each $B_l$ satisfies independence for $l\geqslant  2$.

\begin{lemma}\label{lemma_nece_EU}
     Suppose $\succsim_z$ has an SMEU representation (\ref{eq_nece_RI2}). If $j_1\hra j_l$ for some $l\geqslant  2$, then for each $z'\in X_{A \backslash B_l}$, the conditional preference $\succsim_{(z,z')}$ over $\Delta(X_{B_l})$ satisfies independence. 
\end{lemma}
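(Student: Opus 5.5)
We will show that, fixing $z'\in X_{A\setminus B_l}$, the function $U^{j_l}$ restricted to lotteries over $B_l$ (with the other children of the root frozen at the values produced by $z'$) is an expected utility functional, up to a monotone transformation. This is enough, because by (\ref{eq_nece_RI2}) the conditional preference $\succsim_{(z,z')}$ on $\Delta(X_{B_l})$ is represented by $a\mapsto u^o(c_1,\dots,a,\dots,c_m)$ composed with $U^{j_l}$, where $c_k=U^{j_k}(\delta_{z'})$. By \lemmaref{lemma_continuity}(ii) the map $a\mapsto u^o(c_1,\dots,a,\dots,c_m)$ is strictly increasing. So it suffices to show that $U^{j_l}(p)=\bE^{p}\,w(x_{B_l})$ for some continuous function $w$ on $X_{B_l}$. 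Equivalently, recursively inside the subtree rooted at $\{j_l\}$, each aggregator $u^{j}(x,y,\cdot)$ must be affine in its continuation arguments.

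The strategy mirrors the two-dimensional argument just given for the case $c(o)=\{\{j\}\}$. We use $j_1\hra j_l$ with dimension $j_1$ as the ``state'' index to produce mixture indifferences inside $B_l$.

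First, take $y,y'\in X_{B_l}$ with $\delta_{y}\sim_{(z,z')}\delta_{y'}$. We will perturb the $j_1$-coordinate to obtain points $x^1,x^2$ with distinct $j_1$-values whose $B_l$-parts realize prescribed lotteries. Using \lemmaref{lemma_monotone} and continuity, we choose consequences whose $j_1$-coordinates are pairwise distinct and which are indifferent to the desired ones. The definition of $j_1\hra j_l$ then forces $\alpha\delta_{x^1}+(1-\alpha)\delta_{x^2}$ to be indifferent to the analogous mixture.

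Second, we exploit the product form at the root. In such a mixture the marginal on $B_l$ becomes $\alpha\delta_{y}+(1-\alpha)\delta_{y''}$, while the $j_1$-branch marginal changes in a controlled way. We compensate the $j_1$-branch change by adjusting $j_1$ monotonically, as in the proof of independence in the case $c(o)=\{\{i\},\{j\}\}$. This yields: if $\delta_y\sim\delta_{y'}$, then $\alpha\delta_y+(1-\alpha)\delta_{y''}\sim_{(z,z')}\alpha\delta_{y'}+(1-\alpha)\delta_{y''}$.

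Third, we iterate this over finite supports and take limits using weak continuity (\lemmaref{lemma_continuity}(iv)). This upgrades the mixture indifferences to full independence on $\Delta(X_{B_l})$: $p\succ q$ implies $p\alpha r\succ q\alpha r$. To do so, we replace $p,q,r$ by certainty-equivalent-type degenerate lotteries, which exist by \lemmaref{lemma_continuity}(iii), and then use strict monotonicity to insert $\hat x,\hat x'$ strictly between them.

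The main obstacle is the second step. The relation $\hra$ only controls the pair $(j_1,j_l)$ with all other coordinates held fixed state by state. Other dimensions of $B_l$ (descendants of $j_l$) and the other branches therefore enter only through the fixed $\{j_1,j_l\}^{\mathsf c}$ coordinates of each $x^k$. The difficulty is that the $B_l$-marginal of the mixture must be a genuine mixture rather than a conditional one. We will first try to handle this by induction on the depth of $B_l$. The inductive hypothesis (the cases $|A|\leqslant n$ of \lemmaref{lemma_nece_RI}) gives that $j_l$ can be taken to head a subtree on which linearity in continuations already holds, reducing the problem to showing that $u^{j_l}$ is affine in its continuation vector, exactly as in the $|A|=2$ case. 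If that reduction fails, we would use \lemmaref{lemma_transitive} to get $j_1\hra k$ for the relevant $k\in B_l$ and apply the pairwise argument coordinate by coordinate.
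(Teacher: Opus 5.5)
Your plan has a genuine gap at the step you yourself flag, and neither of your remedies closes it.

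The relation $j_1\hra j_l$ only lets you compare, state by state, consequences that agree off $\{j_1,j_l\}$. If $j_1$ is held at $z'_{j_1}$, as in your Step~2 claim, then two such points $y,y'\in X_{B_l}$ with $\delta_y\sim_{(z,z')}\delta_{y'}$ must be equal. So that claim has content only for $y,y'$ that differ on $B_l\setminus\{j_l\}$, and $j_1\hra j_l$ does not reach those directly.

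The inductive hypothesis of \lemmaref{lemma_nece_RI} does not help. It only says that some vertex can be placed above a given set; it gives no affineness of aggregators below $j_l$. Even granting that, proving $u^{j_l}$ affine in its continuation vector is the same problem again. Applying the hypothesis to $\{j_1\}\cup B_l$ instead can require the cardinality $|A|=n+1$ at which the present lemma is itself invoked, e.g.\ when $m=2$ and $B_1=\{j_1\}$.

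Your fallback through \lemmaref{lemma_transitive} and \lemmaref{lemma_nece_hra} ($j_1\hra k$ for all $k\in B_l$, then chains of indifferent points) might be made to work, but you only gesture at it.

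Step~3 is circular as written. Substituting certainty equivalents for $p$ and $r$ inside $p\alpha r$ works in the case $c(o)=\{\{i\},\{j\}\}$ because each branch utility there is already an expectation. For $U^{j_l}$ that is exactly what is to be shown.

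The missing idea is to locate where nonlinearity can arise. If the atoms of a lottery on $B_l$ have pairwise distinct $j_l$-values, every conditional on $B_l\setminus\{j_l\}$ is degenerate. The recursive formula then already gives $U^{j_l}(\sum_k\pi^k\delta_{y^k})=\sum_k\pi^k U^{j_l}(\delta_{y^k})$. So the only issue is atoms pooled at a common $j_l$-value.

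The paper compares two lotteries:
\[
\hat p=\sum_k\pi^k\delta_{(z^k,\hat x^k_{j_1},x_{j_l})}
\quad\text{and}\quad
p=\sum_k\pi^k\delta_{(z^k,x^k_{j_1},x^k_{j_l})}.
\]
In $\hat p$ the subtree sees the conditional lottery $q=\sum_k\pi^k\delta_{y^k}$ at $x_{j_l}$; in $p$ the $x^k_{j_l}$ are distinct. State by state the two consequences share $z^k$ (hence $y^k$), differ only in $j_1$ and $j_l$, and are indifferent. So $j_1\hra j_l$ applies directly and gives $p\sim\hat p$, with no transitivity or chains needed.

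Letting $x^k_{j_1}\to\hat x^k_{j_1}$ (so $x^k_{j_l}\to x_{j_l}$) equalizes the $j_1$-branch argument of $\hat u^o$. Strict monotonicity of $\hat u^o$ then yields $\hat U^{j_l}(\hat p)=\sum_k\pi^k\hat U^{j_l}(x_{j_l},y^k)$. Together with the outer expectation over $x_{j_l}$, this makes $U^{j_l}$ an expectation on $\Delta(X_{B_l})$. Independence follows at once, with no analogue of your Step~3.
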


\begin{proof}[Proof of \lemmaref{lemma_nece_EU}]
   For any fixed $\hat z\in X_{(B_l\cup\{j_1\})^\mathsf{c}}$ with $\hat z_{A^{\mathsf{c}}}=z$, it suffices to work with the conditional preference $\succsim_{\hat z}$ over  $\Delta(X_{B_l\cup\{j_1\}})$, which is represented by the following adaptation of equation (\ref{eq_nece_RI2}): 
   \[\hat U(p)= \hat u^o(\:\bE^p_{j_1}\:\hat u^{j_1}(x_{j_1}),\: \bE^p_{j_l}\:\hat u^{j_l}(x_{j_l}, \:(\hat U^B_{x_{j_l}}(p))_{B\in c(\{j_l\})}\:)\:).\]

    For any $q\in \Delta(X_{B_l\backslash\{j_l\}})$, denote $q=\sum_{k=1}^K\pi^k\delta_{y^k}$ for some  $y^1,\dots,y^K\in X_{B_l\backslash\{j_l\}}$. Define $z^k\in X_{\{{j_1},{j_l}\}^c}$ such that $z^k_{(B_l\cup\{j_1\})^\mathsf{c}}=\hat z$ and $z^k_{B_l\backslash\{j_l\}}=y^k$ for all $k=1,\dots,K$. For any fixed $x_{j_l}\in X_{j_l}$, we can find $x_{j_1}^k,\hat x_{j_1}^k\in X_{j_1}$ and $x_{j_l}^k\in X_{j_l}$ for $k=1,\dots,K$ such that $x_{j_1}^k\neq x_{j_1}^{k'}$,  $x_{j_l}^k\neq x_{j_l}^{k'}$, and $\hat x_{j_1}^k\neq \hat x_{j_1}^{k'}$ for all $k\neq k'$, and $(z^k,x_{j_1}^k,x^k_{j_l})\sim (z^k,\hat x_{j_1}^k,x_{j_l})$. Since $j_1\hra j_l$, we have $p:=\sum_{k=1}^n\pi^k\delta_{(z^k,x_{j_1}^k,x^k_{j_l})}\sim \sum_{k=1}^n\pi^k\delta_{(z^k,\hat x_{j_1}^k,x_{j_l})}:=\hat p$. 
      That is, 
    \begin{align*}
        & \hat u^o\left(\:\sum_{k=1}^K \pi^k\: \hat u^{j_1}(x_{j_1}^k), \:\sum_{k=1}^K \pi^k\:\hat u^{j_l}(x_{j_l}^k, \:(\hat U^B_{x_{j_l}^k}(p_{B_l\cup\{j_1\}}))_{B\in c(\{j_l\})}\:) \right)\\
      = &~ \hat u^o\left(\:\sum_{k=1}^K \pi^k\: \hat u^{j_1}(\hat x_{j_1}^k), \:\hat U^{j_l}(\hat p_{B_l\cup\{j_1\}}) \right).
    \end{align*}
    Note that  $\hat U^B_{x_{j_l}^k}(p_{B_l\cup\{j_1\}})=\hat  U^B_{x_{j_l}^k}(x^k_{j_1}, x^k_{j_l}, y^k)$ for each $k\geqslant  1$ and $B\in c(\{j_l\})$, implying that  $\hat u^{j_l}(x_{j_l}^k, \:(\hat U^B_{x_{j_l}^k}(p_{B_l\cup\{j_1\}}))_{B\in c(\{j_l\})}\:) =\hat U^{j_l}(x^k_{j_1}, x^k_{j_l}, y^k)$, which does not depend on $x^k_{j_1}$ and is continuous in $x^k_{j_l}$. Also, $\hat u^{j_1}$ is continuous. Letting $ x_{j_1}^k$ converge to $\hat x_{j_1}^k$ for each $k$, we have $x^k_{j_l}$ converge to $x_{j_l}$ for each $k$ and 
      \begin{align*}
        & \hat u^o\left(\sum_{k=1}^K \pi^k\: \hat u^{j_1}(\hat x_{j_1}^k), \:\sum_{k=1}^K \pi^k\:\hat U^{j_l}(\hat x_{j_1}^k,x_{j_l}, y^k) \right) = \hat u^o\left(\sum_{k=1}^K \pi^k\: \hat u^{j_1}(\hat x_{j_1}^k), \:\hat U^{j_l}(\hat p_{B_l\cup\{j_1\}}) \right).
    \end{align*}
    By \lemmaref{lemma_continuity}, $\hat u^o$ is strictly increasing and hence 
    \[\hat U^{j_l}(\hat p_{B_l\cup\{j_1\}}) = \sum_{k=1}^K \pi^k\:\hat U^{j_l}(\hat x_{j_1}^k,x_{j_l}, y^k).\]
    Also, $\hat U^{j_l}(\hat p_{B_l\cup\{j_1\}})$ only depends on  $\hat p_{B_l} = (q, \delta_{x_{j_l}})$.  For each $z'\in X_{A \backslash B_l}$ and $r,r'\in \Delta(X_{B_l})$, we have 
    \begin{align*}
        r\succsim_{(z,z')}r'\iff  & \bE^r_{j_l}\:\hat U^{j_l}(z'_{j_1}, x_{j_l}, r_{B_l\backslash\{j_l\}|x_{j_l}})\geqslant \bE^{r'}_{j_l}\:\hat U^{j_l}(z'_{j_1}, x_{j_l}, r'_{B_l\backslash\{j_l\}|x_{j_l}})\\
        \iff & \bE^r_{B_l}\:\hat U^{j_l}(z'_{j_1}, x_{B_l})\geqslant \bE^{r'}_{B_l}\:\hat U^{j_l}(z'_{j_1}, x_{B_l}),
    \end{align*}
    where $U^{j_l}(z'_{j_1}, x_{B_l})$ does not depend on $z'_{j_1}$. 
This completes the proof.  
\end{proof}

Since $j_1=i\hra l$ for all $l\in A$, \lemmaref{lemma_nece_EU} implies that it is without loss of generality to assume that $U^{j_l}(p)$ in equation (\ref{eq_nece_RI2}) is an expected utility function for each $l\geqslant 2$. 

Now we continue the proof of \lemmaref{lemma_nece_RI}. First, consider the case in which $\bar d(\{i\})=\emptyset$ (i.e., $B_1=\{i\}$). Since $\succsim_z$ is represented by (\ref{eq_nece_RI2}), it is continuous. Below, we show that $\succsim_z$ satisfies independence on $\Delta(X_A)$ using a similar proof technique as in the case with $|A|=2$. For any $p,q,r\in \Delta(X_{A})$ such that $p\succ q$, \lemmaref{lemma_continuity} implies the existence of $x,x',y\in X_{A}$ such that $u^{j_l}(x_{B_l})= \bE^p_{B_l}\:u^{j_l}$, $u^{j_l}(x'_{B_l})= \bE^q_{B_l}\:u^{j_l}$, and $u^{j_l}(y_{B_l})= \bE^r_{B_l}\:u^{j_l}$ for $l=1,\dots,m$. Then $p\sim_z x\succ_z q\sim_z x'$ and $r\sim_z y$. For now assume that $y_i\in (\lx_i,\ux_i)$. 
By compactness of $X$ and continuity and monotonicity of $u^o$ and $u^{j_l}$ for all $l$, there exists a finite sequence $(\hat x^t)_{t=1}^T$ in $X_A$ such that (i) $x\geqslant  \hat x^1$,   (ii) $\hat x^T\geqslant  x'$, (iii) $x\succ_z \hat x^1 \sim_z\cdots\sim_z \hat x^T\succ_z x'$, (iv) $y_i\neq \hat x_i^t$ for all $t=1,\dots,T$, and (iv) $\hat x^{t}$ and $\hat x^{t+1}$ only differ in dimension $i$ and some other dimension $l_t\in A$ for all $t=1,..,T-1$. For each $t=1,\dots,T-1$, since $i\hra l_t$, $\hat x^t_{A\backslash\{i,l_t\}} = \hat x^{t+1}_{A\backslash\{i,l_t\}}$, $y_i\neq \hat x^t_i, \hat x^{t+1}_i$, and $\hat x^t\sim_z  \hat x^{t+1}$, we have $\alpha \delta_{\hat x^t} + (1-\alpha)\delta_y \sim_z \alpha\delta_{\hat x^{t+1}} + (1-\alpha)\delta_y$. By transitivity of $\sim_z$, 
\begin{equation}\label{eq_nece_RI3}
    \alpha \delta_{\hat x^1} + (1-\alpha)\delta_y \sim_z \alpha\delta_{\hat x^{T}} + (1-\alpha)\delta_y.
\end{equation}
We claim that (\ref{eq_nece_RI3}) also holds when $y_i=\lx_i$ or $\ux_i$. In either case, we can find a sequence $(y^n)_{n\geqslant  1}$ such that  $y^n\rightarrow y$ and $y^n_i\in (\lx_i,\ux_i)$. Then $\alpha \delta_{\hat x^1} + (1-\alpha)\delta_{y^n} \sim_z \alpha\delta_{\hat x^T} + (1-\alpha)\delta_{y^n}$ for all $n\geqslant  1$. Since $\succsim_z$ is represented by a continuous function on $\Delta(X_{A})$, the indifference holds as $n$ goes to infinity, implying that $\alpha \delta_{\hat x^1} + (1-\alpha)\delta_y \sim_z \alpha\delta_{\hat x^{T}} + (1-\alpha)\delta_y$. 

For any $\alpha\in (0,1)$, it holds that
\begin{align*}
    U(\alpha p + (1-\alpha )r ) & =U(\alpha p_{B_1} + (1-\alpha )r_{B_1},\cdots,\: \alpha p_{B_m} + (1-\alpha )r_{B_m})\\
    & = U(\alpha \delta_{x_{B_1}} + (1-\alpha )\delta_{y_{B_1}},\cdots,\: \alpha \delta_{x_{B_m}} + (1-\alpha )\delta_{y_{B_m}})\\
    & > U(\alpha \delta_{\hat x^1_{B_1}} + (1-\alpha )\delta_{y_{B_1}},\cdots,\: \alpha \delta_{\hat x^1_{B_m}} + (1-\alpha )\delta_{y_{B_m}})\\
    & =U(\alpha \delta_{\hat x^1} + (1-\alpha)\delta_y)\\
    & =U(\alpha \delta_{\hat x^T} + (1-\alpha)\delta_y)\\
    & = U(\alpha \delta_{\hat x^T_{B_1}} + (1-\alpha )\delta_{y_{B_1}},\cdots,\: \alpha \delta_{\hat x^T_{B_m}} + (1-\alpha )\delta_{y_{B_m}})\\
    & > U(\alpha \delta_{x'_{B_1}} + (1-\alpha )\delta_{y_{B_1}},\cdots,\: \alpha \delta_{x'_{B_m}} + (1-\alpha )\delta_{y_{B_m}})\\
    & = U(\alpha q_{B_1} + (1-\alpha )r_{B_1},\cdots,\: \alpha q_{B_m} + (1-\alpha )r_{B_m})\\
    & = U(\alpha q + (1-\alpha )r ).
\end{align*}
That is, $\alpha p + (1-\alpha )r\succ_z \alpha q + (1-\alpha )r$. Hence, $\succsim_z$ satisfies independence on $\Delta(X_{A})$ and has an expected utility representation $\hat U(p)=\bE^p \hat u(x)$. Since $\succsim_z$ is also represented by (\ref{eq_nece_RI2}), we can show that $\hat u$ must be additively separable across different $B_l$. The representation can be rewritten as $\hat U(p)=\sum_{l=1}^m\:\bE^p_{B_l}\: \hat u^{B_l}(x)$. This proves the inductive hypothesis for $A$.

Next, consider the case in which $\bar d(\{i\})\neq\emptyset$. Denote $\hat B_1=B_1\backslash\{i\}\neq \emptyset$. For any fixed $z'\in X_{\hat B_1}$, consider the conditional preference $\succsim_{(z,z')}$ over $\Delta(X_{A\backslash \hat B_1})$. By equation (\ref{eq_nece_RI2}), $\succsim_{(z,z')}$ admits an SMEU representation in which $c(\{i\})=\emptyset$. By the arguments in the previous case, $\succsim_{(z,z')}$ is represented by $\hat U_{z'}(p)= \bE^p_{i}\: \hat u^{i}(z',x) +\sum_{l=2}^m\:\bE^p_{B_l}\: \hat u^{B_l}(z', x)$. Moreover, there exists a continuous and strictly increasing function $f_{z'}$ such that 
\begin{equation}\label{eq_nece_RI4}
   U(z',p)=f_{z'}\left(\bE^p_{i}\: \hat u^{i}(z',x) +\sum_{l=2}^m\:\bE^p_{B_l}\: \hat u^{B_l}(z',x)\right) 
\end{equation}
for all $z'\in X_{\hat B_1}$ and $p\in \Delta(X_{A\backslash \hat B_1})$. 

Since $\succsim_z$ is also represented by (\ref{eq_nece_RI2}), by uniqueness of the expected utility function, $\hat u^{B_l}(z',\:\cdot\:)$ is a positive affine transformation of some $u^{B_l}$ defined on $X_{B_l}$ for all $z'\in X_{\hat B_1}$ and $l\geqslant 2$. Recall that the conditional preference on $B_1$ is represented by
$U^{i}(p)= \bE^p_{i}\:u^{i}(x_{i}, \:(U^B_{x_{i}}(p))_{B\in c(\{i\})}\:)$. 
Up to a normalization, we can rewrite (\ref{eq_nece_RI4})  as 
\begin{equation}\label{eq_nece_RI5}
   U(z',p)=f_{z'}\left(\bE^p_{i}\:u^{i}(x_{i}, \:(U^B_{x_{i}}(z',p))_{B\in c(\{i\})}\:) +\sum_{l=2}^m\:\alpha_l(z')\:\bE^p_{B_l}\: u^{B_l}(x)\right) 
\end{equation}
where $\alpha_l:X_{\hat B_1}\rightarrow (0,\infty)$ for all $l\geqslant 2$ and $U^B_{x_{i}}(z',p)$ does not depend on $p$ for all $B\in c(\{i\})$. Denote $g(x_i,z')=u^{i}(x_{i}, \:(U^B_{x_{i}}(z',p))_{B\in c(\{i\})}\:)$. Then $g: X_{B_1}\rightarrow \bR$ is continuous and strictly increasing.

For each $z'\in X_{\hat B_1}$, we can find an open neighborhood $\cB(z')\subseteq X_{\hat B_1}$  such that $g(X_i,z')\cap g(X_i,\hat z')$ is a nontrivial interval for all $\hat z'\in \cB(z')$. Then for each $\hat z'\in \cB(z')$ and $l\geqslant 2$ there exist $x_i,\hat x_i, x_i', \hat x'_i\in X_i$ and $y,y'\in X_{A\backslash B_1}$ such that (i) $a=g(x_i,z')=g(\hat x_i,\hat z')> a' = g(x'_i,z')=g(\hat x'_i,\hat z')$, (ii) $(x_i,z',y)\sim_z (x'_i,z',y')$, and (iii) $y_{B_{t}}=y'_{B_{t}}$ for all $t\geqslant  2$ and $t\neq l$. By representation (\ref{eq_nece_RI2}), $U(\hat x_i, \hat z', y)=U(x_i,z',y)=  U(x'_i,z',y')=U(\hat x'_i, \hat z', y')$, which implies that $(\hat x_i, \hat z', y)\sim_z (\hat x_i', \hat z', y')$. By representation (\ref{eq_nece_RI5}),  
$a + \alpha_l(z')  u^{B_l}(y_{B_l}) =a' + \alpha_l(z')  u^{B_l}(y'_{B_l})$ and $a + \alpha_l (\hat z')  u^{B_l}(y_{B_l}) =a' + \alpha_l(\hat z')  u^{B_l}(y'_{B_l})$. Since $a>a'$ and $\alpha_l(z'), \alpha_l(\hat z')>0$, we must have $  u^{B_l}(y_{B_l})\neq   u^{B_l}(y'_{B_l})$ and hence $\alpha_l(z') = \alpha_l(\hat z')$. This holds for all $z'\in X_{\hat B_1}$ and $\hat z'$ in its open neighborhood. We conclude that $\alpha_l(z') = \alpha_l(\hat z')$ for all $z',\hat z'\in X_{\hat B_1}$ for all $l$. Denote the constant by $\alpha_l>0$ and (\ref{eq_nece_RI5}) becomes 
\begin{equation}\label{eq_nece_RI6}
   U(z',p)=f_{z'}\left(\bE^p_{i}\:u^{i}(x_{i}, \:(U^B_{x_{i}}(z',p))_{B\in c(\{i\})}\:) +\sum_{l=2}^m\:\alpha_l\:\bE^p_{B_l}\:   u^{B_l}(x)\right).
\end{equation}

We claim that $f_{z'}(a)=f_{\hat z'}(a)$ if $a$ belongs to the intersection of their domain. By \lemmaref{lemma_continuity}, it suffices to consider degenerate $p$. Suppose there exist $x,\hat x\in X_{A\backslash \hat B_1}$ such that 
\begin{align*}
    a & = u^{i}(x_{i}, \:(U^B_{x_{i}}(z',x))_{B\in c(\{i\})}\:) +\sum_{l=2}^m\:\alpha_l\:  u^{B_l}(x_{B_l}) \\
    & = u^{i}(\hat x_{i}, \:(U^B_{x_{i}}(\hat z',\hat x))_{B\in c(\{i\})}\:) +\sum_{l=2}^m\:\alpha_l\: u^{B_l}(\hat x_{B_l}).
\end{align*}
Because the space of conditioning realizations $X_{\hat{B}_1}$ is connected, there exists a continuous path connecting $z'$ to $\hat{z}'$. Since the utility components are continuous and strictly monotonic on compact intervals, any point on this path where the total additive value $a$ is attained admits a local overlapping neighborhood where $a$ remains exactly attainable by offsetting the local change in the conditioning realization with adjustments to the independent branch coordinates. Extracting a finite subcover along this compact path guarantees that we can find two finite sequences  $(y^t)_{t=1}^T$ and $(\hat y^t)_{t=1}^T$ in $X_A$ such that (i) $y^1=(z',x)$ and $\hat y^T = (\hat z',\hat x)$; (ii) $y^t_{\hat B_1}=\hat y^t_{\hat B_1}$ and $u^{i}(\hat y^t_{i}, \:(U^B_{x_{i}}(\hat y^t))_{B\in c(\{i\})}\:) +\sum_{l=2}^m\:\alpha_l\:   u^{B_l}(\hat y^t_{B_l})=a$ for all $t=1,\dots,T$; (iii) $ u^{B_l}(\hat y^t_{B_l}) =   u^{B_l}( y^{t+1}_{B_l})$ and $u^{i}(\hat y^t_{i}, \:(U^B_{x_{i}}(\hat y^t))_{B\in c(\{i\})}\:) =u^{i}( y^{t+1}_{i}, \:(U^B_{x_{i}}( y^{t+1}))_{B\in c(\{i\})}\:)$ for all $t=1,\dots,T-1$ and $l=2,\dots,m$. By property (ii) and representation (\ref{eq_nece_RI6}), $y^t\sim_z \hat y^t$ for all $t=1,..,T$. By property (iii) and representation (\ref{eq_nece_RI2}), $\hat y^t\sim_z y^{t+1}$ for all $t=1,..,T-1$. Property (i) further implies that $(z',x)=y^1\sim_z\cdots\sim_z\hat y^T=(\hat z',\hat x)$. Again by representation (\ref{eq_nece_RI6}), we must have $f_{z'}(a)=f_{\hat z'}(a)$. Hence, it is without loss of generality to assume that $f_{z'}(a)=a$ for all $a$ and  (\ref{eq_nece_RI6}) becomes 
\begin{equation}\label{eq_nece_RI7}
   U(z',p)= \bE^p_{i}\:u^{i}(x_{i}, \:(U^B_{x_{i}}(z',p))_{B\in c(\{i\})}\:) +\sum_{l=2}^m\:\alpha_l\:\bE^p_{B_l}\:   u^{B_l}(x) 
\end{equation}
for all $z'\in X_{\hat B_1}$ and $p\in \Delta(X_{A\backslash \hat B_1})$. It remains to extend the above representation to $\Delta(X_A)$. For each $q\in \Delta(X_A)$, representation (\ref{eq_nece_RI2}) implies that $q\sim_z(q_{B_1},\dots,q_{B_m})$. By \lemmaref{lemma_continuity}, we can find $y\in X_{B_1}$ such that $q\sim_z(y,q_{B_2},\dots,q_{B_m})$ and 
\[u^{i}(y_{i}, \:(U^B_{y_{i}}(y,q_{A\backslash B_1}))_{B\in c(\{i\})}\:) = \bE^q_{i}\:u^{i}(x_i, \:(U^B_{x_{i}}(q))_{B\in c(\{i\})}\:). \]
The utility of $q$ is 
\begin{align*}
    U(q) &=U(y,q_{A\backslash B_1}) = u^{i}(y_{i}, \:(U^B_{y_{i}}(y,q_{A\backslash B_1}))_{B\in c(\{i\})}\:)+\sum_{l=2}^m\:\alpha_l\:\bE^q_{B_l}\:   u^{B_l}(x)  \\
    & = \bE^q_{i}\:u^{i}(x_i, \:(U^B_{x_{i}}(q))_{B\in c(\{i\})}\:) + \sum_{l=2}^m\:\alpha_l\:\bE^q_{B_l}\:  u^{B_l}(x).
\end{align*}
This corresponds to an SMEU representation over $\Delta(X_A)$ with RCT $\cT^*$ in which $c^{\cT^*}(o)=\{\{i\}\}$ and $\bar d^{\cT^*}(\{i\})=A\backslash\{i\}$. Hence, the inductive hypothesis holds. This finishes the proof of \lemmaref{lemma_nece_RI}.
\end{proof}

To show that Axiom \ref{axiom_RI} holds, suppose $i\in A$ and $i\hra A$. Fix any $z\in X_{A^\mathsf{c}}$ and we will omit the dependence of functions on $z$ when there is no risk of confusion. By \lemmaref{lemma_nece_RI} and \lemmaref{lemma_nece_singleton}, the conditional preference $\succsim_z$ over $\Delta(X_A)$ can be represented by $U(q)=  \bE^q_{i}\:u^{i}(x_i, \:(U^B_{x_{i}}(q))_{B\in c(\{i\})})$ for all $q\in \Delta(X_A)$. For any $\alpha\in (0,1)$ and $p,r\in\Delta(X_A)$ such that $\supp(p_i)\cap \supp(r_i)=\emptyset$, we have $(p\alpha r)_{A\backslash\{i\}|x_i} = p_{A\backslash\{i\}|x_i}$  if $x_i\in \supp(p_i)$ and $(p\alpha r)_{A\backslash\{i\}|x_i} = r_{A\backslash\{i\}|x_i}$ if $x_i\in \supp(r_i)$. This implies  $U^B_{x_{i}}(p\alpha r)= U^B_{x_{i}}(p)$ if $x_i\in \supp(p_i)$ and $U^B_{x_{i}}(p\alpha r)= U^B_{x_{i}}(r)$ if $x_i\in \supp(r_i)$ for all $B\in c(\{i\})$. Then Axiom \ref{axiom_RI} holds because 
\begin{align*}
    U (p\alpha r) =  &~\alpha\sum_{x_i\in \supp(p_i)} p_i(x)\:u^{i}(x_i, \:(U^B_{x_{i}}(p\alpha r))_{B\in c(\{i\})}) \\
     & + (1-\alpha)\sum_{x_i\in \supp(r_i)} r_i(x)\:u^{i}(x_i, \:(U^B_{x_{i}}(p\alpha r))_{B\in c(\{i\})})\\
     = &~\alpha\sum_{x_i\in \supp(p_i)} p_i(x)\:u^{i}(x_i, \:(U^B_{x_{i}}(p))_{B\in c(\{i\})}) \\
     & + (1-\alpha)\sum_{x_i\in \supp(r_i)} r_i(x)\:u^{i}(x_i, \:(U^B_{x_{i}}(r))_{B\in c(\{i\})})\\
     = &~ \alpha U(p) + (1-\alpha) U(r).
\end{align*}

 The next two results give  sufficient conditions for $i\hra j$ and $i\perp j$.  

\begin{restatedlemma}{lemma_nece_perp}
    Suppose that $\succsim$ has an SMEU representation.  If  $j\not\in H(i)\cup\bar d(H(i))$ and $i\not\in H(j)\cup\bar d(H(j))$ in some $\cT\in \mathbb{T}(\succsim)$, then $i\perp j$.
\end{restatedlemma}

\begin{proof}[Proof of \lemmaref{lemma_nece_perp}]
    By the proof of \lemmaref{lemma_nece_singleton}, it is without loss to assume that $\cP=\{\{1\},\dots,\{N\}\}$, $j\not\in \bar d(\{i\})$, and $i\not\in \bar d(\{j\})$. Then we can find $A,B\subseteq I$ such that $i\in A\setminus B, j \in B\setminus A$, $A\cup B=I$, $\bar a(\{i\})\cap \bar a(\{j\})\subseteq A\cap B$, and $r\sim r'$ whenever $r_A=r'_A$ and $r_B=r'_B$. Suppose 
    $p_{\{i,j\}^\mathsf{c}} = q_{\{i,j\}^\mathsf{c}}$ and $p_{i|z}=q_{i|z}$, $p_{j|z}=q_{j|z}$ for all $z\in \supp(p_{\{i,j\}^\mathsf{c}})$. Then $p_{\{i\}^\mathsf{c}}=q_{\{i\}^\mathsf{c}}$ and $p_{\{j\}^\mathsf{c}}=q_{\{j\}^\mathsf{c}}$  and hence $p_A=q_A$ and $p_B=q_B$. We conclude that $p\sim q$.
\end{proof}

\begin{restatedlemma}{lemma_nece_hra}
    Suppose that $\succsim$ has an SMEU representation. If $j\in H(i)\cup\bar d(H(i))$ in some $\cT\in \mathbb{T}(\succsim)$, then $i\hra j$. 
\end{restatedlemma}

\begin{proof}[Proof of \lemmaref{lemma_nece_hra}]
By the proof of \lemmaref{lemma_nece_singleton}, it is without loss to assume that $\cP=\{\{1\},\dots,\{N\}\}$ and $j\in \bar d(\{i\})$. Denote $\bar a(\{i\})=\{l_1,\dots,l_T\}$, where $l_k\in \bar d(\{l_{k-1}\})$ for all $k=2,\dots,T$. Let $l_{T+1}=i$. Denote $S_k=\{l\in I\setminus\{l_k\}: l\not\in \bar d (\{l_k\})\}$ for each $k=1,\dots,T+1$. Intuitively, $S_k$ contains dimensions that are either ancestors of $l_k$ or lie on branches other than the branch containing $l_k$. 

Consider any $x^k,y^k\in X$ and $\pi^k\in (0,1)$ for $k=1,\dots,n$ such that $\sum_{k=1}^n\pi^k=1$, $x^k_i\neq x^{k'}_i, y^k_i\neq y^{k'}_i$ for all $k\neq k'$, and $x^k_{\{i,j\}^\mathsf{c}} = y^k_{\{i,j\}^\mathsf{c}} $ and $ x^k\succsim y^k$ for all $k$. We will show by induction that 
 $\sum_{k=1}^n\pi^k\delta_{x^k}\succsim \sum_{k=1}^n\pi^k\delta_{y^k}$.

First, suppose that $x^k_{S_{T+1}}=x^{k'}_{S_{T+1}}:=z\in X_{S_{T+1}}$ for all $k,k'=1,\dots,n$. Let $\hat z= z_{\bar a(\{l_{T+1}\})}$.
Since $S^\mathsf{c}_{T+1}=\{i\}\cup \bar d(\{i\})$, the rankings between $x^k$ and $y^k$, as well as their mixtures, are fully determined by  
\[U^{l_{T+1}}_{\hat z}(p,\delta_z)=\bE^p_{l_{T+1}|\hat z}\:u^{l_{T+1}}(\:\hat z,\:y,\:(U^B_{(\hat z,y)}(p,\delta_z))_{B\in c(\{l_{T+1}\})}\:)\]
for all $p\in \Delta(X_{\{i\}\cup \bar d(\{i\})})$. Since $i=l_{T+1}$, $x^k_i\neq x^{k'}_i, y^k_i\neq y^{k'}_i$ for all $k\neq k'$, $x^k_{S_{T+1}}=y^k_{S_{T+1}}=z$, 
and $ x^k\succsim y^k$ for all $k$, we have 
\begin{align*}
U^{l_{T+1}}_{\hat z}(\sum_{k=1}^n\pi^k\delta_{x^k}) & =\sum_{k=1}^n\pi^k\: U^{l_{T+1}}_{\hat z}(x^k)\geqslant \sum_{k=1}^n\pi^k\: U^{l_{T+1}}_{\hat z}(y^k)= U^{l_{T+1}}_{\hat z}(\sum_{k=1}^n\pi^k\delta_{y^k}).
\end{align*}
That is, $\sum_{k=1}^n\pi^k\delta_{x^k}\succsim \sum_{k=1}^n\pi^k\delta_{y^k}$. 

Next, suppose by induction that $\sum_{k=1}^n\pi^k\delta_{x^k}\succsim \sum_{k=1}^n\pi^k\delta_{y^k}$ if $x^k_{S_{t}}=x^{k'}_{S_{t}}$ for all $k,k'=1,\dots,n$ and some $2\leqslant t\leqslant 
T+1$. Below, we show that this is also true when $x^k_{S_{t-1}}=x^{k'}_{S_{t-1}}$ for all $k,k'=1,\dots,n$. Denote $z=x^k_{S_{t-1}}\in X_{S_{t-1}}$ and  $\hat z= z_{\bar a(\{l_{t-1}\})}$.  Since $S^\mathsf{c}_{t-1}=\{l_{t-1}\}\cup \bar d(\{l_{t-1}\})$, the rankings between $x^k$ and $y^k$, as well as their mixtures, are fully determined by 
\begin{equation}\label{eq_hra1}
    U^{l_{t-1}}_{\hat z}(p,\delta_z)=\bE^p_{l_{t-1}|\hat z}\:u^{l_{t-1}}(\:\hat z,\:y,\:(U^B_{(\hat z,y)}(p,\delta_z))_{B\in c(\{l_{t-1}\})}\:)
\end{equation}
for all $p\in \Delta(X_{\{l_{t-1}\}\cup \bar d(\{l_{t-1}\})})$. Denote $r= \sum_{k=1}^n\pi^k\delta_{x^k}$, $r'=\sum_{k=1}^n\pi^k\delta_{y^k}$, and $r_{l_{t-1}}=r'_{l_{t-1}}=\sum_{m=1}^M\alpha^m \delta_{w^m}$ for some distinct $w^1,\dots,w^M\in X_{l_{t-1}}$. By relabeling, we can write $r=\sum_{m=1}^M \alpha^m r^m$ and $r'=\sum_{m=1}^M \alpha^m r'^{m}$ with $r^m$ and $r'^m$ defined by 
\[r^m = \frac{1}{\alpha^m}\sum_{k=n_{m-1}+1}^{n_m} \pi^k \delta_{x^k} \text{~~~~and~~~~}r'^m = \frac{1}{\alpha^m}\sum_{k=n_{m-1}+1}^{n_m} \pi^k \delta_{y^k}\]
where $n_0=0<n_1<\dots<n_M=n$ and $r^m_{l_{t-1}}=r'^m_{l_{t-1}}=\delta_{w^m}$ for all $m=1,\dots,M$. By representation (\ref{eq_hra1}) and the inductive hypothesis, we have $r^m\succsim r'^m$ and 
$U^{l_t}_{(\hat z,w^m)}(r)=U^{l_t}_{(\hat z,w^m)}(r^m)\geqslant U^{l_t}_{(\hat z,w^m)}(r'^m)=U^{l_t}_{(\hat z,w^m)}(r')$ for all $m=1,\dots,M$.
 For all $B\in c(\{l_{t-1}\})\setminus \{l_t\}$, the conditional distributions of $r$ and $r'$ on $B\cup\bar d(B)$ agree and hence $U^{B}_{(\hat z,w^m)}(r^m)=U^{B}_{(\hat z,w^m)}(r'^m) $
for all $m=1,\dots,M$. By part (ii) of \lemmaref{lemma_continuity}, 
\[u^{l_{t-1}}(\:\hat z,\:w^m,\:(U^B_{(\hat z,w^m)}(r^m))_{B\in c(\{l_{t-1}\})}\:)\geqslant u^{l_{t-1}}(\:\hat z,\:w^m,\:(U^B_{(\hat z,w^m)}(r'^m))_{B\in c(\{l_{t-1}\})}\:),\]
i.e., $U^{l_{t-1}}_{\hat z}(r^m)\geqslant U^{l_{t-1}}_{\hat z}(r'^m)$ for all $m=1,\dots,M$. This further implies  
\begin{align*}
U^{l_{t-1}}_{\hat z}(r) & =\sum_{m=1}^M\alpha^m\: U^{l_{t-1}}_{\hat z}(r^m)\geqslant \sum_{m=1}^M\alpha^m\: U^{l_{t-1}}_{\hat z}(r'^m)= U^{l_{t-1}}_{\hat z}(r').
\end{align*}
That is, $\sum_{k=1}^n\pi^k\delta_{x^k}\succsim \sum_{k=1}^n\pi^k\delta_{y^k}$. This proves the inductive hypothesis.

Finally, we need to drop the assumption that $x^k_{S_{1}}=x^{k'}_{S_{1}}$ for all $k,k'=1,\dots,n$. The argument is the same as the above induction by noting that $l_0=o$ and $U^o(p)= u^o((U^B(p))_{B\in c(o)})$. This proves $i\hra j$.
\end{proof}

We have proved that $\succsim$ satisfies Axioms \ref{axiom_WO}-\ref{axiom_RI} and Axiom \ref{axiom_C}. By \lemmaref{lemma_transitive}, $\hra$ is transitive.
 We are now ready to show that  Axiom \ref{axiom_H} holds. 
Suppose that $i^k\not\perp i^{k+1}$ for all $k=1,\dots,n-1$ and $i^1\hra i^\ell$, $i^n\hra i^\ell$ for all $\ell=2,\dots,n-1$.  Again by \lemmaref{lemma_nece_singleton}, assume that $\cP=\{\{1\},\dots,\{N\}\}$. We claim that there exists some $k^*$ such that $i^k\in \bar d(\{i^{k^*}\})$ for all $k\neq k^*$. Suppose by contradiction that this is not the case. Then we can find a finite partition $A_1,\dots,A_T$ of $\{1,\dots,n\}$ with $T \geqslant 2$ such that 
$i^k\not\in \bar d(\{i^{k'}\})$ and $i^{k'}\not\in \bar d(\{i^{k}\})$ whenever $i^k$ and $i^{k'}$ belong  to different elements of the partition. 
By  \lemmaref{lemma_nece_perp}, we have $i^k\perp i^{k'}$ for such $i^k$ and $i^{k'}$. 
Without loss of generality, suppose that $1\in A_1$. Since $i^1\not\perp i^2$, we must have $2\in A_1$. Since $i^2\not\perp i^3$, we must have $3\in A_1$. By repeating this argument, we conclude that $A_1=\{1,\dots,n\}$, which contradicts the fact that $T \geqslant 2$. Hence, there exists some $k^*$ such that $i^k\in \bar d(\{i^{k^*}\})$ for all $k\neq k^*$. By \lemmaref{lemma_nece_hra}, $i^{k^*}\hra i^k$ for all $k\neq k^*$. 
If $k^*=1$ or $n$, we are done. If $k^*\in \{2,\dots,n-1\}$, then by transitivity of $\hra$ and $i^1\hra i^{k^*}$, we must have $i^1\hra i^n$. This finishes the proof of the necessity.

\subsubsection*{\RN{4}.2:  Omitted Proofs of  Lemmas in Appendix \ref{appen2}} \label{OA_proof_lemmas}

\begin{restatedlemma}{lemma_conditional}
Suppose that $\succsim$ satisfies Axioms \ref{axiom_WO}-\ref{axiom_one} and  \ref{axiom_C}. For any $i\in I$ and $x\in X_{-i}$, the conditional preference $\succsim_x$ on $\Delta(X_i)$ has an EU representation with a continuous and strictly increasing Bernoulli index $v_{i|x}$, which is unique up to a positive affine transformation.
\end{restatedlemma}

\begin{proof}[Proof of \lemmaref{lemma_conditional}]
By Axioms \ref{axiom_WO}-\ref{axiom_one}, $\succsim_x$ admits an EU representation with a Bernoulli index $v_{i|x}$ defined on $X_i$, which is strictly increasing and unique up to a positive affine transformation. To see that $v_{i|x}$ is continuous, suppose by contradiction that there exists a sequence $(y^n)$ in $X_i$ such that $y^n\rightarrow y\in X_i$ and $v_{i|x}(y^n)\not\rightarrow v_{i|x}(y)$. Without loss of generality and passing to a subsequence if necessary, suppose $v_{i|x}(y^n)\rightarrow a<b= v_{i|x}(y)$ and $v_{i|x}(y^n)<(a+b)/2$ for every $n\geqslant 1$.
Since $\succsim_{i|x}$  admits an EU representation, we can find $r\in \Delta(X_i)$ with 
 $\bE^{r}(v_{i|x})=(a+b)/2$. That is, $y^n\prec_x r\prec_x y$ for every $n\geqslant 1$. Axiom \ref{axiom_C}  implies that $y\precsim_x r \prec_x y$, a contradiction. Hence, $v_{i|x}$ is continuous for each $x\in X_{-i}$.
 \end{proof}

\begin{restatedlemma}{lemma_suff_SUB}
Suppose that $\succsim$ satisfies Axioms \ref{axiom_WO}-\ref{axiom_C}. If $A$ is not bracket separable, then there exists $i\in A$ such that $i\hra A$.
\end{restatedlemma}

\begin{proof}[Proof of \lemmaref{lemma_suff_SUB}]
    Suppose that $A$ is not bracket separable. We claim that $A$ can be written as $A=\bigcup_{k=1}^n\{l_k\}$, in which $l_k\not\perp l_{k+1}$ for all $k=1,\dots,n-1$. Repetition is allowed in $(l_k)_{k=1}^n$. Suppose by contradiction that this is not the case. For any $i\in A$, define 
    \[S(i)=\{j\in A: \exists~m\geqslant 1 \text{~and~} l_1=i,\dots,l_m=j\in A~~\text{s.t.}~~l_k\not\perp l_{k+1}, \forall k=1,\dots,m-1\}.\]
    By definition, $S(i)$ is a connected component of the graph on $A$ whose edges are pairs of dimensions satisfying $\not\perp$. Therefore, for any non-empty proper subset $B^{\prime} \subsetneq S(i)$, there must exist at least one pair $j \in B^{\prime}$ and $k \in S(i) \backslash B^{\prime}$ such that $j \not\perp k$. Consequently, $B^{\prime} \not\perp \left(S(i) \backslash B^{\prime}\right)$, which in turn implies $B^{\prime} \not\perp \left(A \backslash B^{\prime}\right)$. Since this holds for every proper subset of $S(i)$, and because there are no edges between components meaning $S(i) \perp (A \backslash S(i))$, Axiom 4 directly implies that $S(i) \vartriangleright A$. Repeating this logic for every connected component of $A$ yields a partition $\left\{B_{k}\right\}_{k=1}^{n}$ of $A$ in which $B_{k} \vartriangleright A$ for all $k=1, \ldots, n$. Because $S(i) \subsetneq A$, there are at least two such components ($n \geqslant 2$). This constitutes a non-trivial bracket partition of $A$, which contradicts the assumption that $A$ is not bracket separable. 

    By \lemmaref{lemma_transitive}, $\hra$ is transitive. There must exist some $i^*\in A$ such that $j\not\hrra i^*$ for all $j\in A$, since otherwise the finiteness of $A$ implies the existence of a cycle $j_1\hrra j_2\hrra \cdots \hrra j_n\hrra j_1$, which contradicts the transitivity of $\hra$. Below, we show that $i^*\hra A$.

    Suppose by contradiction that $i^*\not\hra j^*$ for some $j^*\in A$. By the construction of $i^*$, we must have $j^*\not\hra i^*$, since otherwise $j^*\hrra i^*$. By Axiom \ref{axiom_H} applied to $n=2$ and $l_1=i^*, l_2=j^*$, we have $i^*\perp j^*$.
    Then we can find $l_1,\dots,l_{n}\in A$ with $n \geqslant 3$ such that $i^*=l_1$, $j^*=l_n$, and $l_k\not\perp l_{k+1}$ for all $k=1,\dots,n-1$. Moreover, it is without loss of generality to assume that (i) $l_k\perp l_{k+t}$ for all $t \geqslant 2$ and $k,k+t\in \{1,\dots,n\}$ and (ii) $i\hra l_k$ for all $k=2,\dots,n-1$. If condition (i) fails, then we can simply choose a shorter sequence by removing $l_{k+1},\dots,l_{k+t-1}$. If condition (ii) fails, then we can replace $j^*$ with $l_k$ and work with a shorter sequence $l_1,\dots,l_{k}$.

    By transitivity of $\hra$, we must have $l_k\not\hra j^*$ for all $k=2,\dots,n-1$. We further claim that $j^*\hra l_k$ for all $k=2,\dots,n-1$. We prove it by induction on $n \geqslant 3$. If $n=3$, we have $l_2\not\perp j^*$ and $l_2\not\hra j^*$. Axiom \ref{axiom_H} implies that $j^*\hra l_2$. Suppose by induction that the result holds for all $n \leqslant m$ with $m \geqslant 3$. Now consider the case with $n=m+1$. Since $l_m\not\perp j^*$ and $l_m\not\hra j^*$, Axiom \ref{axiom_H} implies that $j^*\hra l_m$. Suppose by contradiction that $j^*\not\hra l_{m-1}$. By transitivity, $l_m\not\hra l_{m-1}$. By applying the inductive hypothesis to $(j^*,l_m, l_{m-1})$, we derive that $l_{m-1}\hra l_m$. Since $l_{m-1}\not\perp l_m \not\perp j^*$ and $j^*\hra l_m$, Axiom \ref{axiom_H}  implies that either $j^*\hra l_{m-1}$ or $l_{m-1}\hra j^*$, with both   leading to a contradiction. Hence we have  $j^*\hra l_{m-1}$.
    By repeating this argument and applying the inductive hypothesis to  $(j^*,l_m,\dots, l_{k})$ for $2\leqslant k\leqslant m-1$, we can show that $j^*\hra l_{k}$ for all $2\leqslant k\leqslant m$. This proves the inductive hypothesis for $n=m+1$.

    Now we have $i^*=l_1\hra l_k$ and $j^*=l_n\hra l_k$ for all $k=2,\dots,n-1$ and $l_k\not\perp l_{k+1}$ for all $k=1,\dots,n-1$. Again by Axiom \ref{axiom_H}, either $i^*\hra j^*$ or $j^*\hra i^*$, which contradicts $i^*\not\hra j^*$ and $j^*\not\hra i^*$. Hence, we conclude that $i^*\hra A$. This proves \lemmaref{lemma_suff_SUB}.
\end{proof}

\begin{restatedlemma}{lemma_monotone}
 Suppose that $\succsim$ satisfies Axioms \ref{axiom_WO}-\ref{axiom_C}.   \emph{\text{(i)}} For any $A\subseteq I$, $p\in \Delta(X_A)$,  $x\in X_A$, and $x'\in X_{A^{\cp}}$, if $p$ dominates $x$, then $p\succ_{x'} x$, and if $x$ dominates $p$, then $x\succ_{x'} p$. \emph{\text{(ii)}}  For any $A\subseteq I$,  $p\in \Delta(X_A)$,  $x'\in X_{A^{\cp}}$, and $x,y\in X_A$ such that $p$ dominates $y$ and is dominated by $x$, there exists some $z\in X_A$ such that $p\sim_{x'} z$ and $x\geqslant  z\geqslant  y$.
\end{restatedlemma}

\begin{proof}[Proof of \lemmaref{lemma_monotone}]
Without loss of generality, we can focus on the case in which $A=I$ and prove the results using induction on the cardinality of $I$. If $|I|=1$, then both statements hold trivially by \lemmaref{lemma_conditional}. Now suppose that both statements hold for $|I|\leqslant t$ for some $t\geqslant  1$. We need to show that they hold for $|I|=t+1$. 

If $I$ is bracket separable, then there exists a nontrivial partition $\{A_k\}_{k=1}^n$ of $I$ such that $A_k\vartriangleright I$ for all $k=1,\dots,n$. Since the partition is nontrivial, $|A_k|\leqslant t$ for all $k=1,\dots,n$. Fix any $p\in \Delta(X)$. 
 Since $A_1\vartriangleright I$, we have $p\sim(p_{A_1},p_{A_1^{\cp}})$. Since $A_2\vartriangleright I$, we have $p\sim(p_{A_1},p_{A_1},p_{(A_1\cup A_2)^{\cp}})$, and so on. Iteratively, we can show that $p\sim (p_{A_1},\dots,p_{A_n}):=q$. 

If $p$ dominates $x$, then $p_{A_k}=q_{A_k}$ weakly dominates $x_{A_k}$ for every $k$ and the dominance is strict for some $k^*$. By the inductive hypothesis and the definition of $A_k\vartriangleright I$,  we have  $p\sim q\succsim (x_{A_1},\dots,x_{A_n})=x$. Since at least one dominance relation is strict, we must have $p\succ x$. The proof for the case in which $x$ dominates $p$ is symmetric and omitted. This proves the first statement. The second statement can be proved similarly by applying the inductive hypothesis to  each $A_1,\dots,A_n$ iteratively.


Next, suppose that $I$ is not bracket separable. By \lemmaref{lemma_suff_SUB},   there exists $i\in I$ with $i\hra I$. Recall that we are using induction on the cardinality of $I$. Now for each cardinality of $I$, we will use another inductive argument based on the cardinality of $\supp(p_i)$. If $|\supp(p_i)|=1$, then $p=(x,p_{-i})$ for some $x\in X_i$. By Axiom \ref{axiom_M}  and applying the inductive hypothesis to the conditional preference $\succsim_x$, the two statements hold trivially.

Assume that for some $n\geqslant  1$, the two statements hold if $|\supp(p_i)|\leqslant n$. Suppose $|\supp(p_i)|= n+1$ and $p$ dominates $x$. Then, we can choose some $a_i\in \supp(p_i)\setminus\{x_i\}$ with $a_i>x_i$ and write $p=p_i(a_i)(\delta_{a_i}, p_{-i|a_i}) + (1-p_i(a_i))p'$, in which $|\supp(p'_i)|=n$. Note that $(\delta_{a_i}, p_{-i|a_i})$ dominates $x$ and $p'$ weakly dominates $x$, which implies  $p'\succsim x$ and $(\delta_{a_i}, p_{-i|a_i})\succ x$ by the inductive hypothesis. Clearly, $\supp(\delta_{a_i})\cap \supp(\delta_{x_i})=\emptyset$ and $\supp(\delta_{a_i})\cap \supp(p'_i)=\emptyset$. Since $a_i>x_i$, by Axiom \ref{axiom_C}  and Axiom \ref{axiom_M}, there exists some $x_i'>x_i$ and $x'=(x_i',x_{-i})$ such that $(\delta_{a_i}, p_{-i|a_i})\succ x'\succ x$. Since $i\hra I$, Axiom \ref{axiom_RI} implies that $p=p_i(a_i)(\delta_{a_i}, p_{-i|a_i}) + (1-p_i(a_i))p'\succsim  p_i(a_i)(\delta_{a_i}, p_{-i|a_i}) + (1-p_i(a_i))\delta_x \succ p_i(a_i)\delta_{x'} + (1-p_i(a_i))\delta_x$. The first relation is $\sim$ if and only if $p'=x$. Since $x'$ and $x$ agree in all dimensions other than $i$, \lemmaref{lemma_conditional} implies $p_i(a_i)\delta_{x'} + (1-p_i(a_i))\delta_x\succ x$ and hence $p\succ x$. The case in which $x$ dominates $p$ is symmetric. This proves the first statement.

For the second statement, suppose that $p$ dominates $y$ and is dominated by $x$. By the first statement and  Axiom \ref{axiom_M}, $\overline{x}\succ p\succ \lx$. Denote $x^0:=\overline{x}, x^1=(\lx_1,\ux_{-1}), x^2=(\lx_{\{1,2\}},\ux_{\{1,2\}^{\mathsf{c}}}),\dots$, and $x^{t+1}=\lx$. Then $x^0\succ x^1\succ\dots\succ x^{t+1}$. We can find some $k\in \{0,\dots,t\}$ such that $x^k\succsim p\succsim x^{k+1}$. By Axiom \ref{axiom_C}, we can find $\alpha \in [0,1]$ such that $p\sim \delta_{x^k}\:\alpha\: \delta_{x^{k+1}}$, which, by \lemmaref{lemma_conditional}, is indifferent to some $z\in X$. Since $x\succ p\sim z \succ y$, again by Axiom \ref{axiom_C}  and Axiom \ref{axiom_M}, there exists $z'\sim z$ with $x\geqslant  z'\geqslant  y$.  
By induction, we conclude that both statements hold for any finite cardinality of $\supp(p)$ and $I$.
 \end{proof}


\begin{restatedlemma}{lemma_finest}
 Suppose that $\succsim$ satisfies Axioms \ref{axiom_WO}-\ref{axiom_C}. If $A$ is bracket separable, then $A$ must have a bracket partition $\{A_k\}_{k=1}^n$ in which $A_k$ is not bracket separable for all $k=1,\dots,n$. Moreover, such $\{A_k\}_{k=1}^n$ is unique and is finer than any other bracket partition of $A$.
\end{restatedlemma}

\begin{proof}[Proof of \lemmaref{lemma_finest}]

Suppose that $A$ has a bracket partition $\{A_k\}_{k=1}^n$ in which $A_1$ is bracket separable with bracket partition $\{B_k\}_{k=1}^m$. We claim that $\{A_k\}_{k=2}^n\cup \{B_k\}_{k=1}^m$ is a bracket partition of $A$. To see this, note that for all $p\in\Delta(X_{A})$ and $x\in X_{A^\mathsf{c}}$, we have $p\sim_x(p_{A_1},\dots,p_{A_n})$. By \lemmaref{lemma_monotone}, there exists $x_{A_k}\in X_{A_k}$ for every $k\geqslant  2$ such that $(p_{A_1},\dots,p_{A_n})\sim_x (p_{A_1}, x_{A_2},\dots,x_{A_n})$. Then, given $(x,(x_{A_k})_{k=2}^n) \in X_{A_1^\mathsf{c}}$, since $\{B_k\}_{k=1}^m$ is a bracket partition of $A_1$, 
 \begin{align*}
     p &\sim_x (p_{A_1}, x_{A_2},\dots,x_{A_n}) \\
     &\sim_x  (p_{B_1},\dots,p_{B_m}, x_{A_2},\dots,x_{A_n}) \\
     & \sim_x (p_{B_1},\dots,p_{B_m}, p_{A_2},\dots,p_{A_n}).
 \end{align*}
  The last indifference holds since $\{A_k\}_{k=1}^n$ is a bracket partition of $A$. 

  Fix $k\in\{1,\dots,m\}$. We want to show that $B_k\vartriangleright A$.  For any $x\in X_{A^{\cp}}$, $r\in \Delta(X_{A\setminus B_k})$, and $p,q\in \Delta(X_A)$ such that $p_{A\setminus B_k}=q_{A\setminus B_k}$, we have 
\begin{align*}
    & ~p\succsim_x q  \\
    \xLongleftrightarrow{}& ~(p_{B_k}, p_{A_1\setminus B_k}, p_{A\setminus A_1}) \succsim_x  (q_{B_k}, p_{A_1\setminus B_k}, p_{A\setminus A_1}) \\
    \xLongleftrightarrow{\hbox{$A_1 \vartriangleright A$} } &~(p_{B_k}, p_{A_1\setminus B_k}, x_{A\setminus A_1}) \succsim_x  (q_{B_k}, p_{A_1\setminus B_k}, x_{A\setminus A_1})\\
     \xLongleftrightarrow{\hbox{by definition} }&~ (p_{B_k}, p_{A_1\setminus B_k}) \succsim_{(x,x_{A\setminus A_1})}  (q_{B_k}, p_{A_1\setminus B_k})\\
      \xLongleftrightarrow{\hbox{$B_k \vartriangleright A_1$} }&~ (p_{B_k}, r_{A_1\setminus B_k}) \succsim_{(x,x_{A\setminus A_1})}   (q_{B_k}, r_{A_1\setminus B_k})\\
      \xLongleftrightarrow{\hbox{by definition} }&~ (p_{B_k}, r_{A_1\setminus B_k}, x_{A\setminus A_1}) \succsim_x  (q_{B_k}, r_{A_1\setminus B_k}, x_{A\setminus A_1}) \\
      \xLongleftrightarrow{\hbox{$A_1 \vartriangleright A$} } &~(p_{B_k}, r_{A_1\setminus B_k}, r_{A\setminus A_1}) \succsim_x  (q_{B_k}, r_{A_1\setminus B_k}, r_{A\setminus A_1})\\
       \xLongleftrightarrow{\hbox{$A_1 \vartriangleright A$} }&~ (p_{B_k},r)\succsim_x   (q_{B_k},r).
\end{align*}
Hence, $B_k\vartriangleright A$ for every $k$ and  $\{A_k\}_{k=2}^n\cup \{B_k\}_{k=1}^m$ is a bracket partition of $A$. Continue this process until all elements in the bracket partition of $A_1$ are not bracket separable, and then repeat this procedure for other $A_k$'s. After finitely many steps, we will end up with a bracket partition of $A$ in which none of its elements is bracket separable. We still denote it by $\{A_k\}_{k=1}^n$.
To prove that $\{A_k\}_{k=1}^n$ is the finest bracket partition of $A$, we need the following lemma, which is an adaptation of the results in \citeAppendix{OA-Gorman1968}.

\begin{lemma}\label{lemma_isolation}
If $B\vartriangleright A$, $B'\vartriangleright A$,  and $B\cap B', B\setminus B', B'\setminus B$ are nonempty, then $B\cap B'\vartriangleright A, B\setminus B'\vartriangleright A$, and $B'\setminus B\vartriangleright A$.
\end{lemma}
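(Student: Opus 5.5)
The plan is to work entirely from the definition of $\vartriangleright$, fixing $x\in X_{A^\cp}$ throughout. Write $C=B\cap B'$, $D=B\setminus B'$, $D'=B'\setminus B$ and $R=A\setminus(B\cup B')$. Two elementary consequences of $E\vartriangleright A$ will be used repeatedly.
\begin{enumerate}[label=(\alph*)]
\item \textbf{Product decomposition.} Every $p\in\Delta(X_A)$ satisfies $p\sim_x(p_E,p_{A\setminus E})$. To see this, compare $p$ with $q=(p_E,p_{A\setminus E})$: the two lotteries have the same marginal on $A\setminus E$ and the same marginal on $E$.
\item \textbf{Substitution of indifferent degenerate parts.} Suppose $a,a'\in X_E$ satisfy $(a,\pi^0)\sim_x(a',\pi^0)$ for some $\pi^0$. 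Then $(a,\pi)\sim_x(a',\pi)$ for every $\pi\in\Delta(X_{A\setminus E})$.
\end{enumerate}
Together with \lemmaref{lemma_monotone}(ii), which replaces any lottery by an indifferent degenerate consequence, these give all the tools needed.

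\textbf{Step 1: $C\vartriangleright A$.} This is a direct chain of equivalences. Take $p,q$ with $p_{A\setminus C}=q_{A\setminus C}$. Since $A\setminus B\subseteq A\setminus C$, the relation $B\vartriangleright A$ gives
\[
p\succsim_x q\iff (p_B,w)\succsim_x(q_B,w)
\]
for a fixed degenerate $w\in X_{A\setminus B}$. The two right-hand lotteries have the same marginal on $A\setminus B'=D\cup R$, because $p_D=q_D$ and $w$ is degenerate. Applying $B'\vartriangleright A$, the comparison reduces to $(p_C,w_{D'},t)$ versus $(q_C,w_{D'},t)$ for an arbitrary fixed $t$ on $D\cup R$.

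Running the same two steps on $(p_C,r)$ versus $(q_C,r)$, for an arbitrary $r\in\Delta(X_{A\setminus C})$, reduces that comparison to the identical pair. The key observation is that the $B$-marginal of $(p_C,r)$ is $(p_C,r_D)$, so $C$ stays independent at each stage. Hence $p\succsim_x q\iff(p_C,r)\succsim_x(q_C,r)$, which is exactly $C\vartriangleright A$.

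\textbf{Step 2: $D\vartriangleright A$ (the case $D'$ is symmetric).} Take $p,q$ with $p_{A\setminus D}=q_{A\setminus D}$. By $B\vartriangleright A$ and (a), the comparison $p$ versus $q$ is equivalent to
\[
(p_C,p_D,s)\ \text{versus}\ (p_C,q_D,s)
\]
for any fixed $s$ on $A\setminus B$. This uses that the $B$-marginals of the product-decomposed lotteries are products.

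It remains to show that this ranking does not depend on the $C$-component, and in particular can be replaced by $r_C$ or by a joint law $r$ correlated with $D$. This is the Gorman-type step and the main obstacle: weak separability of $B$ alone does not give separability of the sub-block $D$. To handle it, I first restrict to a degenerate $C$-component $a_C$. For another value $a'_C$, I use \lemmaref{lemma_monotone} and weak continuity to find $w'_{D'}$ with $(a_C,w_{D'})$ and $(a'_C,w'_{D'})$ indifferent as $B'$-parts. Substitution (b) for $B'$ then transports the comparison at $(a_C,w_{D'})$ to the comparison at $(a'_C,w'_{D'})$, since both lotteries in the comparison share the same $B'$-part. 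Next, $B\vartriangleright A$ lets me switch $w'_{D'}$ back to $w_{D'}$; here the $A\setminus B$-marginals coincide. So the ranking is independent of $a_C$.

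Finally, for a general $r$, I use \lemmaref{lemma_monotone}(ii) together with (a) applied to $C\vartriangleright A$ from Step 1 to replace the $C$-part by an indifferent degenerate value. I expect the bookkeeping here, namely making sure each substitution preserves the required equality of complementary marginals and handling boundary points where the indifferent $w'_{D'}$ may fail to exist, to need the most care. At boundary points I would approximate via weak continuity, as in the proof of \lemmaref{lemma_transitive}.
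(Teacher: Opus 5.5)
Your route is the paper's. Step 1 is the same chain of equivalences. In Step 2 you reduce to a degenerate $C$-value via \lemmaref{lemma_monotone} and $C\vartriangleright A$, then transport the comparison to another $C$-value through a compensating $D'$-value using $B'\vartriangleright A$, and switch the $D'$-value back with $B\vartriangleright A$; this is exactly the paper's second case.

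The gap is your claim that for another value $a'_C$ you can find $w'_{D'}$ with $(a_C,w_{D'})$ and $(a'_C,w'_{D'})$ indifferent as $B'$-parts. The $D'$ coordinates have bounded compensating power. Even with the most favorable choice $w_{D'}=\lx_{D'}$ (when $a'_C$ is worse), such a $w'_{D'}$ exists only if $(a'_C,\ux_{D'})\succsim_x(a_C,\lx_{D'})$, and this can fail. It is not a boundary phenomenon. Suppose $C$ and $D'$ are single dimensions and the $B'$-evaluation is $c+\varepsilon d'$ on $[0,1]^2$ with small $\varepsilon>0$. Then no bridge exists for any pair with $|a_C-a'_C|>\varepsilon$, which is an open set of pairs, so approximating via weak continuity cannot rescue it. Your sentence ``so the ranking is independent of $a_C$'' therefore does not follow. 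The one-step bridge only shows that the ranking agrees at $C$-values close enough to be compensated by moving $D'$.

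The missing idea is to chain the one-step bridge. The paper treats the case $(y,\lx_{D'})\succ_x(y',\ux_{D'})$ as follows:
\begin{enumerate}
\item Set $y^0=y$ and choose $y^{k+1}$ with $(y^{k+1},\ux_{D'})\sim_x(y^k,\lx_{D'})$.
\item On the relevant compact set of $C$-values, the gain from raising $D'$ from $\lx_{D'}$ to $\ux_{D'}$ is bounded below by a positive amount. Hence finitely many steps bring you within reach of a single bridge.
\item Each step transports the ranking via $B'\vartriangleright A$ followed by $B\vartriangleright A$.
\end{enumerate}
An equivalent alternative is to take $w_{D'}$ with some coordinate interior. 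By continuity and strict monotonicity, every $a_C$ then has a neighborhood of $C$-values reachable by one bridge, so the ranking is locally constant on $X_C$. Since $X_C$ is connected, the ranking is constant. One of these arguments must be added; the rest of your proposal then goes through.
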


\begin{proof}[Proof of \lemmaref{lemma_isolation}]
Given the assumptions, we first verify that $B\cap B'\vartriangleright A$.  Take any $x\in X_{A^{\cp}}$, $r\in \Delta(X_{A\backslash (B\cap B')})$, and $p,q\in \Delta(X_A)$ such that $p_{A\setminus (B\cap B')}=q_{A\setminus (B\cap B')}$. Then $p_{A\setminus B}=q_{A\setminus B}$ and $p_{A\setminus B'}=q_{A\setminus B'}$. Since $B\vartriangleright A$ and  $B'\vartriangleright A$, we have
\begin{align*}
     p \succsim_x q  
    \iff & (p_{B}, p_{A\setminus B})\succsim_x (q_{B}, q_{A\setminus B}) \\
   \iff & (p_{B\cap B'}, p_{A\setminus (B\cup B')}, p_{B\setminus B'}, p_{B'\setminus B})\succsim_x (q_{B\cap B'}, q_{A\setminus (B\cup B')}, q_{B\setminus B'}, q_{B'\setminus B}) \\
   \iff  & (p_{B\cap B'}, r_{A\setminus (B\cup B')}, p_{B\setminus B'}, r_{B'\setminus B})\succsim_x (q_{B\cap B'}, r_{A\setminus (B\cup B')}, q_{B\setminus B'}, r_{B'\setminus B}) \\
   \iff & (p_{B\cap B'}, r_{A\setminus (B\cup B')}, r_{B\setminus B'}, r_{B'\setminus B})\succsim_x (q_{B\cap B'}, r_{A\setminus (B\cup B')}, r_{B\setminus B'}, r_{B'\setminus B}) \\
   \iff & (p_{B\cap B'}, r)\succsim_x (q_{B\cap B'}, r).
\end{align*}
Then we verify $B\setminus B'\vartriangleright A$. The case for $B'\setminus B$ is symmetric and hence omitted. Take any $x\in X_{A^{\cp}}$, $r\in \Delta(X_{A\setminus(B\setminus  B')})$, and $p,q\in \Delta(X_A)$ such that $p_{A\setminus (B\setminus  B')}=q_{A\setminus (B\setminus  B')}$. Since $B\vartriangleright A, B'\vartriangleright A$, and $B\cap B'\vartriangleright A$, we can use $\succsim_x$ to represent the conditional preference on $\Delta(X_B), \Delta(X_{B'})$ and $\Delta(X_{B\cap B'})$ when there is no risk of confusion. 
Using the previous argument again, $p \succsim_x q $ if and only if
\begin{align*}
 (p_{B\cap B'}, r_{A\setminus (B\cup B')}, p_{B\setminus B'}, \lx_{B'\setminus B})\succsim_x (q_{B\cap B'}, r_{A\setminus (B\cup B')}, q_{B\setminus B'}, \lx_{B'\setminus B}).
\end{align*}

We want to replace $p_{B\cap B'}=q_{B\cap B'}$ with $r_{B\cap B'}$ without changing the preference. This can be done immediately if $p_{B\cap B'}\sim_x r_{B\cap B'}$, because $B\cap B'\vartriangleright A$. 
Without loss of generality, assume that $p_{B\cap B'}\succ_x r_{B\cap B'}$.
First, by \lemmaref{lemma_monotone}, there exist $y,y'\in X_{B\cap B'}$ such that $y\geqslant  y'$ and $y\sim_x p_{B\cap B'}\succ_x r_{B\cap B'}\sim_x y'$. Since $B\cap B'\vartriangleright A$,  we have
\begin{align*}
p \succsim_x q\iff (y, r_{A\setminus (B\cup B')}, p_{B\setminus B'}, \lx_{B'\setminus B})\succsim_x (y, r_{A\setminus (B\cup B')}, q_{B\setminus B'}, \lx_{B'\setminus B}).
\end{align*}
Second, if $(y', \ux_{B'\setminus B})\succsim_x (y,\lx_{B'\setminus B})$, then  Axiom \ref{axiom_C}  and Axiom \ref{axiom_M}  imply that there exists $z\in X_{B'\setminus B}$ such that $(y',z)\sim_x (y,\lx_{B'\setminus B})$. Since $B\vartriangleright A$, $B'\vartriangleright A$, and $B\cap B'\vartriangleright A$, 
\begin{align*}
    p \succsim_x q 
   \iff & (y', r_{A\setminus (B\cup B')}, p_{B\setminus B'}, z)\succsim_x (y', r_{A\setminus (B\cup B')}, q_{B\setminus B'}, z) \\
   \iff  & (r_{B\cap B'}, r_{A\setminus (B\cup B')}, p_{B\setminus B'},  z)\succsim_x (r_{B\cap B'}, r_{A\setminus (B\cup B')}, q_{B\setminus B'}, z) \\
   \iff & (r_{B\cap B'}, r_{A\setminus (B\cup B')}, p_{B\setminus B'}, r_{B'\setminus B})\succsim_x (r_{B\cap B'}, r_{A\setminus (B\cup B')}, q_{B\setminus B'}, r_{B'\setminus B}) \\
   \iff & (p_{B\setminus B'}, r)\succsim_x (q_{B\setminus B'}, r).
\end{align*}

Third, suppose instead that $(y,\lx_{B'\setminus B})\succ_x (y', \ux_{B'\setminus B})$. 
Then there exist $y'',y'''\in X_{B\cap B'} \setminus \{\ux_{B\cap B'},\lx_{B\cap B'}\}$ such that $y''>y'''$, $(y'',\ux_{B'\setminus B})\succ_x (y, \lx_{B'\setminus B})$, and  $(y''',\lx_{B'\setminus B})\prec_x (y', \ux_{B'\setminus B})$. By strict consequence monotonicity, $(z, \bar{x}_{B^{\prime} \backslash B}) \succ_{x}(z, \underline{x}_{B^{\prime} \backslash B})$ for every $z \in X_{B \cap B^{\prime}}$. Because the set $\{z \in X_{B \cap B^{\prime}}: y^{\prime \prime \prime} \leqslant z \leqslant y^{\prime \prime}\}$ is compact and conditional preferences are continuous, the utility difference between these two profiles is uniformly bounded below by a strictly positive amount.  tarting with $y^{0}=y$, Axiom 7 and strict monotonicity ensure that we can iteratively find $y^{k+1}$ such that $(y^{k+1}, \bar{x}_{B^{\prime} \backslash B}) \sim_{x}(y^{k}, \underline{x}_{B^{\prime} \backslash B})$. Since each iteration decreases the underlying utility in the $B \cap B^{\prime}$ coordinates by at least this strictly positive lower bound, the process must surpass $y^{\prime}$ in finitely many steps. This ensures the existence of a finite sequence $ (y^{k})_{k=0}^{n}$ in $\{z\in  X_{B\cap B'}: y'''\leqslant z \leqslant y''\}\cup\{y\}$ such that (i) $y^0=y$, (ii) $(y^{k+1},\ux_{B'\setminus B})\sim_x (y^k,\lx_{B'\setminus B})$ for each $k=0,\dots,n-1$, and (iii)  $(y', \ux_{B'\setminus B})\succsim_x (y^n,\lx_{B'\setminus B})$. By applying (ii) and the implications of $B\vartriangleright A$ and $B'\vartriangleright A$ repeatedly, we obtain  
\begin{align*}
    p \succsim_x q 
   \iff & (y, r_{A\setminus (B\cup B')}, p_{B\setminus B'}, \lx_{B'\setminus B})\succsim_x (y, r_{A\setminus (B\cup B')}, q_{B\setminus B'}, \lx_{B'\setminus B}) \\
   \iff  &(y^1, r_{A\setminus (B\cup B')}, p_{B\setminus B'}, \ux_{B'\setminus B})\succsim_x (y^1, r_{A\setminus (B\cup B')}, q_{B\setminus B'}, \ux_{B'\setminus B}) \\
   \iff & (y^1, r_{A\setminus (B\cup B')}, p_{B\setminus B'}, \lx_{B'\setminus B})\succsim_x (y^1, r_{A\setminus (B\cup B')}, q_{B\setminus B'}, \lx_{B'\setminus B})\\
   \iff  &(y^2, r_{A\setminus (B\cup B')}, p_{B\setminus B'}, \ux_{B'\setminus B})\succsim_x (y^2, r_{A\setminus (B\cup B')}, q_{B\setminus B'}, \ux_{B'\setminus B}) \\
   \iff & (y^2, r_{A\setminus (B\cup B')}, p_{B\setminus B'}, \lx_{B'\setminus B})\succsim_x (y^2, r_{A\setminus (B\cup B')}, q_{B\setminus B'}, \lx_{B'\setminus B})\\
   \iff & \cdots 
   \iff   (y^n, r_{A\setminus (B\cup B')}, p_{B\setminus B'}, \lx_{B'\setminus B})\succsim_x (y^n, r_{A\setminus (B\cup B')}, q_{B\setminus B'}, \lx_{B'\setminus B}).
\end{align*}
Since we have $(y', \ux_{B'\setminus B})\succsim_x (y^n,\lx_{B'\setminus B})$ by (iii), we can apply the argument in the previous case to establish that $ p \succsim_x q \iff  (p_{B\setminus B'}, r)\succsim_x (q_{B\setminus B'}, r).$
This completes the proof for $B\setminus B'\vartriangleright A$.\end{proof}


Now consider a different bracket partition $\{B_l\}_{l=1}^m$ of $A$. If $\{A_k\}_{k=1}^n$ is not finer than $\{B_l\}_{l=1}^m$, then there exist $A_k$ and $B_{l_1},\dots,B_{l_t}$ with $t\geqslant  2$ such that $A_k\cap B_{l_i}\neq \emptyset$ for all $i=1,\dots,t$ and $A_k\subseteq \bigcup_{i=1}^t B_{l_i}$. By \lemmaref{lemma_isolation}, $(A_k\cap B_{l_i})\vartriangleright A$ for all $i=1,\dots,t$, which implies that $A_k$ has a bracket partition $\{A_k\cap B_{l_i}\}_{i=1}^t$, a contradiction. Hence, $\{A_k\}_{k=1}^n$ is finer than any other bracket partition $\{B_l\}_{l=1}^m$. Moreover, there must exist some bracket-separable $B_l$ with a bracket partition being a subset of $\{A_k\}_{k=1}^n$. This ensures the uniqueness of the finest bracket partition.
\end{proof}

\begin{restatedlemma}{lemma_linear}
 Suppose that $\succsim$ satisfies Axioms \ref{axiom_WO}-\ref{axiom_C} and $i\in B$. If $i\hra B$, then for every $x\in X_{B^\mathsf{c}}$, there exists a function $U_x: \Delta(X_B)\rightarrow \bR$ such that \emph{\text{(i)}}  $p\succsim_x q$ if and only if $U_x(p)\geqslant  U_x(q)$ for all $p,q\in \Delta(X_B)$; \emph{\text{(ii)}}  $U_x(p\alpha q ) = \alpha U_x(p) + (1-\alpha)U_x(q)$ for all $\alpha\in (0,1)$ and $p,q\in \Delta(X_B)$ with $\supp(p_i)\cap \supp(q_i)=\emptyset$; \emph{\text{(iii)}}  the function $w_x: X_B\rightarrow \bR$ defined by $w_x(y)=U_x(\delta_y)$ for every $y\in X_B$ is continuous and strictly increasing; and \emph{\text{(iv)}}  $U_x$ is unique up to a positive affine transformation.
\end{restatedlemma}

\begin{proof}[Proof of \lemmaref{lemma_linear}]
We first prove some intermediate results.

\begin{lemma}\label{lemma_independence}
   Suppose that $\succsim$ satisfies Axioms \ref{axiom_WO}-\ref{axiom_C} and $i\in B$. If $i\hra B$, then for all $\alpha\in(0,1)$, $x\in X_{B^\mathsf{c}}$, and $p,q,r,s\in\Delta(X_{B})$ such that $\supp(p_i)\cap \supp(r_i)=\emptyset$ and $\supp(q_i)\cap \supp(s_i)=\emptyset$, the following properties hold:
\emph{\text{(i)}}  $p\succ_x r \implies p\succ_x p\alpha r\succ_x r$; 
\emph{\text{(ii)}}  $p\sim_x r \implies p\sim_x p\alpha r\sim_x r$; 
\emph{\text{(iii)}}  $p\sim_x q, r\sim_x s \implies p\alpha r\sim_x q\alpha s$; and
\emph{\text{(iv)}}  $p\succ_x q, r\succ_x s \implies p\alpha r\succ_x q\alpha s$.
\end{lemma}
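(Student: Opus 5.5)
We prove the four properties in the order (ii), (i), (iii), (iv). The tools are Axiom \ref{axiom_RI} (conditional independence, available because $i\hra B$), the state-by-state dominance content of $i\hra B$, \lemmaref{lemma_monotone}, and Axiom \ref{axiom_C}. Throughout we fix $x\in X_{B^\mathsf{c}}$ and suppress it. Note that Axiom \ref{axiom_RI} only delivers the mixed implication: $p\succ q$ and $r\sim s$ imply $p\alpha r\succ q\alpha s$. The remaining combinations must be obtained from it.

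For (ii), suppose $p\sim r$ with disjoint $i$-supports. We first want $p\alpha r\sim p$. Suppose instead that $p\alpha r\succ p$. By \lemmaref{lemma_monotone}(ii), choose a consequence $y$ with $y\sim p\sim r$. Using the uncountability remark after \lemmaref{lemma_monotone} together with Axiom \ref{axiom_C}, we may perturb $y$ so that $y_i\notin\supp(p_i)\cup\supp(r_i)$. Now apply Axiom \ref{axiom_RI} with the strict pair coming from a slight upward or downward perturbation $y'$ of $y$ in dimension $i$ (chosen by monotonicity so that $y'\succ r$ or $y'\prec r$). This gives $p\alpha y'\succ p\alpha r$-type comparisons; letting $y'\to y$ and invoking weak continuity yields $p\alpha r\sim p\alpha y$. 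Repeating the argument in the first slot gives $p\alpha y\sim y\alpha y=y$. Where a perturbation is needed to create strictness, we use a sequence $y^n\to y$ in dimension $i$ and closedness from Axiom \ref{axiom_C}.

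For (i), suppose $p\succ r$. Pick $y$ with $r\sim y$ and $y_i$ avoiding $\supp(p_i)$. Axiom \ref{axiom_RI} gives $p\alpha r\succ r\alpha r=r$ directly, taking $q=r$ and $s=r$. This is legitimate because the axiom's disjointness requirement is only between $q$ and $s$, and here $q=s=r$ needs $\supp(r_i)\cap\supp(r_i)=\emptyset$. That fails, so instead we compare with $r\alpha y$ and use (ii) to get $r\alpha y\sim r$. The upper bound $p\succ p\alpha r$ is symmetric, with $p\alpha z\sim p$ for $z\sim p$ obtained by (ii).

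For (iii), take $p\sim q$ and $r\sim s$. Choose auxiliary consequences $y\sim r$ whose $i$-coordinate avoids all relevant supports. By (ii) together with a chain of Axiom \ref{axiom_RI} and continuity arguments, we get $p\alpha r\sim p\alpha y\sim q\alpha y\sim q\alpha s$. For (iv), chain two strict steps: $p\alpha r\succ q\alpha r'\succsim q\alpha s$, inserting an intermediate $r'\sim r$ with suitable support via \lemmaref{lemma_monotone}. The first step uses Axiom \ref{axiom_RI}; the second uses the mixed form again with roles swapped.

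We expect the main obstacle to be the support bookkeeping. The auxiliary consequences must have $i$-coordinates disjoint from every lottery they are mixed with, so that each invocation of Axiom \ref{axiom_RI} is legitimate. Boundary cases, where the needed $y_i$ equals $\lx_i$ or $\ux_i$, or where indifference classes are thin, will be handled by approximation and weak continuity, as in the proof of \lemmaref{lemma_transitive}.
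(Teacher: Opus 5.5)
Your proof of (ii), on which your (i) and (iii) also rest, has a genuine gap at the limit step. You create strictness by perturbing a consequence $y$ in dimension $i$ and obtain $p\alpha\delta_{y'}\succ_x p\alpha r$ (and the reverse for a downward perturbation). You then let $y'\to y$ ``by weak continuity.'' Axiom \ref{axiom_C} does not license this. It gives closedness in the mixture weight, and closedness of $\{z\in X:\delta_z\succsim p\}$ and $\{z\in X:p\succsim\delta_z\}$ for a \emph{degenerate} $\delta_z$ against a fixed lottery. It says nothing about varying a support point inside a mixture $p\alpha\delta_{y'}$ when $p$ is a nondegenerate multidimensional lottery. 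Weak continuity was deliberately limited in this way (see Example \ref{eg2}), and at this stage you have no representation to fall back on.

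Your second replacement, $\delta_{y'}\alpha\delta_y\to\delta_y$, can be rescued: both points differ only in dimension $i$, so \lemmaref{lemma_conditional} turns the mixture into an indifferent consequence near $y$. The first replacement, however, always leaves a general lottery in the fixed slot, whichever slot you start with. The same unjustified limit is hidden in your ``continuity arguments'' for (iii); note that $p\alpha r\sim_x p\alpha y$ is a (iii)-type statement, not an instance of (ii). Your boundary cases of (i) are also deferred to this device. For example, at $p=\ux_B$ the only $z\sim_x p$ is $\ux_B$, so (ii) cannot be invoked; likewise at $r=\lx_B$. Moreover, for the pair $(q,s)=(r,\delta_y)$ in your (i), $y_i$ must avoid $\supp(r_i)$, not $\supp(p_i)$.

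The missing idea is to obtain strictness without any limit and to take limits only in mixture weights. As the paper does, prove (i) first and directly. For $p\succ_x r$, take consequences $y,y'$ that differ only in dimension $i$ with $p\succ_x y\succ_x y'\sim_x r$. Axiom \ref{axiom_RI} with $q=\delta_y$ and $s=\delta_{y'}$ (which only needs $y_i\neq y'_i$) gives $p\alpha r\succ_x\delta_y\alpha\delta_{y'}$. Then \lemmaref{lemma_conditional} gives $\delta_y\alpha\delta_{y'}\succ_x y'\sim_x r$. The upper bound is symmetric and works at $p=\ux_B$ using a point just below $\ux_B$ in dimension $i$.

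Then (ii) and (iii) follow from (i) and Axiom \ref{axiom_RI} applied to $p\beta\delta_y$ and $p\beta\delta_{y'}$ with $y\succ_x p\succ_x y'$. The key identity is $(p\beta\delta_y)\alpha r=(p\alpha r)\beta(\delta_y\alpha r)$, which is a mixture in $\beta$, so letting $\beta\to 1$ is exactly what Axiom \ref{axiom_C} allows. Alternatively, you can keep your order by replacing the perturbed consequence $y'$ with a two-point lottery $\delta_{y^+}\gamma\delta_{y^-}$ in dimension $i$ and varying $\gamma$. Since $p\alpha(\delta_{y^+}\gamma\delta_{y^-})=(p\alpha\delta_{y^+})\gamma(p\alpha\delta_{y^-})$, the limit is again in a mixture weight. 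Your chain for (iv) is correct and coincides with the paper's.
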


\begin{proof}[Proof of \lemmaref{lemma_independence}]
    For (i), we consider four cases. First, if $p=\ux_B$ and $r=\lx_B$, then the result is implied by \lemmaref{lemma_monotone}. Second, if $p=\ux_B$ and $r\succ_x\lx_B$, then by Axiom \ref{axiom_C}  and Axiom \ref{axiom_M}, we can find $\varepsilon\in \bR^{B}_+$ such that $\varepsilon_i>0$, $\varepsilon_j=0$ for all $j\in B\setminus\{i\}$, and $\ux_B\succ_x\ux_B-\varepsilon\succ_x r$. By the definition of $i\hra B$ and \lemmaref{lemma_conditional}, we have $p\alpha r\precsim_x \delta_{\ux_B} \:\alpha \: \delta_{\ux_B-\varepsilon}\prec_x \ux_B= p$. As $r\succ_x \lx_B$, we can find $y,y'\in X_{B}$ such that  $p\succ_x y\succ_x r\sim_x y'$, $y_i,y_i'\not\in \supp(p_i)\cup\supp(r_i)$, $y_i\neq y_i'$, and $y_j=y'_j$ for all $j\in B\setminus\{i\}$. Again by  the definition of $i\hra B$ and \lemmaref{lemma_conditional}, $p\alpha r \succsim_x  \delta_y \alpha  \delta_{y'} \succ_x y' \sim_x r$. Third, if $p\prec_x \ux_B$ and $r=\lx_B$, then the proof is symmetric to the second case. Finally, if $\ux_B \succ_x p\succ_x r\succ_x \lx_B$, then the proof is a simple combination of those of the above two cases. 

For (ii), if $p\sim_x r$ and $\supp(p_i)\cap \supp(r_i)=\emptyset$, then $\ux_B\succ_x p\sim_x r\succ_x \lx_B$. By Axiom \ref{axiom_C}  and Axiom \ref{axiom_M}, we can find $y,y'\in X_{B}$ such that $y\succ_x p\sim_x r\succ_x y'$, $y_i\neq y'_i$, and $y_i,y'_i\not\in \supp(p_i)\cup\supp(r_i)$. For any $\beta\in (0,1)$, by applying (i) twice we get $p\beta \delta_y\succ_x p \sim_x r \succ_x p\beta \delta_{y'}$. Applying (i) again to    $p\beta \delta_y, r$, and $p\beta \delta_{y'}$ yields $(p\beta \delta_y)\alpha r\succ_x r \succ_x (p\beta \delta_{y'})\alpha r$. Let $\beta$ go to $1$, and by  Axiom \ref{axiom_C}, $p\alpha r\succsim_x r \succsim_x p\alpha r$, which implies $p\alpha r\sim_x r\sim_x p$.

For (iii), if $p,r\in \{\ux_B,\lx_B\}$, then $p=q$ and $r=s$ and the result is trivial. Without loss of generality, assume that $\ux_B \succ_x p\sim_x q \succ_x \lx_B$. Using the argument in the proof for (ii), we can find $y,y'$ such that $y\succ_x p\sim_x q\succ_x y'$, $y_i\neq y'_i$, $y_i,y'_i\not\in \supp(p_i)\cup\supp(r_i)$, and $p\beta \delta_y\succ_x p\sim_x q \succ_x p\beta \delta_{y'}$ for every $\beta\in(0,1)$. Since $i\hra B$, Axiom \ref{axiom_RI}  implies  $(p\beta \delta_y)\alpha r\succ_x q\alpha s \succ_x (p\beta \delta_{y'})\alpha r$. Let $\beta$ go to $1$, and by  Axiom \ref{axiom_C}, we have $p\alpha r\succsim_x q\alpha s \succsim_x p\alpha r$, which implies that $p\alpha r\sim_x q\alpha s$.

For (iv), it suffices to consider the case in which  $p\succ_x q\succ_x r\succ_x s$, as the other cases are either symmetric or implied by (i). There exists $y\in X_{B}$ such that $r\sim_x y$ and $y_i\not\in \supp(q_i)$. Since $i\hra B$, applying Axiom \ref{axiom_RI} twice shows $p\alpha r \succ_x q\alpha \delta_y \succ_x q \alpha s$. 
\end{proof}

The next lemma shows when independence holds  if supports of lotteries overlap.

\begin{lemma}\label{lemma_independence_extreme}
 Suppose that $\succsim$ satisfies Axioms \ref{axiom_WO}-\ref{axiom_C} and $i\in B$.  If $i\hra B$, then for all $\alpha\in(0,1)$, $x\in X_{B^\mathsf{c}}$, and $p,q,r,s\in\Delta(X_{B})$ such that $\supp(p_i)\cap \supp(r_i)=\emptyset$ and $\supp(q)\cup \supp(s)\subseteq \{\ux_{B},\lx_{B}\}$, then  $p\sim_x q, r\sim_x s \implies p\alpha r\sim_x q\alpha s$.
\end{lemma}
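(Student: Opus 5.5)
The plan is to reduce Lemma~\ref{lemma_independence_extreme} to Lemma~\ref{lemma_independence}(iii) by perturbing $q$ and $s$ so that their supports in dimension $i$ become disjoint, and then to pass to the limit with weak continuity. The only obstruction to applying Lemma~\ref{lemma_independence}(iii) directly is that $\supp(q_i)\cap\supp(s_i)$ may be nonempty, because both lie in $\{\ux_i,\lx_i\}$. Since $q$ is supported on $\{\ux_B,\lx_B\}$, write $q=\delta_{\ux_B}\beta\delta_{\lx_B}$; by Lemma~\ref{lemma_conditional} applied to dimension $i$ (or by Lemma~\ref{lemma_monotone}(ii)), $q$ is indifferent under $\succsim_x$ to some degenerate $\delta_{y}$, and likewise $s\sim_x\delta_{y'}$. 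Except in the degenerate cases $q\in\{\ux_B,\lx_B\}$ or $s\in\{\ux_B,\lx_B\}$, Lemma~\ref{lemma_monotone} lets us choose $y$ and $y'$ in the interior with $y_i\neq y'_i$.

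First, I would handle the interior case. Pick $y\sim_x q$ and $y'\sim_x s$ with $y_i\ne y'_i$. Lemma~\ref{lemma_independence}(iii) gives $p\alpha r\sim_x \delta_y\alpha\delta_{y'}$, since $p,r$ have disjoint $i$-supports, as do $\delta_y,\delta_{y'}$. It remains to show $\delta_y\alpha\delta_{y'}\sim_x q\alpha s$. Here $q\alpha s$ is a lottery on the two extreme points, so it equals $\delta_{\ux_B}\gamma\delta_{\lx_B}$ for some $\gamma$. The natural route is to construct, for small $\varepsilon$, perturbed lotteries $q^\varepsilon=\delta_{\ux_B-\varepsilon e_i}\beta\,\delta_{\lx_B+\varepsilon e_i}$ and $s^\varepsilon$ placed on different $i$-coordinates, so that $q^\varepsilon$ and $s^\varepsilon$ have disjoint $i$-supports. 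Then Lemma~\ref{lemma_independence}(ii),(iii) apply, with $\beta$ adjusted (call it $\beta_\varepsilon$) so that $q^\varepsilon\sim_x q$. Lemma~\ref{lemma_conditional} makes this adjustment possible: within dimension $i$ the conditional preference is expected utility with a continuous index, so $\beta_\varepsilon\to\beta$. This yields $p\alpha r\sim_x q^\varepsilon\alpha s^\varepsilon$.

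Second, I would let $\varepsilon\to0$. The lotteries $q^\varepsilon\alpha s^\varepsilon$ all vary only in dimension $i$ with the other coordinates fixed at the extremes, so this step is the main obstacle: the other coordinates of $\ux_B-\varepsilon e_i$ and $\lx_B+\varepsilon e_i$ differ, so it is not a single-dimension comparison. To fix it, I would instead perturb in a way that keeps each support point of the form $(\cdot,\ux_{B\setminus i})$ or $(\cdot,\lx_{B\setminus i})$. I would then use the mixture-continuity half of Axiom~\ref{axiom_C} along the segment, writing $q^\varepsilon\alpha s^\varepsilon$ as a mixture of a fixed family of four degenerate lotteries with weights converging. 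The sets $\{\lambda: \cdot\succsim p\alpha r\}$ and $\{\lambda: p\alpha r\succsim \cdot\}$ are closed, so indifference survives in the limit. If this direct limit is awkward, I would first prove that $\delta_{\ux_B}\gamma\delta_{\lx_B}$ is continuous in $\gamma$ under $\succsim_x$ (this follows from Axiom~\ref{axiom_C}) and squeeze $q\alpha s$ between $q^\varepsilon\alpha s^\varepsilon$-type lotteries built using Lemma~\ref{lemma_independence}(i).

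Finally, I would treat the degenerate cases. If, say, $s=\lx_B$, then $r\sim_x\lx_B$ forces $r=\lx_B$ by Lemma~\ref{lemma_monotone}(i). In that case $p\alpha r$ and $q\alpha s$ are handled by Lemma~\ref{lemma_independence}(i)--(ii) together with the above argument applied only to $q$, and the cases involving $\ux_B$ are symmetric.
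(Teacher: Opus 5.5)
Your reduction $p\alpha r\sim_x\delta_y\alpha\delta_{y'}$ via Lemma~\ref{lemma_independence}(iii) is fine. The remaining claim, $\delta_y\alpha\delta_{y'}\sim_x q\alpha s$, is the entire content of the lemma, and your argument for it does not go through.

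First, $\beta_\varepsilon\to\beta$ does not follow from Lemma~\ref{lemma_conditional}. Both $q^\varepsilon$ and $q$ have one support point near $\ux_B$ and one near $\lx_B$, so $q^\varepsilon\sim_x q$ is not a one-dimensional comparison. Axiom~\ref{axiom_C} gives existence of $\beta_\varepsilon$. Convergence, however, would need monotonicity of $\gamma\mapsto\delta_{\ux_B}\gamma\delta_{\lx_B}$, which the paper obtains (Lemma~\ref{lemma_mixture}(i)) only as a consequence of the lemma you are proving.

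Second, and more seriously, Axiom~\ref{axiom_C} controls limits only along segments with fixed endpoints. In $q^\varepsilon\alpha s^\varepsilon$ the support points move, and distinct $i$-values of $q^\varepsilon$ and $s^\varepsilon$ collide in the limit. This is exactly the kind of limit the paper warns the axioms do not control (cf.\ Example~\ref{eg2}). Your repair (``a fixed family of four degenerate lotteries with weights converging'') does not explain how, near $q\alpha s$ on such a segment, the lottery still splits into a part indifferent to $q$ and a part indifferent to $s$ with disjoint $i$-supports. Those lotteries carry most of their mass at $\ux_B$ and $\lx_B$, so Lemma~\ref{lemma_independence}(iii) is unavailable there. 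The squeeze alternative hits the same wall: Lemma~\ref{lemma_independence} needs disjoint $i$-supports, and the state-by-state test of Definition~\ref{order} needs pairwise distinct $i$-values. Neither yields bounds converging to $q\alpha s$ without a further idea. Your degenerate case, handled by ``the above argument applied only to $q$,'' inherits the same gap.

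The missing idea is to perturb only $q$, along a fixed segment, and to apply Lemma~\ref{lemma_conditional} \emph{inside} each extreme cluster. Here is how the paper does it.
\begin{enumerate}
\item Get $p\alpha r\sim_x\delta_y\alpha s$, with $y_i\notin\{\ux_i,\lx_i\}$, from Lemma~\ref{lemma_independence}(iii), keeping $s$ unchanged.
\item Take $\hat q=\delta_{\ux_B-\varepsilon e_i}\gamma\,\delta_{\lx_B+\varepsilon e_i}\sim_x q$ and set $q^\beta=q\beta\hat q$. Then $q^\beta\sim_x q$ by Lemma~\ref{lemma_independence}(ii).
\item In $q^\beta\alpha s$, all points of the top cluster $\{\ux_B,\ux_B-\varepsilon e_i\}$ share the coordinates $\ux_{B\setminus\{i\}}$, and likewise at the bottom. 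So Lemma~\ref{lemma_conditional} collapses the $q^\beta$-mass of each cluster onto a single point $\ux_B-\varepsilon'e_i$ (resp.\ $\lx_B+\varepsilon''e_i$), leaving the mass of $s$ at $\ux_B$ and $\lx_B$ untouched.
\item The two clusters have disjoint $i$-values, so Lemma~\ref{lemma_independence}(iii) reassembles them. This gives $q^\beta\alpha s\sim_x\hat q^\beta\alpha s$ with $\hat q^\beta\sim_x q$ and $\supp(\hat q^\beta_i)\cap\supp(s_i)=\emptyset$, hence $q^\beta\alpha s\sim_x\delta_y\alpha s$.
\item Since $q^\beta\alpha s=(q\alpha s)\beta(\hat q\alpha s)$ lies on a fixed segment, Axiom~\ref{axiom_C} now legitimately lets $\beta\to1$.
\end{enumerate}
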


\begin{proof}[Proof of \lemmaref{lemma_independence_extreme}]
First, if $p,r\in \{\ux_B,\lx_B\}$, then $p=q$, $r=s$ and the result is trivial.  Without loss of generality, assume that $\ux_B \succ_x p\sim_x q \succ_x \lx_B$. Then there exists $y\in X_{B}$ such that $y_i\not\in \{\ux_i,\lx_i\}$ and $p\sim_x y$. Since $p\sim_x y$, $r\sim_x s$, $\supp(p_i)\cap \supp(r_i)=\emptyset$, and $y_i\not\in \supp(s_i)$, by part (iii) of \lemmaref{lemma_independence}, we have $p\alpha r \sim_x \delta_y\alpha s$. Hence, it suffices to show that $\delta_y\alpha s\sim_x q\alpha s$ for every $y\sim q$ with $y_i\not\in \{\ux_i,\lx_i\}$. As $\ux_B \succ_x p\sim_x q \succ_x \lx_B$, by Axiom \ref{axiom_C}, we can find $\varepsilon\in \bR^{B}_+$ and $\gamma\in (0,1)$ such that $\varepsilon_i>0$, $\varepsilon_j=0$ for all $j\in B\setminus\{i\}$, $\ux_B-\varepsilon \succ_x y \succ_x \lx_B +\varepsilon$, and $y\sim_x q \sim_x \delta_{\ux_B-\varepsilon}\: \gamma\: \delta_{\lx_B+\varepsilon}$. Denote $\hat{q}=\delta_{\ux_B-\varepsilon}\: \gamma \: \delta_{\lx_B+\varepsilon}$ and $q^\beta =q\beta \hat{q}$ for each $\beta\in (0,1)$.  Part (ii) of \lemmaref{lemma_independence} implies that $q^\beta \sim_x q \sim_x y$.

We claim that $\delta_y\alpha s \sim_x q^{\beta}\alpha s$ for all $\beta,\alpha\in (0,1)$. 
To see this, first note that $q(\ux_B)>0$ and $ q(\lx_B)>0$ as $\ux_B \succ_x q\succ_x \lx_B$. Then 
\begin{align*}
    q^\beta & = [\beta q(\ux_B)\delta_{\ux_B} + (1-\beta)\gamma \delta_{\ux_B-\varepsilon}] + [\beta q(\lx_B)\delta_{\lx_B} + (1-\beta)(1-\gamma) \delta_{\lx_B+\varepsilon}]\\ 
    & \sim_x [\beta q(\ux_B) + (1-\beta)\gamma] \delta_{\ux_B-\varepsilon'} + [\beta q(\lx_B) + (1-\beta)(1-\gamma)] \delta_{\lx_B+\varepsilon''},
\end{align*}
in which the   indifference follows from part (iii) of \lemmaref{lemma_independence}, and $\varepsilon', \varepsilon''\in \bR^{B}_+$ satisfy $\varepsilon'_i,\varepsilon''_i>0$, $\varepsilon'_j=\varepsilon''_j=0$ for all $ j\in B\setminus\{i\}$, and
\begin{align*}
\delta_{\ux_B-\varepsilon'} &\sim_x \frac{\beta q(\ux_B)}{\beta q(\ux_B)+(1-\beta)\gamma}\delta_{\ux_B} + \frac{(1-\beta)\gamma}{\beta q(\ux_B)+(1-\beta)\gamma}\delta_{\ux_B-\varepsilon},     \\
\delta_{\lx_B+\varepsilon''} &\sim_x \frac{\beta q(\lx_B)}{\beta q(\lx_B) + (1-\beta)(1-\gamma)}\delta_{\lx_B} + \frac{(1-\beta)(1-\gamma)}{\beta q(\lx_B) + (1-\beta)(1-\gamma)}\delta_{\lx_B+\varepsilon}.
\end{align*}
The existence of $\varepsilon', \varepsilon''$ is guaranteed by \lemmaref{lemma_conditional}. Denote $\hat{q}^{\beta}:=[\beta q(\ux_B) + (1-\beta)\gamma] \delta_{\ux_B-\varepsilon'} + [\beta q(\lx_B) + (1-\beta)(1-\gamma)] \delta_{\lx_B+\varepsilon''}$. Then $\hat{q}^{\beta}\sim_x q\sim_x y$ and
\begin{align*}
    q^{\beta}\alpha s= & \big[\alpha(\beta q(\ux_B)\delta_{\ux_B} + (1-\beta)\gamma \delta_{\ux_B-\varepsilon}) + (1-\alpha)s(\ux_B)\delta_{\ux_B} \big] \\
     &+ \big[\alpha(\beta q(\lx_B)\delta_{\lx_B} + (1-\beta)(1-\gamma) \delta_{\lx_B+\varepsilon}) + (1-\alpha)s(\lx_B)\delta_{\lx_B}\big].
\end{align*}
Again by  applying \lemmaref{lemma_conditional} to the two terms above, respectively, and applying part (iii) of \lemmaref{lemma_independence}, we derive 
\begin{align*}
    q^{\beta}\alpha s\sim_x & \big[\alpha(\beta q(\ux_B) + (1-\beta)\gamma) \delta_{\ux_B-\varepsilon'} + (1-\alpha)s(\ux_B)\delta_{\ux_B} \big] \\
     &+ \big[\alpha(\beta q(\lx_B) + (1-\beta)(1-\gamma)) \delta_{\lx_B+\varepsilon''} + (1-\alpha)s(\lx_B)\delta_{\lx_B}\big] \\
     = &~ \hat{q}^{\beta}\alpha s.
\end{align*}
Note that $\supp(\hat{q}^{\beta}_i)\cap \supp(s_i)=\emptyset$, $y_i\not\in \supp(s_i)$, and $\hat{q}^{\beta}\sim_x y$. Part (iii) of \lemmaref{lemma_independence} implies that $\delta_y\alpha s \sim_x  \hat{q}^{\beta}\alpha s \sim_x q^{\beta}\alpha s$, which holds for all $\alpha,\beta\in (0,1)$. Let $\beta$ approach $1$ and by Axiom \ref{axiom_C}, we conclude that $q\alpha s \sim \delta_y\alpha s$.   
\end{proof}

For any $p\succ_x q$, the next result provides sufficient conditions for $p\alpha q$ to be preferred to $p\beta q$ whenever $\alpha>\beta$.

\begin{lemma}\label{lemma_mixture}
Suppose that $\succsim$ satisfies Axioms \ref{axiom_WO}-\ref{axiom_C} and $i\in B$.  If $i\hra B$, then for all $\alpha, \beta\in(0,1)$, $x\in X_{B^\mathsf{c}}$, and $p,q\in\Delta(X_B)$ such that  $\alpha>\beta$, $p\succ_x q$, and $\supp(p_i)\cap \supp(q_i)=\emptyset$, we have \emph{\text{(i)}}  $\delta_{\ux_B}\alpha \delta_{\lx_B} \succ_x \delta_{\ux_B}\beta \delta_{\lx_B}$ and \emph{\text{(ii)}}  $p\alpha q\succ_x p\beta q$.
 \end{lemma}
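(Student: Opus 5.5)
Part (i) reduces to dimension $i$ alone, and part (ii) to part (i) via Lemma~\ref{lemma_independence} and Lemma~\ref{lemma_independence_extreme}.

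\emph{Part (i).} I replace the extreme lotteries by lotteries over dimension $i$ only, holding the other coordinates of $B$ fixed. Fix $i\hra B$ and $x\in X_{B^\mathsf{c}}$. Since $\lx_B\prec_x\ux_B$, Axiom~\ref{axiom_C} and Lemma~\ref{lemma_monotone}(ii) give points $u,\ell\in X_B$ that agree off dimension $i$, satisfy $u_i\neq\ell_i$ with both interior, and have $u\sim_x\ux_B$ and $\ell\sim_x\lx_B$. This is possible because $\{y_i:y\sim p\}$ is uncountable by the corollary of Lemma~\ref{lemma_monotone}, so I can adjust dimension $i$ while keeping the other coordinates at a common value; if needed I slide along a monotone path and use continuity. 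Lemma~\ref{lemma_independence_extreme}, with $p=\delta_u$, $r=\delta_\ell$, $q=\delta_{\ux_B}$, $s=\delta_{\lx_B}$ and the disjoint-support requirement met by $u_i\neq\ell_i$, gives $\delta_u\gamma\delta_\ell\sim_x\delta_{\ux_B}\gamma\delta_{\lx_B}$ for every $\gamma\in(0,1)$. The comparison is now between two lotteries that differ only in the marginal on $i$, with the other coordinates fixed. Lemma~\ref{lemma_conditional} provides an expected utility representation on dimension $i$ with a strictly increasing index. Because $\delta_u\succ_x\delta_\ell$, the index satisfies $v(u_i)>v(\ell_i)$, so $\alpha\mapsto\alpha v(u_i)+(1-\alpha)v(\ell_i)$ is strictly increasing. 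Hence $\delta_u\alpha\delta_\ell\succ_x\delta_u\beta\delta_\ell$ whenever $\alpha>\beta$, which proves (i).

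\emph{Part (ii).} I write $p\alpha q$ as a mixture of $p$ and $p\beta q$. Set $\lambda=(\alpha-\beta)/(1-\beta)\in(0,1)$, so that $p\alpha q=p\lambda(p\beta q)$. The supports of $p_i$ and $(p\beta q)_i$ overlap, so Lemma~\ref{lemma_independence} does not apply directly; I work with indifferent substitutes instead. By Axiom~\ref{axiom_C}, Lemma~\ref{lemma_monotone} and part (i), there is $\gamma_p$ with $p\sim_x\delta_{\ux_B}\gamma_p\delta_{\lx_B}$, and likewise $\gamma_q$ for $q$. Since $p\succ_x q$, part (i) gives $\gamma_p>\gamma_q$. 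I apply Lemma~\ref{lemma_independence_extreme} to the pair $p,q$, which have disjoint $i$-supports, with $s,q'$ supported on $\{\ux_B,\lx_B\}$. This yields $p\alpha q\sim_x\delta_{\ux_B}[\alpha\gamma_p+(1-\alpha)\gamma_q]\delta_{\lx_B}$, and the same identity with $\beta$ in place of $\alpha$. Because $\gamma_p>\gamma_q$, the weight $\alpha\gamma_p+(1-\alpha)\gamma_q$ is strictly increasing in $\alpha$. Part (i) then gives $p\alpha q\succ_x p\beta q$.

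The main obstacle is the first step of part (i): finding $u,\ell$ that differ only in dimension $i$ and are indifferent to the extreme points. If $\ux_B$ and $\lx_B$ cannot be matched by points on a common $i$-line, I would first match them with points near the extremes and then take a limit using Axiom~\ref{axiom_C} on the mixing weights. The boundary cases $p=\ux_B$ or $q=\lx_B$ in part (ii) are handled directly by setting $\gamma_p=1$ or $\gamma_q=0$.
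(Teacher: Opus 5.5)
Your part (ii) is essentially the paper's argument and is fine once (i) is in hand. Part (i), however, has a genuine gap at exactly the step you flagged.

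The points $u,\ell$ you need cannot exist when $|B|\geqslant 2$. (Your argument goes through only in the trivial case $B=\{i\}$.) By Lemma~\ref{lemma_monotone}(i), $\delta_{\ux_B}$ is strictly preferred under $\succsim_x$ to every other lottery in $\Delta(X_B)$. So $u\sim_x\ux_B$ forces $u=\ux_B$, and likewise $\ell=\lx_B$. These two points do not agree off dimension $i$, and $u_i=\ux_i$ is not interior. The corollary you invoke, uncountability of $\{y_i:y\sim p\}$, is stated only for $p\neq\ux,\lx$.

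The fallback does not rescue the step, for three reasons.
\begin{itemize}
\item A single $i$-line $\{(t,w):t\in X_i\}$ spans only the utility range between $(\lx_i,w)$ and $(\ux_i,w)$. That range can be tiny: take $B=\{i,j\}$ and an expected-utility preference with index $\varepsilon y_i+y_j$. So you cannot match points near both extremes on one line.
\item Even if you could, Axiom~\ref{axiom_C} gives only continuity in mixing weights and closedness over degenerate consequences. It does not give continuity in the support points of a nondegenerate mixture, and Example~\ref{eg2} shows that such continuity can fail.
\item A limit would in any case deliver only weak, not strict, preference.
\end{itemize}

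The missing idea is to substitute only the nondegenerate mixture, never the extreme points themselves.
\begin{itemize}
\item Write $\delta_{\ux_B}\beta\delta_{\lx_B}=(\delta_{\ux_B}\alpha\delta_{\lx_B})\tfrac{\beta}{\alpha}\delta_{\lx_B}$.
\item Since $\delta_{\ux_B}\alpha\delta_{\lx_B}$ equals neither $\delta_{\ux_B}$ nor $\delta_{\lx_B}$, Lemma~\ref{lemma_monotone} supplies $y\in X_B$ with $y\sim_x\delta_{\ux_B}\alpha\delta_{\lx_B}$ and $y_i\notin\{\ux_i,\lx_i\}$.
\item Apply Lemma~\ref{lemma_independence_extreme} to $\delta_y,\delta_{\lx_B}$ (disjoint $i$-supports) and to the pair $\delta_{\ux_B}\alpha\delta_{\lx_B},\delta_{\lx_B}$ (supported on $\{\ux_B,\lx_B\}$). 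This gives $\delta_{\ux_B}\beta\delta_{\lx_B}\sim_x\delta_y\tfrac{\beta}{\alpha}\delta_{\lx_B}$.
\item Lemma~\ref{lemma_independence}(i), using $y\succ_x\lx_B$ and $y_i\neq\lx_i$, then gives $\delta_y\tfrac{\beta}{\alpha}\delta_{\lx_B}\prec_x y\sim_x\delta_{\ux_B}\alpha\delta_{\lx_B}$.
\end{itemize}

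This is how the paper proves (i). The shared support point $\lx_i$ blocks a direct mixture argument, and absorbing that overlap is exactly what Lemma~\ref{lemma_independence_extreme} is designed to do.
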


\begin{proof}[Proof of \lemmaref{lemma_mixture}]
For (i),  note that $\delta_{\ux_B}\beta \delta_{\lx_B}=(\delta_{\ux_B}\alpha \delta_{\lx_B})\frac{\beta}{\alpha} \delta_{\lx_B}$. By \lemmaref{lemma_monotone}, there exists $y\in X_B$ such that $y_i\neq \ux_i$, $y_i\neq \lx_i$, and $y\sim_x \delta_{\ux_B}\alpha \delta_{\lx_B}$. By \lemmaref{lemma_independence_extreme},  we have $\delta_{\ux_B}\beta \delta_{\lx_B}=(\delta_{\ux_B}\alpha \delta_{\lx_B})\frac{\beta}{\alpha} \delta_{\lx_B}\sim_x \delta_y \frac{\beta}{\alpha}\delta_{\lx_B}$. Since $y\succ_x \lx_B$ and $y_i\neq \lx_i$, part (i) of \lemmaref{lemma_independence} implies that $\delta_y \frac{\beta}{\alpha}\delta_{\lx_B} \prec_x y \sim_x \delta_{\ux_B}\alpha \delta_{\lx_B}$. Hence, $\delta_{\ux_B}\alpha \delta_{\lx_B}\succ_x\delta_{\ux_B}\beta \delta_{\lx_B}$. 

For (ii), by part (i), we can find unique $\gamma^1,\gamma^2\in [0,1]$ such that $\gamma^1>\gamma^2$ and $p\sim_x \delta_{\ux_B}\gamma^1 \delta_{\lx_B} \succ_x q \sim_x \delta_{\ux_B}\gamma^2 \delta_{\lx_B}.$ The existence of $\gamma^1$ and $\gamma^2$ follows from Axiom \ref{axiom_C} and  the standard mixture continuity argument. Then \lemmaref{lemma_independence_extreme} implies 
\begin{align*}
    p\alpha q \sim_x  \delta_{\ux_B}(\alpha\gamma^1 + (1-\alpha)\gamma^2)\delta_{\lx_B} \text{~~~and~~~}
    p\beta q \sim_x  \delta_{\ux_B}(\beta\gamma^1 + (1-\beta)\gamma^2)\delta_{\lx_B}.
\end{align*}
Since $\alpha>\beta$ and $\gamma^1> \gamma^2$, we know that $\alpha\gamma^1 + (1-\alpha)\gamma^2> \beta\gamma^1 + (1-\beta)\gamma^2$ and hence $p\alpha q\succ_x p\beta q$ by part (i).
\end{proof}

Now we are ready to prove \lemmaref{lemma_linear}. For any $p\in \Delta(X_B)$, by \lemmaref{lemma_mixture}, there exists a unique $\alpha (p)\in [0,1]$ such that $p\sim_x \delta_{\ux_B}\alpha(p) \delta_{\lx_B}$. Define  $U_x: \Delta(X_B)\rightarrow \bR$ such that $U_x(p)=\alpha(p)$ for every $p\in \Delta(X_B)$. Then $U_x(\delta_{\ux_B})=1$ and $U_x(\delta_{\lx_B})=0$.

\lemmaref{lemma_mixture} ensures that $p\succsim_x q$ if and only if $U_x(p)\geqslant  U_x(q)$ for all $p,q\in \Delta(X_B)$. Now we  check condition (ii). Fix any $\alpha\in (0,1)$ and $p,q\in \Delta(X_B)$ with $\supp(p_i)\cap \supp(q_i)=\emptyset$. By definition of $U_x$, we know that $p\sim_x \delta_{\ux_B}U_x(p) \delta_{\lx_B}$ and $q\sim_x \delta_{\ux_B}U_x(q) \delta_{\lx_B}$. Since $\supp(p_i)\cap \supp(q_i)=\emptyset$, \lemmaref{lemma_independence_extreme} implies $p\alpha q\sim_x \delta_{\ux_B}(\alpha U_x(p) + (1-\alpha)U_x(q)) \delta_{\lx_B}$. Again, the definition of $U_x$ implies $p\alpha q\sim_x \delta_{\ux_B} U_x(p\alpha q) \delta_{\lx_B}$. By \lemmaref{lemma_mixture}, we conclude that $U_x(p\alpha q)=\alpha U_x(p) + (1-\alpha)U_x(q)$. Hence, $U_x(p)=\sum_{y_i}U_x(\delta_{y_i}, p_{B\setminus \{i\}|y_i})p_i(y_i)$. To verify (iii), define  $w_x: X_B\rightarrow \bR$ by $w_x(y)=U_x(\delta_y)$ for all $y\in X_B$. By Axiom \ref{axiom_M}, $w_x$ is strictly increasing.  To see that $w_x$ is continuous, suppose by contradiction that there exists a sequence $(y^n)$ in $X_B$ such that $y^n\rightarrow y\in X_B$ and $w_x(y^n)\not\rightarrow w_x(y)$. Without loss of generality and passing to a subsequence if necessary, suppose $w_x(y^n)\rightarrow a<b= w_x(y)$ and $w_x(y^n)<(a+b)/2$ for all $n$. By part (ii), we can find $r\in \Delta(X_B)$ with $U_x(r)=(a+b)/2$.  That is, $y^n\prec_x r\prec_x y$ for all $n$. Axiom \ref{axiom_C}  implies $y\precsim_x r \prec_x y$, a contradiction.  Finally, by our construction, once  $U_x(\delta_{\ux_B})$ and $U_x(\delta_{\lx_B})$ are determined, the utility function $U_x$ is pinned down. Hence, $U_x$ is unique up to a positive affine transformation.\end{proof}

\begin{restatedlemma}{lemma_modify}
    Suppose $(\cP,E)\in\mathbb{T}(\succsim)$ and $A\in \cP$ such that $c(A)=\{\{i\}\}$ for some $i\in M(A\cup\bar d(A))$. Denote $\cP'=\cP\cup\{A\cup\{i\}\}\backslash\{A,\{i\}\}$ and let $(B,B')\in E'$ if either (i) $(B,B')\in E$,  (ii) $B'=A\cup\{i\}$ and $(B,A)\in E$, or  (iii) $B=A\cup\{i\}$ and $(\{i\},B')\in E$. Then $(\cP',E')\in \mathbb{T}(\succsim)$.
\end{restatedlemma}

\begin{proof}[Proof of \lemmaref{lemma_modify}]
Suppose  that $(\cT,(u^A)_{A\in\cP_o})$ is an  SMEU representation of $\succsim$ with $\cT=(\cP,E)$ and that $\cT'=(\cP',E')$ satisfies the conditions stated in \lemmaref{lemma_modify}. Let $A'\in \cP_o$ such that $A\in c^{\cT}(A')$.  Then  $ c^{\cT'}(A\cup\{i\})=  c^{\cT}(\{i\})$,  $\bar a^{\cT'}(A\cup\{i\})= \bar a^{\cT}(A)$, and $c^{\cT'}(A')=c^{\cT}(A')\cup\{A\cup\{i\}\}\setminus\{A\}$.
On the rest of the domain, $c^{\cT}=c^{\cT'}$ and $\bar a^{\cT}=\bar a^{\cT'}$.
 
Let $\hat{u}^B=u^B$ for all  $B\in \cP'_o\setminus\{A\cup\{i\}\}$ (noting that $|c^{\cT'}(A')|=|c^{\cT}(A')|$) and define 
 $\hat{u}^{A\cup\{i\}}:X_{\bar a^{\cT}(A)}\times X_{A\cup\{i\}}\times\bR^{ c^{T}(\{i\})}\to\bR$ by 
\[\hat{u}^{A\cup\{i\}}(z,x,a)=u^A\big(z,\: x_{A}, \:u^{i}  \big(\:(z, x_{A}),\: x_i, \:a\big)\big).\]
Define $U^B_z$ and $\hat{U}^B_z$ accordingly using the recursive equation (\ref{kp}). 

It is easy to see that $\hat{U}^{A\cup\{i\}}_z(\delta_x)= U^A_z(\delta_x)$ and 
$\hat{U}^B_z(\delta_x)= U^B_z(\delta_x)$ for all $B\in \cP'_o\setminus\{A\cup\{i\}\}, z\in X_{\bar a^{\cT}(A)}$, and $x\in X$. This guarantees that for any $B\in \cP'_o$ and 
$z'\in X_{(B\cup \bar d^{\cT'}(B))^\mathsf{c}}$ with $z'_{\bar a^{\cT'}(B)}=z$,   the function $\hat U^B_z(\delta_{(z',x)})$ is continuous and strictly increasing in $x\in X_{B\cup \bar d^{\cT'}(B)}$. 
Moreover, $\hat{U}^B_z(p)= U^B_z(p)$ for all $B\in \cP'$ with $B\subseteq \bar d^{\cT'}(A\cup\{i\})=\bar d^{\cT}(\{i\})$,  $p\in \Delta(X)$, and $z\in X_{\bar a^{\cT}(B)}$.
To show that $(\cT',(\hat{u}^B)_{B\in\cP'_o})$ is an SMEU representation of $\succsim$, it suffices to show that $\hat{U}^{A\cup\{i\}}_z(p)= U^A_z(p)$ for all $p\in \Delta(X)$ and $z\in X_{\bar a^{\cT}(A)}$, since by recursion, this property implies that $\hat{U}^B_z(p)= U^B_z(p)$ for all $B\in \cP'_o$ with $A\cup\{i\}\subseteq \bar d^{\cT}(B)$,  $p\in \Delta(X)$ and $z\in X_{\bar a^{\cT}(B)}$.

 Fix any $z'\in  X_{(A\cup \bar d^{\cT}(A))^\cp}$ and denote $z= z'_{\bar a^{\cT}(A)} \in X_{\bar a^{\cT}(A)}$. The utilities of $p\in \Delta(X)$ in the two representations are given by 
 \begin{align*}
      U^A_z(p) & = \bE^p_{A|z} \: u^A\big(z,y,   \bE^p_{i|(z,y)}\:u^{i}((z,y),y',\:(U^B_{(z,y,y')}(p))_{B\in c^{\cT}(\{i\})}) \big),\\
      \hat{U}^{A\cup\{i\}}_z(p)& =\bE^p_{A\cup\{i\}|z}\:\hat u^{A\cup\{i\}}\big(\:z,\:(y,y'),\:(U^B_{(z,y,y')}(p))_{B\in c^{\cT}(\{i\})})\big) \\
      & = \bE^p_{A\cup\{i\}|z}\: u^A\big(z,y,\:u^{i}((z,y),y',\:(U^B_{(z,y,y')}(p))_{B\in c^{\cT}(\{i\})})\big).
 \end{align*}
 Note that in the above expressions, $y\in X_A$ and $y'\in X_i$.

Since $i\in M(A\cup\bar d^{\cT}(A))$,  \lemmaref{lemma_linear} ensures the existence of a function $V_z: \Delta(X_{A\cup\bar d^{\cT}(A)})\to \bR$ such that  {\text{(i)}}  $p\succsim_{z'} q$ if and only if $V_z(p)\geqslant  V_z(q)$ for all $p,q\in \Delta(X_{A\cup\bar d^{\cT}(A)})$ and
 {\text{(ii)}}  $V_z(p\alpha q ) = \alpha V_z(p) + (1-\alpha)V_z(q)$ for all $\alpha\in (0,1)$ and $p,q\in \Delta(X_{A\cup\bar d^{\cT}(A)})$ with $\supp(p_i)\cap \supp(q_i)=\emptyset$.

Fix any $y\in X_{A}$ and $p \in \Delta(X_{\bar d^{\cT}(A)})$. Denote $\supp(p_i)=\{y'^1,\dots,y'^n\}$ such that $y'^1<y'^2<\dots<y'^n$. We can find $y^k\in  X_{A}$,  $y''^k\in X_i$, and $\hat y^k\in X_{\bar d^{\cT}(A)\setminus\{i\}}$ for every $k=1,\dots,n$ such that 
$(y,y'^k,p_{\bar d^{\cT}(A)\setminus \{i\}|y'^k})\sim_{z'} (y^k,y''^k,\hat y^k)$, elements in $\{y^1,\dots,y^n\}$ are mutually distinct, and  elements in $\{y''^1,\dots,y''^n\}$ are mutually distinct. Denote $q=\sum_{k=1}^n p_i(y'^k)\delta_{(y^k,y''^k,\hat y^k)}$. 
 Properties (i) and (ii) ensure that $V_z(\delta_y,p) = V_z(q)$ and hence $(\delta_y,p)\sim_{z'} q$. Since $(\cT,(u^A)_{A\in\cP_o})$ is an  SMEU representation of $\succsim$, we have 
\begin{align*}
      U^A_z(\delta_{z'},\delta_y,p)  
     = & ~u^A\big(z,y,   \bE^p_{i}\:u^{i}((z,y),y',\:(U^B_{(z,y,y')}(\delta_{z'},\delta_y,p))_{B\in c^{\cT}(\{i\})}) \big)\\
     =  & ~U^A_z(\delta_{z'},q) \\
    =  & ~\sum_{k=1}^n p_i(y'^k) \:u^A\big(z,y^k,\:u^{i}((z,y^k),y''^k,\:(U^B_{(z,y^k,y''^k)}(\delta_{z'},q)_{B\in c^{\cT}(\{i\})})\big)\\
     = & ~ \bE^p_{i}\: u^A\big(z,y,\:u^{i}((z,y),y',\:(U^B_{(z,y,y')}(\delta_{z'},\delta_y,p))_{B\in c^{\cT}(\{i\})})\big).
\end{align*}
 Combining the above two sets of equations, for any $r\in \Delta(X)$, we have 
\begin{align*}
    U^A_z(r) & = \bE^r_{A|z} \: u^A\big(z,y,   \bE^r_{i|(z,y)}\:u^{i}((z,y),y',\:(U^B_{(z,y,y')}(r))_{B\in c^{\cT}(\{i\})}) \big)\\
    & =  \bE^r_{A|z} \bE^r_{i|(z,y)}\: u^A\big(z,y,  \:u^{i}((z,y),y',\:(U^B_{(z,y,y')}(r))_{B\in c^{\cT}(\{i\})}) \big)\\
    & = \bE^r_{A\cup\{i\}|z}\:u^A\big(z,y,  \:u^{i}((z,y),y',\:(U^B_{(z,y,y')}(r))_{B\in c^{\cT}(\{i\})}) \big)\\
    & = \hat{U}^A_z(r). 
\end{align*}    
Hence, $\hat{U}^A_z(r)= U^A_z(r)$ for all $r\in \Delta(X)$ and $z\in X_{\bar a^{\cT}(A)}$. This implies that $(\cT',(\hat{u}^B)_{B\in\cP'_o})$ is also an SMEU representation of $\succsim$.
\end{proof}


 \setstretch{1.1}
 \bibliographystyleAppendix{ecta}
\bibliographyAppendix{MCU-online}

\end{document}